\setlist{nosep,leftmargin=8mm,partopsep=0pt,topsep=0pt}
\patchcmd\algocf@Vline{\vrule}{\vrule \kern-0.4pt}{}{}
\patchcmd\algocf@Vsline{\vrule}{\vrule \kern-0.4pt}{}{}
\definecolor{darkgrey}{gray}{0.3}
\definecolor{commentcolor}{gray}{0.5}
\crefname{algocf}{Algorithm}{Algorithms}
\let\cref@old@stepcounter\stepcounter
\def\stepcounter#1{%
  \cref@old@stepcounter{#1}%
  \cref@constructprefix{#1}{\cref@result}%
  \@ifundefined{cref@#1@alias}%
    {\def\@tempa{#1}}%
    {\def\@tempa{\csname cref@#1@alias\endcsname}}%
  \protected@edef\cref@currentlabel{%
    [\@tempa][\arabic{#1}][\cref@result]%
    \csname p@#1\endcsname\csname the#1\endcsname}}
\definecolor{custom_color}{RGB}{225,231,255}
\crefname{algocf}{alg.}{algs.}
\Crefname{algocf}{Algorithm}{Algorithms}
\newcommand{\declarecolor}[2]{\definecolor{#1}{RGB}{#2}\expandafter\newcommand\csname #1\endcsname[1]{\textcolor{#1}{##1}}}
\theoremstyle{plain}
\newtheorem{theorem}{Theorem}[section]
\newtheorem{lemma}[theorem]{Lemma}
\newtheorem{corollary}[theorem]{Corollary}
\newtheorem{proposition}[theorem]{Proposition}
\newtheorem{fact}[theorem]{Fact}
\theoremstyle{definition}
\newtheorem{definition}[theorem]{Definition}
\theoremstyle{remark}
\newtheorem{remark}[theorem]{Remark}
\newcommand*{\N}{{\mathbb{N}}}
\newcommand*{\R}{\mathbb{R}}
\newcommand{\defeq}{\coloneqq}
\let\co\relax
\let\polylog\relax
\DeclareMathOperator{\polylog}{polylog}
\DeclareMathOperator{\reg}{Reg}
\DeclareMathOperator*{\argmin}{arg\,min}
\DeclareMathOperator*{\argmax}{arg\,max}
\DeclareMathOperator{\adj}{adj}
\DeclareMathOperator{\co}{co}
\DeclareMathOperator{\relint}{relint}
\DeclareMathOperator{\map}{\mathtt{mp}}
\DeclareMathOperator{\minor}{\mathtt{mn}}
\DeclareMathOperator{\nextstr}{\textsc{NextStrategy}}
\DeclareMathOperator{\efce}{\textsc{EFCE}}
\DeclareMathOperator{\cfr}{\textsc{CFR}}
\DeclareMathOperator{\nash}{\textsc{NE}}
\DeclareMathOperator{\omw}{\textsc{OMWU}}
\DeclareMathOperator{\rmp}{\textsc{RM}^+}
\DeclareMathOperator{\obsut}{\textsc{ObserveUtility}}
\DeclareMathOperator{\fporacle}{\textsc{StableFPOracle}}
\let\E\relax
\DeclareMathOperator{\E}{\mathbb{E}}
\renewcommand{\vec}[1]{\bm{#1}}
\newcommand{\mat}[1]{\mathbf{#1}}
\DeclarePairedDelimiterX{\card}[1]{\lvert}{\rvert}{#1}
\DeclarePairedDelimiterX{\abs}[1]{\lvert}{\rvert}{#1}
\DeclarePairedDelimiterX{\tuple}[1]{\lparen}{\rparen}{#1}
\DeclarePairedDelimiterX{\parens}[1]{\lparen}{\rparen}{#1}
\DeclarePairedDelimiterX{\brackets}[1]{\lbrack}{\rbrack}{#1}
\DeclarePairedDelimiterX{\set}[1]\{\}{#1}
\let\Pr\relax
\DeclarePairedDelimiterXPP{\Pr}[1]{\mathbb{P}}[]{}{#1}
\DeclarePairedDelimiterXPP{\PrX}[2]{\mathbb{P}_{#1}}[]{}{#2}
\DeclarePairedDelimiterXPP{\Ex}[1]{\mathbb{E}}[]{}{#1}
\DeclarePairedDelimiterXPP{\ExX}[2]{\mathbb{E}_{#1}}[]{}{#2}
\tikzset{cross/.style={path picture={
  \draw[black]
(path picture bounding box.south east) -- (path picture bounding box.north west) (path picture bounding box.south west) -- (path picture bounding box.north east);
}}}
\tikzstyle{chanode}=[fill=white,draw=black,circle,cross,inner sep=1mm]
\tikzstyle{pl1node}=[fill=black,draw=black,circle,inner sep=.8mm]
\tikzstyle{pl2node}=[fill=white,draw=black,circle,inner sep=.8mm]
\tikzstyle{termina}=[fill=white,draw=black,inner sep=.8mm]
\newcommand{\emptyseq}{\varnothing}
\tikzset{
  fitting node/.style={
    inner sep=0pt,
    fill=none,
    draw=none,
    reset transform,
    fit={(\pgf@pathminx,\pgf@pathminy) (\pgf@pathmaxx,\pgf@pathmaxy)}
  },
  reset transform/.code={\pgftransformreset}
}
\tikzset{cross/.style={path picture={
  \draw[black]
(path picture bounding box.south east) -- (path picture bounding box.north west) (path picture bounding box.south west) -- (path picture bounding box.north east);
}}}
\tikzstyle{ox}=[semithick,draw=black,circle,cross,inner sep=1.2mm]
\tikzset{
  fitting node/.style={
    inner sep=0pt,
    fill=none,
    draw=none,
    reset transform,
    fit={(\pgf@pathminx,\pgf@pathminy) (\pgf@pathmaxx,\pgf@pathmaxy)}
  },
  reset transform/.code={\pgftransformreset}
}
\tikzstyle{ox}=[semithick,draw=black,circle,cross,inner sep=1.2mm]
\newcommand{\seq}[1]{\texttt{#1}}
\title{Faster No-Regret Learning Dynamics for Extensive-Form Correlated and Coarse Correlated Equilibria}
\author{Ioannis Anagnostides}
\email{ianagnos@cs.cmu.edu}
\affiliation{%
  \institution{Carnegie Mellon University}
  \streetaddress{5000 Forbes Avenue}
  \city{Pittsburgh}
  \state{Pennsylvania}
  \country{USA}
  \postcode{15232}
}
\author{Gabriele Farina}
\email{gfarina@cs.cmu.edu}
\affiliation{%
  \institution{Carnegie Mellon University}
  \streetaddress{5000 Forbes Avenue}
  \city{Pittsburgh}
  \state{Pennsylvania}
  \country{USA}
  \postcode{15232}
}
\author{Christian Kroer}
\email{christian.kroer@columbia.edu}
\affiliation{%
  \institution{Columbia University}
  \city{New York City}
  \state{New York}
  \country{USA}
  \postcode{10027}
}
\author{Andrea Celli}
\email{andrea.celli2@unibocconi.it}
\affiliation{%
 \institution{Bocconi University}
 \streetaddress{Via Roberto Sarfatti, 25}
 \city{Milan}
 \country{Italy}
 \postcode{20100}
}
\author{Tuomas Sandholm}
\email{sandholm@cs.cmu.edu}
\affiliation{%
  \institution{Carnegie Mellon University \& Strategy Robot, Inc. \& Optimized Markets, Inc. \& Strategic Machine, Inc.}
  \city{Pittsburgh}
  \state{Pennsylvania}
  \country{USA}
  \postcode{15232}
}
\begin{abstract}
    A recent emerging trend in the literature on learning in games has been concerned with providing faster learning dynamics for correlated and coarse correlated equilibria in normal-form games. Much less is known about the significantly more challenging setting of extensive-form games, which can capture both sequential and simultaneous moves, as well as imperfect information. In this paper we establish faster no-regret learning dynamics for \textit{extensive-form correlated equilibria (EFCE)} in multiplayer general-sum imperfect-information extensive-form games. When all players follow our accelerated dynamics, the correlated distribution of play is an $O(T^{-3/4})$-approximate EFCE, where the $O(\cdot)$ notation suppresses parameters polynomial in the description of the game. This significantly improves over the best prior rate of $O(T^{-1/2})$. To achieve this, we develop a framework for performing accelerated \emph{Phi-regret minimization} via predictions. One of our key technical contributions---that enables us to employ our generic template---is to characterize the stability of fixed points associated with \emph{trigger deviation functions} through a refined perturbation analysis of a structured Markov chain. Furthermore, for the simpler solution concept of extensive-form \emph{coarse} correlated equilibrium (EFCCE) we give a new succinct closed-form characterization of the associated fixed points, bypassing the expensive computation of stationary distributions required for EFCE. Our results place EFCCE closer to \emph{normal-form coarse correlated equilibria} in terms of the per-iteration complexity, although the former prescribes a much more compelling notion of correlation. Finally, experiments conducted on standard benchmarks corroborate our theoretical findings.
\end{abstract}
\begin{document}

\begin{titlepage}

\maketitle

\end{titlepage}

\tableofcontents
\clearpage

\section{Introduction}

Game-theoretic solution concepts describe how rational agents should act in games. Over the last two decades there has been tremendous progress in imperfect-information game solving and algorithms based on game-theoretic solution concepts have become the state of the art. Prominent milestones include an optimal strategy for Rhode Island hold'em poker~\citep{Gilpin07:Lossless}, a near-optimal strategy for limit Texas hold'em~\citep{Bowling15:Heads}, and a superhuman strategy for no-limit Texas hold'em~\citep{Brown17:Superhuman,Moravvcik17:DeepStack}. In particular, these advances rely on algorithms that approximate \emph{Nash equilibria} (\emph{NE}) of two-player zero-sum \emph{extensive-form games} (\emph{EFGs}). EFGs are a broad class of games that capture sequential and simultaneous interaction, and imperfect information.
For two-player zero-sum EFGs, it is by now well-understood how to compute a Nash equilibrium at scale: in theory this can be achieved using accelerated uncoupled no-regret learning dynamics, for example by having each player use an \emph{optimistic} regret minimizer and leveraging suitable \emph{distance-generating functions}~\citep{Hoda10:Smoothing,Kroer20:Faster,Farina21:Better} for the EFG decision space. Such a setup converges to an equilibrium at a rate of $O(T^{-1})$. In practice, modern variants of the \emph{counterfactual regret minimization (CFR)} framework~\citep{Zinkevich07:Regret} typically lead to better practical performance, although the worst-case convergence rate known in theory remains inferior. CFR is also an uncoupled no-regret learning dynamic.

However, many real-world applications are not two-player zero-sum games, but instead have \emph{general-sum} utilities and often more than two players. 
In such settings, Nash equilibrium suffers from several drawbacks when used as a prescriptive tool. 
First, there can be multiple equilibria, and an equilibrium strategy may perform very poorly when played against the ``wrong'' equilibrium strategies of the other player(s).
Thus, the players effectively would need to communicate in order to find an equilibrium, or hope to converge to it via some sort of decentralized learning dynamics.
Second, finding a Nash equilibrium is computationally hard both in theory~\citep{Daskalakis06:Complexity,Etessami07:Complexity} and in practice~\citep{Berg17:Exclusion}.  
This effectively squashes any hope of developing efficient learning dynamics that converge to Nash equilibria in general games.

A competing notion of rationality proposed by \citet{Aumann74:Subjectivity} is that of \emph{correlated equilibrium} (\emph{CE}). 
%
%
Unlike $\nash$, it is known that the former can be computed in polynomial time and, perhaps even more importantly, it can be attained through \emph{uncoupled} learning dynamics where players only need to reason about their own observed utilities. This overcomes the often unreasonable presumption that players have knowledge about the other players' utilities. At the same time, uncoupled learning algorithms have proven to be a remarkably \emph{scalable} approach for computing equilibria in large-scale games, as described above. In normal-form games (NFGs), a \emph{correlated strategy} is defined as a probability distribution over joint action profiles, customarily modeled via a trusted external mediator that draws an action profile from this distribution and then privately recommends to each player their component. A correlated strategy is a CE if, for each player, the mediator's recommendation is the best action in expectation, assuming that all the other players follow their recommended actions~\cite{Aumann74:Subjectivity}. In NFGs it has long been known that uncoupled no-regret learning dynamics can converge to CE and \emph{coarse correlated equilibria} (\emph{CCE}) at a rate of $O(T^{-1/2})$~\citep{Foster97:Calibrated,Hart00:Simple}. 
More recently, it has been established that accelerated dynamics can converge at a rate of $\widetilde{O}(T^{-1})$~\citep{Daskalakis21:near,Anagnostides21:Near} in NFGs, where the notation $\widetilde{O}(\cdot)$ suppresses $\polylog(T)$ factors.

However, in the context of EFGs the idea of correlation is much more intricate, and there are several notions of correlated equilibria based on when the mediator gives recommendations and how the mediator reacts to players who disregard the advice. 
%
Three natural extensions of CE to extensive-form games are the \emph{extensive-form correlated equilibrium (EFCE)} by~\citet{Stengel08:Extensive}, the \emph{extensive-form coarse correlated equilibrium (EFCCE)} by~\citet{Farina20:Coarse}, and the \emph{normal-form coarse correlated equilibrium (NFCCE)} by~\citet{Celli2018:Computing}. The set of those equilibria are such that, for any extensive-form game, EFCE $\subseteq$ EFCCE $\subseteq$ NFCCE. In an EFCE, the stronger of those notions of correlation, the mediator forms recommendations for each of the possible decision points an agent may encounter in the game, and recommended actions are gradually revealed to players as they reach new information sets; thus, the mediator must take into account the \emph{evolution} of the players' beliefs throughout the game. Because of the sequential nature, the presence of private information in the game, and the gradual revelation of recommendations, the constraints associated with $\efce$ are significantly more complex than for normal-form games. 
For these reasons, the question of whether uncoupled learning dynamics can converge to an $\efce$ was only recently resolved by \citet{Celli20:No-Regret}. Moreover, in a follow-up work the authors also established an explicit rate of convergence of $O(T^{-1/2})$~\citep{Farina21:Simple}. Our paper is primarily concerned with the following fundamental question: 
\vspace{.3cm}
\begin{quote}
    \centering
    \emph{Can we develop faster uncoupled no-regret learning dynamics for EFCE?}
\end{quote}
\vspace{.3cm}

We affirmatively answer this question by developing dynamics converging at a rate of $O(T^{-3/4})$ to an EFCE. Furthermore, we also study learning dynamics for the simpler solution concept of EFCCE. More precisely, although accelerated learning dynamics for EFCE can be automatically employed for EFCCE (since the set of EFCEs forms a subset of the set of EFCCEs), all the known learning dynamics for EFCE have large per-iteration complexity. Indeed, they require as an intermediate step the expensive computation of the stationary distributions of multiple Markov chains. Thus, the following natural question arises: \emph{Are there learning dynamics for EFCCE that avoid the expensive computation of stationary distributions?} We answer this question in the positive. Our results reveal that EFCCE is more akin to NFCCE than to EFCE from a learning perspective, although EFCE prescribes a much more compelling notion of correlation than NFCCE. 

\subsection{Contributions}

Our first primary contribution is to develop faster no-regret learning dynamics for EFCE:

\begin{theorem}
    \label{theorem:main}
    On any general-sum multiplayer extensive-form game, there exist uncoupled no-regret learning dynamics which lead to a correlated distribution of play that is an $O(T^{-3/4})$-approximate $\efce$. Here the $O(\cdot)$ notation suppresses game-specific parameters polynomial in the size of the game.
\end{theorem}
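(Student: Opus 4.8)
The plan is to cast the problem as one of \emph{trigger-regret minimization} for each player and then accelerate it through predictions. Recall the folklore connection exploited by \citet{Celli20:No-Regret,Farina21:Simple}: if every player $i$ runs a regret minimizer whose cumulative \emph{trigger regret}---the regret measured against the set $\Phi_i$ of trigger deviation functions---is at most $\epsilon_i^T$ after $T$ rounds, then the correlated distribution of play is an $O(\max_i \epsilon_i^T / T)$-approximate $\efce$. Hence \Cref{theorem:main} reduces to exhibiting uncoupled dynamics under which each player's trigger regret grows only as $O(T^{1/4})$, with the hidden constant polynomial in the size of the game. To build such trigger-regret minimizers I would use the classical fixed-point reduction for $\Phi$-regret minimization (Gordon, Greenwald, and Marks): given any external regret minimizer $\mathcal{R}_i$ over the polytope $\Phi_i$, at each round $\mathcal{R}_i$ emits a deviation $\phi_i^{(t)} \in \Phi_i$, the player commits to the strategy $x_i^{(t)}$ solving the fixed-point equation $x_i^{(t)} = \phi_i^{(t)}(x_i^{(t)})$, and the resulting trigger regret equals exactly the external regret of $\mathcal{R}_i$ on the linearized losses $L_i^{(t)}(\phi) \defeq \langle \phi(x_i^{(t)}), \ell_i^{(t)}\rangle$, where $\ell_i^{(t)}$ is the observed utility gradient.

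The acceleration comes from instantiating each $\mathcal{R}_i$ with an \emph{optimistic} (predictive) first-order method---optimistic FTRL with a suitable distance-generating function over $\Phi_i$---so that it enjoys an RVU-type guarantee (\emph{regret bounded by variation in utilities}) of the shape
\[
  \reg_i^T \;\le\; \frac{\Omega_i}{\eta} + \eta \sum_{t} \norm{L_i^{(t)} - L_i^{(t-1)}}_*^2 - \frac{1}{\eta}\sum_{t}\norm{\phi_i^{(t)} - \phi_i^{(t-1)}}^2 ,
\]
with the previous loss $L_i^{(t-1)}$ serving as the prediction for $L_i^{(t)}$. The deviation of consecutive losses decomposes into the change of the player's own fixed point and the change of its utility gradient, and the latter is controlled by the opponents' strategy movements, so that $\norm{L_i^{(t)} - L_i^{(t-1)}}_* \le C\big(\norm{x_i^{(t)} - x_i^{(t-1)}} + \sum_{j\ne i}\norm{x_j^{(t)} - x_j^{(t-1)}}\big)$ for a game-dependent constant $C$. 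Thus the whole argument hinges on bounding the per-round movement of the fixed points $x_i^{(t)}$ in terms of the movement of the deviations $\phi_i^{(t)}$, which the optimistic method already forces to be small through its negative term.

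The crux---and what I expect to be the main obstacle---is precisely this \emph{stability of the fixed points}. Since $\phi_i^{(t)}$ is a convex combination of trigger deviation functions, its fixed point $x_i^{(t)}$ is not available in closed form: it is the stationary distribution of a structured Markov chain whose transition kernel is assembled from $\phi_i^{(t)}$ along the information-set hierarchy of player $i$. I would therefore carry out a refined perturbation analysis of this chain, targeting a Lipschitz estimate $\norm{x_i^{(t)} - x_i^{(t-1)}} \le \poly \cdot \norm{\phi_i^{(t)} - \phi_i^{(t-1)}}$, where $\poly$ hides a factor polynomial in the game description. The difficulty is that generic perturbation bounds for stationary distributions are governed by condition numbers (of the fundamental or group-inverse matrix) that can be uncontrollably large; the key is to exploit the special sequential, quasi-acyclic structure of the chain, propagating the stationary mass top-down through the tree of decision points and bounding the sensitivity level by level, so that the perturbation accumulates only a polynomial---rather than condition-number-dependent---blow-up. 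For the simpler EFCCE concept I would proceed analogously, but there the fixed point admits a succinct closed form, bypassing the stationary-distribution computation entirely.

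Finally, I would assemble the pieces. Substituting the stability estimate into the RVU bounds and summing over all players, the positive variation terms $\eta \sum_t \norm{L_i^{(t)} - L_i^{(t-1)}}_*^2$ are dominated---up to the stability constant---by the negative path-length terms $\frac{1}{\eta}\sum_t \norm{\phi_j^{(t)} - \phi_j^{(t-1)}}^2$ aggregated across players, once $\eta$ is taken small relative to the (polynomial) stability constant. Because the stability bound does not align the two sides in fully compatible norms, a residual first-order path-length term survives; optimizing the learning rate against it leaves each player with cumulative trigger regret $O(T^{1/4})$, which by the reduction above certifies an $O(T^{-3/4})$-approximate $\efce$. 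The gap between this rate and the $O(T^{-1})$ attainable in normal-form games is exactly the price of the stability constant incurred by the Markov-chain fixed points.
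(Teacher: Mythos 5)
Your proposal follows the same architecture as the paper: the reduction from EFCE to trigger-regret minimization (\Cref{theorem:EFCE-convergence}), the Gordon-style fixed-point reduction combined with optimistic regret minimization over $\co\Psi_i$ (\Cref{theorem:accelerating-Phi,theorem:co-circuit}), and a final balancing of the learning rate at $\eta = \Theta(T^{-1/4})$. You also correctly identify where the real difficulty lies. But precisely at that point—the stability of the fixed points—your proposal has a genuine gap, in two respects. First, the estimate you target, an additive Lipschitz bound $\norm{x_i^{(t)} - x_i^{(t-1)}} \leq \poly \cdot \norm{\phi_i^{(t)} - \phi_i^{(t-1)}}$, is the wrong currency: stationary distributions simply do not admit such bounds with controlled constants, and the paper never proves one. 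Instead it works with \emph{multiplicative} stability (entrywise ratios within $1 \pm O(\eta \cdot \poly)$), which is an absolute per-round bound, not one relative to $\norm{\phi^{(t)} - \phi^{(t-1)}}$. This choice has structural consequences you do not account for: it forces the underlying regret minimizers to be entropy-based (OMWU and OFTRL with dilatable global entropy, \Cref{lemma:OMW-simplex-stability,lemma:OMD-stability}), whose iterates have the required multiplicative form; and it makes your RVU negative term unnecessary—the paper bounds every variation term directly by $O(\eta^2 \cdot \poly)$ per round, so the regret is $\poly/\eta + \poly \cdot \eta^3 T$, balanced at $\eta = T^{-1/4}$, with no cancellation against path-length terms and no ``residual first-order term'' to absorb.

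Second, and more fundamentally, your plan to ``propagate the sensitivity level by level'' through the information-set tree is exactly the direct extension of the normal-form technique \citep{Chen20:Hedging,Candogan13:Dynamics} that the paper explains \emph{fails}: at each information set $j^*$, the promotion step (\Cref{algo:Extend}) builds a Markov chain mixing a $\kappa$-stable component with a rank-one component assembled from ancestor fixed-point values whose accumulated instability is some $\gamma \geq \kappa$. A generic perturbation bound—including one via the Markov chain tree theorem—yields $\gamma_{\mathrm{new}} = O\big(|\mathcal{A}_{j^*}|(\gamma + \kappa)\big)$, which compounds multiplicatively down the tree and is \emph{exponential} in the depth and branching of the game. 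The paper's essential new idea is \Cref{lemma:convex_characterization}: writing the chain as $\mat{M} = \vec{v}\vec{1}^\top + \mat{C}$ and showing, by a linear-algebraic analysis of the Laplacian's adjugate rather than the tree theorem, that the stationary distribution depends only \emph{affinely} on the unstable rank-one part $\vec{v}$. This is what turns the recursion into the additive $\gamma_{\mathrm{new}} = \gamma + O(\kappa\,|\mathcal{A}_{j^*}|)$ (\Cref{corollary:Markov_stability,proposition:stability_extend}), hence linear-in-depth accumulation (\Cref{theorem:stability-EFCE}). Without this insight or an equivalent one, the ``polynomial blow-up'' your sketch asserts is unsubstantiated, and the hidden constant in your claimed $O(T^{-3/4})$ rate would in fact be exponential in the size of the game.
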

This substantially improves over the prior best known rate of $O(T^{-1/2})$ recently established by~\citet{Farina21:Simple}. To achieve this result we employ the framework of \emph{predictive} (also known as \textit{optimistic}) regret minimization~\citep{Chiang12:Online,Rakhlin13:Optimization}. One of our conceptual contributions is to connect this line of work with the framework of \emph{Phi-regret} minimization~\citep{Greenwald03:General,Gordon08:No} by providing a general template for stable-predictive Phi-regret minimization (\Cref{theorem:accelerating-Phi}). The importance of Phi-regret is that it leads to substantially more compelling notions of hindsight rationality, well-beyond the usual \emph{external} regret \citep{Gordon08:No}, including the powerful notion of \emph{swap regret}~\citep{Blum07:From}. Moreover, one of the primary insights behind the result of \citet{Farina21:Simple} is to cast convergence to an EFCE as a Phi-regret minimization problem. Given these prior connections, we believe that our stable-predictive template is of independent interest, and could lead to further applications in the future.


From a technical standpoint, in order to apply our generic template for accelerated Phi-regret minimization (\Cref{theorem:accelerating-Phi}), we establish two separate ingredients. First, we develop a \emph{predictive} external regret minimizer for the set of transformations associated with $\efce$. This deviates from the construction of \citet{Farina21:Simple} in that we have to additionally guarantee and preserve the predictive bounds throughout the construction. Further, our algorithm combines optimistic regret minimization---under suitable DGFs---for the sequence-form polytope, with \emph{regret decomposition} in the style of CFR. While these have been the two main paradigms employed in EFGs, they were used separately in the past. We refer to \Cref{fig:algo} for a detailed description of our algorithm. 

The second central component consists of sharply characterizing the stability of fixed points of \emph{trigger deviation functions}. This turns out to be particularly challenging, and direct extensions of prior techniques only give a bound that is \emph{exponential} in the size of the game. In this context, one of our key technical contributions is to provide a refined perturbation analysis for a Markov chain consisting of a rank-one stochastic matrix (\Cref{lemma:convex_characterization}). To do this, we deviate from prior techniques (\emph{e.g.}, \citep{Candogan13:Dynamics,Chen20:Hedging}) that used the Markov chain tree theorem, and instead use an alternative linear-algebraic characterization for the eigenvectors of the underlying Laplacian system. This leads to a rate of convergence that depends \emph{polynomially} on the description of the game, which is crucial for the practical applicability of the dynamics. 

Next, we shift our attention to learning dynamics for EFCCE. We first introduce the notion of \emph{coarse trigger deviation functions}, and we show that if each player employs a no-coarse-trigger-regret algorithm, the correlated distribution of play converges to an EFCCE (\Cref{theorem:EFCCE-convergence}). This allows for a unifying treatment of EFCE and EFCCE. Moreover, we show that, unlike all existing methods for computing fixed points of trigger deviation functions, the fixed points of \emph{coarse} trigger deviation functions admit a succinct closed-form characterization (\Cref{thm:closedForm}); in turn, this enables us to obtain a much more efficient algorithm for computing them (\Cref{algo:FP-EFCCE}). From a practical standpoint, this is crucial as it substantially reduces the per-iteration complexity of the dynamics, placing EFCCE closer to NFCCE in terms of the underlying complexity, even though EFCCE prescribes a stronger notion of correlation. Another implication of our closed-form characterization is an improved stability analysis for the fixed points, which is much less technical than the one we give for EFCE (\Cref{proposition:FP-EFCCE}). Finally, we support our theoretical findings with experiments on several general-sum benchmarks.

\subsection{Further Related Work} 
\label{sec:related}

The line of work on accelerated no-regret learning was pioneered by \citet{Daskalakis15:Near}, showing that one can bypass the adversarial $\Omega(T^{-1/2})$ barrier for the incurred average regret if \emph{both} players in a zero-sum game employ an uncoupled variant of Nesterov's excessive gap technique~\citep{Nesterov05:Excessive}, leading to a near-optimal rate of $O(\log T/T)$. Subsequently, \citet{Rakhlin13:Online} showed that the optimal rate of $O(T^{-1})$ can be obtained with a remarkably simple variant of mirror descent which incorporates a \emph{prediction} term in the update step. While these results only hold for zero-sum games, \citet{Syrgkanis15:Fast} showed that an $O(T^{-3/4})$ rate can be obtained for multiplayer general-sum normal-form games. In a recent result, \citet{Chen20:Hedging} strengthened the regret bounds in \cite{Syrgkanis15:Fast} from external to swap regret using the celebrated construction of \citet{Blum07:From}, thereby establishing a rate of convergence of $O(T^{-3/4})$ to CE. 
%
Even more recent work~\citep{Daskalakis21:near,Anagnostides21:Near} has established a near-optimal rate of convergence of $\widetilde{O}(T^{-1})$ to correlated equilibria in normal-form games when all players leverage \emph{optimistic multiplicative weights update (OMWU)}, where $\widetilde{O}(\cdot)$ hides $\polylog(T)$ factors. Extending these results to EFCE presents a considerable challenge since their techniques crucially rely on the softmax-type strucure of OMWU on the simplex, as well as the particular structure of the associated fixed points.   



Correlated equilibria in extensive-form games are much less understood than Nash equilibria. 
It is known that a feasible $\efce$ can also be computed efficiently through a variant of the \emph{Ellipsoid algorithm} \citep{Papadimitriou08:Computing,Jiang15:Polynomial-Time}, while an alternative sampling-based approach was given by \citet{Dudik09:SamplingBased}. However, those approaches perform poorly in large-scale problems, and do not allow the players to arrive at $\efce$ via distributed learning. \citet{Celli19:Learning} devised variants of the $\cfr$ algorithm that provably converge to an NFCCE, a solution concept much less appealing than $\efce$ in extensive-form games~\citep{Gordon08:No}. Finally, \citet{Morrill21:Efficient, Morrill21:Hindsight} characterize different hindsight rationality notions in EFGs, associating each solution concept with suitable $O(T^{-1/2})$ no-regret learning dynamics. 
\section{Preliminaries}
\label{section:prel}

In this section we introduce the necessary background related to extensive-form games (EFGs), correlated equilibria in EFGs, and regret minimization. A comprehensive treatment on EFGs can be found in \cite{Shoham09:Multiagent}, while for an introduction to the theory of learning in games the reader is referred to the excellent book of \citet{Cesa06:Prediction}.

\paragraph{Conventions} In the sequel we use the $O(\cdot)$ notation to suppress (universal) constants. We typically use subscripts to indicate the player or some element in the game tree uniquely associated with a given player, such as a decision point; to lighten our notation, the associated player is not made explicit in the latter case. Superscripts are reserved almost exclusively for time indexes. Finally, the $k$-th coordinate of a vector $\vec{x} \in \R^d$ will be denoted by $\vec{x}[k]$.

\subsection{Extensive-Form Games} 

An extensive-form game is abstracted on a directed and rooted \emph{game tree} $\mathcal{T}$. The set of nodes of $\mathcal{T}$ is denoted with $\mathcal{H}$. Non-terminal nodes are referred to as \emph{decision nodes} and are associated with a player who acts by selecting an action from a set of possible actions $\mathcal{A}_h$, where $h \in \mathcal{H}$ represents the decision node. By convention, the set of players $[n] \cup \{c\}$ includes a \emph{fictitious} agent $c$ who ``selects'' actions according to some fixed probability distributions dictated by the nature of the game (\emph{e.g.}, the roll of a dice); this intends to model external stochastic phenomena occurring during the game. For a player $i \in [n] \cup \{c\}$, we let $\mathcal{H}_{i} \subseteq \mathcal{H}$ be the subset of decision nodes wherein a player $i$ makes a decision. The set of \emph{leaves} $\mathcal{Z} \subseteq \mathcal{H}$, or equivalently the \emph{terminal nodes}, correspond to different outcomes of the game. Once the game transitions to a terminal node $z \in \mathcal{Z}$ payoffs are assigned to each player based on a set of (normalized) utility functions $\{ u_{i} : \mathcal{Z} \to [-1,1] \}_{i \in [n]}$. It will also be convenient to represent with $p_{c}(z)$ the product of probabilities of ``chance'' moves encountered in the path from the root until the terminal node $z \in \mathcal{Z}$. In this context, the set of nodes in the game tree can be expressed as the (disjoint) union $\mathcal{H} \defeq \bigcup_{i \in [n] \cup \{c\}} \mathcal{H}_{i} \cup \mathcal{Z}$. 

\paragraph{Imperfect Information} To model imperfect information, the set of decision nodes $\mathcal{H}_{i}$ of player $i$ are partitioned into a collection of sets $\mathcal{J}_{i}$, which are called \emph{information sets}. Each information set $j \in \mathcal{J}_{i}$ groups nodes which cannot be distinguished by player $i$. Thus, for any nodes $h, h' \in j$ we have $\mathcal{A}_h = \mathcal{A}_{h'}$. As usual, we assume that the game satisfies \emph{perfect recall}: players never forget information once acquired. This implies, in particular, that for any nodes $h, h' \in j$ the sequence of $i$'s actions from the root until $h$ must coincide with the sequence from the root to node $h'$; otherwise, $i$ would be able to distinguish between nodes $h$ and $h'$ by virtue of perfect recall. We will also define a partial order $\prec$ on $\mathcal{J}_{i}$ so that $j \prec j'$, for $j, j' \in \mathcal{J}_{i}$, if there exist nodes $h \in j$ and $h' \in j'$ such that the path from the root to $h'$ passes through $h$. If $j \prec j'$, we will say that $j$ is an \emph{ancestor} of $j'$, or equivalently, $j'$ is a \emph{descendant} of $j$. 

\paragraph{Sequence-form Strategies} For a player $i \in [n]$, an information set $j \in \mathcal{J}_{i}$, and an action $a \in \mathcal{A}_j$, we will denote with $\sigma = (j, a)$ the \emph{sequence} of $i$'s actions encountered on the path from the root of the game until (and included) action $a$. 
For notational convenience, we will use the special symbol $\emptyseq$ to denote the \emph{empty sequence}. Then, $i$'s set of sequences is defined as $\Sigma_{i} \defeq \{(j, a) : j \in \mathcal{J}_{i}, a \in \mathcal{A}_j \} \cup \{ \emptyseq \}$; we will also use the notation $\Sigma_i^* \defeq \Sigma_{i} \setminus \{ \emptyseq \}$. For a given information set $j \in \mathcal{J}_{i}$ we will use $\sigma_{j} \in \Sigma_{i}$ to represent the \emph{parent sequence}; \emph{i.e.} the last sequence encountered by player $i$ before reaching any node in the information set $j$, assuming that it exists. Otherwise, we let $\sigma_{j} = \emptyseq$, and we say that $j$ is a \emph{root information set} of player $i$. 
A \emph{strategy} for a player specifies a probability distribution for every possible information set encountered in the game tree. For perfect-recall EFGs, strategies can be equivalently represented in \emph{sequence-form}:

\begin{definition}[Sequence-form Polytope]
    \label{definition:sequence_form}
The \emph{sequence-form strategy polytope} for player $i \in [n]$ is defined as the following (convex) polytope:
\begin{equation*}
    \mathcal{Q}_i \defeq \bigg\{ \vec{q}_i \in \mathbb{R}^{|\Sigma_{i}|}_{\geq 0} : \vec{q}_i[\emptyseq] = 1, \quad  \vec{q}_i[\sigma_j] = \sum_{a \in \mathcal{A}_j} \vec{q}_i[(j, a)], \quad \forall j \in \mathcal{J}_{i} \bigg\}.
\end{equation*}
\end{definition}

This definition ensures the probability mass conservation for the sequence-form strategies along every possible decision point. The probability of playing action $a$ at information set $j \in \mathcal{J}_{i}$ can be obtained by dividing $\vec{q}[(j, a)]$ by $\vec{q}[\sigma_j]$. Analogously, one can define the sequence-form strategy polytope for the \emph{subtree} of the partially ordered set $(\mathcal{J}_{i}, \prec)$ \emph{rooted} at $j \in \mathcal{J}_{i}$, which will be denoted by $\mathcal{Q}_j$. Moreover, the set of \emph{deterministic} sequence-form strategies for player $i \in [n]$ is the set $\Pi_{i} = \mathcal{Q}_{i} \cap \{0, 1\}^{|\Sigma_{i}|}$, and similarly for $\Pi_j$. A well-known implication of Kuhn's theorem~\cite{Kuhn53:Extensive} is that $\mathcal{Q}_{i} = \co \Pi_{i}$, and $\mathcal{Q}_j = \co \Pi_j$, for any $i \in [n]$ and $j \in \mathcal{J}_{i}$. 
The \emph{joint} set of deterministic sequence-form strategies of the players will be represented with $\Pi \defeq \bigtimes_{i \in [n]} \Pi_{i}$. As such, an element $\vec{\pi} \in \Pi$ is an $n$-tuple $(\vec{\pi}_{1}, \dots, \vec{\pi}_{n})$ specifying a deterministic sequence-form strategy for every player $i \in [n]$. Finally, we overload notation by representing the utility of player $i \in [n]$ under a profile $\vec{\pi} \in \Pi$ as
\begin{equation*}
    u_{i}(\vec{\pi}) \defeq \sum_{z \in \mathcal{Z}} p_{c}(z) u_{i}(z) \mathbbm{1} \{ \vec{\pi}_{k}[\sigma_{k, z}] = 1, \forall k \in [n]\},
\end{equation*}
where $\sigma_{i, z}$ denotes the last sequence of player $i$ before reaching the terminal node $z \in \mathcal{Z}$. For the convenience of the reader, in \Cref{tab:notation} we have summarized the main notation related to EFGs used throughout this paper.

\begin{figure}[t]
{\begin{minipage}{8.2cm}\vspace*{-5mm}
\begin{table}[H]
    \centering
    \setlength{\tabcolsep}{1mm}
    \scalebox{.9}{\begin{tabular}{lp{7.9cm}}
     & \textbf{Description}  \\ \toprule
     $\mathcal{J}_i$ & Information sets of player $i$ \\
     $\mathcal{A}_j$ & Set of actions at information set $j$ \\
     $\Sigma_i$ & Set of sequences of player $i$ \\
     $\Sigma_i^*$ & Set of sequences of player $i$ excluding $\emptyseq$ \\ 
     $\Sigma_j$ & Set of sequences at $j \in \mathcal{J}_i$ and all of its descendants \\
     $\mathfrak{D}_{i}$ & Maximum depth of any $j\in\mathcal{J}_i$ \\
     $\mathcal{Z}$ & Number of leaves \\
     $\| \mathcal{Q}_i \|_1$ & Upper bound on the $\ell_1$-norm of any $\vec{x} \in \mathcal{Q}_i$ \\
     \hline
     $\Pi_{i}$ & Deterministic sequence-form strategies of player $i$ \\
     $\Pi_j$ & Deterministic sequence-form strategies rooted at $j \in \mathcal{J}_{i}$ \\
     $\mathcal{Q}_{i}$ & Sequence-form strategies of player $i$ \\
     $\mathcal{Q}_{j}$ & Sequence-form strategies rooted at $j \in \mathcal{J}^{(i)}$ \\
     $\Pi$ & Set of joint deterministic sequence-form strategies \\ \bottomrule
    \end{tabular}}
    \caption{Summary of EFG notation.}
    \label{tab:notation}
\end{table}\vspace*{-9mm}
\end{minipage}}\hfill
{\begin{minipage}{5cm}\vspace*{-5mm}
\begin{figure}[H]
    \centering
    \scalebox{.87}{\begin{tikzpicture}[baseline=0cm]
       \def\dcha{1.5}
       \def\done{.7}
       \def\dtwo{.4}
       \node[chanode] (A) at (0, -.2) {};
       \node[pl1node] (B) at (-\dcha,-1) {};
       \node[pl1node] (C) at (\dcha,-1) {};
       \node[pl2node] (D) at (-\dcha-\done,-2) {};
       \node[pl2node] (E) at (-\dcha+\done,-2) {};
       \node[pl2node] (F) at (\dcha-\done,-2) {};
       \node[pl2node] (G) at (\dcha+\done,-2) {};
       \node[termina] (l1) at (-\dcha-\done-\dtwo,-3) {};
       \node[termina] (l2) at (-\dcha-\done+\dtwo,-3) {};
       \node[termina] (l3) at (-\dcha+\done-\dtwo,-3) {};
       \node[termina] (l4) at (-\dcha+\done+\dtwo,-3) {};
       \node[termina] (l5) at (\dcha-\done-\dtwo,-3) {};
       \node[termina] (l6) at (\dcha-\done+\dtwo,-3) {};
       \node[termina] (l7) at (\dcha+\done- \dtwo,-3) {};
       \node[termina] (l8) at (\dcha+\done+\dtwo,-3) {};

       \draw[semithick] (A) -- (B)
                  --node[fill=white,inner sep=.9] {\seq{1}} (D)
                  --node[fill=white,inner sep=.9] {\seq{5}} (l1);
       \draw[semithick] (B) --node[fill=white,inner sep=.9] {\seq{2}} (E)
                  --node[fill=white,inner sep=.9] {\seq{5}} (l3);
       \draw[semithick] (D) --node[fill=white,inner sep=.9] {\seq{6}} (l2);
       \draw[semithick] (E) --node[fill=white,inner sep=.9] {\seq{6}} (l4);
       \draw[semithick] (A) -- (C)
                  --node[fill=white,inner sep=.9] {\seq{3}} (F)
                  --node[fill=white,inner sep=.9] {\seq{7}} (l5);
       \draw[semithick] (C) --node[fill=white,inner sep=.9] {\seq{4}} (G)
                  --node[fill=white,inner sep=.9] {\seq{7}} (l7);
       \draw[semithick] (F) --node[fill=white,inner sep=.9] {\seq{8}} (l6);
       \draw[semithick] (G) --node[fill=white,inner sep=.9] {\seq{8}} (l8);

       \draw[black!60!white] (B) circle (.25);
       \node[black!60!white]  at ($(B) + (-.4, 0)$) {\Large\textsc{a}};

       \draw[black!60!white] (C) circle (.25);
       \node[black!60!white]  at ($(C) + (.4, 0)$) {\Large\textsc{b}};

       \draw[black!60!white] ($(D) + (0, .2)$) arc (90:270:.2);
       \draw[black!60!white] ($(D) + (0, .2)$) -- ($(E) + (0, .2)$);
       \draw[black!60!white] ($(D) + (0, -.2)$) -- ($(E) + (0, -.2)$);
       \draw[black!60!white] ($(E) + (0, -.2)$) arc (-90:90:.2);
       \node[black!60!white]  at ($(D) + (-.4, 0)$) {\Large\textsc{c}};

       \draw[black!60!white] ($(F) + (0, .2)$) arc (90:270:.2);
       \draw[black!60!white] ($(F) + (0, .2)$) -- ($(G) + (0, .2)$);
       \draw[black!60!white] ($(F) + (0, -.2)$) -- ($(G) + (0, -.2)$);
       \draw[black!60!white] ($(G) + (0, -.2)$) arc (-90:90:.2);
       \node[black!60!white]  at ($(G) + (.4, 0)$) {\Large\textsc{d}};
\end{tikzpicture}}  
    \caption{Example of a two-player EFG.}
    \label{fig:example}
\end{figure}
\end{minipage}}
\end{figure}

\paragraph{An Illustrative Example.} To further clarify some of the concepts we have introduced so far, we illustrate a simple two-player EFG in \Cref{fig:example}. Black nodes belong to player~$1$, white round nodes to player~$2$, square nodes are terminal nodes (aka leaves), and the crossed node is a chance node. Player $2$ has two information sets, $\mathcal{J}_2 \defeq \{\textsc{C}, \textsc{D} \}$, each containing two nodes. This captures the lack of knowledge regarding the action played by player $1$. In contrast, the outcome of the chance move is observed by both players. At the information set $\textsc{C}$, player $2$ has two possible actions, $\mathcal{A}_{\textsc{C}} \defeq \{\seq{5}, \seq{6}\}$. Thus, one possible sequence for player $2$ is the pair $\sigma = (\textsc{C}, \seq{5}) \in \Sigma_{2}$.

\subsection{Online Learning and Optimistic Regret Minimization}

Consider a convex and compact set $\mathcal{X} \subseteq \R^d$ representing the set of strategies of some agent. In the online decision making framework, a \emph{regret minimizer} $\mathcal{R}$ can be thought of as a black-box device which interacts with the external environment via the following two basic subroutines:

\begin{itemize}[nolistsep,itemsep=1mm,leftmargin=.8cm]
    \item $\mathcal{R}.\nextstr()$: The regret minimizer returns a strategy $\vec{x}^{(t)} \in \mathcal{X}$ at time $t$;
    \item $\mathcal{R}.\obsut(\ell^{(t)})$: The regret minimizer receives as feedback a \emph{linear utility function} $\ell^{(t)}: \mathcal{X} \ni \vec{x} \mapsto \langle \vec{\ell}^{(t)}, \vec{x} \rangle$, and may alter its internal state accordingly. 
\end{itemize}
The utility function $\ell^{(t)}$ could depend adversarially on the previous outputs $\vec{x}^{(1)}, \dots, \vec{x}^{(t-1)}$, but not on $\vec{x}^{(t)}$. The decision making is \emph{online} in the sense that the regret minimizer can adapt to previously received information, but no information about future utilities is available. The performance of a regret minimizer is typically measured in terms of its \emph{cumulative external regret} (or simply regret), defined, for a time horizon $T \in \N$, as follows.
\begin{equation}
    \label{eq:external_regret}
    \reg^T \defeq \max_{\vec{x}^* \in \mathcal{X}} \sum_{t=1}^T \langle \vec{x}^*, \vec{\ell}^{(t)} \rangle - \sum_{t=1}^T \langle \vec{x}^{(t)}, \vec{\ell}^{(t)} \rangle.
\end{equation}
That is, the performance of the online algorithm is compared to the best \emph{fixed} strategy in \emph{hindsight}. A regret minimizer is called \emph{Hannan consistent} if, under any sequence of (bounded) utility functions, its regret grows sublinearly with $T$; that is, $\reg^T = o(T)$. It is well-known that broad families of learning algorithms incur $O(\sqrt{T})$ regret under \emph{any} sequence of utility functions, which also matches the lower bound in the adversarial regime (see~\cite{Cesa06:Prediction}).

\paragraph{\texorpdfstring{Phi}{Phi}-Regret.} A conceptual generalization of external regret is the so-called \emph{Phi-regret}. In this framework the performance of the learning algorithm is measured based on a \emph{set of transformations} $\Phi \ni \phi: \mathcal{X} \to \mathcal{X}$, leading to the notion of (cumulative) $\Phi$-regret:
\begin{equation*}
    \reg^T \defeq \max_{\phi^* \in \Phi} \sum_{t=1}^T \langle \phi^*(\vec{x}^{(t)}), \vec{\ell}^{(t)} \rangle - \sum_{t=1}^T \langle \vec{x}^{(t)}, \vec{\ell}^{(t)} \rangle.
\end{equation*}

When the set of transformations $\Phi$ coincides with the set of \emph{constant functions} we recover the notion of external regret given in \eqref{eq:external_regret}. However, Phi-regret is substantially stronger and it yields more appealing notions of hindsight rationality~\citep{Gordon08:No}, incorporating the notion of \emph{swap regret}~\citep{Blum07:From}.

\paragraph{Optimistic Regret Minimization.} An emerging subfield of online learning (\cite{Chiang12:Online,Rakhlin13:Online}) studies the improved performance guarantees one can obtain when the utilities observed by the regret minimization algorithm possess additional structure, typically in the form of \emph{small variation}. Such considerations diverge from the adversarial regime we previously described, and are motivated---among others---by the fact that in many settings the utility functions are themselves selected by \emph{regularized learning algorithms}. For our purposes we shall employ the following definition, which is a modification of the $\texttt{RVU}$ property~\citep{Syrgkanis15:Fast}.
\begin{definition}[Stable-Predictive]
    \label{definition:stable-predictive}
Let $\mathcal{R}$ be a regert minimizer and $\|\cdot\|$ be any norm. $\mathcal{R}$ is said to be $\kappa$-\emph{stable} with respect to $\|\cdot\|$ if for all $t \geq 2$ the strategies output by $\mathcal{R}$ are such that
\begin{equation*}
    \| \vec{x}^{(t)} - \vec{x}^{(t-1)} \| \leq \kappa.
\end{equation*}
Moreover, $\mathcal{R}$ is said to be $(\alpha, \beta)$-\emph{predictive} with respect to $\|\cdot\|$ if its regret $\reg^T$ can be bounded as
\begin{equation}
    \label{eq:predictive}
    \reg^T \leq \alpha + \beta \sum_{t=1}^T \| \vec{\ell}^{(t)} - \vec{m}^{(t)} \|_*^2,
\end{equation}
for any sequence of utilities $\vec{\ell}^{(1)}, \dots, \vec{\ell}^{(T)}$, where $\|\cdot\|_*$ is the dual norm of $\|\cdot\|$.
\end{definition}
In the above definition $\vec{m}^{(t)}$ serves as the \emph{prediction} of the regret minimizer $\mathcal{R}$ at time $t \geq 1$. While traditional online algorithms are not known to satisfy \eqref{eq:predictive}, we will next present natural variants which are indeed stable-predictive in the sense of \Cref{definition:stable-predictive}. 


\paragraph{Optimistic Follow the Regularized Leader.} 
Let $d$ be a $1$-strongly convex \emph{distance generating function (DGF)} with respect to a norm $\|\cdot\|$, and $\eta > 0$ be the \emph{learning rate}. The update rule of \emph{optimistic follow the regularized leader (OFTRL)} takes the following form for $t \geq 2$:
\begin{equation}
    \label{eq:oftrl}
    \tag{OFTRL}
    \vec{x}^{(t)} \defeq \argmax_{\vec{x} \in \mathcal{X}} \left\{ \left\langle \vec{x}, \vec{m}^{(t)} + \sum_{\tau=1}^{t-1} \vec{\ell}^{(\tau)} \right\rangle - \frac{d(\vec{x})}{\eta} \right\},
\end{equation}
where $\vec{m}^{(t)}$ is the prediction at time $t$, and $\vec{x}^{(1)} \defeq \argmin_{\vec{x} \in \mathcal{X}} d(\vec{x})$. Unless specified otherwise, it will be tacitly assumed that $\vec{m}^{(t)} \defeq \vec{\ell}^{(t-1)}$, for $t \geq 1$, where we conventionally let $\vec{\ell}^{(0)} \defeq \vec{0}$. \citet{Syrgkanis15:Fast} established the following property:
\begin{lemma}
    \label{lemma:oftrl-predictive}
    \eqref{eq:oftrl} is $(\Omega_d/\eta, \eta)$-predictive\footnote{\citet{Syrgkanis15:Fast} only stated this for the simplex, but their proof readily extends to arbitrary convex and compact sets.} with respect to any norm $\|\cdot\|$ for which $d$ is $1$-strongly convex, where $\Omega_d$ is the range of $d$ on $\mathcal{X}$, that is, $\Omega_d \defeq \max_{\vec{x},\vec{x}'\in\mathcal{X}} \{d(\vec{x}) - d(\vec{x}')\}$.
\end{lemma}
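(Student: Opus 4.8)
The plan is to avoid the adversarial $O(\sqrt{T})$ machinery entirely and instead run a potential-function argument built on the convex conjugate of the regularizer, which is transparently insensitive to the geometry of $\mathcal{X}$ (this is exactly what lets the simplex argument of \citet{Syrgkanis15:Fast} carry over to an arbitrary convex compact $\mathcal{X}$). Write $R \defeq d/\eta$, which by hypothesis is $(1/\eta)$-strongly convex with respect to $\|\cdot\|$, set $L^{(t)} \defeq \sum_{\tau=1}^{t} \vec{\ell}^{(\tau)}$ with $L^{(0)} \defeq \vec{0}$, and introduce the restricted conjugate $R^\star(\vec{g}) \defeq \max_{\vec{x} \in \mathcal{X}}\{\langle \vec{g}, \vec{x}\rangle - R(\vec{x})\}$. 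The starting observation is the standard strong/smooth duality: since $R$ is $(1/\eta)$-strongly convex on the compact convex set $\mathcal{X}$, the maximizer defining $R^\star(\vec{g})$ is unique, $R^\star$ is differentiable with $\nabla R^\star(\vec{g}) = \argmax_{\vec{x} \in \mathcal{X}}\{\langle \vec{g}, \vec{x}\rangle - R(\vec{x})\}$, and $R^\star$ is $\eta$-smooth with respect to $\|\cdot\|_*$. Crucially, \eqref{eq:oftrl} is precisely $\vec{x}^{(t)} = \nabla R^\star(L^{(t-1)} + \vec{m}^{(t)})$.

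The core of the argument is to telescope the potential $\Phi^{(t)} \defeq R^\star(L^{(t)})$ and to bound each increment by the per-round loss of the iterate. I would expand each difference $R^\star(L^{(t)}) - R^\star(L^{(t-1)})$ around the shifted point $L^{(t-1)} + \vec{m}^{(t)}$, whose gradient is exactly $\vec{x}^{(t)}$: applying $\eta$-smoothness as an upper bound on $R^\star(L^{(t)})$ and plain convexity as a lower bound on $R^\star(L^{(t-1)})$. The shift term $R^\star(L^{(t-1)} + \vec{m}^{(t)})$ then cancels and the two first-order contributions recombine into $\langle \vec{x}^{(t)}, \vec{\ell}^{(t)}\rangle$, leaving $R^\star(L^{(t)}) - R^\star(L^{(t-1)}) \leq \langle \vec{x}^{(t)}, \vec{\ell}^{(t)}\rangle + \frac{\eta}{2}\|\vec{\ell}^{(t)} - \vec{m}^{(t)}\|_*^2$. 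This recombination of the prediction terms is the whole point of the optimistic update, and it is the one step I expect to need the most care to get exactly right.

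Summing over $t$ telescopes the left-hand side to $R^\star(L^{(T)}) - R^\star(\vec{0})$. I would then lower-bound the total potential change by the comparator: $R^\star(L^{(T)}) \geq \langle \vec{x}^*, L^{(T)}\rangle - R(\vec{x}^*)$ for every $\vec{x}^* \in \mathcal{X}$ by definition of the conjugate, while $R^\star(\vec{0}) = -\min_{\vec{x} \in \mathcal{X}} R(\vec{x})$. Rearranging the resulting inequality isolates the regret and yields $\reg^T \leq R(\vec{x}^*) - \min_{\vec{x} \in \mathcal{X}} R(\vec{x}) + \frac{\eta}{2}\sum_{t=1}^{T} \|\vec{\ell}^{(t)} - \vec{m}^{(t)}\|_*^2$. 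Finally, $R(\vec{x}^*) - \min_{\vec{x} \in \mathcal{X}} R(\vec{x}) \leq \max_{\vec{x},\vec{x}'}\{R(\vec{x}) - R(\vec{x}')\} = \Omega_d/\eta$ and $\frac{\eta}{2} \leq \eta$, giving exactly the claimed $(\Omega_d/\eta, \eta)$-predictive bound.

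The only genuinely non-routine ingredient is the duality fact that $(1/\eta)$-strong convexity of $R$ forces $R^\star$ to be $\eta$-smooth with gradient equal to the $\argmax$ map; this is where one must be careful, and it is also precisely the place where nothing about the simplex is used, so the bound holds for any convex compact $\mathcal{X}$ on which $d$ is $1$-strongly convex. An alternative, more elementary route would skip the conjugate and combine a ``be-the-leader'' inequality against the one-step FTRL leader $\vec{w}^{(t)} \defeq \argmax_{\vec{x} \in \mathcal{X}}\{\langle \vec{x}, L^{(t)}\rangle - R(\vec{x})\}$ with the stability estimate $\|\vec{x}^{(t)} - \vec{w}^{(t)}\| \leq \eta\|\vec{\ell}^{(t)} - \vec{m}^{(t)}\|_*$ derived from first-order optimality and strong convexity; the difficulty there is that the naive decomposition leaves a residual inner product against $\vec{m}^{(t)}$ that is awkward to control, so the conjugate-based telescoping is cleaner and is the one I would present.
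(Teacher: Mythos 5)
Your proof is correct, and it takes a genuinely different route from the one the paper relies on. The paper does not prove \Cref{lemma:oftrl-predictive} at all: it invokes the proof of \citet{Syrgkanis15:Fast}, which runs through the auxiliary ``be-the-leader'' FTRL sequence $\tilde{\vec{x}}^{(t)} \defeq \argmax_{\vec{x} \in \mathcal{X}}\{\langle \vec{x}, \sum_{\tau \le t}\vec{\ell}^{(\tau)}\rangle - d(\vec{x})/\eta\}$, a be-the-leader induction, and strong-convexity stability bounds of the form $\|\vec{x}^{(t)} - \tilde{\vec{x}}^{(t)}\| \le \eta\|\vec{\ell}^{(t)} - \vec{m}^{(t)}\|_*$ --- i.e., exactly the ``more elementary route'' you sketch and set aside at the end. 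Your primary argument instead telescopes the restricted conjugate $R^\star(\vec{g}) = \max_{\vec{x}\in\mathcal{X}}\{\langle \vec{g},\vec{x}\rangle - R(\vec{x})\}$, and the key step checks out: with $\vec{z} \defeq L^{(t-1)} + \vec{m}^{(t)}$ and $\vec{x}^{(t)} = \nabla R^\star(\vec{z})$, the $\eta$-smoothness upper bound $R^\star(L^{(t)}) \le R^\star(\vec{z}) + \langle \vec{x}^{(t)}, \vec{\ell}^{(t)} - \vec{m}^{(t)}\rangle + \tfrac{\eta}{2}\|\vec{\ell}^{(t)} - \vec{m}^{(t)}\|_*^2$ combined with the convexity lower bound $R^\star(L^{(t-1)}) \ge R^\star(\vec{z}) - \langle \vec{x}^{(t)}, \vec{m}^{(t)}\rangle$ indeed recombines into $\langle \vec{x}^{(t)}, \vec{\ell}^{(t)}\rangle$, and the endpoint bounds $R^\star(L^{(T)}) \ge \langle \vec{x}^*, L^{(T)}\rangle - R(\vec{x}^*)$ and $R^\star(\vec{0}) = -\min_{\vec{x}} R(\vec{x})$ deliver $\reg^T \le \Omega_d/\eta + \tfrac{\eta}{2}\sum_t \|\vec{\ell}^{(t)} - \vec{m}^{(t)}\|_*^2$, which is even slightly stronger than claimed. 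The trade-off between the two approaches is worth noting. Your conjugate argument makes the footnote's assertion self-evident --- nothing anywhere uses the simplex, only strong convexity of $d$ on a compact convex $\mathcal{X}$ and the standard duality fact that a $(1/\eta)$-strongly convex function has an $\eta$-smooth conjugate --- and it is shorter. What it gives up is the negative stability terms $-\tfrac{1}{4\eta}\sum_t\bigl(\|\vec{x}^{(t)} - \tilde{\vec{x}}^{(t)}\|^2 + \|\vec{x}^{(t)} - \tilde{\vec{x}}^{(t-1)}\|^2\bigr)$ that the auxiliary-sequence analysis of \citet{Syrgkanis15:Fast} produces as part of the full RVU property; those terms are the engine of stability-based cancellations in other applications. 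Since \Cref{definition:stable-predictive} discards them anyway --- this paper handles stability separately, through the multiplicative-stability lemmas --- your proof is fully adequate for the lemma as stated.
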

The \emph{entropic regularizer} on the simplex is defined as $d(\vec{x}) \defeq \sum_{k=1}^d \vec{x}[k] \log \vec{x}[k]$, and it is well-known to be $1$-strongly convex with respect to the $\ell_1$-norm. 
This OFTRL setup will be referred to as \emph{optimistic multiplicative weights updates (OMWU)}.\footnote{When $\vec{m}^{(t)} \defeq \vec{0}$, for all $t \geq 1$, we recover \emph{multiplicative weights updates (MWU)}.}


We will also require a suitable DGF for the sequence-form polytope. To this end, we will employ the \emph{dilatable global entropy} DGF, recently introduced by~\citet{Farina21:Better}. 

\begin{definition}[\citep{Farina21:Better}]
    \label{definition:global-dgf}
The \emph{dilatable global entropy distance generating function} $d : \mathcal{Q} \to \R_{\geq 0}$ is defined as 
\begin{equation*}
    d(\vec{x}) \defeq \sum_{\sigma \in \Sigma } \vec{w}[\sigma] \vec{x}[\sigma] \log (\vec{x}[\sigma]). 
\end{equation*}
The vector $\vec{w} \in \R^{|\Sigma|}$ is defined recursively as 
\begin{align*}
    \vec{w}[\emptyseq] &= 1; \\
    \vec{w}[(j,a)] &= \vec{\gamma}[j] - \sum_{j': \sigma_{j'} = (j,a)} \vec{\gamma}[j'], \quad \forall (j, a) \in \Sigma,
\end{align*}
where
\begin{equation}
    \label{eq:gamma}
    \vec{\gamma}[j] = 1 + \max_{a \in \mathcal{A}_j} \left\{ \sum_{j': \sigma_{j'} = (j,a) } \vec{\gamma}[j'] \right\}, \quad \forall j \in \mathcal{J}.
\end{equation}
\end{definition}

This DGF is ``nice'' (in the parlance of \citet{Hoda10:Smoothing}) since its gradient, as well as the gradient of its convex conjugate, can be computed \emph{exactly} in linear time in $|\Sigma|$---the dimension of the domain. Our analysis will require the following characterization.

\begin{lemma}[\cite{Farina21:Better}]
    \label{lemma:dge}
    The dilatable global entropy $d$ of \Cref{definition:global-dgf} is a DGF for the sequence-form polytope $\mathcal{Q}$. Moreover, it is $1/\|\mathcal{Q} \|_1$-strongly convex on $\relint \mathcal{Q}$ with respect to the $\|\cdot\|_1$ norm, where $\|\mathcal{Q} \|_1 = \max_{\vec{q} \in \mathcal{Q}}\|\vec{q}\|_1$. Finally, the $d$-diameter $\Omega_d$ of $\mathcal{Q}$ is at most $\| \mathcal{Q} \|_1^2 \max_{j \in \mathcal{J}} \log |\mathcal{A}_j|$.
\end{lemma}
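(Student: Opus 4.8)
The plan is to verify the three assertions in turn, the only substantial work being the diameter bound; throughout I write $\vec{w}[\sigma]$ and $\vec{\gamma}[j]$ for the weights of \Cref{definition:global-dgf}. First I would record the one structural fact that drives everything: from the recursion \eqref{eq:gamma}, for every $a\in\mathcal{A}_j$ we have $\vec{\gamma}[j]\ge 1+\sum_{j':\sigma_{j'}=(j,a)}\vec{\gamma}[j']$, and therefore $\vec{w}[(j,a)]=\vec{\gamma}[j]-\sum_{j':\sigma_{j'}=(j,a)}\vec{\gamma}[j']\ge 1$; together with $\vec{w}[\emptyseq]=1$ this gives $\vec{w}[\sigma]\ge 1$ for all $\sigma\in\Sigma$. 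Since $t\mapsto t\log t$ is convex and smooth on $(0,\infty)$ and the coefficients are nonnegative, $d$ is convex and twice continuously differentiable on $\relint\mathcal{Q}$ (where every coordinate is strictly positive), which already exhibits it as a legitimate DGF for $\mathcal{Q}$.

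For strong convexity I would argue directly from the Hessian, which is diagonal: $\nabla^2 d(\vec{x})=\diag\!\big(\vec{w}[\sigma]/\vec{x}[\sigma]\big)$. Fixing $\vec{x}\in\relint\mathcal{Q}$ and an arbitrary direction $\vec{h}$, a single Cauchy–Schwarz, splitting $|\vec{h}[\sigma]|=\big(\sqrt{\vec{w}[\sigma]/\vec{x}[\sigma]}\,|\vec{h}[\sigma]|\big)\cdot\sqrt{\vec{x}[\sigma]/\vec{w}[\sigma]}$, yields $\|\vec{h}\|_1^2\le\big(\vec{h}^\top\nabla^2 d(\vec{x})\,\vec{h}\big)\sum_{\sigma}\vec{x}[\sigma]/\vec{w}[\sigma]$. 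Because $\vec{w}[\sigma]\ge 1$, the trailing sum is at most $\sum_\sigma\vec{x}[\sigma]=\|\vec{x}\|_1\le\|\mathcal{Q}\|_1$, so $\vec{h}^\top\nabla^2 d(\vec{x})\,\vec{h}\ge\|\vec{h}\|_1^2/\|\mathcal{Q}\|_1$; integrating twice along a segment yields $1/\|\mathcal{Q}\|_1$-strong convexity in $\|\cdot\|_1$. Notably this step uses only $\vec{w}[\sigma]\ge 1$, and neither the tree structure nor the tangent-space equalities of $\mathcal{Q}$.

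The diameter is where the precise choice of weights earns its keep, and I expect it to be the crux. Since $\vec{x}[\sigma]\in[0,1]$ and $\vec{w}[\sigma]\ge 0$, every summand of $d$ is nonpositive, so $\max_{\mathcal{Q}}d=0$ (attained at any deterministic vertex) and hence $\Omega_d=-\min_{\mathcal{Q}}d=\max_{\vec{x}}(-d(\vec{x}))$. To bound $-d$ I would substitute $\vec{w}[(j,a)]=\vec{\gamma}[j]-\sum_{j':\sigma_{j'}=(j,a)}\vec{\gamma}[j']$ and reorganize the resulting double sum using the behavioral identity $\sum_{a\in\mathcal{A}_j}\vec{x}[(j,a)]\log\tfrac{1}{\vec{x}[(j,a)]}=\vec{x}[\sigma_j]\log\tfrac{1}{\vec{x}[\sigma_j]}+\vec{x}[\sigma_j]H_j$, where $H_j$ is the local conditional entropy at $j$. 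Each term of the form $\vec{x}[\sigma]\log\tfrac1{\vec{x}[\sigma]}$ then appears twice with opposite signs—once from the $\vec{\gamma}[j]$ piece at $j$ and once from the $-\sum_{j'}\vec{\gamma}[j']$ piece carried by the parent edge—so it cancels in pairs, the only uncancelled contribution sitting at the root where $\vec{x}[\emptyseq]=1$ forces it to vanish; this leaves the clean expression $-d(\vec{x})=\sum_{j\in\mathcal{J}}\vec{\gamma}[j]\,\vec{x}[\sigma_j]H_j$.

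Finally I would bound $H_j\le\log|\mathcal{A}_j|$ and control the weighted reach mass $\sum_{j}\vec{\gamma}[j]\,\vec{x}[\sigma_j]$. The key observation is that $\vec{\gamma}[j]$ as computed by \eqref{eq:gamma} is exactly the largest total sequence mass a strategy can place strictly below $\sigma_j$ in the subtree rooted at $j$—that is, $\vec{\gamma}[j]=\|\mathcal{Q}_j\|_1$ under the natural convention—whence $\vec{\gamma}[j]\le\|\mathcal{Q}\|_1$; moreover, grouping $\sum_j\vec{\gamma}[j]\vec{x}[\sigma_j]$ by parent sequence and invoking the recursion once more shows that the coefficient multiplying each $\vec{x}[\sigma]$ is again at most $\|\mathcal{Q}\|_1$, so $\sum_j\vec{\gamma}[j]\vec{x}[\sigma_j]\le\|\mathcal{Q}\|_1\,\|\vec{x}\|_1\le\|\mathcal{Q}\|_1^2$. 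Combining the two bounds gives $\Omega_d\le\|\mathcal{Q}\|_1^2\max_{j\in\mathcal{J}}\log|\mathcal{A}_j|$. The main obstacle is precisely this last paragraph: both the telescoping and the reach-mass estimate rely on reinterpreting the recursively defined weights $\vec{\gamma},\vec{w}$ as the subtree $\ell_1$-capacities of $\mathcal{Q}$, which is exactly the design principle behind \Cref{definition:global-dgf}.
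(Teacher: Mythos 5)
There is nothing to compare against inside this paper: \Cref{lemma:dge} is imported from \citet{Farina21:Better} with no internal proof (hence the citation in the statement), so your argument has to be judged on its own merits. Judged that way, it is correct and self-contained, and it essentially reconstructs the mechanism of the cited work. The Hessian/Cauchy--Schwarz argument for $1/\|\mathcal{Q}\|_1$-strong convexity is sound and only needs $\vec{w}[\sigma]\ge 1$, which follows from \eqref{eq:gamma} exactly as you say; and your telescoping identity $-d(\vec{x})=\sum_{j\in\mathcal{J}}\vec{\gamma}[j]\,\vec{x}[\sigma_j]H_j$ is precisely the ``dilatability'' property that gives the DGF of \Cref{definition:global-dgf} its name: the globally weighted entropy can be rewritten as a dilated entropy whose information-set weights are the $\vec{\gamma}[j]$, after which $H_j\le\log|\mathcal{A}_j|$ and the reach-mass bound finish the diameter estimate. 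One detail in the last step should be made explicit. Grouping $\sum_j\vec{\gamma}[j]\vec{x}[\sigma_j]$ by parent sequence, the recursion \eqref{eq:gamma} bounds the coefficient $\sum_{j:\sigma_j=\sigma}\vec{\gamma}[j]$ only when $\sigma=(j'',a)$ is nonempty (by $\vec{\gamma}[j'']-1\le\|\mathcal{Q}\|_1$); for $\sigma=\emptyseq$ the coefficient is $\sum_{j\ \mathrm{root}}\vec{\gamma}[j]$, and bounding it by $\|\mathcal{Q}\|_1$ is \emph{not} an instance of the recursion but of your other observation, namely $\vec{\gamma}[j]=\|\mathcal{Q}_j\|_1$ (a one-line induction, since $\|\cdot\|_1$ is linear on the nonnegative orthant and hence maximized at a deterministic vertex satisfying the same recursion) combined with the fact that the root subtrees are disjoint and simultaneously reachable, so that $1+\sum_{j\ \mathrm{root}}\vec{\gamma}[j]=\|\mathcal{Q}\|_1$. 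With that sentence added, the proof is complete.
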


In the sequel we will instantiate \eqref{eq:oftrl} with dilatable global entropy as DGF to construct a stable-predictive regret minimizer for the sequence-form strategy polytope. 

\subsection{Extensive-Form Correlated and Coarse Correlated Equilibrium} 

In this subsection we introduce the notion of an \emph{extensive-form correlated and coarse correlated equilibrium} (henceforth EFCE and EFCCE respectively). First, for EFCE we will work with the definition used in \citep{Farina19:Correlation}, which is equivalent to the original one due to \citet{Stengel08:Extensive}. To this end, let us introduce the concept of a \emph{trigger deviation function}.
\begin{definition}
    \label{definition:trigger_deviation_functions}
Consider some player $i \in [n]$. A \emph{trigger deviation function} with respect to a \emph{trigger sequence} $\hat{\sigma} = (j,a) \in \Sigma_i^*$ and a \emph{continuation strategy} $\hat{\vec{\pi}}_i \in \Pi_j$ is any linear function $f : \R^{|\Sigma_{i}|} \to \R^{|\Sigma_{i}|}$ with the following properties.
\begin{itemize}
    \item Any strategy $\vec{\pi}_i \in \Pi_{i}$ which does not prescribe the sequence $\hat{\sigma}$ remains invariant. That is, $f(\vec{\pi}_i) = \vec{\pi}_i$ for any $\vec{\pi}_i \in \Pi_{i}$ such that $\vec{\pi}_i[\hat{\sigma}] = 0$;
    \item Otherwise, the prescribed sequence $\hat{\sigma} = (j, a)$ is modified so that the behavior at $j$ and all of its descendants is replaced by the behavior specified by the continuation strategy:
    \begin{equation*}
        f(\vec{\pi}_i)[\sigma] = 
        \begin{cases}
        \vec{\pi}_i[\sigma] & \textrm{  if  } \sigma \not\succeq j ;\\
        \hat{\vec{\pi}}_i[\sigma] & \textrm{  if  } \sigma \succeq j,
        \end{cases}
    \end{equation*}
    for all $\sigma \in \Sigma_{i}$.
\end{itemize}
\end{definition}
%
We will let $\Psi_{i} \defeq \{ \phi_{\hat{\sigma} \rightarrow \hat{\vec{\pi}}_i} : \hat{\sigma} = (j, a) \in \Sigma_{i}^*, \hat{\vec{\pi}}_i \in \Pi_j \}$ be the set of all possible (linear) mappings defining trigger deviation functions for player $i$. We are ready to introduce the concept of $\efce$.


\begin{definition}[EFCE]
    \label{definition:efce}
A probability distribution $\vec{\mu} \in \Delta(\Pi)$ is an $\epsilon$-EFCE, for $\epsilon \geq 0$, if for every player $i \in [n]$ and every trigger deviation function $\phi_{\hat{\sigma} \rightarrow \hat{\vec{\pi}}_i} \in \Psi_{i}$,
\begin{equation*}
    \E_{\vec{\pi} \sim \vec{\mu}} \left[ u_{i} \left( \phi_{\hat{\sigma} \rightarrow \hat{\vec{\pi}}_i} (\vec{\pi}_{i}), \vec{\pi}_{-i} \right) - u_{i}(\vec{\pi}) \right] \leq \epsilon,
\end{equation*}
where $\vec{\pi} = (\vec{\pi}_1, \dots, \vec{\pi}_n) \in \Pi$. We say that $\vec{\mu} \in \Delta(\Pi)$ is an $\efce$ if it is a $0$-EFCE.
\end{definition}

\begin{theorem}[\cite{Farina21:Simple}]
    \label{theorem:EFCE-convergence}
    Suppose that for every player $i \in [n]$ the sequence of deterministic sequence-form strategies $\vec{\pi}_i^{(1)}, \dots, \vec{\pi}_i^{(T)} \in \Pi_{i}$ incurs $\Psi_{i}$-regret at most $\reg_i^{T}$ under the sequence of linear utility functions
    \begin{equation*}
        \ell_i^{(t)} : \Pi_{i} \ni \vec{\pi}_{i} \mapsto u_{i} \left( \vec{\pi}_{i}, \vec{\pi}_{-i}^{(t)} \right).
    \end{equation*}
    Then, the correlated distribution of play $\vec{\mu} \in \Delta(\Pi)$ is an $\epsilon$-EFCE, where $\epsilon \defeq \frac{1}{T} \max_{i \in [n]} \reg_i^{T}$.
\end{theorem}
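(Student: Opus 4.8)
The plan is to identify the ``correlated distribution of play'' with the empirical distribution over the observed profiles, and then to unfold the $\epsilon$-$\efce$ condition into a per-player, per-round time average that is controlled verbatim by the $\Psi_i$-regret hypothesis. Concretely, I would set $\vec{\mu} \defeq \frac{1}{T}\sum_{t=1}^T \delta_{\vec{\pi}^{(t)}} \in \Delta(\Pi)$, where $\vec{\pi}^{(t)} \defeq (\vec{\pi}_1^{(t)}, \dots, \vec{\pi}_n^{(t)}) \in \Pi$ is the joint deterministic profile played at round $t$; since each $\vec{\pi}^{(t)} \in \Pi$, this is a bona fide element of $\Delta(\Pi)$.

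The first ingredient I would record is a linearity fact: for a fixed opponent profile $\vec{\pi}_{-i}^{(t)}$, the map $\vec{\pi}_i \mapsto u_i(\vec{\pi}_i, \vec{\pi}_{-i}^{(t)})$ is linear in the sequence-form representation, because $u_i(\vec{\pi}) = \sum_{z} p_c(z) u_i(z) \prod_k \mathbbm{1}\{\vec{\pi}_k[\sigma_{k,z}]=1\}$ depends on $\vec{\pi}_i$ only through the single factor $\vec{\pi}_i[\sigma_{i,z}]$. Hence there exists a utility vector $\vec{\ell}_i^{(t)}$ such that $\langle \vec{\pi}_i, \vec{\ell}_i^{(t)}\rangle = u_i(\vec{\pi}_i, \vec{\pi}_{-i}^{(t)})$ for all $\vec{\pi}_i \in \R^{|\Sigma_i|}$; this is exactly the linear feedback $\ell_i^{(t)}$ in the statement, and crucially the identity holds on all of $\R^{|\Sigma_i|}$, hence also on the image of any (linear) trigger deviation function.

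Next I would fix an arbitrary player $i$ and an arbitrary trigger deviation function $\phi_{\hat{\sigma} \rightarrow \hat{\vec{\pi}}_i} \in \Psi_i$, and evaluate the left-hand side of the $\epsilon$-$\efce$ inequality against $\vec{\mu}$. Because $\vec{\mu}$ is an average of point masses, the expectation collapses to a time average, and applying the linearity fact to both the deviated strategy and the played strategy (against the common opponent profile $\vec{\pi}_{-i}^{(t)}$) gives
\begin{equation*}
\E_{\vec{\pi}\sim\vec{\mu}}\!\left[ u_i\!\left(\phi_{\hat{\sigma} \rightarrow \hat{\vec{\pi}}_i}(\vec{\pi}_i),\vec{\pi}_{-i}\right) - u_i(\vec{\pi})\right] = \frac{1}{T}\sum_{t=1}^T\left(\left\langle \phi_{\hat{\sigma} \rightarrow \hat{\vec{\pi}}_i}(\vec{\pi}_i^{(t)}), \vec{\ell}_i^{(t)}\right\rangle - \left\langle \vec{\pi}_i^{(t)}, \vec{\ell}_i^{(t)}\right\rangle\right).
\end{equation*}
Since $\phi_{\hat{\sigma} \rightarrow \hat{\vec{\pi}}_i}$ is one particular comparator in $\Psi_i$, the sum on the right is at most $\max_{\phi^*\in\Psi_i}\sum_t (\langle \phi^*(\vec{\pi}_i^{(t)}), \vec{\ell}_i^{(t)}\rangle - \langle \vec{\pi}_i^{(t)}, \vec{\ell}_i^{(t)}\rangle) = \reg_i^T$, so the whole quantity is bounded by $\frac{1}{T}\reg_i^T \le \frac{1}{T}\max_{i\in[n]}\reg_i^T = \epsilon$. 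As $i$ and $\phi_{\hat{\sigma} \rightarrow \hat{\vec{\pi}}_i}$ were arbitrary, every $\efce$ constraint holds, so $\vec{\mu}$ is an $\epsilon$-$\efce$.

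The argument is in essence a bookkeeping identity, so I do not expect a deep obstacle; the one point requiring genuine care is the exact alignment between the set $\Psi_i$ over which $\Psi_i$-regret is measured and the set of trigger deviation functions appearing in \Cref{definition:efce}. I would verify that these coincide verbatim---in particular that the trigger-sequence constraint $\hat{\sigma} \in \Sigma_i^*$ and the continuation-strategy constraint $\hat{\vec{\pi}}_i \in \Pi_j$ match in both definitions---so that a single deviation function is simultaneously an admissible comparator for the regret and an admissible test for the equilibrium. The only other subtlety worth flagging explicitly is that $\phi_{\hat{\sigma} \rightarrow \hat{\vec{\pi}}_i}(\vec{\pi}_i^{(t)})$ need not itself be a vertex of $\Pi_i$, but the linearity established above makes the inner-product identity valid regardless, so this causes no difficulty.
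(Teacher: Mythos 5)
Your proof is correct and follows essentially the same route as the paper: the paper (which imports this EFCE statement from \citet{Farina21:Simple} and proves the analogous EFCCE result in \Cref{appendix:proofs_prel}) likewise takes $\vec{\mu}$ to be the empirical distribution of the played profiles, collapses the expectation into a time average of per-round deviation gains, and bounds it by $\reg_i^T/T$ using linearity of $\ell_i^{(t)}$. The only nitpick is that your flagged subtlety is vacuous—trigger deviation functions actually map vertices of $\Pi_i$ to vertices of $\Pi_i$—but since you handle it via linear extension anyway, this costs nothing.
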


Similarly, we introduce the closely related notion of a \emph{coarse} trigger deviation function.

\begin{definition}[Coarse Trigger Deviation Functions]
    \label{definition:coarse_trigger_deviation_functions}
Consider some player $i \in [n]$. A \emph{coarse trigger deviation function} with respect to an information set $j \in \mathcal{J}_i$ and a continuation strategy $\hat{\vec{\pi}}_i \in \Pi_j$ is any linear function $f : \R^{|\Sigma_{i}|} \to \R^{|\Sigma_{i}|}$ with the following properties:
\begin{itemize}
    \item For any $\vec{\pi}_i \in \Pi_{i}$ such that $\vec{\pi}_i[\sigma_j] = 0$ it holds that $f(\vec{\pi}_i) = \vec{\pi}_i$;
    \item Otherwise,
    \begin{equation*}
        f(\vec{\pi}_i)[\sigma] = 
        \begin{cases}
        \vec{\pi}_i[\sigma] & \textrm{  if  } \sigma \not\succeq j ;\\
        \hat{\vec{\pi}}_i[\sigma] & \textrm{  if  } \sigma \succeq j,
        \end{cases}
    \end{equation*}
    for all $\sigma \in \Sigma_i$.
\end{itemize}
\end{definition}
We will also let $\widetilde{\Psi}_{i} \defeq \{ \phi_{j \rightarrow \hat{\vec{\pi}}_i} : j\in \mathcal{J}_{i}, \hat{\vec{\pi}}_i \in \Pi_j \}$ be the set of all (linear) mappings inducing coarse trigger deviations functions for player $i$.
\begin{definition}[EFCCE]
    \label{definition:efcce}
A probability distribution $\vec{\mu} \in \Delta(\Pi)$ is an $\epsilon$-EFCCE, for $\epsilon \geq 0$, if for every player $i \in [n]$ and every coarse trigger deviation function $\phi_{j \rightarrow \hat{\vec{\pi}}_i} \in \widetilde{\Psi}_{i}$,
\begin{equation*}
    \E_{\vec{\pi} \sim \vec{\mu}} \left[ u_{i} \left( \phi_{j \rightarrow \hat{\vec{\pi}}_i} (\vec{\pi}_{i}), \vec{\pi}_{-i} \right) - u_{i}(\vec{\pi}) \right] \leq \epsilon,
\end{equation*}
where $\vec{\pi} = (\vec{\pi}_1, \dots, \vec{\pi}_n) \in \Pi$. We say that $\vec{\mu} \in \Delta(\Pi)$ is an EFCCE if it is a $0$-EFCCE.
\end{definition}

Analogously to \Cref{theorem:EFCE-convergence}, we show (in \Cref{appendix:proofs_prel}) that if all players employ a $\widetilde{\Psi}_i$-regret minimizer, the correlated distribution of play converges to an EFCCE. 

\begin{restatable}{theorem}{efccereg}
    \label{theorem:EFCCE-convergence}
    Suppose that for every player $i \in [n]$ the sequence of deterministic sequence-form strategies $\vec{\pi}_i^{(1)}, \dots, \vec{\pi}_i^{(T)} \in \Pi_{i}$ incurs $\widetilde{\Psi}_{i}$-regret at most $\reg_i^{T}$ under the sequence of linear utility functions
    \begin{equation*}
        \ell_i^{(t)} : \Pi_{i} \ni \vec{\pi}_{i} \mapsto u_{i} \left( \vec{\pi}_{i}, \vec{\pi}_{-i}^{(t)} \right).
    \end{equation*}
    Then, the correlated distribution of play $\vec{\mu} \in \Delta(\Pi)$ is an $\epsilon$-EFCCE, where $\epsilon \defeq \frac{1}{T} \max_{i \in [n]} \reg_i^{T}$.
\end{restatable}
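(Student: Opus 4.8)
The plan is to mirror the argument establishing \Cref{theorem:EFCE-convergence}, replacing trigger deviations with their coarse counterparts. First I would take $\vec{\mu} \in \Delta(\Pi)$ to be the \emph{empirical} distribution of play, which assigns mass $1/T$ to each realized joint profile $\vec{\pi}^{(t)} = (\vec{\pi}_1^{(t)}, \dots, \vec{\pi}_n^{(t)})$ for $t = 1, \dots, T$. The crux is then a simple bookkeeping identity: for every player $i \in [n]$ and every coarse trigger deviation $\phi = \phi_{j \rightarrow \hat{\vec{\pi}}_i} \in \widetilde{\Psi}_i$, expanding the expectation over this discrete distribution rewrites the deviation gain as a time average,
\begin{equation*}
    \E_{\vec{\pi}\sim\vec{\mu}}\left[ u_i(\phi(\vec{\pi}_i), \vec{\pi}_{-i}) - u_i(\vec{\pi})\right] = \frac{1}{T}\sum_{t=1}^T \left( u_i(\phi(\vec{\pi}_i^{(t)}), \vec{\pi}_{-i}^{(t)}) - u_i(\vec{\pi}_i^{(t)}, \vec{\pi}_{-i}^{(t)})\right).
\end{equation*}

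Next I would exploit the fact that, with the opponents' profile $\vec{\pi}_{-i}^{(t)}$ held fixed, the map $u_i(\cdot, \vec{\pi}_{-i}^{(t)})$ is precisely the linear utility functional $\ell_i^{(t)} : \vec{\pi}_i \mapsto \langle \vec{\ell}_i^{(t)}, \vec{\pi}_i \rangle$ featured in the statement. This is immediate from the multilinear form of $u_i$ recorded in \Cref{section:prel}: evaluating $u_i(\vec{\pi}_i, \vec{\pi}_{-i}^{(t)})$ at any deterministic strategy $\vec{\pi}_i$ agrees with pairing against the coefficient vector $\vec{\ell}_i^{(t)}$. Substituting this identification into each summand above converts it into $\langle \phi(\vec{\pi}_i^{(t)}), \vec{\ell}_i^{(t)} \rangle - \langle \vec{\pi}_i^{(t)}, \vec{\ell}_i^{(t)} \rangle$, i.e. exactly the per-round contribution to the comparator term in the definition of $\widetilde{\Psi}_i$-regret.

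Finally, I would take the maximum over $\phi \in \widetilde{\Psi}_i$. By \Cref{definition:efcce} the coarse trigger deviation functions quantified in the EFCCE condition range over exactly $\widetilde{\Psi}_i$---the same transformation set over which $\widetilde{\Psi}_i$-regret is measured---so the maximized time average equals $\frac{1}{T}\reg_i^T$, which is at most $\epsilon \defeq \frac{1}{T}\max_{i \in [n]} \reg_i^T$. Since this bound holds uniformly over all players and all coarse trigger deviations, it certifies $\vec{\mu}$ as an $\epsilon$-EFCCE, completing the argument.

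The proof is structurally routine, so the only delicate point---and the main thing to verify---is the linearity step: one must confirm that $\phi(\vec{\pi}_i^{(t)})$ is itself a valid deterministic sequence-form strategy, so that the linear extension $\langle \phi(\vec{\pi}_i^{(t)}), \vec{\ell}_i^{(t)}\rangle$ faithfully records its utility against $\vec{\pi}_{-i}^{(t)}$, and that the set of deviations in \Cref{definition:efcce} matches $\widetilde{\Psi}_i$ exactly. The former follows from \Cref{definition:coarse_trigger_deviation_functions}, since $\phi$ either leaves $\vec{\pi}_i$ invariant or splices in a deterministic continuation $\hat{\vec{\pi}}_i \in \Pi_j$, yielding in both cases an element of $\Pi_i$; the latter is a definitional correspondence. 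No new technical machinery beyond that used for \Cref{theorem:EFCE-convergence} is required.
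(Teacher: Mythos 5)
Your proposal is correct and is essentially identical to the paper's own proof: both take $\vec{\mu}$ to be the empirical distribution of play, rewrite the expected deviation gain of any $\phi \in \widetilde{\Psi}_i$ as the time average $\frac{1}{T}\sum_{t=1}^T \bigl(\ell_i^{(t)}(\phi(\vec{\pi}_i^{(t)})) - \ell_i^{(t)}(\vec{\pi}_i^{(t)})\bigr)$ using the linearity of $u_i(\cdot,\vec{\pi}_{-i}^{(t)})$, and bound this by $\frac{1}{T}\reg_i^T \leq \epsilon$ via the definition of $\widetilde{\Psi}_i$-regret. The only difference is direction of presentation (the paper starts from the regret bound and derives the EFCCE condition, you start from the EFCCE expression and reduce it to the regret), which is immaterial.
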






\section{Accelerating Phi-Regret Minimization with Optimism}
\label{section:accelerating-Phi}

In this section we present a general construction for obtaining improved Phi-regret guarantees. Our template is then instantiated in \Cref{sec:efce,section:efcce} to obtain faster dynamics for EFCE and EFCCE.

Our approach combines the framework of \citet{Gordon08:No} with stable-predictive (aka. optimistic) regret minimization. 
As in \cite{Gordon08:No}, we combine 1) a regret minimizer that outputs a linear transformation $\phi^{(t)} \in \Phi$ at every time $t$, and 2) a fixed-point oracle for each $\phi^{(t)} \in \Phi$. However, our construction further requires that 2) is stable (in the sense of \Cref{definition:stable-predictive}). To achieve this, we will focus on regret minimizers having the following property. 
\begin{definition}
    \label{definition:FP-smooth}
    Consider a set of functions $\Phi$ such that $\phi(\mathcal{X}) \subseteq \mathcal{X}$ for all $\phi \in \Phi$, and a no-regret algorithm $\mathcal{R}_{\Phi}$ for the set of transformations $\Phi$ which returns a sequence $(\phi^{(t)})$. We say that $\mathcal{R}_{\Phi}$ is \emph{fixed point $\kappa$-stable} with respect to a norm $\|\cdot\|$ if the following conditions hold.
\begin{itemize}
    \item Every $\phi^{(t)}$ admits a fixed point. That is, there exists $\vec{x}^{(t)} \in \mathcal{X}$ such that $\phi^{(t)}(\vec{x}^{(t)}) = \vec{x}^{(t)}$.
    \item For $\vec{x}^{(t)}$ with $\vec{x}^{(t)} = \phi^{(t)}(\vec{x}^{(t)})$, there is $\vec{x}^{(t+1)} = \phi^{(t+1)}(\vec{x}^{(t+1)})$ such that $\|\vec{x}^{(t+1)} - \vec{x}^{(t)} \| \leq \kappa$.
\end{itemize}
\end{definition}

In this context, we will show how to construct a stable-predictive $\Phi$-regret minimizer starting from the following two components.
\begin{enumerate}
    \item $\mathcal{R}_{\Phi}$: An $(A, B)$-predictive fixed point $\kappa$-stable regret minimizer for the set $\Phi$;
    \item \label{item:FP} $\fporacle(\phi ; \widetilde{\vec{x}}, \kappa, \epsilon)$: A \emph{stable fixed point oracle} which returns a point $\vec{x} \in \mathcal{X}$ such that (i) $\|\phi(\vec{x}) - \vec{x}\| \leq \epsilon$, and (ii) $\|\vec{x} - \widetilde{\vec{x}}\| \leq \kappa$ (the existence of such a fixed point is guaranteed by the fixed point $\kappa$-stability assumption on the regret minimizer). 
\end{enumerate}


\begin{restatable}[Stable-Predictive Phi-Regret Minimization]{theorem}{phiaccel}
    \label{theorem:accelerating-Phi}
    Consider an $(A, B)$-predictive regret minimizer $\mathcal{R}_{\Phi}$ with respect to $\|\cdot\|_{1}$ for a set of linear transformations $\Phi$ on $\mathcal{X}$.  Moreover, suppose that $\mathcal{R}_{\Phi}$ is fixed point $\kappa$-stable. Then, if we have access to a $\fporacle$, we can construct a $\kappa$-stable algorithm with $\Phi$-regret $\reg^T$ bounded as
\begin{equation*}
    \reg^T \leq A + 2 B \sum_{t=1}^T \| \vec{\ell}^{(t)} - \vec{\ell}^{(t-1)}\|^2_{\infty} + 2 B \|\vec{\ell}\|_\infty^2 \sum_{t=1}^T \|\vec{x}^{(t)} - \vec{x}^{(t-1)}\|^2_\infty  + \|\vec{\ell}\|_\infty \sum_{t=1}^T \epsilon^{(t)},
\end{equation*}
where $\epsilon^{(t)}$ is the error of $\fporacle$ at time $t$, and $\|\vec{\ell}^{(t)}\|_\infty \leq \|\vec{\ell}\|_\infty$ for any $t \geq 1$. It is also assumed that $\|\vec{x}\|_{\infty} \leq 1$ for all $\vec{x} \in \mathcal{X}$.
\end{restatable}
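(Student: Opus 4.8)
The plan is to instantiate the $\Phi$-to-external-regret reduction of \citet{Gordon08:No} on top of $\mathcal{R}_\Phi$, but to drive it with the stable fixed-point oracle so that the produced iterates inherit both stability and the predictive bound. Concretely, at each round $t$ I would first query $\phi^{(t)} \gets \mathcal{R}_\Phi.\nextstr()$, then set $\vec{x}^{(t)} \gets \fporacle(\phi^{(t)}; \vec{x}^{(t-1)}, \kappa, \epsilon^{(t)})$ and output $\vec{x}^{(t)}$; upon receiving $\vec{\ell}^{(t)}$ I would feed $\mathcal{R}_\Phi$ the linear utility $\vec{L}^{(t)} : \phi \mapsto \langle \vec{\ell}^{(t)}, \phi(\vec{x}^{(t)})\rangle$. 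Identifying each transformation with its matrix, this utility is the rank-one matrix $\vec{L}^{(t)} = \vec{\ell}^{(t)}(\vec{x}^{(t)})^{\top}$, so $\mathcal{R}_\Phi$'s guarantees apply verbatim. The $\kappa$-stability of the overall algorithm is then immediate: since I pass the previous iterate $\vec{x}^{(t-1)}$ as the target, property (ii) of the oracle gives $\|\vec{x}^{(t)} - \vec{x}^{(t-1)}\| \le \kappa$.

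Next I would convert the $\Phi$-regret into the external regret of $\mathcal{R}_\Phi$. Writing out the definition and adding and subtracting $\langle \vec{\ell}^{(t)}, \phi^{(t)}(\vec{x}^{(t)})\rangle$ gives
\begin{equation*}
\reg^T = \left(\max_{\phi^\ast \in \Phi} \sum_{t=1}^T \vec{L}^{(t)}(\phi^\ast) - \sum_{t=1}^T \vec{L}^{(t)}(\phi^{(t)})\right) + \sum_{t=1}^T \langle \vec{\ell}^{(t)}, \phi^{(t)}(\vec{x}^{(t)}) - \vec{x}^{(t)}\rangle .
\end{equation*}
The first parenthesized quantity is exactly the external regret of $\mathcal{R}_\Phi$ under the utilities $(\vec{L}^{(t)})$, while the second is controlled by Hölder's inequality together with the oracle's approximation guarantee (i), yielding $\sum_t \langle \vec{\ell}^{(t)}, \phi^{(t)}(\vec{x}^{(t)}) - \vec{x}^{(t)}\rangle \le \|\vec{\ell}\|_\infty \sum_t \epsilon^{(t)}$, which is precisely the last term of the claimed bound.

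It then remains to invoke the $(A,B)$-predictive bound of $\mathcal{R}_\Phi$ (with respect to $\|\cdot\|_1$) on the utilities $(\vec{L}^{(t)})$ with the recency prediction $\vec{M}^{(t)} \defeq \vec{L}^{(t-1)} = \vec{\ell}^{(t-1)}(\vec{x}^{(t-1)})^{\top}$, which is available at the time $\mathcal{R}_\Phi$ forms its prediction, bounding its external regret by $A + B\sum_t \|\vec{L}^{(t)} - \vec{M}^{(t)}\|_\infty^2$. The crux is the rank-one decomposition
\begin{equation*}
\vec{L}^{(t)} - \vec{M}^{(t)} = (\vec{\ell}^{(t)} - \vec{\ell}^{(t-1)})(\vec{x}^{(t)})^{\top} + \vec{\ell}^{(t-1)}(\vec{x}^{(t)} - \vec{x}^{(t-1)})^{\top},
\end{equation*}
which cleanly separates the \emph{utility drift} from the \emph{strategy drift}. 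Using $\|\vec{u}\vec{v}^{\top}\|_\infty = \|\vec{u}\|_\infty \|\vec{v}\|_\infty$ (the dual of the entrywise $\ell_1$ norm), the bounds $\|\vec{x}^{(t)}\|_\infty \le 1$ and $\|\vec{\ell}^{(t-1)}\|_\infty \le \|\vec{\ell}\|_\infty$, and $(a+b)^2 \le 2a^2 + 2b^2$, each summand is at most $2\|\vec{\ell}^{(t)} - \vec{\ell}^{(t-1)}\|_\infty^2 + 2\|\vec{\ell}\|_\infty^2 \|\vec{x}^{(t)} - \vec{x}^{(t-1)}\|_\infty^2$; summing and combining with the previous paragraph produces exactly the stated bound.

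I expect the main obstacle to be the interface between \emph{approximate} fixed points and the \emph{exact} fixed-point $\kappa$-stability assumption: the latter guarantees a nearby exact fixed point of $\phi^{(t)}$ around an exact fixed point of $\phi^{(t-1)}$, whereas the oracle is only ever handed the approximate iterate $\vec{x}^{(t-1)}$. Making the construction well-defined therefore requires arguing that a stable approximate fixed point of $\phi^{(t)}$ still exists within distance $\kappa$ of $\vec{x}^{(t-1)}$, which is where the stability of the outputs $(\phi^{(t)})$ produced by $\mathcal{R}_\Phi$ must be reconciled with the approximation tolerance $\epsilon^{(t)}$. The remaining steps — the regret identity, the Hölder estimate, and the rank-one norm split — are essentially bookkeeping once the matrix representation of the utilities over $\Phi$ is in place.
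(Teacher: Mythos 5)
Your construction and analysis coincide with the paper's own proof: the same reduction in the style of \citet{Gordon08:No} driven by $\fporacle$, the same split of the $\Phi$-regret into the external regret of $\mathcal{R}_{\Phi}$ plus a H\"older-bounded fixed-point-error term $\|\vec{\ell}\|_\infty \sum_{t} \epsilon^{(t)}$, and the same rank-one decomposition of $\vec{L}^{(t)} - \vec{L}^{(t-1)}$ with $\| (\vec{w} \otimes \vec{z})^{\flat}\|_{\infty} = \|\vec{w}\|_{\infty}\|\vec{z}\|_{\infty}$ and Young's inequality (the paper writes the utilities as vectorized outer products $(\vec{\ell}^{(t)} \otimes \vec{x}^{(t)})^\flat$ rather than matrices, which is only notational). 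The interface issue you flag at the end is not a gap relative to the paper: the theorem takes $\fporacle$ as an assumed primitive whose existence is exactly what the fixed point $\kappa$-stability of $\mathcal{R}_{\Phi}$ guarantees, so the paper's proof likewise just invokes it, deferring the construction of such an oracle to the EFCE and EFCCE sections.
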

The $\ell_1$ norm is used only for convenience; the theorem readily extends under any equivalent norm. The proof of \Cref{theorem:accelerating-Phi} builds on the construction of \citet{Gordon08:No}, and it is included in \Cref{appendix:proof-accelerating_phi}.

\section{Faster Convergence to EFCE}
\label{sec:efce}

Our framework (\Cref{theorem:accelerating-Phi}) reduces accelerating $\Phi$-regret minimization to (i) developing a predictive regret minimizer for the set $\Phi$, and (ii) establishing the stability of the fixed points ($\fporacle$). In this section we establish these components for the set of all possible trigger deviations functions (\Cref{definition:trigger_deviation_functions}), leading to faster convergence to EFCE. In particular, \Cref{section:Phi-regret} is concerned with the former task while \Cref{section:stability} is concerned with the latter.

\subsection{Constructing a Predictive Regret Minimizer for \texorpdfstring{$\Psi_{i}$}{Psi-i}}
\label{section:Phi-regret}

Here we develop a regret minimizer for the set $\co \Psi_{i}$, the convex hull of all trigger deviation functions (\Cref{definition:trigger_deviation_functions}) of player $i \in [n]$. Given that $\co \Psi_{i} \supseteq \Psi_{i}$, this will immediately imply a $\Psi_i$-regret minimizer---after applying \Cref{theorem:accelerating-Phi}. To this end, the set $\co \Psi_{i}$ can be evaluated in two stages. First, for a fixed sequence $\hat{\sigma} = (j, a) \in \Sigma_{i}^*$ we define the set $\Psi_{\hat{\sigma}} \defeq \co \left\{ \phi_{\hat{\sigma} \rightarrow \hat{\vec{\pi}}_i} : \hat{\vec{\pi}}_i \in \Pi_j \right\}$. Then, we take the convex hull of all $\Psi_{\hat{\sigma}}$; that is, $\co \Psi_{i} = \co \{ \Psi_{\hat{\sigma}} : \hat{\sigma} \in \Sigma^*_i \}$. In light of this, we first develop a predictive regret minimizer for the set $\Psi_{\hat{\sigma}}$, for any $\hat{\sigma} \in \Sigma_i^*$. These individual regret minimizers are then combined using a \emph{regret circuit} to conclude the construction in \Cref{theorem:co-circuit}. The overall algorithm is illustrated in \Cref{fig:algo}. All of the omitted poofs and pseudocode for this section are included in \Cref{appendix:proof-Phi-regret}.

\begin{figure}[ht]
    \centering
    \scalebox{.90}{\begin{tikzpicture}[yscale=.87]
    \tikzstyle{arrow}=[semithick,->];
    \tikzstyle{box}=[black!60,thick,rounded corners=1mm];
    \tikzstyle{loss}=[red];
    \tikzstyle{iter}=[blue];
    \tikzstyle{lbl}=[fill=white,inner ysep=0pt];
    \tikzstyle{hl}=[line width=2mm,white];
    \tikzstyle{hlbox}=[box,line width=1.8mm,gray!15];
    \tikzstyle{dot}=[circle,fill=black,draw=none,inner sep=.5mm];
    \draw[hlbox] (-.5,-.1) rectangle (12.2,5.4);
    \draw[box,thick] (-.5,-.1) rectangle (12.2,5.4) node[fitting node] (outer) {};
    
    \draw[hlbox] (1.2,.5) rectangle (9.5,5);
    \draw[box,thick] (1.2,.5) rectangle (9.5,5) node[fitting node] (external) {};
    
    \draw[hlbox] (1.7, 3.0) rectangle +(4,1.5);
    \draw[box] (1.7, 3.0) rectangle +(4,1.5) node[fitting node] (Rsigma1) {};
    
    \draw[hlbox] (1.7, 0.7) rectangle +(4,1.5);
    \draw[box] (1.7, 0.7) rectangle +(4,1.5) node[fitting node] (Rsigma2) {};
    \draw[box] ($(Rsigma1.center)!.5!(Rsigma2.center)+(3.0,-.5)$) rectangle +(1.8,1) node[fitting node] (mixer) {};
    \node(XXX) at ({$(external.east)!.51!(outer.east)$} |- mixer.center) {};
    \draw[box] ($(XXX)-(.7,.50)$) rectangle +(1.4,1.0) node[fitting node] (fp) {};
    
    \draw[box,semithick,fill=black!5] ($(Rsigma1.center)-(.95,.55)$) rectangle +(1.9,.9) node[fitting node] (CFR1) {};
    \node[text width=1.2cm,align=center] at (CFR1.center) {\fontsize{8}{6}\selectfont OFTRL};
    
    \draw[box,semithick,fill=black!5] ($(Rsigma2.center)-(.95,.55)$) rectangle +(1.9,.9) node[fitting node] (CFR2) {};
    \node[text width=1.2cm,align=center] at (CFR2.center) {\fontsize{8}{6}\selectfont OFTRL};

    \coordinate (Rsigma1_in)  at (CFR1 -| Rsigma1.west);
    \coordinate (Rsigma1_out) at (CFR1 -| Rsigma1.east);
    \coordinate (Rsigma2_in)  at (CFR2 -| Rsigma2.west);
    \coordinate (Rsigma2_out) at (CFR2 -| Rsigma2.east);
    
    \node[ox] (t4) at ($(Rsigma1_in)!.5!(CFR1.west)$) {};   
    \node[ox] (t5) at ($(CFR1.east)!.5!(Rsigma1_out)$) {};   
    \node[ox] (t6) at ($(Rsigma2_in)!.5!(CFR2.west)$) {};   
    \node[ox] (t7) at ($(CFR2.east)!.5!(Rsigma2_out)$) {};  
    
    \node at (mixer.center) {\small$\mathcal{R}_{\Delta}$ (OMWU)};

    \draw[arrow,loss] (t4) -- (CFR1.west);
    \draw[arrow,loss] (t6) -- (CFR2.west);
    
    \draw[arrow,iter] (CFR1.east) -- (t5);
    \draw[arrow,iter] (CFR2.east) -- (t7);
    
    \node at (Rsigma2.center) {};
    \node[lbl,yshift=-.2mm] at (Rsigma1.north) {\small$\mathcal{R}_{\seq{1}}$ (\cref{proposition:R_sigma})};
    \node[lbl,yshift=-.1mm] at (Rsigma2.north) {\small$\mathcal{R}_{\seq{m}}$ (\cref{proposition:R_sigma})};

    \node[ox] (t1) at ({$(outer.west)!.25!(external.west)$}|-fp) {};
    \node[dot] (d1) at ($(t1)+(.8,0)$) {};
    \draw[hl] ($(t1) - (1.3,0)$) -- (t1);
    \draw[arrow,loss] ($(t1) - (1.3,0)$) node[above right,xshift=-1mm] {$\vec{\ell}_i^{(t)}$} -- (t1);
    \draw[arrow,loss] (t1) --node[pos=.8,above]{$L_i^{(t)}$} (d1);
    \draw[hl] (d1) to[out=30,in=180] (Rsigma1_in) -- (t4);
    \draw[hl] (d1) to[out=-30,in=180] (Rsigma2_in) -- (t6);
    \draw[arrow,loss] (d1) to[out=30,in=180] (Rsigma1_in) -- (t4);
    \draw[arrow,loss] (d1) to[out=-30,in=180] (Rsigma2_in) -- (t6);
    
    \node[ox] (t2) at (mixer.east-|{$(external.east)-(.4,0)$}) {};

    \draw[arrow,iter] (mixer.east) -- (t2);
    \draw[hl] (t2) -- (fp.west);
    \node[above left,iter,inner xsep=0pt,fill=white,xshift=-.8mm] at (fp.west) {\small$\phi_i^{(t)}$};
    \draw[arrow,iter] (t2) -- (fp.west);
    \draw[hl] (fp.east) -- +(.9,0);
    \node[above,iter,inner xsep=0,fill=white,xshift=0mm] at ($(fp.east) + (.7,0)$) {$\vec{x}_i^{(t)}$};
    \node[dot] (d2) at ($(fp.east) + (.3,0)$) {};
    \draw[arrow,iter] (fp.east) -- (d2);
    \draw[arrow,iter] (d2) -- +(0.7,0);
    \draw[arrow,iter,rounded corners=.5mm] (d2) |- ($(t1)+(0,-2.4)$) -- (t1);
    
    \node[ox] (t3) at ($(mixer.west) - (.5,0)$) {};
    \draw[hl,shorten <= 2mm] (d1) -- (t3);
    \draw[arrow,loss] (d1) -- (t3);
    
    \node[fill=white,inner ysep=0pt,yshift=-.3mm] at ($(external.north)+(2,0)$) {\small$\mathcal{R}_{\Psi_i}$ (\cref{theorem:co-circuit})};
    \node[fill=white,inner ysep=0pt] at ($(outer.north) -(3.6,0)$) {\small$\Psi_{i}$-Regret Minimizer for $\mathcal{Q}_{i}$};

    \draw[white,line width=3.5mm] ($(Rsigma1.south)!.70!(Rsigma2.north)$) -- ($(Rsigma1.south)!.30!(Rsigma2.north)$);
    \fill[black] ($(Rsigma1.south)!.70!(Rsigma2.north)$) circle (.3mm);
    \fill[black] ($(Rsigma1.south)!.50!(Rsigma2.north)$) circle (.3mm);
    \fill[black] ($(Rsigma1.south)!.30!(Rsigma2.north)$) circle (.3mm);
    
    \node[dot] (d3) at (t3 |- Rsigma1_out) {};
    \node[dot] (d4) at (t3 |- Rsigma2_out) {};
    \draw[arrow,iter] (d3) -- (t3);
    \draw[arrow,iter] (d4) -- (t3);
    \draw[arrow,loss] (t3) -- (mixer.west);

    \node[iter] at (10.3,.6) {$\vec{x}_i^{(t)}$};
    
    \draw[arrow,iter,rounded corners=.5mm] (d3) --node[above]{$\phi_{\seq{1}\to \vec{q}^{(t)}_{\seq{1}}}$} (Rsigma1_out -| t2) -- (t2);
    \draw[arrow,iter,rounded corners=.5mm] (d4) --node[below=-.5mm]{$\phi_{\seq{m}\to \vec{q}^{(t)}_{\seq{m}}}$} (Rsigma2_out -| t2) -- (t2);
    
    \draw[hl] (t5) -- (d3);
    \draw[hl] (t7) -- (d4);
    
    \draw[arrow,iter] (t5) -- (d3);
    \draw[arrow,iter] (t7) -- (d4);

    \node[above=-.6mm] at (fp.center) {\small Fixed};
    \node[below=-.6mm] at (fp.center) {\small point};
\end{tikzpicture}}
    \caption{An overview of the overall construction. For notational convenience we have let $\Sigma^*_i \defeq \{ \seq{1}, \seq{2}, \dots, \seq{m}\}$. The symbol $\otimes$ in the figure denotes a multilinear transformation. We have used blue color for the iterates and red for the utilities. The algorithm first constructs a regret minimizer $\mathcal{R}_{\Psi_i}$ for the set $\Psi_i$ (\Cref{theorem:co-circuit}). This internally uses a regret minimizer $\mathcal{R}_{\Delta}$ which ``mixes'' the strategies of $\mathcal{R}_{\seq{1}}, \dots, \mathcal{R}_{\seq{m}}$. In turn, the latter regret minimizers internally employ \eqref{eq:oftrl} with dilatable global entropy as DGF (\Cref{proposition:R_sigma}). The last step can also be implemented using stable-predictive CFR (\Cref{theorem:opt-cfr}), as we leverage for our experiments. Finally, $\mathcal{R}_{\Psi_i}$ is used to construct a stable-predictive $\Psi_i$-regret minimizer using the construction of \Cref{theorem:accelerating-Phi}.}
    \label{fig:algo}
\end{figure}
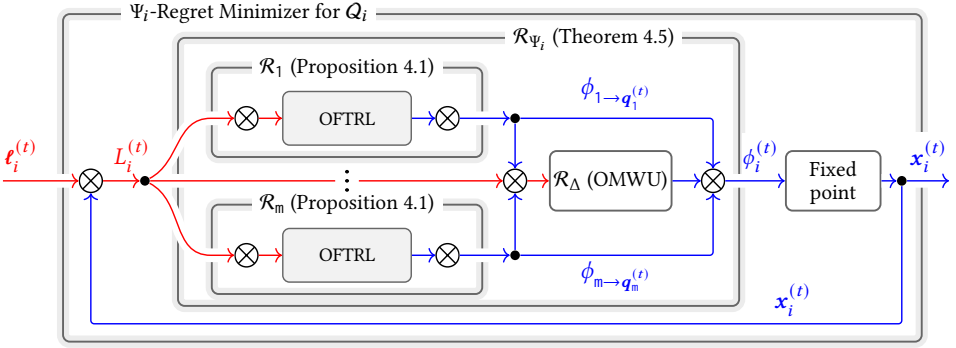

\subsubsection{Predictive Regret Minimizer for the set \texorpdfstring{$\Psi_{\hat{\sigma}}$}{Psi}.} 

Consider a sequence $\hat{\sigma} \in \Sigma_i^*$. We claim that the set of transformations $\Psi_{\hat{\sigma}} \defeq \co \left\{ \phi_{\hat{\sigma} \rightarrow \hat{\vec{\pi}}_i} : \hat{\vec{\pi}}_i \in \Pi_j \right\}$
is the image of $\mathcal{Q}_j$ under the affine mapping $h_{\hat{\sigma}} : \vec{q} \mapsto \phi_{\hat{\sigma} \rightarrow \vec{q}}$. 
Hence, it is not hard to see that a regret minimizer for $\Psi_{\hat{\sigma}}$ can be constructed starting from a regret minimizer for $\mathcal{Q}_j$. We now show that the predictive bound is preserved through this construction. 

\begin{restatable}{proposition}{firstcircuit}
    \label{proposition:R_sigma}
    Consider a player $i \in [n]$ and any trigger sequence $\hat{\sigma} = (j, a) \in \Sigma^*_i$. There exists an algorithm which constructs a regret minimizer $\mathcal{R}_{\hat{\sigma}}$ with access to an $(A, B)$-predictive regret minimizer $\mathcal{R}_{\mathcal{Q}_j}$ for the set $\mathcal{Q}_j$ such that $\mathcal{R}_{\hat{\sigma}}$ is $(A, B)$-predictive. 
\end{restatable}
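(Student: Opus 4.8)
The plan is to \emph{transport} the given regret minimizer $\mathcal{R}_{\mathcal{Q}_j}$ from $\mathcal{Q}_j$ to $\Psi_{\hat{\sigma}}$ along the affine bijection $h_{\hat{\sigma}} : \vec{q} \mapsto \phi_{\hat{\sigma}\to\vec{q}}$ identified just above the statement, and to check that \emph{both} the regret and the predictive bound survive with unchanged constants. Writing the affine map as $h_{\hat{\sigma}}(\vec{q}) = \vec{c} + \mat{A}\vec{q}$, where $\vec{c}$ collects the $\vec{q}$-independent (identity-like) part of the trigger deviation and $\mat{A}$ its linear part, I would define $\mathcal{R}_{\hat{\sigma}}$ as follows. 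On $\mathcal{R}_{\hat{\sigma}}.\nextstr()$, query $\vec{q}^{(t)} = \mathcal{R}_{\mathcal{Q}_j}.\nextstr()$ and return the transformation $\phi^{(t)} \defeq h_{\hat{\sigma}}(\vec{q}^{(t)}) \in \Psi_{\hat{\sigma}}$. When $\mathcal{R}_{\hat{\sigma}}$ observes a linear utility over $\Psi_{\hat{\sigma}}$ --- in the Phi-regret template of \Cref{theorem:accelerating-Phi} this is the functional $\phi \mapsto \langle \phi(\vec{x}^{(t)}), \vec{\ell}_i^{(t)} \rangle$, with gradient vector $\vec{L}^{(t)}$ in the transformation space --- I would pass $\mathcal{R}_{\mathcal{Q}_j}$ the pulled-back utility $\mat{A}^\top \vec{L}^{(t)}$ together with the pulled-back prediction $\mat{A}^\top \vec{M}^{(t)}$ of the prediction $\vec{M}^{(t)}$ supplied to $\mathcal{R}_{\hat{\sigma}}$. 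The additive constant $\langle \vec{L}^{(t)}, \vec{c} \rangle$ induced by $\vec{c}$ is irrelevant, since it affects neither the iterates nor the regret.

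Second, I would argue that the $\Psi_{\hat{\sigma}}$-regret of $\mathcal{R}_{\hat{\sigma}}$ equals the external regret of $\mathcal{R}_{\mathcal{Q}_j}$. Because $h_{\hat{\sigma}}$ is a bijection between $\mathcal{Q}_j$ and $\Psi_{\hat{\sigma}}$, ranging a comparator $\phi^*$ over $\Psi_{\hat{\sigma}}$ is the same as ranging $\vec{q}^* = h_{\hat{\sigma}}^{-1}(\phi^*)$ over $\mathcal{Q}_j$; moreover $h_{\hat{\sigma}}(\vec{q}^*) - h_{\hat{\sigma}}(\vec{q}^{(t)}) = \mat{A}(\vec{q}^* - \vec{q}^{(t)})$, so the constant $\vec{c}$ cancels in every comparison term and $\reg_{\hat{\sigma}}^T = \max_{\vec{q}^* \in \mathcal{Q}_j} \sum_{t=1}^T \langle \mat{A}^\top \vec{L}^{(t)}, \vec{q}^* - \vec{q}^{(t)} \rangle = \reg_{\mathcal{Q}_j}^T$.

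Third --- and this is the crux --- I would show that the transport does not inflate the predictive bound. Invoking the $(A,B)$-predictive guarantee of $\mathcal{R}_{\mathcal{Q}_j}$ on the pulled-back utilities and predictions gives $\reg_{\mathcal{Q}_j}^T \leq A + B \sum_{t=1}^T \| \mat{A}^\top (\vec{L}^{(t)} - \vec{M}^{(t)}) \|_\infty^2$. To recover the same pair $(A,B)$ for $\mathcal{R}_{\hat{\sigma}}$, it suffices that $\mat{A}^\top$ be non-expansive in the $\ell_\infty$ norm, i.e. that each column of $\mat{A}$ have $\ell_1$-norm at most one. This is exactly where the structure of trigger deviations enters: writing $\phi_{\hat{\sigma}\to\vec{q}}$ explicitly as a linear map shows that the only $\vec{q}$-dependent entries lie in the single matrix column indexed by the trigger $\hat{\sigma}$, with each continuation coordinate $\vec{q}[\sigma]$ (for $\sigma \succeq j$) appearing exactly once. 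Hence $\mat{A}$ is a coordinate embedding (its entries lie in $\{0,1\}$), $\mat{A}^\top$ is non-expansive in $\ell_\infty$, and combining with the regret identity of the previous step yields $\reg_{\hat{\sigma}}^T \leq A + B \sum_{t=1}^T \| \vec{L}^{(t)} - \vec{M}^{(t)} \|_\infty^2$, establishing that $\mathcal{R}_{\hat{\sigma}}$ is $(A,B)$-predictive with respect to $\|\cdot\|_1$ on $\Psi_{\hat{\sigma}}$. I expect the main obstacle to be precisely this last step: pinning down the explicit linear form of the trigger deviation map and confirming that its $\vec{q}$-dependence is a single-column coordinate embedding, which is what guarantees that the prediction error, and therefore $B$, is preserved rather than amplified.
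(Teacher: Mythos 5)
Your proposal is correct and takes essentially the same route as the paper: your pulled-back utility $\mat{A}^\top \vec{L}^{(t)}$ is precisely the gradient of the paper's auxiliary function $g_{\hat{\sigma}}^{(t)} : \vec{x} \mapsto L_i^{(t)}(h_{\hat{\sigma}}(\vec{x})) - L_i^{(t)}(h_{\hat{\sigma}}(\mathbf{0}))$, and the regret identity you derive is the paper's equality between the $\Psi_{\hat{\sigma}}$-regret of $\mathcal{R}_{\hat{\sigma}}$ and the external regret of $\mathcal{R}_{\mathcal{Q}_j}$. Your ``crux''---that the $\vec{q}$-dependent part of the trigger deviation is a single-column coordinate embedding, so $\mat{A}^\top$ is non-expansive in $\ell_\infty$---is exactly the paper's observation that $\vec{g}_{\hat{\sigma}}^{(t)} = (\vec{L}_i^{(t)}[\sigma_r, \hat{\sigma}])_{\sigma_r \succeq j}$ is a sub-vector of $\vec{L}_i^{(t)}$, whence $\|\vec{g}_{\hat{\sigma}}^{(t)} - \vec{g}_{\hat{\sigma}}^{(t-1)}\|_\infty \leq \|\vec{L}_i^{(t)} - \vec{L}_i^{(t-1)}\|_\infty$ and the pair $(A,B)$ is preserved.
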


This proposition requires a predictive regret minimizer for the set $\mathcal{Q}_j$, for each $j \in \mathcal{J}_i$. To this end, we instantiate \eqref{eq:oftrl} with dilatable global entropy as DGF (\Cref{definition:global-dgf}). Then, combining \Cref{lemma:oftrl-predictive} with \Cref{lemma:dge} leads to the following predictive bound.

\begin{lemma}
    \label{lemma:predictive-OMD}
    Suppose that the regret minimizer $\mathcal{R}_{\mathcal{Q}_j}$ is instantiated with dilatable global entropy. Then, $\mathcal{R}_{\mathcal{Q}_j}$ is $(A, B)$-predictive with respect to $\|\cdot\|_1$, where $A = \frac{\|\mathcal{Q}_i\|^2_1 \max_{j \in \mathcal{J}_i} \log |\mathcal{A}_j|}{\eta}$ and $B = \eta \|\mathcal{Q}_i\|_1$. 
\end{lemma}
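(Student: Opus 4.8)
The plan is to derive the claimed predictive constants by feeding the strong-convexity parameter and the range supplied by \Cref{lemma:dge} into the OFTRL guarantee of \Cref{lemma:oftrl-predictive}. The one point that requires care is that \Cref{lemma:oftrl-predictive} is stated for a DGF that is $1$-strongly convex with respect to the norm of interest, whereas \Cref{lemma:dge} only certifies that the dilatable global entropy $d$ is $(1/\|\mathcal{Q}_j\|_1)$-strongly convex on $\mathcal{Q}_j$ with respect to $\|\cdot\|_1$. Reconciling this normalization gap is the crux of the argument.

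First I would exploit the invariance of the \eqref{eq:oftrl} iterate under the reparameterization $(d, \eta) \mapsto (c\,d,\, c\,\eta)$ for any scalar $c > 0$: since the regularizer and the learning rate enter \eqref{eq:oftrl} only through the ratio $d(\vec{x})/\eta$, the two setups produce identical iterates. Choosing $c = \|\mathcal{Q}_j\|_1$, running $\mathcal{R}_{\mathcal{Q}_j}$ with DGF $d$ and rate $\eta$ coincides with running it with the rescaled DGF $\tilde{d} \defeq \|\mathcal{Q}_j\|_1\, d$ and effective rate $\tilde{\eta} \defeq \eta\,\|\mathcal{Q}_j\|_1$. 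The benefit is that $\tilde{d}$ is now $1$-strongly convex with respect to $\|\cdot\|_1$, so \Cref{lemma:oftrl-predictive} applies directly and certifies that $\mathcal{R}_{\mathcal{Q}_j}$ is $(\Omega_{\tilde{d}}/\tilde{\eta},\, \tilde{\eta})$-predictive with respect to $\|\cdot\|_1$.

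It then remains to unwind the constants. Since the range scales linearly, $\Omega_{\tilde{d}} = \|\mathcal{Q}_j\|_1\,\Omega_d$, the two leading constants simplify to $A = \Omega_{\tilde{d}}/\tilde{\eta} = \Omega_d/\eta$ and $B = \tilde{\eta} = \eta\,\|\mathcal{Q}_j\|_1$. Invoking the range bound $\Omega_d \leq \|\mathcal{Q}_j\|_1^2 \max_{j' \in \mathcal{J}_j}\log|\mathcal{A}_{j'}|$ from \Cref{lemma:dge} (with $\mathcal{J}_j$ the information sets of the subtree rooted at $j$) gives $A \leq \|\mathcal{Q}_j\|_1^2 \max_{j' \in \mathcal{J}_j}\log|\mathcal{A}_{j'}| / \eta$. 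Finally I would replace the subtree quantities by the whole-tree ones appearing in the statement using elementary monotonicity: any $\vec{q} \in \mathcal{Q}_j$ extends to a strategy in $\mathcal{Q}_i$ reaching $j$ with certainty, whence $\|\mathcal{Q}_j\|_1 \leq \|\mathcal{Q}_i\|_1$, while $\mathcal{J}_j \subseteq \mathcal{J}_i$ yields $\max_{j' \in \mathcal{J}_j}\log|\mathcal{A}_{j'}| \leq \max_{j \in \mathcal{J}_i}\log|\mathcal{A}_j|$. Substituting these bounds produces exactly $A = \|\mathcal{Q}_i\|_1^2 \max_{j \in \mathcal{J}_i}\log|\mathcal{A}_j|/\eta$ and $B = \eta\,\|\mathcal{Q}_i\|_1$, completing the argument. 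The only genuine obstacle is the strong-convexity rescaling in the second step; once that is in place, the remainder is a mechanical substitution of the constants from \Cref{lemma:dge} together with the subtree monotonicity of the $\ell_1$-norm and of the action-set sizes.
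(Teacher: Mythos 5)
Your proposal is correct and follows exactly the route the paper intends: the paper gives no separate proof of this lemma, presenting it as the immediate combination of \Cref{lemma:oftrl-predictive} with \Cref{lemma:dge}, and noting only that the constants degrade because the modulus of strong convexity is $1/\|\mathcal{Q}_i\|_1$ rather than $1$. Your explicit rescaling $(d,\eta)\mapsto(\|\mathcal{Q}_j\|_1 d,\ \|\mathcal{Q}_j\|_1\eta)$, together with the monotone replacement of the subtree quantities $\|\mathcal{Q}_j\|_1$ and $\mathcal{J}_j$ by $\|\mathcal{Q}_i\|_1$ and $\mathcal{J}_i$, is precisely the reconciliation the paper leaves implicit.
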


The discrepancy between this bound and the one in \Cref{lemma:oftrl-predictive} derives from the fact that the modulus of convexity with respect to $\|\cdot\|_1$ for the dilatable global entropy is $1/\|\mathcal{Q}_i\|_1$ instead of $1$. Alternatively, we also establish a predictive variant of CFR which can be used in place of OFTRL for performing regret minimization over the set $\mathcal{Q}_j$.

\begin{proposition}[Predictive CFR; Full Version in \Cref{theorem:opt-cfr}]
    \label{proposition:cfr}
    There exists a variant of CFR using OMWU which is $(A, B)$-predictive, where $A = O(\frac{ \max_{j \in \mathcal{J}} \log|\mathcal{A}_j|}{\eta} \|\mathcal{Q}\|_1 )$ and $B = O(\eta \|\mathcal{Q}\|_1^3)$. 
\end{proposition}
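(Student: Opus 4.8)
The plan is to realize the claimed algorithm as a laminar composition of local optimistic regret minimizers, one per information set $j \in \mathcal{J}$, each running OMWU (that is, \eqref{eq:oftrl} with the entropic regularizer) over the simplex $\Delta(\mathcal{A}_j)$, and to propagate both losses and predictions through the game tree in a single bottom-up pass. Write $\vec{b}_j^{(t)} \in \Delta(\mathcal{A}_j)$ for the behavioral strategy emitted at $j$ at time $t$, so that the sequence-form iterate factorizes as $\vec{q}^{(t)}[(j,a)] = \vec{q}^{(t)}[\sigma_j]\,\vec{b}_j^{(t)}[a]$. The loss handed to the minimizer at $j$ is the counterfactual loss, defined bottom-up by
\[ \hat{\ell}_j^{(t)}[a] \defeq \vec{\ell}^{(t)}[(j,a)] + \sum_{j' : \sigma_{j'} = (j,a)} \langle \vec{b}_{j'}^{(t)}, \hat{\ell}_{j'}^{(t)} \rangle . \]
The first ingredient is the counterfactual regret decomposition, which I would re-derive in sequence form: for every comparator $\vec{q}^\ast \in \mathcal{Q}$ one has the exact identity $\langle \vec{\ell}^{(t)}, \vec{q}^\ast \rangle - \langle \vec{\ell}^{(t)}, \vec{q}^{(t)} \rangle = \sum_j \vec{q}^\ast[\sigma_j]\,(\langle \hat{\ell}_j^{(t)}, \vec{b}_j^\ast \rangle - \langle \hat{\ell}_j^{(t)}, \vec{b}_j^{(t)} \rangle)$, whence $\reg_{\mathcal{Q}}^T \le \max_{\vec{q}^\ast \in \mathcal{Q}} \sum_{j \in \mathcal{J}} \vec{q}^\ast[\sigma_j]\,\reg_j^T$, where $\reg_j^T$ is the external regret of the local minimizer at $j$. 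Since the entropic regularizer is $1$-strongly convex with respect to $\|\cdot\|_1$ with range $\log|\mathcal{A}_j|$ on $\Delta(\mathcal{A}_j)$, \Cref{lemma:oftrl-predictive} gives that each local minimizer is $(\log|\mathcal{A}_j|/\eta, \eta)$-predictive, i.e. $\reg_j^T \le \log|\mathcal{A}_j|/\eta + \eta \sum_t \|\hat{\ell}_j^{(t)} - \hat{m}_j^{(t)}\|_\infty^2$ for any choice of local predictions $\hat{m}_j^{(t)}$.

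The second and central ingredient is the design of these predictions. Given the global prediction $\vec{m}^{(t)}$ fed to the sequence-form minimizer, I would define the local predictions by the very same recursion that defines the counterfactual losses, but with $\vec{m}^{(t)}$ in place of $\vec{\ell}^{(t)}$ and with the current-round descendant strategies:
\[ \hat{m}_j^{(t)}[a] \defeq \vec{m}^{(t)}[(j,a)] + \sum_{j' : \sigma_{j'} = (j,a)} \langle \vec{b}_{j'}^{(t)}, \hat{m}_{j'}^{(t)} \rangle . \]
This is causally well-defined: processing information sets bottom-up, the descendant strategies $\vec{b}_{j'}^{(t)}$ are already determined by the time the minimizer at $j$ must emit $\vec{b}_j^{(t)}$ (the deepest information sets have no descendants, so their predictions equal the restriction of $\vec{m}^{(t)}$). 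The key point is that $\hat{\ell}_j^{(t)}$ and $\hat{m}_j^{(t)}$ use the \emph{same} strategies $\vec{b}^{(t)}$, so in their difference the strategy dependence cancels and one is left with the homogeneous recursion $\hat{\ell}_j^{(t)}[a] - \hat{m}_j^{(t)}[a] = (\vec{\ell}^{(t)} - \vec{m}^{(t)})[(j,a)] + \sum_{j' : \sigma_{j'}=(j,a)} \langle \vec{b}_{j'}^{(t)}, \hat{\ell}_{j'}^{(t)} - \hat{m}_{j'}^{(t)} \rangle$, driven purely by $\vec{\ell}^{(t)} - \vec{m}^{(t)}$.

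With this in hand the aggregation is mechanical. Writing $\delta^{(t)} \defeq \|\vec{\ell}^{(t)} - \vec{m}^{(t)}\|_\infty$ and using $\langle \vec{b}_{j'}^{(t)}, \cdot \rangle \le \|\cdot\|_\infty$, an induction up the tree over the homogeneous recursion gives $\|\hat{\ell}_j^{(t)} - \hat{m}_j^{(t)}\|_\infty \le \|\mathcal{Q}_j\|_1\,\delta^{(t)}$; the factor $\|\mathcal{Q}_j\|_1$ appears because the maximum over actions at each level isolates a single deterministic continuation, whose number of visited information sets is exactly the $\ell_1$-diameter of the subtree. Substituting into the decomposition and using $\|\mathcal{Q}_j\|_1 \le \|\mathcal{Q}\|_1$, the regret is at most $\max_{\vec{q}^\ast} \sum_j \vec{q}^\ast[\sigma_j] \big( \log|\mathcal{A}_j|/\eta + \eta \|\mathcal{Q}\|_1^2 \sum_t (\delta^{(t)})^2 \big)$. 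Both parameters then follow from the single combinatorial fact that $\sum_{j \in \mathcal{J}} \vec{q}^\ast[\sigma_j] \le \|\mathcal{Q}\|_1$ for every $\vec{q}^\ast \in \mathcal{Q}$ (on a deterministic strategy this reach-weighted count equals $\|\vec{q}^\ast\|_1 - 1$, and it extends to $\mathcal{Q}$ by convexity), yielding $A = O\!\big(\frac{\max_j \log|\mathcal{A}_j|}{\eta}\|\mathcal{Q}\|_1\big)$ and $B = O(\eta \|\mathcal{Q}\|_1^3)$, as in the full statement \Cref{theorem:opt-cfr}.

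I expect the main obstacle to be precisely the prediction design of the second step. A naive choice such as $\hat{m}_j^{(t)} = \hat{\ell}_j^{(t-1)}$ would leave a residual measuring how much the downstream play moved between consecutive rounds, which is not controlled by $\delta^{(t)}$ and would break the $(A,B)$-predictive form; rebuilding $\hat{m}_j^{(t)}$ from $\vec{m}^{(t)}$ through the same recursion is what makes the strategy dependence cancel, but it requires verifying that the construction remains causal under the bottom-up ordering. The remaining quantitative subtlety is ensuring the accumulation factor stays polynomial: it is the simplex normalization $\langle \vec{b}_{j'}^{(t)}, \cdot \rangle \le \|\cdot\|_\infty$ together with the maximum-over-actions structure of the recursion that keeps the amplification at $\|\mathcal{Q}_j\|_1$ rather than exponential in the depth.
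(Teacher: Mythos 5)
Your proof is correct and follows essentially the same route as the paper's full proof in \Cref{theorem:opt-cfr}: local OMWU at each decision point, aggregation of the local predictive bounds with an overall $\|\mathcal{Q}\|_1$ weighting, and---crucially---the identical prediction trick of building each local prediction from the global one using the \emph{current-round} downstream strategies so that the strategy dependence cancels (this is exactly the paper's ``advanced prediction mechanism,'' \Cref{remark:better_prediction}, exploited inside \Cref{proposition:decision}). The only difference is presentational: the paper composes the local minimizers through regret circuits on a recursive decomposition of the sequence-form polytope (Cartesian product at observation nodes and convex hull at decision nodes, \Cref{proposition:observation,proposition:decision}), whereas you invoke the equivalent laminar/counterfactual regret decomposition identity directly; both yield the same $A = O(\|\mathcal{Q}\|_1 \max_{j}\log|\mathcal{A}_j|/\eta)$ and $B = O(\eta\|\mathcal{Q}\|_1^3)$.
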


This construction follows the approach of~\citet{Farina19:Stable}, but here we make the dependencies on the size of the game explicit. The predictive bound we obtain for CFR is inferior to the one in \Cref{lemma:predictive-OMD}, so the rest of our theoretical analysis will follow the ``global'' approach. 



\subsubsection{Predictive Regret Minimizer for \texorpdfstring{$\co \Psi_{i}$}{co Psi}.} 

The next step consists of appropriately combining the regret minimizers $\Psi_{\hat{\sigma}}$, for all $\hat{\sigma} \in \Sigma_i^*$, to a composite regret minimizer for the set $\co \Psi_{i}$. To this end, we will use a \emph{regret circuit} for the convex hull, formally introduced below.

\begin{proposition}[\cite{Farina19:Regret}]
    \label{proposition:regret_circuit-co}
    Consider a collection of sets $\mathcal{X}_1, \dots, \mathcal{X}_m$, and let $\mathcal{R}_i$ be a regret minimizer for the set $\mathcal{X}_i$, for each $i \in [m]$. Moreover, let $\mathcal{R}_{\Delta}$ be a regret minimizer for the $m$-simplex $\Delta^m$. A regret minimizer $\mathcal{R}_{\co}$ for the set $\co \{\mathcal{X}_1, \dots, \mathcal{X}_m\}$ can be constructed as follows.
    \begin{itemize}
        \item $\mathcal{R}_{\co}.\nextstr$ obtains the next strategy $\vec{x}_i^{(t)}$ of each regret minimizer $\mathcal{R}_i$, as well as the next strategy $\vec{\lambda}^{(t)} = (\vec{\lambda}^{(t)}[1], \dots, \vec{\lambda}^{(t)}[m]) \in \Delta^m$ of $\mathcal{R}_{\Delta}$, and returns the corresponding convex combination: $\vec{\lambda}^{(t)}[1] \vec{x}_1^{(t)} + \dots + \vec{\lambda}^{(t)}[m] \vec{x}_m^{(t)}$.
        \item $\mathcal{R}_{\co}.\obsut(L^{(t)})$ forwards $L^{(t)}$ to each of the regret minimizers $\mathcal{R}_1, \dots, \mathcal{R}_m$, while it forwards the utility function $(\vec{\lambda}[1], \dots, \vec{\lambda}[m]) \mapsto \vec{\lambda}[1] L^{(t)}(\vec{x}_1^{(t)}) + \dots + \vec{\lambda}[m] L^{(t)}(\vec{x}_m^{(t)})$ to $\mathcal{R}_{\Delta}$.
    \end{itemize}
    Then, if $\reg^T_1, \dots, \reg^T_m$ are the regrets accumulated by the regret minimizers $\mathcal{R}_1, \dots, \mathcal{R}_m$, and $\reg^T_{\Delta}$ is the regret of $\mathcal{R}_{\Delta}$, the regret $\reg^T_{\co}$ of the composite regret minmizers $\mathcal{R}_{\co}$ can be bounded as
    \begin{equation*}
        \reg^T_{\co} \leq \reg^T_{\Delta} + \max \{\reg_1^T, \dots, \reg_m^T\}.
    \end{equation*}
\end{proposition}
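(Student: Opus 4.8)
The plan is to unfold the definition of $\reg^T_{\co}$ and split it into a term governed by the mixing minimizer $\mathcal{R}_{\Delta}$ and a term governed by the per-set minimizers $\mathcal{R}_1,\dots,\mathcal{R}_m$. First I would record the structural fact that, since each $\mathcal{X}_i$ is convex, every point of $\co\{\mathcal{X}_1,\dots,\mathcal{X}_m\}$ can be written as $\sum_{i=1}^m \vec{\lambda}[i]\,\vec{x}_i$ with $\vec{\lambda}\in\Delta^m$ and $\vec{x}_i\in\mathcal{X}_i$. In particular, a maximizing comparator decomposes as $\vec{x}^\star=\sum_{i=1}^m\vec{\lambda}^\star[i]\,\vec{x}_i^\star$ for some $\vec{\lambda}^\star\in\Delta^m$ and $\vec{x}_i^\star\in\mathcal{X}_i$. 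Because every $L^{(t)}$ is linear, both the comparator's cumulative utility and the cumulative utility of the played iterate $\vec{x}^{(t)}=\sum_i\vec{\lambda}^{(t)}[i]\,\vec{x}_i^{(t)}$ expand into double sums over $t$ and $i$ of terms of the form $\vec{\lambda}[i]\,L^{(t)}(\vec{x}_i)$.

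The crux is a hybrid add-and-subtract decomposition: inserting the mixed quantity $\sum_t\sum_i\vec{\lambda}^\star[i]\,L^{(t)}(\vec{x}_i^{(t)})$ yields
\[
    \reg^T_{\co}
    = \underbrace{\sum_{t}\sum_i \vec{\lambda}^\star[i]\bigl(L^{(t)}(\vec{x}_i^\star)-L^{(t)}(\vec{x}_i^{(t)})\bigr)}_{(\mathrm{I})}
    + \underbrace{\sum_{t}\sum_i \bigl(\vec{\lambda}^\star[i]-\vec{\lambda}^{(t)}[i]\bigr)L^{(t)}(\vec{x}_i^{(t)})}_{(\mathrm{II})},
\]
where $(\mathrm{I})$ freezes the mixing weights at $\vec{\lambda}^\star$ and compares the comparator's per-set points $\vec{x}_i^\star$ against the played $\vec{x}_i^{(t)}$, while $(\mathrm{II})$ freezes the per-set points at $\vec{x}_i^{(t)}$ and compares the comparator weights $\vec{\lambda}^\star$ against the played $\vec{\lambda}^{(t)}$.

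Then I would bound each piece against the matching regret notion. Factoring $\vec{\lambda}^\star[i]$ out of $(\mathrm{I})$ turns it into the convex combination $\sum_i\vec{\lambda}^\star[i]\bigl(\sum_t L^{(t)}(\vec{x}_i^\star)-\sum_t L^{(t)}(\vec{x}_i^{(t)})\bigr)$; since $\mathcal{R}_i$ is fed exactly the utility $L^{(t)}$, each inner difference is the regret of $\mathcal{R}_i$ against the fixed point $\vec{x}_i^\star$ and is hence at most $\reg^T_i$, so a convex combination of these is at most $\max_i\reg^T_i$. For $(\mathrm{II})$, I would recognize that the functional forwarded to $\mathcal{R}_{\Delta}$ at time $t$ is exactly $\vec{\lambda}\mapsto\sum_i\vec{\lambda}[i]\,L^{(t)}(\vec{x}_i^{(t)})$, so $(\mathrm{II})$ is precisely the regret of $\mathcal{R}_{\Delta}$ against the fixed comparator $\vec{\lambda}^\star\in\Delta^m$, which is at most $\reg^T_{\Delta}$. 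Summing the two bounds gives $\reg^T_{\co}\le\reg^T_{\Delta}+\max_i\reg^T_i$, as claimed.

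There is no deep obstacle here; the argument is essentially bookkeeping, and the only steps that require care are (a) matching the forwarded utilities exactly to each subordinate minimizer's regret definition---ensuring $\mathcal{R}_i$ receives the unmodified $L^{(t)}$ while $\mathcal{R}_{\Delta}$ receives the linear functional with coefficients $L^{(t)}(\vec{x}_i^{(t)})$---and (b) invoking the linearity of each $L^{(t)}$, which is precisely what lets the convex weights pass through the utility and legitimizes both the double-sum expansion and the hybrid split.
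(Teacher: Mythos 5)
Your proof is correct: the hybrid add-and-subtract decomposition into the term handled by $\mathcal{R}_{\Delta}$ (regret over mixing weights against $\vec{\lambda}^\star$) and the convex combination of the subordinate regrets (bounded by $\max_i \reg_i^T$) is exactly the standard argument for this regret circuit. The paper itself does not reprove this proposition---it imports it from \citet{Farina19:Regret}---and your argument coincides with the proof given there, so there is nothing to flag beyond noting that convexity of the $\mathcal{X}_i$ is not even needed: linearity of $L^{(t)}$ already ensures that a comparator lying in $\co\,\mathcal{X}_i$ achieves no more cumulative utility than the best point of $\mathcal{X}_i$.
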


Next, we leverage this construction to obtain the main result of this subsection: a predictive regret minimizer for the set of transformations $\co \Psi_i$.

\begin{restatable}{theorem}{secondcircuit}
    \label{theorem:co-circuit}
    There exists a regret minimization algorithm $\mathcal{R}_{\Psi_i}$ for the set $\co \Psi_{i}$ (\Cref{fig:algo}) such that under any sequence of utility vectors $\vec{L}_i^{(1)}, \dots, \vec{L}_i^{(T)}$ its regret $\reg_{\Psi_i}^T$ can be bounded as
    \begin{equation*}
    \reg_{\Psi_i}^T \leq \frac{ \log |\Sigma_i| + \|\mathcal{Q}_i\|^2_1 \max_{j \in \mathcal{J}_i} \log |\mathcal{A}_j|}{\eta} + \eta (\|\mathcal{Q}_i\|_1 + 4|\Sigma_{i}|^2) \sum_{t=1}^T \| \vec{L}_i^{(t)} - \vec{L}_i^{(t-1)} \|_{\infty}^2.
    \end{equation*}
\end{restatable}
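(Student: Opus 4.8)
The plan is to instantiate the regret-circuit decomposition of \Cref{proposition:regret_circuit-co} for the identity $\co\Psi_i=\co\{\Psi_{\hat{\sigma}}:\hat{\sigma}\in\Sigma_i^*\}$, taking the predictive sub-minimizers $\mathcal{R}_{\hat{\sigma}}$ of \Cref{proposition:R_sigma} as the $\mathcal{R}_k$ and an OMWU instance on the $m$-simplex (with $m=|\Sigma_i^*|\le|\Sigma_i|$) as $\mathcal{R}_\Delta$. \Cref{proposition:regret_circuit-co} then gives $\reg_{\Psi_i}^T\le\reg_\Delta^T+\max_{\hat{\sigma}}\reg_{\hat{\sigma}}^T$, so it remains to control both terms in terms of $\|\vec{L}_i^{(t)}-\vec{L}_i^{(t-1)}\|_\infty$, the variation of the utilities fed to $\mathcal{R}_{\Psi_i}$.

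For the sub-minimizers, I would recall that $\mathcal{R}_{\hat{\sigma}}$ is obtained by pushing an OFTRL-on-$\mathcal{Q}_j$ instance through the affine map $h_{\hat{\sigma}}:\vec{q}\mapsto\phi_{\hat{\sigma}\to\vec{q}}$. The utility $\vec{L}_i^{(t)}$ restricted to $\Psi_{\hat{\sigma}}$ and pulled back along $h_{\hat{\sigma}}$ becomes a linear utility $\tilde{\vec{\ell}}^{(t)}$ on $\mathcal{Q}_j$, each coordinate $\tilde{\vec{\ell}}^{(t)}[\sigma]$ being a single entry of $\vec{L}_i^{(t)}$ (the coefficient, equal to $1$, with which $\vec{q}[\sigma]$ enters the affine form $\langle\vec{L}_i^{(t)},\phi_{\hat{\sigma}\to\vec{q}}\rangle$). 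Hence $\|\tilde{\vec{\ell}}^{(t)}-\tilde{\vec{\ell}}^{(t-1)}\|_\infty\le\|\vec{L}_i^{(t)}-\vec{L}_i^{(t-1)}\|_\infty$, and plugging the $(A,B)$-predictive bound of \Cref{lemma:predictive-OMD} (with $A=\|\mathcal{Q}_i\|_1^2\max_{j}\log|\mathcal{A}_j|/\eta$ and $B=\eta\|\mathcal{Q}_i\|_1$) yields $\max_{\hat{\sigma}}\reg_{\hat{\sigma}}^T\le A+\eta\|\mathcal{Q}_i\|_1\sum_{t=1}^T\|\vec{L}_i^{(t)}-\vec{L}_i^{(t-1)}\|_\infty^2$.

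For $\mathcal{R}_\Delta$, the utility vector is $\vec{g}^{(t)}[k]=\langle\vec{L}_i^{(t)},\phi_k^{(t)}\rangle$, where $\phi_k^{(t)}$ is the transformation currently output by the $k$-th sub-minimizer. The key modeling choice is to feed $\mathcal{R}_\Delta$ the prediction $\vec{m}^{(t)}[k]=\langle\vec{L}_i^{(t-1)},\phi_k^{(t)}\rangle$, which is available at time $t$ since the sub-minimizers are queried first; then its prediction error is $\vec{g}^{(t)}[k]-\vec{m}^{(t)}[k]=\langle\vec{L}_i^{(t)}-\vec{L}_i^{(t-1)},\phi_k^{(t)}\rangle$ and is free of any movement of $\phi_k^{(t)}$. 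Using Hölder together with the bound $\|\phi_k^{(t)}\|_1\le 2|\Sigma_i|$ on the entrywise $\ell_1$-mass of any trigger-deviation matrix — which follows from the sequence-form structure, each such matrix being the identity outside the coordinates $\succeq j$ plus a continuation block with entries in $[0,1]$ supported on $\Sigma_j$ — gives $\|\vec{g}^{(t)}-\vec{m}^{(t)}\|_\infty\le 2|\Sigma_i|\,\|\vec{L}_i^{(t)}-\vec{L}_i^{(t-1)}\|_\infty$. The $(\log m/\eta,\eta)$-predictive bound for OMWU (\Cref{lemma:oftrl-predictive} with the entropic DGF, whose range is $\log m$) then gives $\reg_\Delta^T\le\log|\Sigma_i|/\eta+\eta\,4|\Sigma_i|^2\sum_{t=1}^T\|\vec{L}_i^{(t)}-\vec{L}_i^{(t-1)}\|_\infty^2$. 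Summing the two displays and collecting the $1/\eta$ and $\eta$ contributions reproduces the claimed bound exactly.

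The main obstacle I anticipate is the careful bookkeeping of how the $\ell_\infty$-variation of $\vec{L}_i^{(t)}$ travels through the two different channels: through the affine pullback $h_{\hat{\sigma}}$ into the sub-minimizers, where it must incur no blow-up (a factor of exactly $1$, which is what keeps the sum's coefficient equal to $\eta\|\mathcal{Q}_i\|_1$), and into the mixing utilities of $\mathcal{R}_\Delta$, where the $\|\phi_k^{(t)}\|_1\le 2|\Sigma_i|$ estimate produces the $4|\Sigma_i|^2$ factor. Coupled with this is the subtlety of choosing the prediction for $\mathcal{R}_\Delta$ so that its prediction error is governed by $\vec{L}_i^{(t)}-\vec{L}_i^{(t-1)}$ alone, rather than also by the stability of the sub-minimizers' outputs, which would otherwise introduce additional terms absent from the statement.
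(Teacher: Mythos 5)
Your proposal follows essentially the same route as the paper's proof: the convex-hull regret circuit of \Cref{proposition:regret_circuit-co} over the sub-minimizers $\mathcal{R}_{\hat{\sigma}}$ of \Cref{proposition:R_sigma} (instantiated via \Cref{lemma:predictive-OMD}) and an OMWU mixer, including the paper's key ``advanced prediction'' $\vec{m}_{\lambda}^{(t)}[k] = \langle \vec{L}_i^{(t-1)}, \vec{x}_k^{(t)}\rangle$ from \Cref{remark:better_prediction} and the H\"older bound $\|\vec{x}_k^{(t)}\|_1 \leq 2|\Sigma_i|$ that produces the $4|\Sigma_i|^2$ factor in \Cref{proposition:circ-pred}. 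The argument is correct and matches the paper's proof step for step.
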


As illustrated in \Cref{fig:algo}, the ``mixer'' $\mathcal{R}_{\Delta}$ is instantiated with OMWU, while each regret minimizer $\mathcal{R}_{\hat{\sigma}}$, for $\hat{\sigma} \in \hat{\sigma} \in \Sigma_i^*$, internally employs the dilatable global entropy as DGF to construct a regert minimizer over $\mathcal{Q}_j$. A notable ingredient of our predictive regret circuit (\Cref{proposition:circ-pred}) is that we employ an advanced prediction mechanism in place of the usual ``one-recency bias'' wherein the prediction is simply the previously observed utility. This leads to an improved regret bound as we further explain in \Cref{remark:better_prediction}.

\subsection{Stability of the Fixed Points}
\label{section:stability}

As suggested by \Cref{theorem:accelerating-Phi}, employing a predictive regret minimizer is of little gain if we cannot guarantee that the observed utilities will be stable. For this reason, in this subsection we focus on characterizing the stability of the fixed points, eventually leading to our stable-predictive $\co \Psi_{i}$-regret minimizer. In the context of \Cref{theorem:accelerating-Phi}, this establishes the \emph{stable} fixed point oracle. All of the omitted proofs of this section are included in \Cref{appendix:section_5}.

\paragraph{Multiplicative Stability.} Our analysis will reveal a particularly strong notion of stability we refer to as \emph{multiplicative stability}. More precisely, we say that a sequence $( \vec{z}^{(t)} )$, with $\vec{z}^{(t)} \in \mathbb{R}^d_{>0}$, is $\kappa$\emph{-multiplicative-stable}, with $\kappa \in (0,1)$, if $(1 + \kappa)^{-1} \vec{z}^{(t-1)}[k]  \leq \vec{z}^{(t)}[k] \leq (1 + \kappa) \vec{z}^{(t-1)}[k]$, for any $k \in [d]$ and for all $t \geq 2$. 
When $\vec{z}^{(t)}[k]$ and $\vec{z}^{(t-1)}[k]$ are such that $ (1 + \kappa)^{-1} \vec{z}^{(t-1)}[k] \leq \vec{z}^{(t)}[k] \leq (1 + \kappa) \vec{z}^{(t-1)}[k]$, we say that they are $\kappa$-\emph{multiplicative-close}. We begin by showing that OMWU on the simplex and OFTRL with dilatable global entropy as DGF guarantee multiplicative stability.

\begin{restatable}{lemma}{mulstabsimplex}
    \label{lemma:OMW-simplex-stability}
    Consider the OMWU algorithm on the simplex $\Delta^m$ with $\eta > 0$. If all the observed utilities and the predictions are such that $\|\vec{\ell}^{(t)}\|_{\infty}, \| \vec{m}^{(t)}\|_\infty \leq \|\vec{\ell}\|_\infty$, and $\eta < 1/(12\|\vec{\ell}\|_\infty)$, then the sequence $(\vec{x}^{(t)})$ produced by $\omw$ is $(12 \eta \|\vec{\ell}\|_\infty)$-multiplicative-stable.
\end{restatable}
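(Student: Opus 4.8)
The plan is to exploit the closed-form softmax representation of the OMWU iterates and reduce the claim to an elementary exponential inequality. Writing $\vec{s}^{(t)} \defeq \vec{m}^{(t)} + \sum_{\tau=1}^{t-1}\vec{\ell}^{(\tau)}$ for the cumulative score that \eqref{eq:oftrl} maximizes against, the first-order optimality conditions for the entropic regularizer $d(\vec{x}) = \sum_k \vec{x}[k]\log\vec{x}[k]$ on $\Delta^m$ give the familiar coordinatewise form $\vec{x}^{(t)}[k] = \exp(\eta\,\vec{s}^{(t)}[k])/Z^{(t)}$, where $Z^{(t)} \defeq \sum_{k'}\exp(\eta\,\vec{s}^{(t)}[k'])$ is the partition function. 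I would first record this expression, which follows directly from \eqref{eq:oftrl} specialized to the simplex.

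Next, I would form the per-coordinate ratio $\vec{x}^{(t)}[k]/\vec{x}^{(t-1)}[k] = \exp(\eta(\vec{s}^{(t)}[k]-\vec{s}^{(t-1)}[k]))\cdot Z^{(t-1)}/Z^{(t)}$ and control both factors. The score increment telescopes to $\vec{s}^{(t)}-\vec{s}^{(t-1)} = \vec{m}^{(t)} - \vec{m}^{(t-1)} + \vec{\ell}^{(t-1)}$, so by the triangle inequality and the assumed bounds $\|\vec{\ell}^{(t)}\|_\infty, \|\vec{m}^{(t)}\|_\infty \le \|\vec{\ell}\|_\infty$, every coordinate of $\eta(\vec{s}^{(t)}-\vec{s}^{(t-1)})$ lies in $[-3\eta\|\vec{\ell}\|_\infty,\, 3\eta\|\vec{\ell}\|_\infty]$. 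Hence the first factor is trapped in $[\exp(-3\eta\|\vec{\ell}\|_\infty), \exp(3\eta\|\vec{\ell}\|_\infty)]$. For the partition-function ratio I would factor $\exp(\eta\vec{s}^{(t)}[k']) = \exp(\eta\vec{s}^{(t-1)}[k'])\exp(\eta(\vec{s}^{(t)}[k']-\vec{s}^{(t-1)}[k']))$ inside the sum defining $Z^{(t)}$, which exhibits $Z^{(t)}/Z^{(t-1)}$ as a convex combination (with weights $\vec{x}^{(t-1)}[k']$) of the per-coordinate factors $\exp(\eta(\vec{s}^{(t)}[k']-\vec{s}^{(t-1)}[k']))$; since each such factor is again pinned between $\exp(\pm 3\eta\|\vec{\ell}\|_\infty)$, so is their convex combination. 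Multiplying the two bounds gives $\exp(-6\eta\|\vec{\ell}\|_\infty) \le \vec{x}^{(t)}[k]/\vec{x}^{(t-1)}[k] \le \exp(6\eta\|\vec{\ell}\|_\infty)$ for every coordinate $k$ and every $t\ge 2$.

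Finally, I would convert the exponential bound into the multiplicative-closeness form with $\kappa = 12\eta\|\vec{\ell}\|_\infty$. Setting $x \defeq 6\eta\|\vec{\ell}\|_\infty$, the hypothesis $\eta < 1/(12\|\vec{\ell}\|_\infty)$ forces $x < 1/2$, a regime in which the elementary inequality $e^x \le 1 + 2x$ holds (indeed $1 + 2x - e^x$ vanishes at $x=0$ and is increasing on $[0,\ln 2)$), and consequently $e^{-x} \ge (1+2x)^{-1}$. Since $2x = \kappa$, these give $\exp(6\eta\|\vec{\ell}\|_\infty) \le 1+\kappa$ and $\exp(-6\eta\|\vec{\ell}\|_\infty) \ge (1+\kappa)^{-1}$, which is exactly the stated $\kappa$-multiplicative-stability bound.

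There is no genuinely hard step here; the argument is self-contained once the softmax form is written down. The only points demanding care are (i) tracking \emph{both} prediction terms $\vec{m}^{(t)}$ and $\vec{m}^{(t-1)}$ in the score increment, which is why the exponent constant is $3$ (and the final constant $6$) rather than $2$; and (ii) checking the learning-rate regime in which $e^x \le 1+2x$ is valid, which is precisely what the hypothesis $\eta < 1/(12\|\vec{\ell}\|_\infty)$ is calibrated to ensure.
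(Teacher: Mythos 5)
Your proposal is correct and follows essentially the same route as the paper's proof: both write the OMWU iterate in its softmax/ratio form (your factorization of $Z^{(t)}/Z^{(t-1)}$ as a convex combination weighted by $\vec{x}^{(t-1)}$ is exactly the paper's denominator $\sum_{k'} e^{\eta \vec{\ell}^{(t-1)}[k'] + \eta \vec{m}^{(t)}[k'] - \eta \vec{m}^{(t-1)}[k']} \vec{x}^{(t-1)}[k']$), bound each exponent by $3\eta\|\vec{\ell}\|_\infty$, obtain the two-sided $e^{\pm 6\eta\|\vec{\ell}\|_\infty}$ bound, and convert via $e^x \leq 1+2x$ on $[0,1/2]$. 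The only cosmetic difference is the lower bound: you use $e^{-x} \geq (1+2x)^{-1}$ directly, while the paper passes through $e^{-x} \geq 1-x$ and then $1 - 6\eta\|\vec{\ell}\|_\infty \geq (1+12\eta\|\vec{\ell}\|_\infty)^{-1}$; both are valid under the stated learning-rate condition.
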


\begin{restatable}{lemma}{mulstabseq}
    \label{lemma:OMD-stability}
    Consider the \eqref{eq:oftrl} algorithm on the sequence-form strategy polytope $\mathcal{Q}$ with dilatable global entropy as DGF and $\eta > 0$. If all the utility functions are such that $\|\vec{\ell}^{(t)}\|_{\infty} \leq 1$, and $\eta = O(1/\mathfrak{D})$ is sufficiently small, then the sequence $(\vec{x}^{(t)})$ produced is $O(\eta \mathfrak{D})$-multiplicative-stable.
\end{restatable}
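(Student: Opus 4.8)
The plan is to reduce the claim to the simplex case (\Cref{lemma:OMW-simplex-stability}) by exploiting the recursive, information-set-by-information-set structure of the dilatable global entropy, and then to compound the resulting local guarantees along each root-to-leaf path. Concretely, writing $\vec{x}^{(t)}[(j,a)] = \vec{x}^{(t)}[\sigma_j]\,\vec{b}_j^{(t)}[a]$ for the behavioral representation of the iterate, any sequence $\sigma \in \Sigma_i$ at depth $k \le \mathfrak{D}$ factors as a product $\vec{x}^{(t)}[\sigma] = \prod_{\ell=1}^{k} \vec{b}_{j_\ell}^{(t)}[a_\ell]$ of at most $\mathfrak{D}$ behavioral probabilities along the path from the root to $\sigma$. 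Thus, if I can show that each behavioral strategy $\vec{b}_j^{(t)}$ is $O(\eta)$-multiplicative-stable, the product of at most $\mathfrak{D}$ such factors will be $(1+O(\eta))^{\mathfrak{D}}$-stable, and since $(1+O(\eta))^{\mathfrak{D}} \le 1 + O(\eta \mathfrak{D})$ whenever $\eta = O(1/\mathfrak{D})$, the claimed $O(\eta\mathfrak{D})$-multiplicative stability of $(\vec{x}^{(t)})$ follows.

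First I would make the behavioral update explicit. Using the closed form of the convex conjugate of the dilatable global entropy (\Cref{definition:global-dgf}, as established by \citet{Farina21:Better}), the local distribution $\vec{b}_j^{(t)}$ is a softmax distribution over $\mathcal{A}_j$ whose exponent is $\eta$ times a counterfactual value $V_j^{(t)}[a]$; the latter is computed by a bottom-up recursion that adds, to the aggregate sequence utility on $(j,a)$, a (weight-normalized) aggregate of the softmax-smoothed values $\tilde{V}_{j'}^{(t)}$ of the child information sets $j'$ with $\sigma_{j'} = (j,a)$. Since \eqref{eq:oftrl} feeds the aggregate linear term $\vec{m}^{(t)} + \sum_{\tau < t} \vec{\ell}^{(\tau)}$ and $\vec{m}^{(t)} = \vec{\ell}^{(t-1)}$, the round-to-round increment of this aggregate term is $2\vec{\ell}^{(t-1)} - \vec{\ell}^{(t-2)}$, whose $\ell_\infty$-norm is at most $3$. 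Observing that the multiplicative stability of a softmax (OMWU) update depends only on the $\ell_\infty$-increment of its exponent between rounds---exactly as in the proof of \Cref{lemma:OMW-simplex-stability}---the remaining task is to show that the increment $V_j^{(t)} - V_j^{(t-1)}$ of the counterfactual value has $\ell_\infty$-norm $O(1)$, uniformly over information sets.

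The main obstacle is exactly this uniform bound on the counterfactual-value increments. A naive estimate is unsatisfactory: the softmax-smoothing map is only $1$-Lipschitz in $\|\cdot\|_\infty$, so propagating increments upward through the recursion with a plain sum over child information sets would make the bound scale with the size of the subtree (indeed exponentially for branching trees) rather than with its depth. The key is that the weights $\vec{w}$ and $\vec{\gamma}$ (\Cref{definition:global-dgf}) are constructed precisely so that the continuation values entering $V_j^{(t)}$ are normalized by $\vec{\gamma}$: because $\sum_{j':\sigma_{j'}=(j,a)} \vec{\gamma}[j'] \le \vec{\gamma}[j]$, the smoothed child values are effectively aggregated as a sub-convex combination rather than as an unweighted sum. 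I would therefore run a leaf-to-root induction on the information-set forest, maintaining the invariant that $|\tilde{V}_{j}^{(t)} - \tilde{V}_{j}^{(t-1)}|$ is bounded by a universal constant: at the inductive step, the $1$-Lipschitzness of the smoothing together with the $\vec{\gamma}$-normalization contracts the bounded child increments, while the direct utility increment adds only $O(1)$. This establishes that the $\ell_\infty$-norm of each $V_j^{(t)} - V_j^{(t-1)}$ is $O(1)$, hence the $O(\eta)$-multiplicative stability of every $\vec{b}_j^{(t)}$ via the increment-based form of \Cref{lemma:OMW-simplex-stability}.

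Finally, I would collect the pieces: combining the per-information-set $O(\eta)$-stability with the product decomposition over paths of length at most $\mathfrak{D}$ yields the stated $O(\eta\mathfrak{D})$-multiplicative stability, where the assumption $\eta = O(1/\mathfrak{D})$ guarantees that $\eta\mathfrak{D}$ stays below the threshold needed both for the estimate $(1+O(\eta))^{\mathfrak{D}} \le 1 + O(\eta\mathfrak{D})$ and for the smallness conditions of \Cref{lemma:OMW-simplex-stability} invoked at each information set.
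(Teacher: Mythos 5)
Your proposal is correct and follows essentially the same route as the paper's proof: you derive the closed-form behavioral (softmax) update induced by the dilatable global entropy, run the same leaf-to-root induction in which the $\vec{\gamma}$-normalization of \eqref{eq:gamma} keeps the per-information-set exponent increments at $O(\eta)$ despite branching, and then compound the resulting $O(\eta)$-multiplicative stability of each local $\vec{b}_j^{(t)}$ along root-to-sequence paths of length at most $\mathfrak{D}$. The paper's recursion \eqref{eq:r} and its inductive bound $\lvert \vec{r}^{(t)}[j] - \vec{r}^{(t-1)}[j] \rvert \leq 3\eta\,\vec{\gamma}[j]$ are exactly your invariant that the increments of the normalized smoothed counterfactual values stay bounded by a universal constant.
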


To establish multiplicative stability of \eqref{eq:oftrl} under the dilatable global entropy DGF we first derive a closed-form solution which reveals the multiplicative structure of the update rule for the behavioral strategies at every ``local'' decision point. Then, the conversion to the sequence-form representation leads to a slight degradation of an $O(\mathfrak{D})$ (depth) factor in the multiplicative stability. Next, we use \Cref{lemma:OMW-simplex-stability,lemma:OMD-stability} to arrive at the following conclusion.

\begin{corollary}
    \label{corollary:mul-stability}
    Consider the regret minimization algorithm of \Cref{fig:algo}, and suppose that $\mathcal{R}_{\Delta}$ is instantiated using OMWU with $\eta > 0$, while each $\mathcal{R}_{\hat{\sigma}}$ is instantiated using \eqref{eq:oftrl} with dilatable global entropy as DGF and $\eta > 0$, for all $\hat{\sigma} \in \Sigma_i^*$. Then, for a sufficiently small $\eta = O(1/\|\mathcal{Q}_i\|_1)$,
    \begin{itemize}
        \item[(i)] The output sequence of each $\mathcal{R}_{\hat{\sigma}}$ is $O(\eta \mathfrak{D}_i)$-multiplicative-stable;
        \item[(ii)] The output sequence of $\mathcal{R}_{\Delta}$ is $O(\eta \|\mathcal{Q}_i\|_1)$-multiplicative-stable.
    \end{itemize}
\end{corollary}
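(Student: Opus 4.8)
The plan is to obtain the corollary as a direct instantiation of the two multiplicative-stability lemmas \Cref{lemma:OMW-simplex-stability,lemma:OMD-stability} applied to the specific regret minimizers appearing inside the construction of \Cref{fig:algo}: part \textbf{(i)} is \Cref{lemma:OMD-stability} applied to each inner OFTRL instance $\mathcal{R}_{\hat{\sigma}}$, and part \textbf{(ii)} is \Cref{lemma:OMW-simplex-stability} applied to the mixer $\mathcal{R}_{\Delta}$. The only genuine work is to verify the hypotheses of these lemmas for the utilities that actually reach each internal minimizer, namely (a) appropriate $\ell_\infty$ bounds on those utilities, and (b) that the single learning rate $\eta = O(1/\|\mathcal{Q}_i\|_1)$ is small enough to satisfy the learning-rate restriction of both lemmas at once. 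Point (b) is immediate once we note $\mathfrak{D}_i \le \|\mathcal{Q}_i\|_1$ (a deterministic strategy playing all the way down to the deepest information set already places unit mass on $\mathfrak{D}_i$ distinct sequences), so that $\eta = O(1/\|\mathcal{Q}_i\|_1)$ simultaneously forces $\eta = O(1/\mathfrak{D}_i)$ and $\eta < 1/(12\|\mathcal{Q}_i\|_1)$.

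For \textbf{(i)}, recall from \Cref{proposition:R_sigma} that $\mathcal{R}_{\hat{\sigma}}$ runs \eqref{eq:oftrl} with dilatable global entropy on the subtree polytope $\mathcal{Q}_j$, fed the utility obtained by pulling the transformation-level utility back through the affine map $h_{\hat{\sigma}}: \vec{q} \mapsto \phi_{\hat{\sigma} \rightarrow \vec{q}}$. Concretely, the effective utility seen by this inner instance is $\vec{q} \mapsto \langle \vec{\ell}_i^{(t)}, \phi_{\hat{\sigma} \rightarrow \vec{q}}(\vec{x}_i^{(t)}) \rangle$; since $\phi_{\hat{\sigma} \rightarrow \vec{q}}$ leaves coordinates $\sigma \not\succeq j$ untouched and replaces the subtree below $j$ by $\vec{x}_i^{(t)}[\hat{\sigma}]\, \vec{q}[\sigma]$, its gradient in $\vec{q}$ has entries of the form $\vec{x}_i^{(t)}[\hat{\sigma}]\, \vec{\ell}_i^{(t)}[\sigma]$ for $\sigma \succeq j$ (the remaining terms form a constant offset to which \eqref{eq:oftrl} is invariant). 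Because each sequence-form utility coordinate obeys $|\vec{\ell}_i^{(t)}[\sigma]| \le 1$ (it is an expected payoff conditioned on reaching $\sigma$, with $u_i \in [-1,1]$) and $\vec{x}_i^{(t)}[\hat{\sigma}] \le 1$, this effective utility is bounded by $1$ in $\ell_\infty$. The subtree rooted at $j$ has depth at most $\mathfrak{D}_i$, so \Cref{lemma:OMD-stability} applies and yields that the output of each $\mathcal{R}_{\hat{\sigma}}$ is $O(\eta \mathfrak{D}_i)$-multiplicative-stable.

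For \textbf{(ii)}, the mixer $\mathcal{R}_{\Delta}$ is an OMWU instance on the simplex $\Delta^{|\Sigma_i^*|}$ whose observed utility, per the regret circuit of \Cref{proposition:regret_circuit-co}, has $k$-th coordinate $L_i^{(t)}(\vec{x}_k^{(t)}) = \langle \vec{\ell}_i^{(t)}, \phi_{\hat{\sigma}_k \rightarrow \vec{q}_k^{(t)}}(\vec{x}_i^{(t)}) \rangle$. By H\"older's inequality, and since applying any trigger deviation to $\vec{x}_i^{(t)}$ returns a sequence-form (sub)strategy, $|L_i^{(t)}(\vec{x}_k^{(t)})| \le \|\vec{\ell}_i^{(t)}\|_\infty \, \|\phi_{\hat{\sigma}_k \rightarrow \vec{q}_k^{(t)}}(\vec{x}_i^{(t)})\|_1 \le \|\mathcal{Q}_i\|_1$. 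Hence the mixer observes utilities bounded by $\|\mathcal{Q}_i\|_1$ in $\ell_\infty$, and \Cref{lemma:OMW-simplex-stability} (with $\|\vec{\ell}\|_\infty \le \|\mathcal{Q}_i\|_1$ and the admissible $\eta$) gives that the output of $\mathcal{R}_{\Delta}$ is $12\eta \|\mathcal{Q}_i\|_1 = O(\eta \|\mathcal{Q}_i\|_1)$-multiplicative-stable.

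I expect the main obstacle to be the $\ell_\infty$ bookkeeping in \textbf{(i)}: one must argue carefully that pulling the transformation-level utility back through $h_{\hat{\sigma}}$ does not inflate the $\ell_\infty$ norm, which hinges on the trigger deviation contributing only the factor $\vec{x}_i^{(t)}[\hat{\sigma}] \le 1$ on the replaced subtree while leaving the other coordinates as genuine sequence-form utilities. Once this boundedness is established, both parts reduce to plugging into the corresponding lemmas. The asymmetry between the final $\mathfrak{D}_i$ factor in \textbf{(i)} and the $\|\mathcal{Q}_i\|_1$ factor in \textbf{(ii)} then has a clean explanation: the $\mathfrak{D}_i$ in \textbf{(i)} is intrinsic to \Cref{lemma:OMD-stability}, arising from the behavioral-to-sequence-form conversion, whereas the $\|\mathcal{Q}_i\|_1$ in \textbf{(ii)} is driven by the larger magnitude of the utilities that the mixer observes.
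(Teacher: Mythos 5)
Your proof is correct and takes essentially the same route as the paper's: establish that the utilities reaching each internal regret minimizer are bounded in $\ell_\infty$ (by $1$ for the inner OFTRL instances $\mathcal{R}_{\hat{\sigma}}$, since each entry of the pulled-back utility is $\vec{x}_i^{(t)}[\hat{\sigma}]\,\vec{\ell}_i^{(t)}[\sigma]$ with both factors in $[-1,1]$, and by $O(\|\mathcal{Q}_i\|_1)$ for the mixer via H\"older), then invoke \Cref{lemma:OMD-stability} and \Cref{lemma:OMW-simplex-stability} respectively. The only detail the paper makes explicit that you elide is that the mixer's prediction $\vec{m}_{\lambda}^{(t)}$ must also satisfy the $O(\|\mathcal{Q}_i\|_1)$ bound for \Cref{lemma:OMW-simplex-stability} to apply, but this follows from the same H\"older argument you already give.
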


Armed with this characterization, we will next establish the multiplicative stability of the fixed points associated with trigger deviation functions. To this end, building on the approach of \citet{Farina21:Simple}, let us introduce the following definitions.

\begin{definition}
    Consider a player $i \in [n]$ and let $J \subseteq \mathcal{J}_{i}$ be a subset of $i$'s information sets. We say than $J$ is a \emph{trunk} of $\mathcal{J}_{i}$ if, for every $j \in J$, all predecessors of $j$ are also in $J$. 
\end{definition}

\begin{definition} 
    \label{definition:partial_fixed_point}
    Consider a player $i \in [n]$, a trunk $J \subseteq \mathcal{J}_{i}$, and $\phi_i \in \co \Psi_{i}$. A vector $\vec{x}_i \in \R_{\geq 0}^{|\Sigma_{i}|}$ is a $J$-\emph{partial fixed point} of $\phi_i$ if the following conditions hold:
    \begin{itemize}
        \item $\vec{x}_i[\emptyseq] = 1$ and $\vec{x}_i[\sigma_j] = \sum_{a \in \mathcal{A}_j} \vec{x}_i[(j, a)]$, for all $j \in J$;
        \item $\phi_i(\vec{x}_i)[\emptyseq] = \vec{x}_i[\emptyseq] = 1$, and $\phi_i(\vec{x}_i)[(j, a)] = \vec{x}_i[(j, a)]$, for all $j \in J$ and $a \in \mathcal{A}_j$.
    \end{itemize}
\end{definition}

An important property is that a $J$-partial fixed point can be efficiently ``promoted'' to a $J \cup \{j^*\}$-partial fixed point by computing the stationary distribution of a certain Markov chain (see \Cref{algo:Extend}). However, it is a priori unclear how this fixed point operation would affect the stability of the produced strategies. In fact, even for a $2$-state Markov chain, the stationary distribution could behave very unsmoothly under slight perturbations in the transition probabilities; \emph{e.g.}, see~\citep{Meyer80:The,Haviv84:Perturbation,Chen20:Hedging}. This is where the stronger notion of multiplicative stability comes into play. Indeed, it turns out that as long as the transition probabilities are multiplicative-stable, the stationary distribution will also be stable~\citep{Candogan13:Dynamics}. This observation was also leveraged by \citet{Chen20:Hedging} to obtain an $O(T^{-3/4})$ rate of convergence to correlated equilibria in normal-form games. 

However, our setting is substantially more complex, and direct extensions of those prior techniques appears to only give a bound \emph{exponential} in the size of the game. In light of this, one of our key observations is that the associated Markov chains has a particular structure which enables us to establish a polynomial degradation in terms of stability. At a high level, we observe that the underlying Markov chain can be expressed as the convex combination of a stable chain with a much less stable rank-one component. The main concern is that the unstable rank-one chain could cause a substantial degradation in terms of the stability of the fixed points. We address this by proving the following key lemma.

\begin{restatable}{lemma}{rankone}
    \label{lemma:convex_characterization}
    Let $\mat{M}$ be the transition matrix of an $m$-state Markov chain such that $\mat{M} := \vec{v} \vec{1}^\top + \mat{C}$, where $\mat{C}$ is a matrix with strictly positive entries and columns summing to $1 - \lambda$, and $\vec{v}$ is a vector with strictly positive entries summing to $\lambda$. Then, if $\vec{\pi}$ is the stationary distribution of $\mat{M}$, there exists, for each $i \in [m]$, a (non-empty) finite set $F_i$ and $F = \bigcup_i F_i$, and corresponding parameters $b_j \in \{0, 1\}, 0 \leq p_j \leq m-2, |S_j| = m - p_j - b_j - 1$, for each $j \in F_i$, such that
    \begin{equation*}
        \vec{\pi}[i] = \frac{\sum_{j \in F_i} \lambda^{p_j + 1} (\vec{v}[q_j])^{b_j}  \prod_{(s,w) \in S_j} \mat{C}[(s,w)]}{\sum_{j \in F} C_j \lambda^{p_j + b_j} \prod_{(s,w) \in S_j}\mat{C}[(s,w)]},
    \end{equation*}
    where $C_j = C_j(m)$ is a positive parameter.
\end{restatable}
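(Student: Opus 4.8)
The plan is to reduce the computation of the stationary distribution to a single linear system and then extract the claimed monomial structure from a multilinear expansion of the associated determinants. First I would note that, since the columns of $\mat{M} = \vec{v}\vec{1}^\top + \mat{C}$ sum to $\lambda + (1-\lambda) = 1$, the matrix $\mat{M}$ is column-stochastic and strictly positive, so by Perron--Frobenius it has a unique strictly positive stationary distribution $\vec{\pi}$ satisfying $\mat{M}\vec{\pi} = \vec{\pi}$. Substituting the rank-one-plus-$\mat{C}$ form and using the normalization $\vec{1}^\top\vec{\pi} = 1$ yields $\vec{\pi} = \vec{v}(\vec{1}^\top\vec{\pi}) + \mat{C}\vec{\pi} = \vec{v} + \mat{C}\vec{\pi}$, i.e.
\begin{equation*}
    (\mat{I} - \mat{C})\vec{\pi} = \vec{v}.
\end{equation*}
Since $\mat{C}$ is nonnegative with column sums $1-\lambda < 1$, its spectral radius is at most $1-\lambda$, so $\mat{I}-\mat{C}$ is invertible and $\vec{\pi} = (\mat{I}-\mat{C})^{-1}\vec{v}$ is well defined. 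This linear-algebraic route already departs from the Markov-chain-tree-theorem approach of prior work.

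Next I would apply Cramer's rule. Writing $\mat{A} \defeq \mat{I}-\mat{C}$ and letting $\mat{A}^{(i)}$ be $\mat{A}$ with its $i$-th column replaced by $\vec{v}$, I have $\vec{\pi}[i] = \det\mat{A}^{(i)}/\det\mat{A}$, and expanding along the replaced column, $\det\mat{A}^{(i)} = \sum_k \vec{v}[k]\,\cofactor_{ki}(\mat{A})$. The crux is to expand $\det\mat{A}$ and these cofactors into monomials in $\lambda$, the entries of $\mat{C}$, and (for the numerator) the entries of $\vec{v}$. The key device is the ``Laplacian'' decomposition of each column of $\mat{A}$: using the column-sum identity $\mat{A}[i,i] = 1 - \mat{C}[i,i] = \lambda + \sum_{j\neq i}\mat{C}[j,i]$, the $i$-th column becomes
\begin{equation*}
    \mat{A}[\cdot,i] = \lambda\,\vec{e}_i + \sum_{j\neq i}\mat{C}[j,i]\,(\vec{e}_i - \vec{e}_j).
\end{equation*}
Expanding the determinant multilinearly over the per-column choices, each surviving term is a product of scalars---one $\lambda$ or one entry of $\mat{C}$ per column---times the determinant of a $\{0,\pm1\}$ matrix whose columns are $\vec{e}_i$ or $\vec{e}_i-\vec{e}_j$. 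A direct row-reduction (or spanning-forest) argument shows such a determinant vanishes unless the selected ``roots'' and ``edges'' form a forest in which each tree carries exactly one root, in which case it is $\pm1$. In particular any term choosing \emph{no} $\lambda$ factor has all columns summing to zero and is therefore singular, which forces every nonvanishing monomial to contain at least one $\lambda$; counting roots against edges then produces the exponent ranges (of the type $0\le p_j\le m-2$ and $|S_j| = m-p_j-b_j-1$).

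Two consolidations put the expression in the stated form. In the denominator, because $\det\mat{A}$ vanishes at $\lambda=0$ (the columns of $\mat{A}$ then sum to zero), a common factor of $\lambda$ can be extracted, leaving a partition function of the form $\sum_{j\in F} C_j\,\lambda^{p_j+b_j}\prod_{(s,w)\in S_j}\mat{C}[(s,w)]$. In the numerator, I would use the constraint $\vec{1}^\top\vec{v} = \lambda$ to collapse any sum $\sum_k \vec{v}[k]\,(\cdots)$ in which the cofactor monomial is \emph{independent} of the source index $k$ into a single $\lambda$ factor (these are the $b_j=0$ terms), while monomials that survive only for a specific $k$ retain a genuine $\vec{v}[q_j]$ dependence (the $b_j=1$ terms). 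This is precisely what turns the raw cofactor expansion into the advertised $\lambda^{p_j+1}(\vec{v}[q_j])^{b_j}\prod_{S_j}\mat{C}$ numerators; matching the literal exponents is then a matter of careful per-term accounting of how many columns supplied a root, an edge, or the replaced $\vec{v}$-column.

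The main obstacle I anticipate is not the reduction but the positivity of the coefficients $C_j(m)$: one must certify that no cancellation ever produces a negative monomial, so that $\vec{\pi}[i]$ is genuinely a ratio of \emph{positive} polynomials---this is exactly the structure the subsequent multiplicative-stability argument relies on. The spanning-forest picture furnishes the positivity intuition essentially for free, but carrying it out through that combinatorial expansion is what causes the exponential blow-up the paper seeks to avoid; hence I would instead establish positivity and the sign bookkeeping directly from the cofactor/eigenvector structure of the Laplacian $\mat{A}$, tracking the orientation signs of the $\{0,\pm1\}$ column determinants and verifying they all resolve to $+1$.
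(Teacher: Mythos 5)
Your proposal is correct in substance and lands close to the paper's actual argument---both proofs abandon the Markov chain tree theorem in favor of linear-algebraic determinant expansions, both derive $(\mat{I}_m-\mat{C})\vec{\pi} = \vec{v}$, and both expand determinants multilinearly over column decompositions into ``root'' vectors $\vec{e}[i]$ and ``edge'' vectors $\vec{e}[i]-\vec{e}[j]$ with an acyclicity criterion for which elementary determinants survive. But your organization is genuinely different and in two places cleaner. The paper routes through the adjugate/eigenvector characterization (\Cref{theorem:stationary_solution}): an unnormalized stationary vector is $\Sigma_i = \adj(\mathcal{L})[(i,i)]$ with $\mathcal{L} = \mat{M}-\mat{I}_m$, so $\vec{v}$ appears in \emph{every} column of the relevant determinants, and the paper needs a dedicated column-subtraction trick (using $\vec{1}^\top\vec{v}=\lambda$) to show each monomial carries at most one entry of $\vec{v}$, plus a second trick---substituting $\vec{v}=(\lambda/m)\vec{1}$ into the identity $\det(\mat{I}_m-\mat{C}) = \lambda\sum_i\Sigma_i$---to exhibit a denominator depending on $\vec{v}$ only through $\lambda$. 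In your route both properties come for free: Cramer's rule puts $\vec{v}$ in exactly one column, so affine dependence is immediate, and the denominator $\det(\mat{I}-\mat{C})$ is manifestly independent of $\vec{v}$. Your anticipated obstacle, positivity of the coefficients, also dissolves under your sign convention: with columns $\lambda\vec{e}[i]$ and $\mat{C}[j,i](\vec{e}[i]-\vec{e}[j])$, every surviving elementary determinant equals exactly $+1$ (the chosen columns form $\mat{I}-\mat{P}$ with $\mat{P}$ nilpotent when the edge set is acyclic, and the matrix is singular otherwise), so no cancellation can occur; this is precisely the content of the paper's \Cref{lemma:cycles}, up to the global sign $(-1)^{m-1}$ caused by its opposite convention $\mathcal{L} = \mat{M}-\mat{I}_m$.

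The one step you assert but must actually prove is the collapse producing the $b_j=0$ terms. To match the stated form, every numerator monomial needs at least one factor of $\lambda$, yet the raw Cramer expansion of $\det \mat{A}^{(i)}$ contains $\lambda$-free monomials $\vec{v}[k]\prod_{(s,w)\in S}\mat{C}[(s,w)]$ with $|S| = m-1$. Your collapse via $\sum_k \vec{v}[k]=\lambda$ is valid only if, for each fixed such $\mat{C}$-monomial, its coefficient is the same for every $k$---and this is a lemma, not bookkeeping. It does hold: a $\lambda$-free term corresponds to a functional graph in which every state other than $i$ has one out-edge; if that graph is acyclic, every state's edge-path terminates at $i$, so the coefficient is $1$ for all $k$ simultaneously, while if it has a cycle, then for every $k$ either the path from $k$ is absorbed into the cycle or the cycle is disjoint from the path, and in both cases the term vanishes. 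Once that uniformity is recorded, your exponent accounting matches the statement. Finally, your worry about the combinatorial expansion causing an ``exponential blow-up'' is a red herring: the index sets $F_i$ in the statement (and in the paper's own proof) are exponentially large; what the downstream stability analysis (\Cref{corollary:Markov_stability}) consumes is only the structural form---numerator affine in $\vec{v}$, denominator free of $\vec{v}$ given $\lambda$, positive coefficients---not the number of terms.
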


The main takeaway of this lemma is that the stationary distribution has only an \emph{affine dependence} on the vector $\vec{v}$. This will be crucial as $\vec{v}$ will be much less stable than the entries of $\mat{C}$, as we make precise in the sequel. Naturally, \Cref{lemma:convex_characterization} is not at all apparent from the Markov chain tree theorem, and derives from the particular structure of the Markov chain. Indeed, to establish \Cref{lemma:convex_characterization} we deviate from the existing techniques which are relying on the Markov chain tree theorem, and we instead leverage linear-algebraic techniques to characterize the corresponding eigenvector of the underlying Laplacian system. As a result, using a slight variant of \Cref{lemma:convex_characterization} (see \Cref{corollary:w_formula}) leads to the following stability bound.

\begin{restatable}{corollary}{markovstable}
    \label{corollary:Markov_stability}
Let $\mat{M}, \mat{M}'$ be the transition matrices of $m$-state Markov chains such that $\mat{M} = \vec{v} \vec{1}^{\top} + \mat{C}$ and $\mat{M}' = \vec{v}' \vec{1}^{\top} + \mat{C}'$, where $\mat{C}$ and $\mat{C}'$ are matrices with strictly positive entries, and $\vec{v}, \vec{v}'$ are vectors with strictly positive entries such that $\vec{v} = \vec{r}/l$ and $\vec{v}' = \vec{r}'/l'$, for some $l > 0$ and $l' > 0$. If $\vec{\pi}$ and $\vec{\pi}'$ are the stationary distributions of $\mat{M}$ and $\mat{M}'$, let $\vec{w} \defeq l \vec{\pi}$ and $\vec{w}' \defeq l' \vec{\pi}'$. Finally, let $\lambda$ and $\lambda'$ be the sum of the entries of $\vec{v}$ and $\vec{v}'$ respectively. Then, if (i) the matrices $\mat{C}$ and $\mat{C}'$ are $\kappa$-multiplicative-close; (ii) the scalars $\lambda$ and $\lambda'$ are $\kappa$-multiplicative-close; (iii) the vectors $\vec{r}$ and $\vec{r}'$ are $\gamma$-multiplicative-close; and (iv) the scalars $l$ and $l'$ are also $\gamma$-multiplicative-close, then the vectors $\vec{w}$ and $\vec{w}'$ are $(\gamma + O(\kappa m))$-multiplicative-close, for a sufficiently small $\kappa = O(1/m)$.
\end{restatable}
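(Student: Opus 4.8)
The plan is to reduce the corollary to a monomial-by-monomial comparison made possible by the explicit rational formula for the rescaled stationary distribution. First I would pass from \Cref{lemma:convex_characterization} to a formula for $\vec{w} = l\vec{\pi}$ (this is the variant recorded in \Cref{corollary:w_formula}): substituting $\vec{v}[q] = \vec{r}[q]/l$ into the lemma and multiplying by $l$, each coordinate becomes
\begin{equation*}
    \vec{w}[i] = \frac{\sum_{j \in F_i} \lambda^{p_j+1}\, \big(\textstyle\prod_{(s,w)\in S_j}\mat{C}[(s,w)]\big)\, t_j}{\sum_{j \in F} C_j\, \lambda^{p_j+b_j}\, \prod_{(s,w)\in S_j}\mat{C}[(s,w)]}, \qquad t_j \in \{\, l,\ \vec{r}[q_j]\,\},
\end{equation*}
where $t_j = l$ when $b_j=0$ and $t_j = \vec{r}[q_j]$ when $b_j=1$, since $l^{1-b_j}\vec{r}[q_j]^{b_j}$ contributes exactly one factor. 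The decisive structural feature---exactly what \Cref{lemma:convex_characterization} buys us---is that every numerator term carries \emph{precisely one} factor from $\{l\}\cup\{\vec{r}[q]\}$, while the denominator carries \emph{none}. Because $\mat{M},\mat{M}'$ share the same state count $m$ and are both ergodic (all of $\mat{C},\mat{C}',\vec{v},\vec{v}'$ are strictly positive), the combinatorial data $F_i, F, p_j, b_j, S_j, q_j, C_j$ are identical for the two chains; thus $\vec{w}[i]$ and $\vec{w}'[i]$ are the same expression evaluated at $(\lambda,\mat{C},\vec{r},l)$ and $(\lambda',\mat{C}',\vec{r}',l')$, which licenses a term-by-term comparison.

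Next I would record three elementary facts about multiplicative closeness: (a) a product of $N$ factors, each $\delta$-multiplicative-close, is $((1+\delta)^N-1)$-multiplicative-close; (b) if $a_j,a_j'>0$ are $\delta$-close for every $j$, then $\sum_j a_j$ and $\sum_j a_j'$ are $\delta$-close (a mediant inequality); and (c) if the numerator and denominator of a ratio are $\delta_1$- and $\delta_2$-close, the ratio is $(\delta_1+\delta_2)$-close. The hypothesis $\kappa = O(1/m)$ enters only through (a): it guarantees $(1+\kappa)^{m}-1 = O(\kappa m)$, linearizing the exponent.

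Applying these to the formula, the $(\lambda,\mat{C})$-part of each numerator term is a product of $(p_j+1)+|S_j| = m-b_j \le m$ factors, each either $\lambda$ ($\kappa$-close by (ii)) or an entry of $\mat{C}$ ($\kappa$-close by (i)); by (a) this part is $O(\kappa m)$-close. Its single remaining factor $t_j$ is $l$ ($\gamma$-close by (iv)) or $\vec{r}[q_j]$ ($\gamma$-close by (iii)), hence $\gamma$-close, so each numerator term is $(\gamma + O(\kappa m))$-close, and by (b) so is the whole numerator. Each denominator term is a product of $(p_j+b_j)+|S_j| = m-1$ factors drawn only from $\lambda$ and $\mat{C}$, hence $O(\kappa m)$-close, and by (b) so is the whole denominator; crucially, the denominator is \emph{independent} of $\vec{r}$ and $l$ and so contributes no $\gamma$. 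Combining via (c) shows that $\vec{w}[i]$ and $\vec{w}'[i]$ are $(\gamma + O(\kappa m))$-close for every $i$, which is the claim.

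The only genuine difficulty is already discharged by the input lemma. It is the affine dependence on $\vec{v}$---equivalently, exactly one $(\vec{r},l)$-factor per numerator term and none in the denominator---that forces the $\gamma$-instability to pass through with coefficient $1$ rather than being amplified by the number of $\vec{v}$-factors that a naive spanning-tree (Markov chain tree theorem) expansion would produce. Had a surviving term contained $\Theta(m)$ factors of $\vec{v}=\vec{r}/l$, the identical bookkeeping would instead yield $O(\gamma m)$, precisely the blow-up in the size of the game the authors are avoiding. Hence the substantive content lives in \Cref{lemma:convex_characterization}, and the present corollary is the careful-but-routine degree count that converts its structure into the stated multiplicative-stability bound.
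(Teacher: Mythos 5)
Your proof is correct and takes essentially the same route as the paper's: a monomial-by-monomial comparison of the formula in \Cref{corollary:w_formula}, exploiting that each numerator term carries exactly one $\gamma$-close factor ($l$ or $\vec{r}[q_j]$) and at most $m$ $\kappa$-close factors while the denominator carries only $\kappa$-close factors, and then linearizing $(1+\kappa)^m = 1 + O(\kappa m)$ under $\kappa = O(1/m)$. The only difference is explicitness—you spell out the denominator bound and the sum/ratio composition facts that the paper's proof merely asserts.
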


Under the assertion that $\gamma \gg \kappa$, the key takeaway is that the ``closeness'' of $\vec{w}$ and $\vec{w}'$ does not scale with $O((\gamma + \kappa) m)$, but only as $\gamma + O(\kappa m)$. Using this bound we are ready to characterize the degradation in stability after a ``promotion'' (\Cref{algo:Extend}) of a partial fixed point (in the formal sense of \Cref{definition:partial_fixed_point}).

\begin{restatable}{proposition}{stabextend}
    \label{proposition:stability_extend}
    Consider a player $i \in [n]$, and let $\phi^{(t)}_i = \sum_{\hat{\sigma} \in \Sigma_i^*} \vec{\lambda}_i^{(t)}[\hat{\sigma}] \phi_{\hat{\sigma} \rightarrow \vec{q}^{(t)}_{\hat{\sigma}}}$ be a transformation in $\co \Psi_{i}$ such that the sequences $(\vec{\lambda}_i^{(t)})$ and $(\vec{q}_{\hat{\sigma}}^{(t)})$ are $\kappa$-multiplicative-stable, for all $\hat{\sigma} \in \Sigma_i^*$. If $(\vec{x}_i^{(t)})$ is a $\gamma$-multiplicative-stable $J$-partial fixed point sequence, the sequence of $(J \cup \{j^*\})$-partial fixed points of $\phi_i$ is $(\gamma + O(\kappa |\mathcal{A}_{j^*}|))$-multiplicative-stable, for any sufficiently small $\kappa = O(1/|\mathcal{A}_{j^*}|)$.
\end{restatable}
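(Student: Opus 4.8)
The plan is to reduce the promotion operation of \Cref{algo:Extend} to a stationary-distribution computation for a structured $|\mathcal{A}_{j^*}|$-state Markov chain and then invoke \Cref{corollary:Markov_stability}. First I would recall that, given the $\gamma$-multiplicative-stable $J$-partial fixed point $\vec{x}_i^{(t)}$, extending it to a $(J \cup \{j^*\})$-partial fixed point only requires determining the behavior at $j^*$, i.e. the $|\mathcal{A}_{j^*}|$ sequence-form values $\vec{x}_i^{(t)}[(j^*, a)]$; all predecessors of $j^*$ lie in the trunk $J$, so the parent value $\vec{x}_i^{(t)}[\sigma_{j^*}]$ is already pinned down and is itself $\gamma$-multiplicative-stable. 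Writing the fixed-point constraint $\phi_i^{(t)}(\vec{x}_i^{(t)})[(j^*,a)] = \vec{x}_i^{(t)}[(j^*,a)]$ and normalizing by $\vec{x}_i^{(t)}[\sigma_{j^*}]$ turns it into the stationarity equation $\mat{M}^{(t)} \vec{\pi}^{(t)} = \vec{\pi}^{(t)}$ of an $|\mathcal{A}_{j^*}|$-state chain whose states are the actions at $j^*$ and whose stationary distribution $\vec{\pi}^{(t)}$ is exactly the sought behavioral strategy.

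The crucial structural step is to exhibit $\mat{M}^{(t)}$ in the rank-one-plus-positive form $\mat{M}^{(t)} = \vec{v}^{(t)} \vec{1}^\top + \mat{C}^{(t)}$ of \Cref{lemma:convex_characterization}. The trigger deviations whose trigger sequence lies strictly above $j^*$ reset the play at $j^*$ to the relevant continuation \emph{independently} of the action currently taken at $j^*$; they therefore contribute identical columns and make up the rank-one term $\vec{v}^{(t)} \vec{1}^\top$. The triggers based at $j^*$, together with the mass that leaves the play at $j^*$ untouched, keep the transition action-dependent and assemble the remainder $\mat{C}^{(t)}$, whose entries are strictly positive because the underlying continuations and mixing weights are interior iterates of OMWU/OFTRL. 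This matches the hypotheses of \Cref{lemma:convex_characterization} (positive $\mat{C}^{(t)}$ with columns summing to $1 - \lambda^{(t)}$, positive $\vec{v}^{(t)}$ summing to $\lambda^{(t)}$), and the refined formula of \Cref{corollary:w_formula} provides the exact decomposition $\vec{v}^{(t)} = \vec{r}^{(t)}/l^{(t)}$ to be fed into \Cref{corollary:Markov_stability}.

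It then remains to match each component to the stability already available and apply \Cref{corollary:Markov_stability} with $m = |\mathcal{A}_{j^*}|$. The matrix $\mat{C}^{(t)}$ and the total rank-one mass $\lambda^{(t)}$ are functions of the mixing weights $\vec{\lambda}_i^{(t)}$ and the continuation strategies $\vec{q}_{\hat{\sigma}}^{(t)}$ alone, which are $\kappa$-multiplicative-stable by hypothesis, hence $\mat{C}^{(t)}, \mat{C}^{(t+1)}$ and $\lambda^{(t)}, \lambda^{(t+1)}$ are $\kappa$-multiplicative-close. The numerator $\vec{r}^{(t)}$ and the normalizer $l^{(t)} = \vec{x}_i^{(t)}[\sigma_{j^*}]$, on the other hand, are governed by the reach values of the $J$-partial fixed point (all of which sit inside the trunk $J$), and are therefore $\gamma$-multiplicative-close up to a harmless $\kappa$ term absorbed into $O(\kappa m)$. \Cref{corollary:Markov_stability} then yields that $\vec{w}^{(t)} = l^{(t)} \vec{\pi}^{(t)} = (\vec{x}_i^{(t)}[(j^*, a)])_{a \in \mathcal{A}_{j^*}}$ is $(\gamma + O(\kappa |\mathcal{A}_{j^*}|))$-multiplicative-close. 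The coordinates in the old trunk $J$ are carried over unchanged and stay $\gamma$-multiplicative-stable, while any descendant coordinates filled in by \Cref{algo:Extend} are products of the freshly computed $j^*$-values with the $\kappa$-stable continuations, and so inherit the same bound; this establishes the claimed $(\gamma + O(\kappa |\mathcal{A}_{j^*}|))$-multiplicative-stability of the whole $(J \cup \{j^*\})$-partial fixed point sequence.

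I expect the main obstacle to be the bookkeeping in the second and third steps: cleanly separating, in the fixed-point equation, the $\kappa$-stable contribution of the weights (which must end up carrying the total reset mass $\lambda^{(t)}$ and the matrix $\mat{C}^{(t)}$) from the $\gamma$-stable contribution of the partial fixed point (which must factor through $\vec{r}^{(t)}/l^{(t)}$). This separation is exactly what prevents the naive $O((\gamma + \kappa)|\mathcal{A}_{j^*}|)$ blow-up and delivers the additive $\gamma + O(\kappa|\mathcal{A}_{j^*}|)$ guarantee; the verification that $\mat{C}^{(t)}$ stays strictly positive and that the smallness requirement $\kappa = O(1/|\mathcal{A}_{j^*}|)$ lines up with the $\kappa = O(1/m)$ hypothesis of \Cref{corollary:Markov_stability} is then routine.
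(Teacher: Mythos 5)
Your proposal is correct and follows essentially the same route as the paper: reduce the promotion step to the structured Markov chain of \Cref{algo:Extend}, observe that $\mat{C}$ and the rank-one mass $\lambda$ depend only on $(\vec{\lambda}_i^{(t)})$ and $(\vec{q}_{\hat{\sigma}}^{(t)})$ (hence are $\kappa$-multiplicative-stable) while $\vec{r}$ and $l = \vec{x}_i[\sigma_{j^*}]$ carry the $\gamma$-stability of the partial fixed point (up to an absorbed $O(\kappa)$ term), and then invoke \Cref{corollary:Markov_stability}. The only step you gloss over is that $\kappa$-multiplicative-closeness of the diagonal entries of $\mat{C}$ and of $\lambda$ is not automatic for arbitrary ``functions of $\kappa$-stable quantities'' (differences do not preserve multiplicative closeness); the paper handles this by rewriting $1 - \sum_{\hat{\sigma} \preceq (j^*, a_c)} \vec{\lambda}_i[\hat{\sigma}]$ as the sum of the remaining simplex coordinates of $\vec{\lambda}_i$, which is a sum of positive $\kappa$-stable terms and therefore itself $\kappa$-multiplicative-stable.
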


Moreover, we employ this proposition as the inductive step to derive sharp multiplicative-stability bounds for the sequence of fixed points. The underlying algorithm gradually invokes the ``promotion'' subroutine (\Cref{algo:Extend}) in a top-down traversal of the tree, and it is given in \Cref{algo:FP-EFCE}.  

\begin{restatable}{theorem}{fpth}
    \label{theorem:stability-EFCE}
    Consider a player $i \in [n]$, and let $\phi_i^{(t)} = \sum_{\hat{\sigma} \in \Sigma_i^{*}} \vec{\lambda}_i^{(t)}[\hat{\sigma}] \phi_{\hat{\sigma} \rightarrow \vec{q}^{(t)}_{\hat{\sigma}}}$ be a transformation in $\co \Psi_{i}$ such that the sequences $(\vec{\lambda}_i^{(t)})$ and $(\vec{q}_{\hat{\sigma}}^{(t)})$ are $\kappa$-multiplicative-stable, for all $\hat{\sigma} \in \Sigma_i^*$. Then, the sequence of fixed points $\vec{q}_i^{(t)} \in \mathcal{Q}_{i}$ of $\phi_i^{(t)}$ is $O(\kappa |\mathcal{A}_i| \mathfrak{D}_{i})$-multiplicative-stable, for a sufficiently small $\kappa = O(1/(|\mathcal{A}_{i}| \mathfrak{D}_{i}))$, where $|\mathcal{A}_{i}| \defeq \max_{j \in \mathcal{J}_{i}} |\mathcal{A}_j|$.
\end{restatable}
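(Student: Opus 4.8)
The plan is to prove the theorem by induction along the top-down traversal of the information-set tree performed by \Cref{algo:FP-EFCE}, using \Cref{proposition:stability_extend} as the inductive step. The algorithm starts from the empty trunk $J = \emptyset$, for which the partial fixed point is governed entirely by the (given) $\kappa$-multiplicative-stable inputs $(\vec{\lambda}_i^{(t)})$ and $(\vec{q}_{\hat{\sigma}}^{(t)})$, and is therefore $O(\kappa)$-multiplicative-stable; it then repeatedly promotes one information set $j^*$ at a time, always choosing $j^*$ so that all of its predecessors already lie in the current trunk $J$, until $J = \mathcal{J}_i$, at which point the $\mathcal{J}_i$-partial fixed point satisfies the membership and fixed-point conditions of \Cref{definition:partial_fixed_point} on every sequence and thus coincides with the sought fixed point $\vec{q}_i^{(t)} \in \mathcal{Q}_i$ of $\phi_i^{(t)}$.

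The key observation that yields the depth factor $\mathfrak{D}_i$ (rather than the cruder $|\mathcal{J}_i|$ one would get from chaining \Cref{proposition:stability_extend} as a global black box) is that promoting $j^*$ preserves all trunk coordinates and only rescales the entries indexed by sequences $\sigma \succeq j^*$, i.e. those in the subtree rooted at $j^*$; all remaining coordinates are left untouched. Consequently, rather than tracking a single global stability parameter, I would track a separate parameter $\gamma_\sigma$ for each sequence $\sigma \in \Sigma_i$, recording the multiplicative stability of the coordinate $\vec{x}_i^{(t)}[\sigma]$. Applying \Cref{proposition:stability_extend} to the subtree affected by the promotion then updates $\gamma_\sigma \mapsto \gamma_\sigma + O(\kappa |\mathcal{A}_{j^*}|)$ exactly for the coordinates $\sigma \succeq j^*$, and leaves the others unchanged.

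Unrolling this bookkeeping to the end of the traversal, a coordinate $\sigma = (j, a)$ receives a degradation $O(\kappa |\mathcal{A}_{j'}|)$ for each information set $j'$ that was promoted while $\sigma$ lay in the active subtree---precisely the information sets on the unique root-to-$j$ path (together with $j$ itself). Since perfect recall guarantees that this path contains at most $\mathfrak{D}_i$ information sets, and each satisfies $|\mathcal{A}_{j'}| \le |\mathcal{A}_i| = \max_{j \in \mathcal{J}_i} |\mathcal{A}_j|$, we obtain $\gamma_\sigma = O(\kappa |\mathcal{A}_i| \mathfrak{D}_i)$ uniformly over $\sigma$. Taking the maximum over all coordinates shows that the full vector $\vec{q}_i^{(t)}$ is $O(\kappa |\mathcal{A}_i| \mathfrak{D}_i)$-multiplicative-stable, as claimed.

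Two points require care. First, \Cref{proposition:stability_extend} has a smallness hypothesis $\kappa = O(1/|\mathcal{A}_{j^*}|)$ and presumes the incoming partial fixed point is already multiplicative-stable, so I must verify that the accumulated parameter $\gamma_\sigma$ never leaves the regime in which the inductive step is legitimate. This is exactly what the global assumption $\kappa = O(1/(|\mathcal{A}_i|\mathfrak{D}_i))$ buys: it keeps $\gamma_\sigma \le O(\kappa |\mathcal{A}_i|\mathfrak{D}_i) = O(1)$ throughout, so every invocation of the proposition is valid. Second, I use that multiplicative-stability parameters compose additively to first order---a product of factors that are individually $\gamma_1, \dots, \gamma_k$-multiplicative-close is $(\sum_\ell \gamma_\ell + O((\sum_\ell \gamma_\ell)^2))$-multiplicative-close---which again relies on the total $\sum_\ell \gamma_\ell$ being kept $O(1)$. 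I expect the main obstacle to be making the localization argument airtight: one must confirm that promoting $j^*$ genuinely leaves coordinates outside the subtree of $j^*$ unaffected and multiplies the subtree coordinates only by a freshly computed, $O(\kappa|\mathcal{A}_{j^*}|)$-stable behavioral factor, so that the per-sequence degradations accumulate strictly along root-to-leaf paths of length $\le \mathfrak{D}_i$ rather than across the entire tree.
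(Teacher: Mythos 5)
Your proposal is correct and follows essentially the same route as the paper: an induction over the top-down traversal of \Cref{algo:FP-EFCE}, invoking \Cref{proposition:stability_extend} as the inductive step and observing that each promotion degrades stability additively only along the root-to-leaf path (by $O(\kappa|\mathcal{A}_{j'}|)$ per ancestor $j'$), which yields the $O(\kappa|\mathcal{A}_i|\mathfrak{D}_i)$ bound; your per-sequence bookkeeping $\gamma_\sigma$ is precisely the refinement the paper records in \Cref{remark:precise_bound}. One minor mechanical imprecision: promoting $j^*$ assigns only the coordinates $(j^*,a)$, $a \in \mathcal{A}_{j^*}$ (coordinates strictly below $j^*$ are still zero at that point, not ``rescaled''), but since the newly set coordinates depend only on ancestor coordinates, your path-wise accounting goes through unchanged.
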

A more refined bound is discussed in \Cref{remark:precise_bound}. The important insight of \Cref{theorem:stability-EFCE} is that the stability degrades according to the \emph{sum} of the actions at the decision points encountered along each path, and not as the \emph{product} of the actions. This is crucial as the latter bound---which would follow from prior techniques---need not be polynomial in the description of the game. At the heart of this improvement lies our refined characterization obtained in \Cref{lemma:convex_characterization}. Using the stability bounds derived in \Cref{corollary:mul-stability}, we are ready to establish the multiplicative-stability of the sequence of fixed points. 

\begin{corollary}[Stability of Fixed Points]
    \label{corollary:FP-stability}
For any sufficiently small $\eta = O(1/(\mathfrak{D}_i |\mathcal{A}_i| \| \mathcal{Q}_i \|_1 ))$, the sequence of fixed points $(\vec{q}_i^{(t)})$ of player $i \in [n]$ is $O(\eta \mathfrak{D}_i |\mathcal{A}_i| \| \mathcal{Q}_i \|_1 )$-multiplicative-stable.
\end{corollary}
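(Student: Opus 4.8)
The plan is to obtain \Cref{corollary:FP-stability} as a direct composition of \Cref{corollary:mul-stability} and \Cref{theorem:stability-EFCE}. \Cref{theorem:stability-EFCE} already reduces the stability of the fixed-point sequence $(\vec{q}_i^{(t)})$ to the stability of the two families of objects that parametrize the transformations $\phi_i^{(t)} = \sum_{\hat{\sigma} \in \Sigma_i^*} \vec{\lambda}_i^{(t)}[\hat{\sigma}] \phi_{\hat{\sigma} \rightarrow \vec{q}^{(t)}_{\hat{\sigma}}}$ produced by the algorithm of \Cref{fig:algo}: the mixing weights $(\vec{\lambda}_i^{(t)})$ emitted by $\mathcal{R}_{\Delta}$, and the continuation strategies $(\vec{q}_{\hat{\sigma}}^{(t)})$ emitted by each $\mathcal{R}_{\hat{\sigma}}$. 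Hence it suffices to control the multiplicative stability of these two sequences and then feed the resulting bound into \Cref{theorem:stability-EFCE}.

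First I would invoke \Cref{corollary:mul-stability}, which is tailored precisely to the instantiation in \Cref{fig:algo} (OMWU for $\mathcal{R}_{\Delta}$ and \eqref{eq:oftrl} with dilatable global entropy as DGF for each $\mathcal{R}_{\hat{\sigma}}$). For $\eta = O(1/\|\mathcal{Q}_i\|_1)$ this yields that every $(\vec{q}_{\hat{\sigma}}^{(t)})$ is $O(\eta \mathfrak{D}_i)$-multiplicative-stable and that $(\vec{\lambda}_i^{(t)})$ is $O(\eta \|\mathcal{Q}_i\|_1)$-multiplicative-stable. Using the benign ordering $\mathfrak{D}_i = O(\|\mathcal{Q}_i\|_1)$ (a pure strategy following the deepest path already has $\ell_1$-norm at least $\mathfrak{D}_i$), I would set a single common parameter $\kappa \defeq O(\eta \|\mathcal{Q}_i\|_1)$ that dominates both component stabilities, so that $(\vec{\lambda}_i^{(t)})$ and all $(\vec{q}_{\hat{\sigma}}^{(t)})$ are simultaneously $\kappa$-multiplicative-stable. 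This is exactly the hypothesis required by \Cref{theorem:stability-EFCE}.

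It then remains to reconcile the smallness requirements. \Cref{theorem:stability-EFCE} requires $\kappa = O(1/(|\mathcal{A}_i| \mathfrak{D}_i))$; substituting $\kappa = O(\eta \|\mathcal{Q}_i\|_1)$ converts this into $\eta = O(1/(\mathfrak{D}_i |\mathcal{A}_i| \|\mathcal{Q}_i\|_1))$, which is precisely the hypothesis of the corollary and is strictly stronger than the $\eta = O(1/\|\mathcal{Q}_i\|_1)$ needed to apply \Cref{corollary:mul-stability}. Under this choice, \Cref{theorem:stability-EFCE} gives that $(\vec{q}_i^{(t)})$ is $O(\kappa |\mathcal{A}_i| \mathfrak{D}_i)$-multiplicative-stable, and plugging back $\kappa = O(\eta \|\mathcal{Q}_i\|_1)$ recovers the advertised $O(\eta \mathfrak{D}_i |\mathcal{A}_i| \|\mathcal{Q}_i\|_1)$ bound.

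Since each step is a direct invocation of a previously established result, I do not anticipate any genuine obstacle; the only point demanding care is the bookkeeping of the constraints on $\eta$, namely verifying that the single choice $\eta = O(1/(\mathfrak{D}_i |\mathcal{A}_i| \|\mathcal{Q}_i\|_1))$ simultaneously satisfies the smallness hypotheses of both \Cref{corollary:mul-stability} and \Cref{theorem:stability-EFCE}, together with the elementary check that $\mathfrak{D}_i = O(\|\mathcal{Q}_i\|_1)$ so that one $\kappa$ suffices to bound both the mixing weights and the continuation strategies.
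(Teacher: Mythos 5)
Your proposal is correct and matches the paper's intended argument exactly: the corollary is obtained by composing \Cref{corollary:mul-stability} with \Cref{theorem:stability-EFCE}, taking $\kappa = O(\eta \|\mathcal{Q}_i\|_1)$ as the common stability parameter (using $\mathfrak{D}_i \leq \|\mathcal{Q}_i\|_1$) and checking that $\eta = O(1/(\mathfrak{D}_i |\mathcal{A}_i| \|\mathcal{Q}_i\|_1))$ satisfies the smallness requirements of both results. Your bookkeeping of the constraints on $\eta$ is exactly the verification the paper leaves implicit.
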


\subsection{Completing the Proof}
\label{subsection:completing}

Finally, we combine all of the previous pieces to complete the construction. First, we apply \Cref{theorem:accelerating-Phi} using the predictive bound obtained in \Cref{theorem:co-circuit} to conclude that the $\Psi_i$-regret of each player $i \in [n]$ can be bounded as 

\begin{equation}
    \label{eq:psi-reg}
    \reg_i^T \leq \frac{\log |\Sigma_i| + \|\mathcal{Q}_i\|^2_1 \log |\mathcal{A}_i|}{\eta} + 10\eta |\Sigma_i|^2 \sum_{t=1}^T \|\vec{\ell}_i^{(t)} - \vec{\ell}_i^{(t-1)}\|_\infty^2 + 10\eta |\Sigma_i|^2 \sum_{t=1}^T \|\vec{q}_i^{(t)} - \vec{q}_i^{(t-1)}\|_\infty^2,
\end{equation}
where we assumed---for simplicity---exact computation of each fixed point, \emph{i.e.}, $\epsilon^{(t)} = 0$ for any $t \geq 1$, while we also used the fact that $\|\vec{\ell}_i^{(t)}\|_\infty \leq 1$ which follows from the normalization assumption. So far we have focused on bounding the regret of each player without making any assumptions about the stability of the observed utility functions. A crucial observation is that if all players employ a regularized (or smooth) learning algorithm, then the observed utility functions will also change slowly over time. This is formalized in the following auxiliary claim.

\begin{restatable}{claim}{utilinfty}
    \label{claim:aux}
    For any player $i \in [n]$ the observed utilities satisfy
    \begin{equation*}
        \| \vec{\ell}_i^{(t)} - \vec{\ell}_i^{(t-1)} \|^2_{\infty} \leq (n-1) |\mathcal{Z}|^2 \sum_{k \neq i} \| \vec{q}_k^{(t)} - \vec{q}_k^{(t-1)} \|^2_1.
    \end{equation*}
\end{restatable}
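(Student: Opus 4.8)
The plan is to start from the explicit sequence-form description of the observed utility vector. Since $\ell_i^{(t)}$ is the linear utility $\vec{\pi}_i \mapsto u_i(\vec{\pi}_i, \vec{\pi}_{-i}^{(t)})$ evaluated against the opponents' sequence-form strategies $\vec{q}_{-i}^{(t)}$, its coefficient on a sequence $\sigma \in \Sigma_i$ is
\[
\vec{\ell}_i^{(t)}[\sigma] = \sum_{z \in \mathcal{Z} : \sigma_{i,z} = \sigma} p_c(z)\, u_i(z) \prod_{k \neq i} \vec{q}_k^{(t)}[\sigma_{k,z}],
\]
where $\sigma_{k,z}$ is the last sequence of player $k$ on the path to leaf $z$. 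Subtracting the analogous expression at time $t-1$, the coordinate-wise difference becomes a sum over leaves of a bounded weight $p_c(z)\, u_i(z)$ times the difference of the two products $\prod_{k \neq i} \vec{q}_k^{(t)}[\sigma_{k,z}]$ and $\prod_{k \neq i} \vec{q}_k^{(t-1)}[\sigma_{k,z}]$.

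The central estimate is to control this difference of products. Writing $a_k \defeq \vec{q}_k^{(t)}[\sigma_{k,z}]$ and $b_k \defeq \vec{q}_k^{(t-1)}[\sigma_{k,z}]$, all of which lie in $[0,1]$ because they are sequence-form probabilities, the standard hybrid (telescoping) identity
\[
\prod_{k \neq i} a_k - \prod_{k \neq i} b_k = \sum_{k \neq i} \Bigl( \prod_{j < k} a_j \Bigr)(a_k - b_k)\Bigl( \prod_{j > k} b_j \Bigr)
\]
together with $a_j, b_j \in [0,1]$ yields $\bigl| \prod_{k\neq i} a_k - \prod_{k\neq i} b_k \bigr| \le \sum_{k \neq i} |a_k - b_k|$. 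Plugging this in, using $p_c(z) \le 1$ and $|u_i(z)| \le 1$, and relaxing the constraint $\sigma_{i,z} = \sigma$ to a sum over all leaves, gives, for every $\sigma$,
\[
\bigl| \vec{\ell}_i^{(t)}[\sigma] - \vec{\ell}_i^{(t-1)}[\sigma] \bigr| \le \sum_{k \neq i} \sum_{z \in \mathcal{Z}} \bigl| \vec{q}_k^{(t)}[\sigma_{k,z}] - \vec{q}_k^{(t-1)}[\sigma_{k,z}] \bigr| \le |\mathcal{Z}| \sum_{k \neq i} \| \vec{q}_k^{(t)} - \vec{q}_k^{(t-1)} \|_\infty,
\]
where the last step bounds each of the at most $|\mathcal{Z}|$ summands by the $\ell_\infty$ deviation.

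Since the right-hand side no longer depends on $\sigma$, taking the maximum over $\sigma$ bounds $\| \vec{\ell}_i^{(t)} - \vec{\ell}_i^{(t-1)} \|_\infty$ by $|\mathcal{Z}| \sum_{k \neq i} \| \vec{q}_k^{(t)} - \vec{q}_k^{(t-1)} \|_1$ after the crude relaxation $\|\cdot\|_\infty \le \|\cdot\|_1$. Squaring and applying Cauchy--Schwarz to the sum of $n-1$ terms, i.e. $\bigl(\sum_{k \neq i} c_k\bigr)^2 \le (n-1)\sum_{k \neq i} c_k^2$, produces the claimed
\[
\| \vec{\ell}_i^{(t)} - \vec{\ell}_i^{(t-1)} \|_\infty^2 \le (n-1) |\mathcal{Z}|^2 \sum_{k \neq i} \| \vec{q}_k^{(t)} - \vec{q}_k^{(t-1)} \|_1^2.
\]
I expect the only genuinely delicate point to be the product-difference step: one must use that the sequence-form entries are confined to $[0,1]$ so that each hybrid term is dominated by a single coordinate gap, and one must track that only one factor of $|\mathcal{Z}|$ is spent on counting leaves (the second arising from squaring) so that the final exponent matches. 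Everything else---the chance/utility normalization and the norm conversions---is routine.
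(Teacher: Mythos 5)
Your proposal is correct and follows essentially the same route as the paper's proof: both expand the utility in sequence form as a sum over leaves, apply the telescoping product-difference inequality $\bigl|\prod_k a_k - \prod_k b_k\bigr| \leq \sum_k |a_k - b_k|$ for entries in $[0,1]$, spend one factor of $|\mathcal{Z}|$ counting leaves, and finish by squaring with Cauchy--Schwarz (the paper invokes Young's inequality for the same step). Your writeup is in fact slightly more explicit than the paper's, since you spell out the coordinate-wise coefficient $\vec{\ell}_i^{(t)}[\sigma]$ and the relaxation of the constraint $\sigma_{i,z} = \sigma$, which the paper leaves implicit.
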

Thus, plugging this bound to \eqref{eq:psi-reg} yields that the $\Psi_i$-regret $\reg_i^T$ of each player $i$ can be bounded as
\begin{equation*}
    \frac{\log |\Sigma_i| + \|\mathcal{Q}_i\|^2_1 \log |\mathcal{A}_i|}{\eta} + 10 \eta (n-1) |\Sigma_i|^2 |\mathcal{Z}|^2 \sum_{t=1}^T \sum_{k \neq i} \|\vec{q}_k^{(t)} - \vec{q}_k^{(t-1)}\|_1^2 + 10\eta |\Sigma_i|^2 \sum_{t=1}^T \|\vec{q}_i^{(t)} - \vec{q}_i^{(t-1)}\|_\infty^2.
\end{equation*}
As a result, using \Cref{corollary:FP-stability} we arrive at the following conclusion.

\begin{corollary}
    \label{corollary:1/4}
    Suppose that each player follows the dynamics of \Cref{fig:algo} with a sufficiently small learning rate $\eta = O(1/(T^{1/4} \mathfrak{D}_i |\mathcal{A}_i| \|\mathcal{Q}_i\|_1))$. Then, the $\Psi_i$-regret of each player will be bounded as $\reg_i^T \leq \mathcal{P} T^{1/4}$, where $\mathcal{P}$ is independent on $T$ and polynomial on the description of the game.
\end{corollary}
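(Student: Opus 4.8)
The plan is to start from the refined $\Psi_i$-regret bound established just before the statement (the display immediately following \Cref{claim:aux}), whose three summands are (i) an $O(1/\eta)$ regularization term, (ii) a ``path-length'' term $O\!\big(\eta (n-1)|\Sigma_i|^2|\mathcal{Z}|^2\sum_{t,\,k\neq i}\|\vec{q}_k^{(t)}-\vec{q}_k^{(t-1)}\|_1^2\big)$ coming from the other players' movement, and (iii) a path-length term $O\!\big(\eta|\Sigma_i|^2\sum_t \|\vec{q}_i^{(t)}-\vec{q}_i^{(t-1)}\|_\infty^2\big)$ coming from player $i$'s own fixed points. The only genuine content is to control the two path-length sums via the multiplicative stability of the fixed points (\Cref{corollary:FP-stability}) and then balance $\eta$; there is no deep obstacle, just careful bookkeeping of the game-dependent polynomial factors.

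First I would convert multiplicative stability into additive per-step bounds. By \Cref{corollary:FP-stability}, for each player $k$ the fixed-point sequence $(\vec{q}_k^{(t)})$ is $\kappa_k$-multiplicative-stable with $\kappa_k = O(\eta \mathfrak{D}_k |\mathcal{A}_k| \|\mathcal{Q}_k\|_1)$, provided $\eta = O(1/(\mathfrak{D}_k|\mathcal{A}_k|\|\mathcal{Q}_k\|_1))$---a condition that the prescribed $\eta = O(1/(T^{1/4}\mathfrak{D}_i|\mathcal{A}_i|\|\mathcal{Q}_i\|_1))$ satisfies for every $T \geq 1$. Unwinding the definition, each coordinate obeys $|\vec{q}_k^{(t)}[s] - \vec{q}_k^{(t-1)}[s]| \leq \kappa_k \vec{q}_k^{(t-1)}[s]$, since both $(1+\kappa_k)-1$ and $1-(1+\kappa_k)^{-1}$ are at most $\kappa_k$. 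Using that every sequence-form strategy satisfies $\|\vec{q}_k\|_\infty \leq 1$ and $\|\vec{q}_k\|_1 \leq \|\mathcal{Q}_k\|_1$, this yields $\|\vec{q}_k^{(t)} - \vec{q}_k^{(t-1)}\|_\infty \leq \kappa_k$ and $\|\vec{q}_k^{(t)} - \vec{q}_k^{(t-1)}\|_1 \leq \kappa_k \|\mathcal{Q}_k\|_1$, and hence $\sum_{t=1}^T \|\vec{q}_i^{(t)} - \vec{q}_i^{(t-1)}\|_\infty^2 \leq T\kappa_i^2$ together with $\sum_{t=1}^T \|\vec{q}_k^{(t)} - \vec{q}_k^{(t-1)}\|_1^2 \leq T \kappa_k^2 \|\mathcal{Q}_k\|_1^2$.

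Next I would substitute these bounds, along with $\kappa_k = O(\eta \mathfrak{D}_k |\mathcal{A}_k| \|\mathcal{Q}_k\|_1)$, into the regret display. Both term (ii) and term (iii) then become cubic in $\eta$ and linear in $T$: each has the form $O(\eta^3 T \,\mathcal{P}_1)$ for a $T$-independent polynomial $\mathcal{P}_1$ in the game parameters (collecting $|\Sigma_i|^2$, $|\mathcal{Z}|^2$, $(n-1)$, and the per-player factors $\mathfrak{D}_k^2|\mathcal{A}_k|^2\|\mathcal{Q}_k\|_1^4$ summed/maximized over $k$), while term (i) is of the form $O(\mathcal{P}_2/\eta)$ with $\mathcal{P}_2$ again a $T$-independent polynomial. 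Thus the whole bound has the shape $O(\mathcal{P}_2/\eta) + O(\eta^3 T\, \mathcal{P}_1)$.

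Finally I would plug in the prescribed rate $\eta = \Theta(T^{-1/4}/G_i)$, where $G_i \defeq \mathfrak{D}_i|\mathcal{A}_i|\|\mathcal{Q}_i\|_1$. Term (i) becomes $O(\mathcal{P}_2 G_i\, T^{1/4})$, and the cubic-in-$\eta$ terms become $O(\mathcal{P}_1 G_i^{-3}\, T^{-3/4}\cdot T) = O(\mathcal{P}_1 G_i^{-3}\, T^{1/4})$; both scale as $T^{1/4}$ times a $T$-independent polynomial in the description of the game. Collecting all game-dependent constants into a single polynomial $\mathcal{P}$ gives $\reg_i^T \leq \mathcal{P}\, T^{1/4}$, as claimed. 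The one point to verify is that the $\eta$-balancing is mutually consistent (the $\eta^3 T$ terms force the $T^{-1/4}$ scaling, and the $1/\eta$ term then contributes $T^{1/4}$) and that the constraint on $\eta$ demanded by \Cref{corollary:FP-stability} is respected---both of which hold for the stated choice of $\eta$.
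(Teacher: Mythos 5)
Your proposal is correct and follows exactly the route the paper intends: the paper itself leaves this corollary as an immediate consequence of the displayed regret bound (obtained from \eqref{eq:psi-reg} and \Cref{claim:aux}) combined with \Cref{corollary:FP-stability}, and your conversion of multiplicative stability into per-step $\ell_\infty$ and $\ell_1$ bounds, the resulting $O(\mathcal{P}_2/\eta) + O(\eta^3 T \mathcal{P}_1)$ shape, and the $\eta = \Theta(T^{-1/4}/G_i)$ balancing are precisely the omitted bookkeeping. No gaps.
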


Finally, \Cref{theorem:main} follows from \Cref{theorem:EFCE-convergence} after performing sampling in order to transition to deterministic strategies, as we explain in \Cref{appendix:everything}. We also point out that the complexity of every iteration in the proposed dynamics is analogous to that in~\citep{Farina21:Simple}, although the dynamics developed in the latter paper only attain a rate of convergence of $O(T^{-1/2})$. Finally, we remark that it is easy to make the overall regret minimizer robust against adversarial losses using an adaptive choice of learning rate.

\NewDocumentCommand{\pure}{O{i}}{%
\ifthenelse{\equal{#1}{}}%
    {\vec{\pi}}%
    {\vec{\pi}^{(#1)}}%
}
\NewDocumentCommand{\hatpure}{O{i}}{%
\ifthenelse{\equal{#1}{}}%
    {\hat{\vec{\pi}}}%
    {\hat{\vec{\pi}}^{(#1)}}%
}
\newcommand{\Z}{\mathscr{Z}}
\newcommand{\bbR}{\mathbb{R}}
\NewDocumentCommand{\puret}{O{i}O{t}}{\vec{\pi}^{(#1),\,#2}}
\NewDocumentCommand{\Pure}{O{i}}{\Pi^{(#1)}}
\NewDocumentCommand{\tdev}{O{\hatinfo}O{\hatpure[]}}{%
    \phi_{#1\to#2}
}
\NewDocumentCommand{\Mdev}{O{\hatinfo}O{\hatpure[]}}{%
    \vec{M}_{#1\to#2}
}
\NewDocumentCommand{\cJ}{O{i}}{\mathscr{J}^{(#1)}}
\newcommand{\info}{j}
\newcommand{\emptyinfo}{\info_{\emptyseq}}
\newcommand{\emptyaction}{a_{\emptyseq}}
\newcommand{\lambdav}{\vec{\lambda}}
\newcommand{\hatinfo}{j}
\NewDocumentCommand{\Info}{O{i}}{\mathcal{J}_{#1}}
\newcommand{\cS}{\mathscr{S}}
\NewDocumentCommand{\Seqs}{O{i}}{\Sigma_{#1}}
\NewDocumentCommand{\pc}{O{z}}{p^{(c)}(#1)}
\NewDocumentCommand{\ut}{O{i}}{u^{(#1)}}
\NewDocumentCommand{\bbone}{O{XXX}}{\mathds{1}_{\{#1\}}}
\NewDocumentCommand{\Ph}{O{i}}{\Psi_{#1}}
\NewDocumentCommand{\Tph}{O{i}}{(\co\Ph)}
\newcommand{\A}{\mathscr{A}}
\newcommand{\alp}{\alpha_z^{(i),\,t}}

\NewDocumentCommand{\Q}{O{i}}{\mathcal{Q}}
\NewDocumentCommand{\q}{}{\vec{q}}
\NewDocumentCommand{\qt}{O{i}O{t}}{\vec{q}^{(#1),\,#2}}

\NewDocumentCommand{\parseq}{O{i}O{\info}}{\sigma^{(#1)}(#2)}

\NewDocumentCommand{\rdev}{O{i}O{\hatinfo}O{\hat{\vec{\pi}}}}{%
    r^{(#1)}_{\muv,#2\to#3}
}
\NewDocumentCommand{\cumr}{O{i}O{T}}{R^{(#1),\,#2}}

\section{Faster Convergence to EFCCE}
\label{section:efcce}

In this section we turn our attention to learning dynamics for extensive-form \emph{coarse} correlated equilibrium (EFCCE). While the dynamics we previously developed for EFCE would also trivially converge to EFCCE, as the former is a subset of the latter~\citep{Farina20:Coarse}, our main contribution is to show that each iteration of EFCCE dynamics can be substantially more efficient compared to EFCE. Indeed, unlike all known methods for EFCE, we obtain in \Cref{sec:closed form} a succinct closed-form solution for the fixed points associated with EFCCE which does not require the expensive computation of the stationary distribution of a Markov chain. This places EFCCE closer to normal-form coarse correlated equilibria (NFCCE) in terms of the per-iteration complexity, even thought EFCCE prescribes a much more compelling notion of correlation. Furthermore, we use this closed-form characterization in \Cref{sec:stab} to obtain improved stability bounds for the fixed points associated with EFCCE, and with a much simpler analysis compared to the one for EFCE. 

\subsection{Closed-Form Fixed Point 
Computation}\label{sec:closed form}

As suggested by our general template introduced in \Cref{theorem:accelerating-Phi}, we first have to construct a predictive regret minimizer for the set of \emph{coarse} trigger deviation functions $\widetilde{\Psi}_i$ (\Cref{definition:coarse_trigger_deviation_functions}). This construction is very similar to the one for $\Psi_i$ we previously described in detail in \Cref{section:Phi-regret}. For this reason, here we focus on the computation and the stability properties of the fixed points associated with any $\phi_i \in \co \widetilde{\Psi}_i$. Specifically, we will first show that it it possible to compute a sequence-form strategy $\vec{q}_i$ such that $\phi_i(\vec{q}_i) = \vec{q}_i$ in linear time on $O(|\Sigma_i| \mathfrak{D}_i)$.

Indeed, let $\phi_i =\sum_{\hatinfo\in\Info}\lambdav_i[\hatinfo]\tdev[\hatinfo][\q_{\hatinfo}]$ be any coarse trigger deviation function, where $\lambdav_i\in\Delta({\Info})$, and $\q_{\hatinfo}\in\Q_{\hatinfo}$ for each $\hatinfo\in\Info$. \Cref{algo:FP-EFCCE} describes an efficient procedure to compute a fixed point of a given transformation $\phi_i \in \co \widetilde{\Psi}_i$. In particular, the algorithm iterates over the sequences of player $i$ according to their partial ordering $\prec$. That is, it is never the case that a sequence $\sigma=(j,a)$ is considered before $\sigma_j$. For every sequence $\sigma=(j,a) \in \Sigma_i^* $ the algorithm computes $d_{\sigma}\in\bbR_{\ge0}$ as the sum of the weights corresponding to information sets preceding $j$ (\Cref{line:dsigma}). If $d_{\sigma}=0$, the choice we make at $\sigma$ is indifferent as long as the resulting vector $\q_i$ is a well-formed sequence-form strategy. For this reason, we simply set $\q_i[\sigma]$ so that the probability-mass flow is evenly divided among sequences originating in $j$ (\Cref{line: set to uniform}).
Otherwise, when $d_{\sigma}\ne 0$, \Cref{eq: fixed point} assigns to $\q_i[\sigma]$ a value equal to the weighted sum of $\q_{j'}[\sigma]\q_i[\sigma']$ for sequences $\sigma'=(j',a')$ preceding information set $j \in \mathcal{J}_i$. In the next theorem we show that \Cref{algo:FP-EFCCE} is indeed correct, and runs in time $O(|\Sigma_i| \mathfrak{D}_i)$.

\begin{restatable}{theorem}{closedForm}
\label{thm:closedForm}
For any player $i\in[n]$ and any transformation $\phi_i = \sum_{\hatinfo\in\Info}\lambdav_i[\hatinfo]\tdev[\hatinfo][\q_{\hatinfo}]\in\co \widetilde{\Psi}_i$, the output $\q_i \in \bbR^{|\Seqs|}$ of \Cref{algo:FP-EFCCE} is such that $\q_i \in \mathcal{Q}_i$ and $\phi_i(\q_i)= \q_i$. Furthermore, \Cref{algo:FP-EFCCE} runs in $O(|\Sigma_i|\mathfrak{D}_i)$.
\end{restatable}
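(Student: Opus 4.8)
The plan is to reduce the fixed-point identity $\phi_i(\vec q_i)=\vec q_i$ to one scalar equation per sequence, and then check, by induction along the partial order $\prec$, that \Cref{algo:FP-EFCCE} solves all of them while simultaneously producing a legal point of $\mathcal Q_i$. First I would write out the linear action of a single coarse trigger deviation. Extending \Cref{definition:coarse_trigger_deviation_functions} linearly (it agrees, for deterministic $\vec\pi_i$, with the coordinate formula below, since $\vec\pi_i[\sigma]\le\vec\pi_i[\sigma_{j'}]$ forces the triggered coordinates to vanish whenever $\vec\pi_i[\sigma_{j'}]=0$), one gets, for a sequence $\sigma=(j,a)$ and a trigger information set $j'$, that $\phi_{j'\to\vec q_{j'}}(\vec q_i)[\sigma]=\vec q_i[\sigma]$ when $j'\not\preceq j$, and $\phi_{j'\to\vec q_{j'}}(\vec q_i)[\sigma]=\vec q_i[\sigma_{j'}]\,\vec q_{j'}[\sigma]$ when $j'\preceq j$. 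Taking the convex combination $\phi_i=\sum_{j'}\vec\lambda_i[j']\phi_{j'\to\vec q_{j'}}$ and using $\sum_{j'}\vec\lambda_i[j']=1$ to cancel the untriggered mass $(1-d_\sigma)\vec q_i[\sigma]$, the requirement $\phi_i(\vec q_i)[\sigma]=\vec q_i[\sigma]$ becomes
\begin{equation*}
    d_\sigma\,\vec q_i[\sigma] \;=\; \sum_{j'\preceq j}\vec\lambda_i[j']\,\vec q_i[\sigma_{j'}]\,\vec q_{j'}[\sigma],
    \qquad d_\sigma \defeq \sum_{j'\preceq j}\vec\lambda_i[j'],
\end{equation*}
which is exactly \Cref{eq: fixed point} when $d_\sigma\neq0$ and the vacuous identity $0=0$ when $d_\sigma=0$.

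Given this, correctness of the fixed-point part is almost by inspection, provided sequences are processed consistently with $\prec$ so that $\vec q_i[\sigma_{j'}]$ is available when $\sigma=(j,a)$ is assigned. For $d_\sigma\neq0$ the algorithm sets $\vec q_i[\sigma]$ to the unique solution of the displayed equation; for $d_\sigma=0$ every weight $\vec\lambda_i[j']$ with $j'\preceq j$ is zero, the equation reads $0=0$, and the uniform split of \Cref{line: set to uniform} is admissible. The root satisfies $\vec q_i[\emptyseq]=1$ and is fixed by every trigger (no $j'$ has $\emptyseq\succeq j'$), and nonnegativity follows inductively since each assigned value is a nonnegative combination divided by $d_\sigma>0$.

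The main obstacle is to show $\vec q_i\in\mathcal Q_i$, i.e. the mass-conservation constraints $\sum_{a\in\mathcal A_j}\vec q_i[(j,a)]=\vec q_i[\sigma_j]$; these, together with nonnegativity and $\vec q_i[\emptyseq]=1$, give membership, after which the per-coordinate equations yield $\phi_i(\vec q_i)=\vec q_i$. For the nontrivial case $d_\sigma\neq0$ I would substitute \Cref{eq: fixed point}, observe that both $d_\sigma$ and the index set $\{j'\preceq j\}$ depend only on $j$ (write $d_j\defeq d_{(j,a)}$), and push the $a$-sum inside to reach $\sum_a\vec q_i[(j,a)]=\frac1{d_j}\sum_{j'\preceq j}\vec\lambda_i[j']\,\vec q_i[\sigma_{j'}]\big(\sum_a\vec q_{j'}[(j,a)]\big)$. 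Two structural facts then collapse the right-hand side: each continuation obeys its own flow constraints in $\mathcal Q_{j'}$, so $\sum_a\vec q_{j'}[(j,a)]=\vec q_{j'}[\sigma_j]$ for $j'\prec j$, while $\sum_a\vec q_j[(j,a)]=1$ for the term $j'=j$; and the inductive hypothesis that $\vec q_i[\sigma_j]$ already satisfies its own instance of the fixed-point equation, i.e. $\sum_{j'\preceq j_p}\vec\lambda_i[j']\vec q_i[\sigma_{j'}]\vec q_{j'}[\sigma_j]=d_{\sigma_j}\vec q_i[\sigma_j]$, where $j_p$ is the information set of $\sigma_j$. Splitting off the $j'=j$ term and applying both facts reduces the expression to $\frac1{d_j}(\vec\lambda_i[j]+d_{\sigma_j})\vec q_i[\sigma_j]=\vec q_i[\sigma_j]$, using $\{j'\preceq j\}=\{j\}\sqcup\{j'\preceq j_p\}$ and hence $d_j=\vec\lambda_i[j]+d_{\sigma_j}$. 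This cross-level telescoping is the delicate point, since it is precisely what makes the locally computed values assemble into a globally feasible sequence-form strategy; the indifferent case $d_\sigma=0$ is handled directly by the uniform rule.

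Finally, the running time is immediate: each sequence $\sigma=(j,a)$ is processed once, and evaluating \Cref{eq: fixed point} (or the uniform rule) sums over the information sets $j'\preceq j$ lying on the root-to-$j$ path, of which there are at most $\mathfrak D_i$, while $d_j=\vec\lambda_i[j]+d_{\sigma_j}$ is maintained in $O(1)$ during the top-down traversal. Summing over all sequences gives the claimed $O(|\Sigma_i|\,\mathfrak D_i)$ bound.
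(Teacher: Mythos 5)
Your proposal is correct and takes essentially the same route as the paper's proof: the same per-coordinate decomposition of $\phi_i(\vec{q}_i)[\sigma]$ (the paper's \Cref{lemma: phi decomposition}), the same verification that the algorithm's assignments satisfy the per-sequence fixed-point equations (vacuously when $d_\sigma=0$), and the same top-down telescoping for mass conservation via the identity $d_\sigma=\vec{\lambda}_i[j]+d_{\sigma_j}$. Your only deviation is cosmetic: you replace the paper's case split on $d_{\sigma_j}=0$ versus $d_{\sigma_j}\neq 0$ by the single inductive hypothesis that the parent sequence already satisfies its fixed-point equation, which holds vacuously in the zero case.
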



\begin{algorithm}[ht]
\SetAlgoLined
\DontPrintSemicolon
\KwIn{$\phi_i = \sum_{j \in \mathcal{J}_{i}} \vec{\lambda}_i[j] \phi_{j \rightarrow \vec{q}_j} \in \co \widetilde{\Psi}_{i}$}    
\KwOut{$\vec{q}_i \in \mathcal{Q}_i$ such that $\phi_i(\vec{q}_i) = \vec{q}_i$}
$\vec{q}_i \leftarrow \mathbf{0} \in \R^{|\Sigma_{i}|}$, $\vec{q}_i[\emptyseq] \leftarrow 1$\label{line:q star init} \\
\For{$\sigma = (j, a) \in \Sigma^*_{i}$ in top-down ($\prec$) order}{
$d_{\sigma} \leftarrow \sum_{j' \preceq j} \vec{\lambda}_i[j']$ \label{line:dsigma} \\
\If{$d_{\sigma} = 0$}{$\vec{q}_i[\sigma] \leftarrow \frac{\vec{q}_i[\sigma_j]}{|\mathcal{A}_j|}$} \label{line: set to uniform}
\Else{$\vec{q}_i[\sigma] \leftarrow \frac{1}{d_{\sigma}} \sum_{j' \preceq j} \vec{\lambda}_i[j'] \vec{q}_{j'}[\sigma] \vec{q}_i[\sigma_{j'}]$ \label{eq: fixed point} \\
}
\textbf{return} $\vec{q}_i$
}
\caption{$\textsc{FixedPoint}(\phi_i)$ for $\phi_i \in \co \widetilde{\Psi}_{i}$}
\label{algo:FP-EFCCE}
\end{algorithm}

\subsection{Stability of the Fixed Points}
\label{sec:stab}

Another important application of our closed-form solution in \Cref{algo:FP-EFCCE} is that it allows us to obtain through a simple analysis sharp bounds on the stability of the fixed points. Indeed, we show that the fixed point operation only leads to (multiplicative) degradation linear in the depth of each player's subtree.

\begin{restatable}{proposition}{stabilityefcce}
    \label{proposition:FP-EFCCE}
    Suppose that the sequences $(\vec{\lambda}_i^{(t)})$ and $(\vec{q}_j^{(t)})$, for all $j \in \mathcal{J}_i$, are $\kappa$-multiplicative-stable. Then, \Cref{algo:FP-EFCCE} yields a sequence of $(12 \kappa \mathfrak{D}_{i})$-multiplicative-stable strategies, assuming that $\kappa < 1/(12 \mathfrak{D}_{i})$.
\end{restatable}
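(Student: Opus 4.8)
The plan is to prove a per-coordinate statement by induction along the top-down ($\prec$) order in which \Cref{algo:FP-EFCCE} processes sequences, tracking how the multiplicative error accumulates \emph{one level at a time}. First I would record a simplification: since $(\vec{\lambda}_i^{(t)})$ is $\kappa$-multiplicative-stable, every entry $\vec{\lambda}_i^{(t)}[j]$ is strictly positive, so the quantity $d_\sigma = \sum_{j' \preceq j} \vec{\lambda}_i^{(t)}[j']$ is strictly positive at every time step; consequently the uniform branch (\Cref{line: set to uniform}) is never taken and I only need to analyze the update in \Cref{eq: fixed point}. For $\sigma = (j,a) \in \Sigma_i^*$ let $\lev(\sigma)$ denote the number of player-$i$ information sets on the path from the root to $j$ (inclusive); note $\lev(\sigma) \le \mathfrak{D}_i$, and that the parent sequence $\sigma_{j'}$ of any $j' \preceq j$ satisfies $\lev(\sigma_{j'}) \le \lev(\sigma) - 1$. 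The claim I would establish is that $\vec{q}_i^{(t)}[\sigma]$ and $\vec{q}_i^{(t-1)}[\sigma]$ are $\gamma_\sigma$-multiplicative-close with $\gamma_\sigma \le 12\kappa\,\lev(\sigma)$; maximizing over $\sigma$ then yields the proposition.

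Next I would assemble the elementary calculus of multiplicative closeness that makes the induction mechanical. If $x^{(t)}$ is $\alpha$-multiplicative-close to $x^{(t-1)}$ and $y^{(t)}$ is $\beta$-multiplicative-close to $y^{(t-1)}$, then $x^{(t)}y^{(t)}$ is $\gamma$-close with $1+\gamma = (1+\alpha)(1+\beta)$, and $1/x^{(t)}$ is again $\alpha$-close; moreover, if finitely many strictly positive terms are each $\gamma$-close, so is their sum (both bounds factor through $(1+\gamma)^{\pm 1}$). All the quantities appearing in \Cref{eq: fixed point} are strictly positive---$\vec{\lambda}_i^{(t)}[j']$ and $\vec{q}_{j'}^{(t)}[\sigma]$ by hypothesis (here $\sigma \in \Sigma_{j'}$ because $j' \preceq j$, and strict positivity of $\vec{q}_{j'}^{(t)} \in \mathcal{Q}_{j'}$ is inherited from multiplicative stability), and $\vec{q}_i^{(t)}[\sigma_{j'}]$ by the induction hypothesis---so these composition rules apply verbatim.

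With these tools the induction proceeds along $\prec$. The base case is $\vec{q}_i^{(t)}[\emptyseq] = 1$ for all $t$, which is $0$-close, matching $\lev(\emptyseq)=0$. For the inductive step at $\sigma = (j,a)$, each summand $\vec{\lambda}_i[j']\,\vec{q}_{j'}[\sigma]\,\vec{q}_i[\sigma_{j'}]$ of the numerator is a product of a $\kappa$-close factor, a $\kappa$-close factor, and a factor that is $\gamma_{\sigma_{j'}}$-close with $\gamma_{\sigma_{j'}} \le 12\kappa(\lev(\sigma)-1)$ by the induction hypothesis; bounding every summand by this worst case, the whole numerator is $\gamma'$-close with $1+\gamma' \le (1+\kappa)^2\big(1 + 12\kappa(\lev(\sigma)-1)\big)$. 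Dividing by $d_\sigma$, which is a sum of $\kappa$-close terms and hence $\kappa$-close (with $\kappa$-close reciprocal), contributes one more factor of $(1+\kappa)$, giving $1 + \gamma_\sigma \le (1+\kappa)^3\big(1 + 12\kappa(\lev(\sigma)-1)\big)$. It then remains to verify the arithmetic inequality $(1+\kappa)^3\big(1+12\kappa(L-1)\big) \le 1 + 12\kappa L$ for $L = \lev(\sigma)$: using $\kappa < 1/(12\mathfrak{D}_i) \le 1/12$ one bounds $(1+\kappa)^3 \le 1 + 4\kappa$, which reduces the inequality to $6\kappa(L-1) \le 1$, and this holds because $\kappa(L-1) < \mathfrak{D}_i/(12\mathfrak{D}_i) = 1/12$.

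The main obstacle is not conceptual but quantitative: everything hinges on closing the recursion with the \emph{exact} per-level budget of $12\kappa$, so the delicate part is the constant accounting---showing that the three extra $(1+\kappa)$ factors picked up at each level (from $\vec{\lambda}_i$, from the continuation strategy $\vec{q}_{j'}$, and from the normalizer $1/d_\sigma$) are absorbed into the slack rather than compounding multiplicatively down the tree. This is precisely where the hypothesis $\kappa < 1/(12\mathfrak{D}_i)$ enters, and it is what makes the degradation here merely linear in the depth, in sharp contrast to the far more delicate Markov-chain perturbation analysis required for EFCE in \Cref{theorem:stability-EFCE}.
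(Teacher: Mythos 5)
Your proof is correct and follows essentially the same route as the paper's: both observe that strict positivity of $\vec{\lambda}_i^{(t)}$ rules out the uniform branch of \Cref{algo:FP-EFCCE}, then induct top-down along $\prec$ on the closed-form update, with each level contributing exactly three $(1+\kappa)$ factors (from $\vec{\lambda}_i$, from $\vec{q}_{j'}$, and from the normalizer $d_\sigma$). The only difference is bookkeeping: the paper carries the closeness bound in the form $(1+\kappa)^{3\mathfrak{D}_i[\sigma]-2}$ and converts it to $1+O(\kappa\mathfrak{D}_i)$ only at the end via $1+x\le e^x\le 1+2x$, whereas you maintain the linear invariant $1+12\kappa\lev(\sigma)$ throughout and absorb the per-level factors with the arithmetic inequality $(1+\kappa)^3\bigl(1+12\kappa(L-1)\bigr)\le 1+12\kappa L$ — both hinging on $\kappa<1/(12\mathfrak{D}_i)$ in the same way.
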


Observe that the derived bound on stability is slightly better compared to that for $\efce$ (\Cref{theorem:stability-EFCE}). Consequently, having established the stability of the fixed points, we can apply \Cref{theorem:accelerating-Phi} to derive a stable-predictive $\widetilde{\Psi}_{i}$-regret minimizer for each player $i \in [n]$. This leads to a result analogous to \Cref{corollary:1/4} we showed for EFCE, but our dynamics for EFCCE have a substantially improved per-iteration complexity.


\section{Experiments}

In this section we support our theoretical findings through experiments conducted on benchmark general-sum games. Namely, we experiment with the following popular games: (i) a three-player variant of \emph{Kuhn poker}~\citep{Kuhn50:Simplified}; (ii) a two-player bargaining game known as \emph{Sheriff}~\citep{Farina19:Correlation}; (iii) a three-player version of \emph{Liar's dice}~\citep{Lisy15:Online}; and (iv) three-player Goofspiel~\citep{Ross71:Goofspiel}. A detailed description of each of these games and the precise parameters we use is given in \cref{app:games}. The rest of this section is organized as follows. \Cref{sec:exp-EFCE} shows the convergence of our accelerated dynamics for EFCE (as presented in \Cref{sec:efce}) compared to the prior state of the art. Next, \Cref{sec:exp-EFCCE} illustrates the convergence of our dynamics for EFCCE.

\subsection{Convergence to EFCE}
\label{sec:exp-EFCE}

Here we investigate the performance of our accelerated dynamics for EFCE (\Cref{fig:algo}) compared to the existing algorithm by~\citet{Farina21:Simple}. Both of these dynamics will be based on a CFR-style decomposition into ``local'' regret-minimization problems. In particular, our stable-predictive dynamics will use OMWU at every local decision point (as in \Cref{proposition:cfr}), while the algorithm of~\citet{Farina21:Simple} will be instantiated with (i) \emph{regret matching$^+$} ($\rmp$)~\citep{Tammelin14:Solving} for each simplex (in place of regret matching), and (ii) using the vanilla MWU algorithm for each simplex. In accordance to our theoretical predictions (\Cref{corollary:1/4}), the stepsize for $\omw$ is set as $\eta^{(t)} = \tau \cdot t^{-1/4}$, while for MWU it is set as $\eta^{(t)} = \tau \cdot t^{-1/2}$. Here $\tau$ is treated as a hyperparameter, chosen by picking the best-performing value among $\{0.01, 0.1, 1, 10, 100 \}$. 



\begin{figure}[thp]
    \centering
    \def\scale{0.43}
    \includegraphics[scale=\scale]{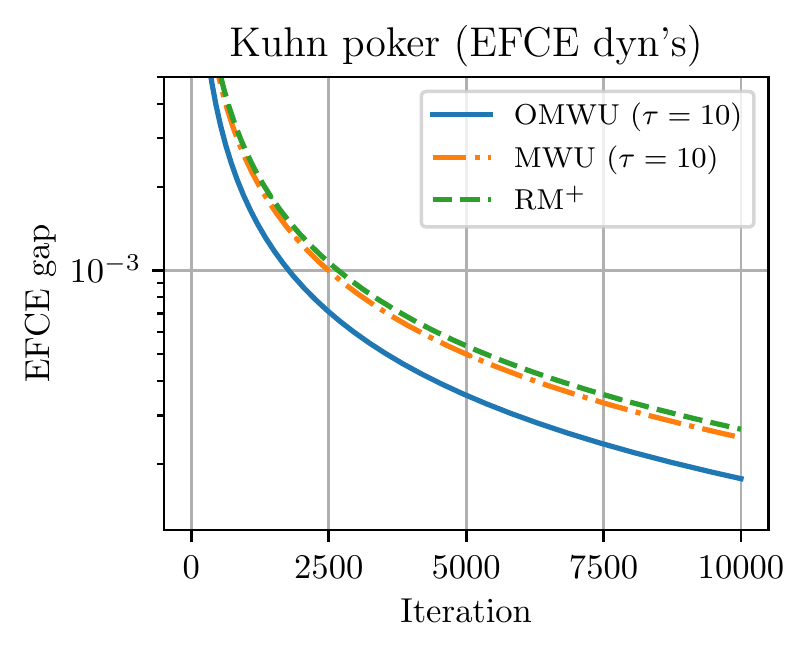}
    \includegraphics[scale=\scale]{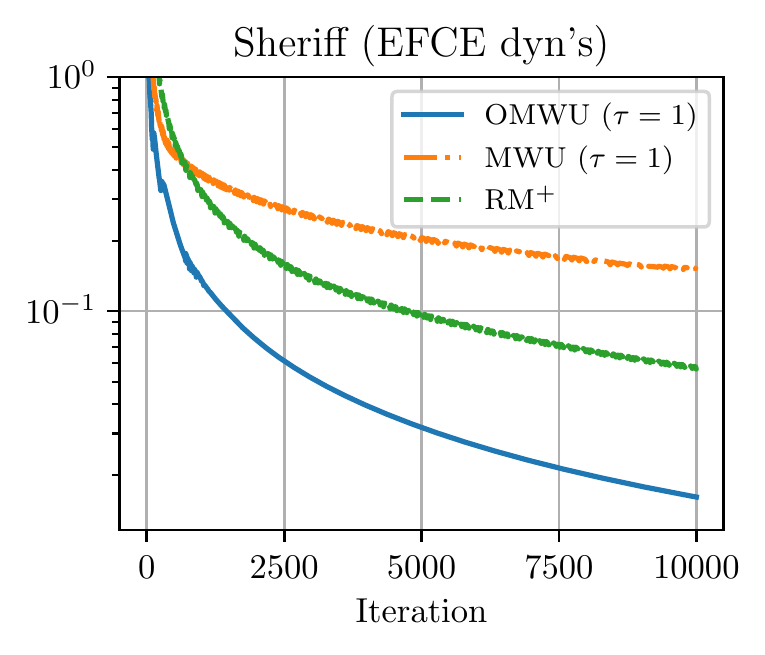}
    \includegraphics[scale=\scale]{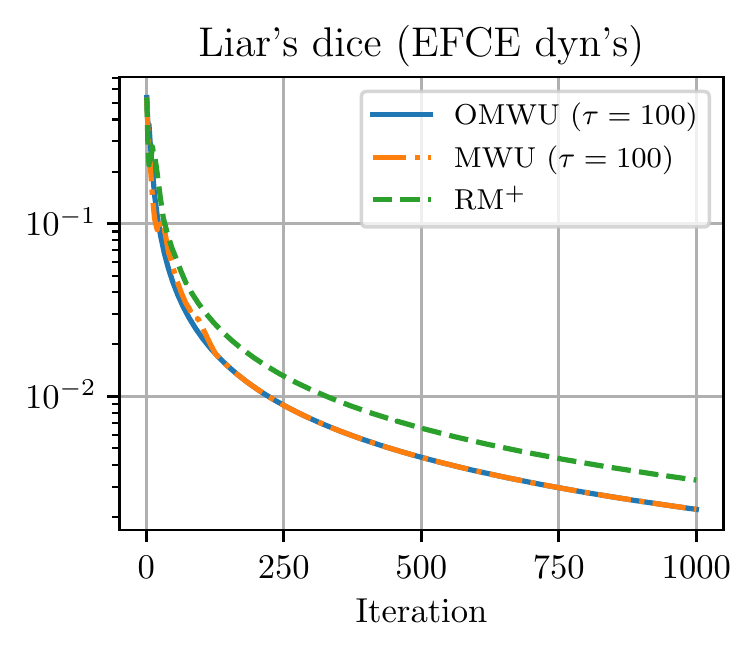}
    \includegraphics[scale=\scale]{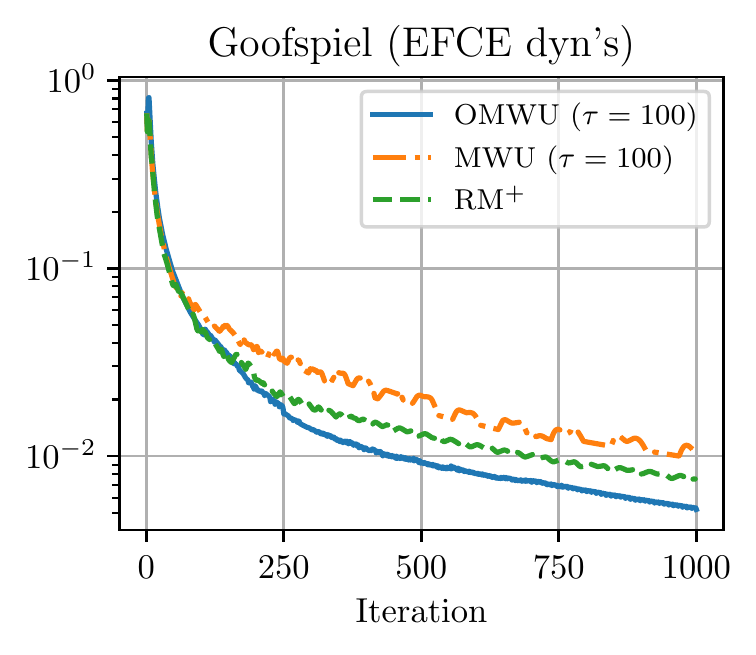}
    \caption{The performance of EFCE dynamics based on MWU, OMWU, and RM$^+$ on four general-sum EFGs.}
    \label{fig:efce_convergence_experiments}
\end{figure}

\Cref{fig:efce_convergence_experiments} shows the rate of convergence for each of the three learning dynamics we described. On the $x$-axis we indicate the number of iterations performed by each algorithm and on the $y$-axis we plot the \emph{EFCE gap}, defined as the maximum advantage that any player can gain by defecting optimally from the mediator's recommendations. It should be noted that one iteration costs the same for every algorithm, up to constant factors. We see that on every game, $\omw$ performs better than or on par with $\rmp$ and MWU. On Sheriff, a benchmark introduced specifically for the study of correlated equilibria, $\omw$ outperforms both $\rmp$ and MWU by about an order of magnitude. 

In the context of two-player zero-sum games, CFR with $\rmp$ is a formidable algorithm, typically outperforming theoretically superior dynamics. With that in mind, it is quite interesting that for EFCE computation we are able to achieve better performance using OMWU with only a modest amount of stepsize tuning. We hypothesize that this is due to the inherent differences between solving a zero-sum game via Nash equilibrium versus the problem of computing correlated equilibria.
%
One caveat to these results is that we did not use two tricks that help $\cfr$ in two-player zero-sum EFG solving: \emph{alternation} and \emph{linear averaging}. These tricks are known to retain convergence guarantees in that context~\citep{Tammelin15:Solving,Farina19:Online,Burch19:Revisiting}, but it is unclear if they still guarantee convergence in the $\efce$ setting. 

\subsection{EFCCE}
\label{sec:exp-EFCCE}

Next, we investigate the convergence of our learning dynamics for EFCCE, obtained within the same framework we developed for EFCE. We first measure the rate of convergence in an analogous to the previous subsection setup. The results are illustrated in \Cref{fig:efcce_convergence_experiments}.

%
%

\begin{figure}[thp]
    \centering
    \def\scale{0.43}
    \includegraphics[scale=\scale]{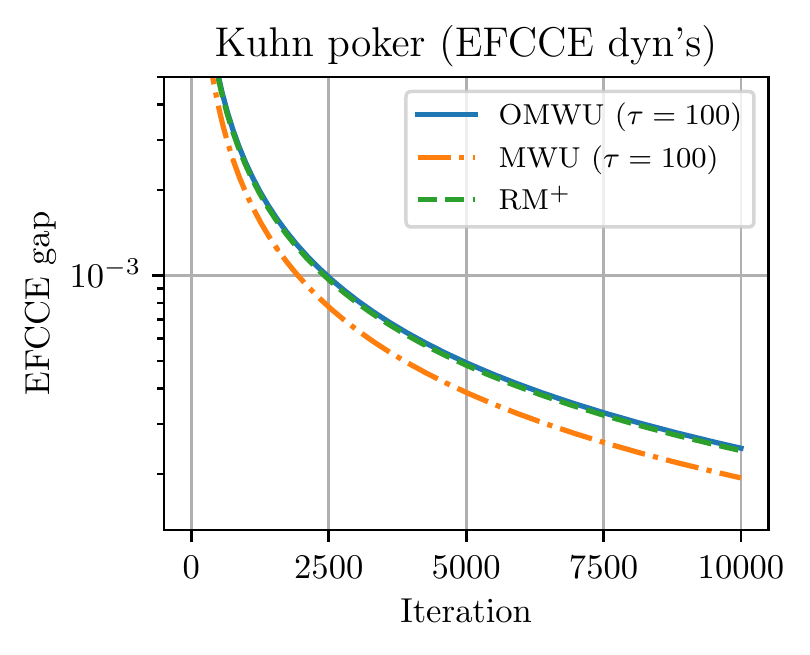}
    \includegraphics[scale=\scale]{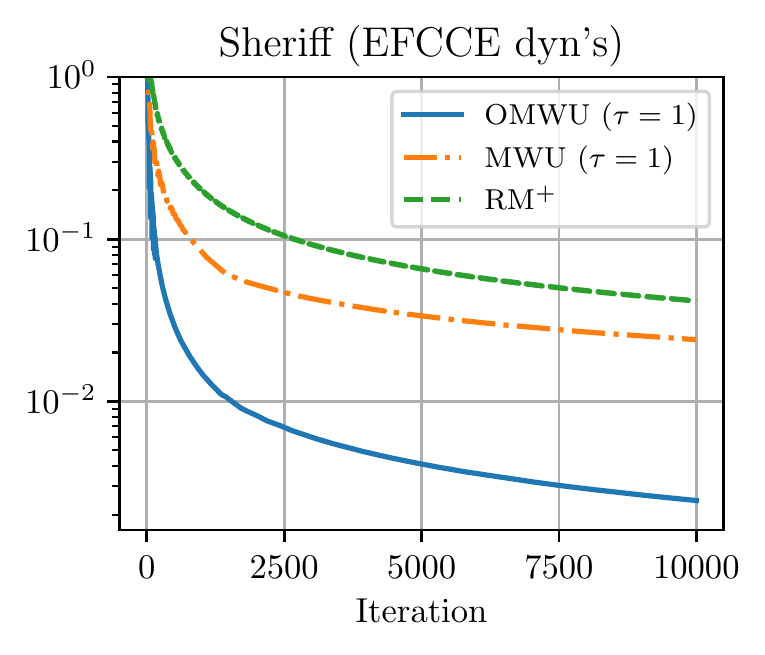}
    \includegraphics[scale=\scale]{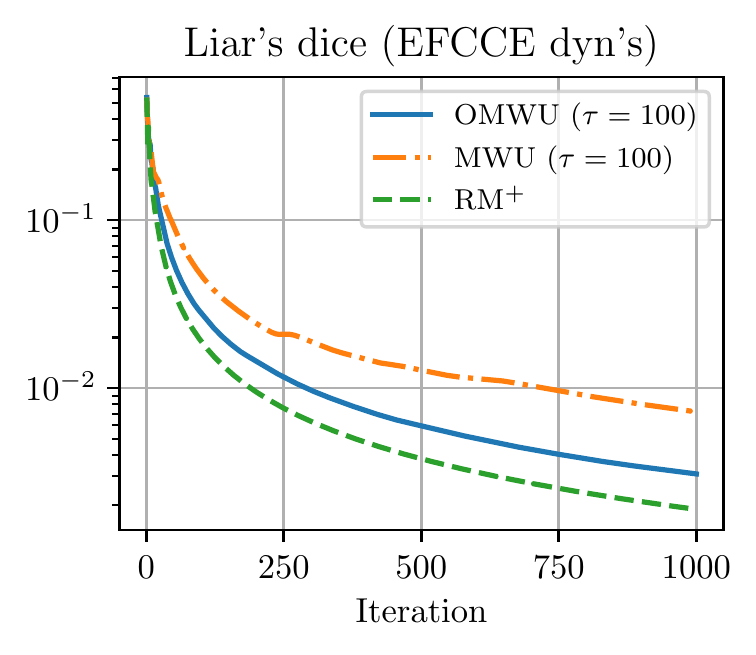}
    \includegraphics[scale=\scale]{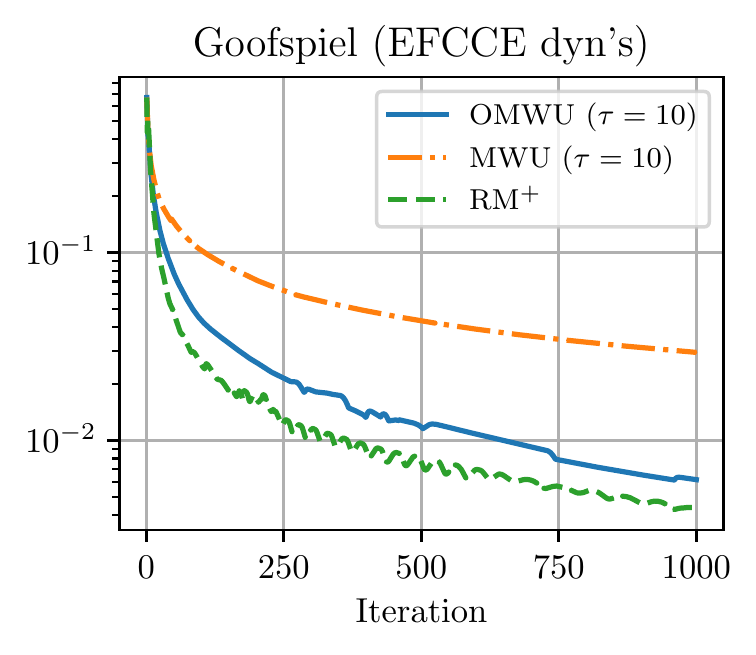}
    \caption{The performance of EFCCE dynamics based on MWU, OMWU, and RM$^+$ on four general-sum EFGs.}
    \label{fig:efcce_convergence_experiments}
\end{figure}

Interestingly, we observe a noticeable qualitative difference for convergence to EFCCE. Indeed, unlike EFCE (\Cref{fig:efce_convergence_experiments}), $\rmp$ outperforms OMWU in both Liar's dice and Goofspiel. It is also surprising that MWU converges faster than its optimistic counterpart in Kuhn poker. These results suggest a substantial difference in the convergence properties between EFCE and EFCCE. Furthermore, we illustrate in \Cref{fig:efce_vs_efcce} the running time complexity of EFCE versus EFCCE dynamics (both instantiated with $\rmp$), measured in terms of the EFCCE gap.

\begin{figure}[thp]
    \centering
    \def\scale{0.43}
    \includegraphics[scale=\scale]{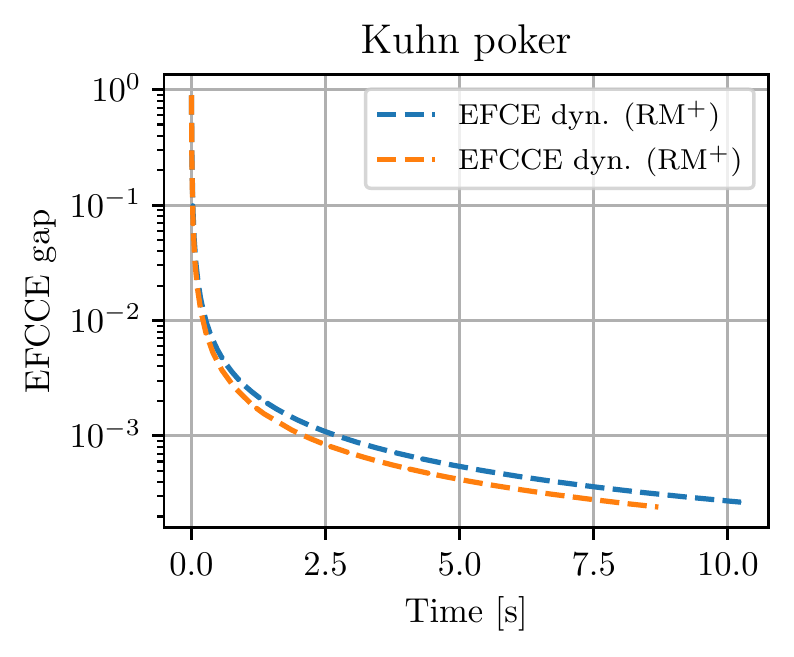}
    \includegraphics[scale=\scale]{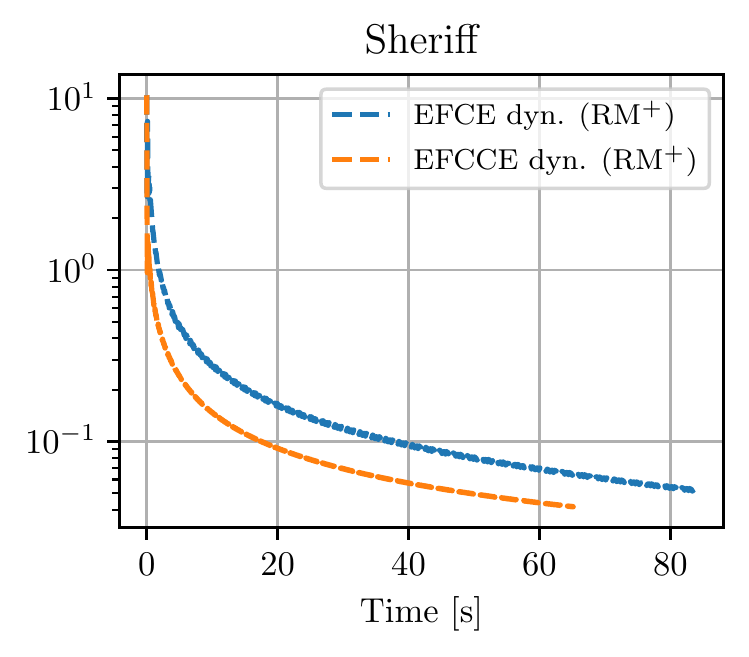}
    \includegraphics[scale=\scale]{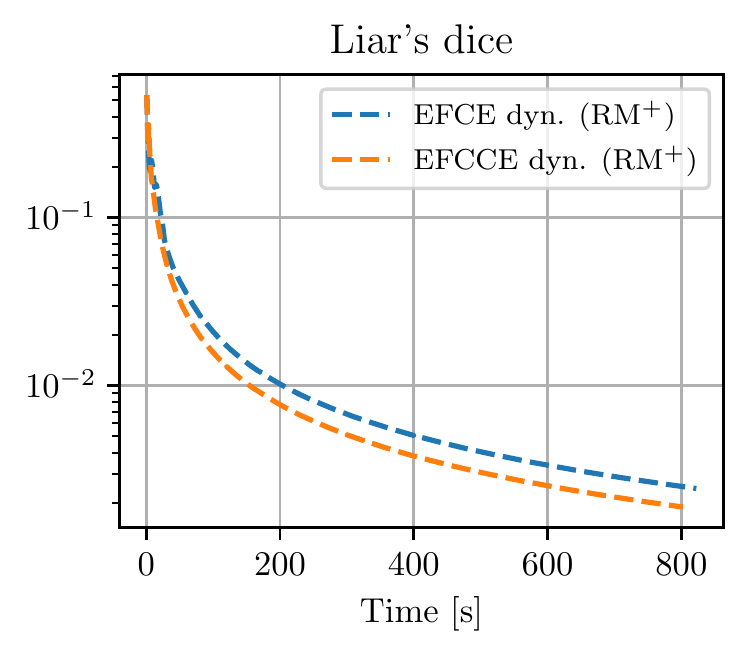}
    \includegraphics[scale=\scale]{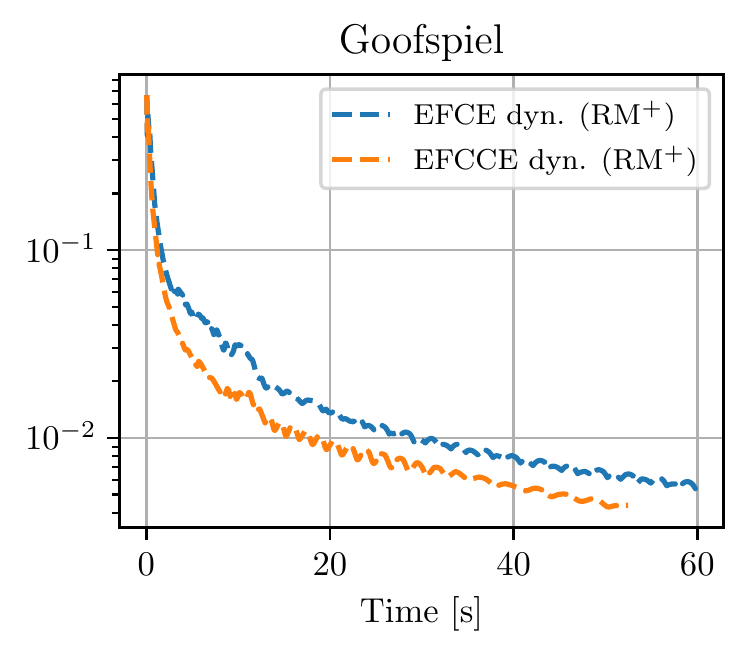}
    \caption{The convergence of EFCE and EFCCE dynamics to EFCCE, measured through the EFCCE gap.}
    \label{fig:efce_vs_efcce}
\end{figure}

In each game, the fixed point computation for the EFCE dynamics was performed through an optimized implementation of the power iteration method, interrupted when the Euclidean norm of the residual was below the value of $10^{-6}$. On the other hand, the fixed points for EFCCE were computed using our closed-form solution (\Cref{algo:FP-EFCCE}). In all four games, we see that our EFCCE dynamics outperform the EFCE dynamics in terms of the running time complexity, often by a significant margin. This is consistent with our intuition since EFCE dynamics are solving a strictly harder problem---minimizing the EFCE gap, instead of the EFCCE gap.  




\section{Conclusions}

In this paper we developed uncoupled no-regret learning dynamics so that if all agents play $T$ repetitions of the game according to our dynamics, the correlated distribution of play is an $O(T^{-3/4})$-approximate extensive-form correlated equilibrium. This substantially improves over the prior best rate of $O(T^{-1/2})$. One of our main technical contributions was to characterize the stability of the fixed points associated with trigger deviation functions through a refined perturbation analysis of a structured Markov chain, which may be of independent interest. On the other hand, for fixed points associated with extensive-form \emph{coarse} correlated equilibria we established a closed-form solution, circumventing the computation of the stationary distribution of any Markov chain. Finally, experiments conducted on standard benchmarks corroborated our theoretical findings. 

Following recent progress in normal-form games \citep{Daskalakis21:near,Anagnostides21:Near}, an important question for the future is to obtain a further acceleration of the order $\widetilde{O}(T^{-1})$. As we pointed out in \Cref{sec:related}, this would inevitably require new techniques since the known methods do not apply for the substantially more complex problem of extensive-form correlated equilibria. We believe that our characterization of the fixed points associated with trigger deviation functions could be an important step towards achieving this goal.

\begin{acks}
Tuomas Sandholm is supported by the National Science Foundation under grants IIS-1901403 and CCF-1733556.

\end{acks}

\bibliographystyle{ACM-Reference-Format}
\bibliography{paper}

\clearpage

\appendix

\section{Omitted Proofs}
\label{appendix:proofs}

This section includes all of the proofs we omitted from the main body. Let us first introduce some additional useful notation.

\subsection{Further Notation}

It will be convenient to instantiate a trigger deviation function (recall \Cref{definition:trigger_deviation_functions}) in the form of a linear mapping $\phi_{\hat{\sigma} \rightarrow \hat{\vec{\pi}}_i} : \R^{|\Sigma_{i}|} \ni \vec{x} \mapsto \mat{M}_{\hat{\sigma} \rightarrow \hat{\vec{\pi}}_i} \vec{x}$, where $\mat{M}_{\hat{\sigma} \rightarrow \hat{\vec{\pi}}_i}$ is such that for any $\sigma_r, \sigma_c \in \Sigma_{i}$,
\begin{equation}
\label{eq:canonical_trigger_deviation_functions}
    \mat{M}_{\hat{\sigma} \rightarrow \hat{\vec{\pi}}_i}[\sigma_r, \sigma_c] = 
    \begin{cases}
    1 & \textrm{if  } \sigma_c \not\succeq \hat{\sigma} \quad\&\quad \sigma_r = \sigma_c; \\
    \hat{\vec{\pi}}_i[\sigma_r] & \textrm{if  } \sigma_c = \hat{\sigma} \quad\&\quad \sigma_r \succeq j; \\
    0 & \textrm{otherwise},
    \end{cases}
\end{equation}
where $\hat{\sigma} = (j, a) \in \Sigma^*_i$. It is not hard to show that the linear mapping described in \eqref{eq:canonical_trigger_deviation_functions} is indeed a trigger deviation function in the sense of \Cref{definition:trigger_deviation_functions}. Similarly, we express a coarse trigger deviation function in the form of a linear mapping $\phi_{j \rightarrow \hat{\vec{\pi}}_i} : \R^{|\Sigma_{i}|} \ni \vec{x} \mapsto \mat{M}_{j \rightarrow \hat{\vec{\pi}}_i} \vec{x}$, where $\mat{M}_{j \rightarrow \hat{\vec{\pi}}_i}$ is such that for any $\sigma_r, \sigma_c \in \Sigma_{i}$,
\begin{equation*}
    \mat{M}_{j \rightarrow \hat{\vec{\pi}}_i}[\sigma_r, \sigma_c] = 
    \begin{cases}
    1 & \textrm{if  } \sigma_c \not\succeq j \quad\&\quad \sigma_r = \sigma_c; \\
    \hat{\vec{\pi}}_i[\sigma_r] & \textrm{if  } \sigma_c = \sigma_j \quad\&\quad \sigma_r \succeq j; \\
    0 & \textrm{otherwise}.
    \end{cases}
\end{equation*}

Furthermore, we will use the notation $\vec{x} \otimes \vec{y} = \vec{x} \vec{y}^{\top}$ to denote the \emph{outer product} of (compatible) vectors $\vec{x}$ and $\vec{y}$, while we will also write $(\mat{M})^{\flat}$ to represent the standard \emph{vectorization} of matrix $\mat{M}$.

\subsection{Proofs from \texorpdfstring{\Cref{section:prel}}{Section 2}}
\label{appendix:proofs_prel}

\efccereg*

\begin{proof}
    By assumption, we know that for any $i\in [n]$ it holds that $\reg_i^T \le \epsilon T$. Thus, by definition of $\reg_i^T$, it follows that for any $i \in [n]$ and any coarse trigger deviation function $\phi_i \in \widetilde{\Psi}_i$,
    \begin{align*}
    T\epsilon &\ge \sum_{t=1}^T\mleft(\ell_i^{(t)}\mleft(\phi_i(\vec{\pi}_i^{(t)})\mright)-\ell_i^{(t)}\mleft( \vec{\pi}_i^{(t)} \mright)\mright)
    =\sum_{t=1}^T\mleft(u_i \mleft(\phi_i( \vec{\pi}_i^{(t)} ),\vec{\pi}_{-i}^{(t)} \mright) - u_i \mleft(\vec{\pi}^{(t)} \mright)\mright) \\
    &= \sum_{t=1}^T\sum_{\pure[]\in\Pi} \mathbbm{1} \left\{ \vec{\pi} = \vec{\pi}^{(t)} \right\} \mleft(u_i \mleft(\phi_i(\vec{\pi}_i),\vec{\pi}_{-i} \mright) - u_i \mleft(\vec{\pi} \mright)\mright) \\
    &= \sum_{\pure[]\in\Pi} \sum_{t=1}^T \mleft( \mathbbm{1} \left\{ \vec{\pi} = \vec{\pi}^{(t)} \right\} \mright) \mleft(u_i \mleft(\phi_i(\vec{\pi}_i),\vec{\pi}_{-i} \mright) - u_i \mleft(\vec{\pi} \mright)\mright) \\
    &= T \sum_{\pure[]\in\Pi} \vec{\mu}[\vec{\pi}] \mleft(u_i \mleft(\phi_i(\vec{\pi}_i),\vec{\pi}_{-i} \mright) - u_i \mleft(\vec{\pi} \mright)\mright).
    \end{align*}
    This is precisely the definition of an $\epsilon$-EFCCE (\Cref{definition:efcce}), as we wanted to show.
\end{proof}

\subsection{Proofs from \texorpdfstring{\Cref{section:accelerating-Phi}}{Section 3.1}}
\label{appendix:proof-accelerating_phi}

Here we prove \Cref{theorem:accelerating-Phi}. For the convenience of the reader the theorem is restated below.

\phiaccel*

\begin{proof}
Fix any iteration $t \geq 2$. The first step is to obtain the next strategy of $\mathcal{R}_{\Phi}$: $\phi^{(t)} = \mathcal{R}_{\Phi}.\nextstr()$. Then, our regret minimizer $\mathcal{R}$ will simply output the strategy $\vec{x}^{(t)}$ such that $\vec{x}^{(t)} = \fporacle(\phi^{(t)}; \vec{x}^{(t-1)}, \kappa, \epsilon^{(t)})$.\footnote{For $t = 1$ it suffices to return any $\vec{x}^{(1)}$ such that $\vec{x}^{(1)} = \phi^{(1)}(\vec{x}^{(1)})$.} By assumption (recall \Cref{definition:FP-smooth}), we know that this is indeed well-defined and $\vec{x}^{(t)}$ will be such that (i) $\|\phi^{(t)}(\vec{x}^{(t)}) - \vec{x}^{(t)} \|_1 \leq \epsilon^{(t)}$, and (ii) $\|\vec{x}^{(t)} - \vec{x}^{(t-1)}\|_1 \leq \kappa$. This immediately implies that $\mathcal{R}$ will be $\kappa$-stable.

Afterwards, we receive feedback from the environment in the form of a utility vector $\vec{\ell}^{(t)}$, which in turn is used to construct the utility function $L^{(t)}: \phi \mapsto \langle \vec{\ell}^{(t)} , \phi(\vec{x}^{(t)}) \rangle$. Since $\Phi$ is a set of linear transformations, we can represent the corresponding utility vector as $\vec{L}^{(t)} = (\vec{\ell}^{(t)} \otimes \vec{x}^{(t)})^\flat$. This function is then given as feedback to $\mathcal{R}_{\Phi}$; that is, we invoke the subroutine $\mathcal{R}_{\Phi}.\obsut(L^{(t)})$. As a result, the (external) regret of $\mathcal{R}_{\Phi}$ can be expressed as
\begin{equation*}
    \reg_{\Phi}^T = \max_{\phi^* \in \Phi} \sum_{t=1}^T \langle \vec{\ell}^{(t)}, \phi^*(\vec{x}^{(t)}) \rangle - \sum_{t=1}^T \langle \vec{\ell}^{(t)}, \phi^t(\vec{x}^{(t)}) \rangle.
\end{equation*}
Furthermore, if $\reg^T$ is the $\Phi$-regret of $\mathcal{R}$, we have that
\begin{align}
    \reg^T - \reg_{\Phi}^T &= \sum_{t=1}^T \langle \vec{\ell}^{(t)}, \phi^{(t)}(\vec{x}^{(t)}) \rangle - \sum_{t=1}^T \langle \vec{\ell}^{(t)}, \vec{x}^{(t)} \rangle = \sum_{t=1}^T \langle \vec{\ell}^{(t)}, \phi^{(t)}(\vec{x}^{(t)}) - \vec{x}^{(t)} \rangle \notag \\
    &\leq \sum_{t=1}^T \|\vec{\ell}^{(t)}\|_* \| \phi^{(t)}(\vec{x}^{(t)}) - \vec{x}^{(t)}\| \leq \|\vec{\ell}\|_\infty \sum_{t=1}^T \epsilon^{(t)}, \label{eq:Gordon-epsilon}
\end{align}
where we used the Cauchy-Schwarz inequality, as well as the assumption that $\|\phi^{(t)}(\vec{x}^{(t)}) - \vec{x}^{(t)}\| \leq \epsilon^{(t)}$. Next, we will bound the term $\| \vec{L}^{(t)} - \vec{L}^{(t-1)}\|_{\infty}$ in terms of $\|\vec{\ell}^{(t)} - \vec{\ell}^{(t-1)}\|_{\infty}$. To this end, it follows that
\begin{align}
    \| \vec{L}^{(t)} - \vec{L}^{(t-1)}\|^2_{\infty} &= \| (\vec{\ell}^{(t)} \otimes \vec{x}^{(t)})^{\flat} - (\vec{\ell}^{(t-1)} \otimes \vec{x}^{(t-1)})^{\flat} \|^2_{\infty} \notag \\
    &= \| (\vec{\ell}^{(t)} \otimes \vec{x}^{(t)})^{\flat} - (\vec{\ell}^{(t-1)} \otimes \vec{x}^{(t)})^{\flat} + (\vec{\ell}^{(t-1)} \otimes \vec{x}^{(t)})^{\flat} - (\vec{\ell}^{(t-1)} \otimes \vec{x}^{(t-1)})^{\flat} \|^2_{\infty} \notag \\
    &= \| (( \vec{\ell}^{(t)} - \vec{\ell}^{(t-1)}) \otimes \vec{x}^{(t)})^{\flat} + (\vec{\ell}^{(t-1)} \otimes (\vec{x}^{(t)} - \vec{x}^{(t-1)}))^{\flat}\|^2_{\infty} \notag \\
    &\leq 2\| (( \vec{\ell}^{(t)} - \vec{\ell}^{(t-1)}) \otimes \vec{x}^{(t)})^{\flat}\|^2_{\infty} +2\|(\vec{\ell}^{(t-1)} \otimes (\vec{x}^{(t)} - \vec{x}^{(t-1)}))^{\flat}\|^2_{\infty} \label{eq:Gordon-triangle_Young} \\
    &= 2\|\vec{\ell}^{(t)} - \vec{\ell}^{(t-1)}\|^2_{\infty} \|\vec{x}^{(t)}\|^2_{\infty} + 2\|\vec{\ell}^{(t-1)}\|^2_{\infty} \|\vec{x}^{(t)} - \vec{x}^{(t-1)}\|^2_{\infty} \label{eq:outer-property} \\
    &\leq 2\|\vec{\ell}^{(t)} - \vec{\ell}^{(t-1)}\|^2_{\infty} + 2 \|\vec{\ell}\|^2_\infty \|\vec{x}^{(t)} - \vec{x}^{(t-1)}\|_\infty^2 \label{eq:final_bound-Phi},
\end{align}
where we used the triangle inequality together with Young's inequality in \eqref{eq:Gordon-triangle_Young}; the property that $\| (\vec{w} \otimes \vec{z})^{\flat}\|_{\infty} = \| \vec{w}\|_{\infty} \| \vec{z}\|_{\infty}$ in \eqref{eq:outer-property}; and the fact that $\|\vec{x}^{(t)}\|_\infty \leq 1$ in \eqref{eq:final_bound-Phi}. As a result, if we plug-in \eqref{eq:final_bound-Phi} to \eqref{eq:Gordon-epsilon} and we use the $(A, B)$-predictive bound of $\mathcal{R}_{\Phi}$ we can conclude that 
\begin{align*}
    \reg^T &\leq A + \|\vec{\ell}\|_\infty \sum_{t=1}^T \epsilon^{(t)} + B \sum_{t=1}^T \left(2 \|\vec{\ell}^{(t)} - \vec{\ell}^{(t-1)}\|^2_{\infty} + 2 \|\vec{\ell}\|_\infty^2 \|\vec{x}^{(t)} - \vec{x}^{(t-1)}\|^2_\infty \right) \\
    &= A + 2 B \sum_{t=1}^T \| \vec{\ell}^{(t)} - \vec{\ell}^{(t-1)}\|^2_{\infty} + 2 B \|\vec{\ell}\|_\infty^2 \sum_{t=1}^T \|\vec{x}^{(t)} - \vec{x}^{(t-1)}\|^2_\infty  + \|\vec{\ell}\|_\infty \sum_{t=1}^T \epsilon^{(t)},
\end{align*}
concluding the proof.
\end{proof}
\subsection{Proofs for \cref{section:Phi-regret}}
\label{appendix:proof-Phi-regret}

In this subsection we include the omitted proofs from \Cref{section:Phi-regret}. We commence with the proof of \Cref{proposition:R_sigma}. The corresponding construction follows that due to \citet{Farina21:Simple}, and it is highlighted in \Cref{algo:R_sigma}. 

\firstcircuit*

\begin{proof}
Consider the (linear) function $g_{\hat{\sigma}}^{(t)}: \R^{|\Sigma_j|} \ni \vec{x} \mapsto L_i^{(t)}(h_{\hat{\sigma}}(\vec{x})) - L_i^{(t)}(h_{\hat{\sigma}}(\mathbf{0}))$, and let $\vec{g}_{\hat{\sigma}}^{(t)} = (\vec{L}_i^{(t)}[\sigma_r, \hat{\sigma}])_{\sigma_r \succeq j}$ be the associated utility vector. As suggested in \Cref{algo:R_sigma}, the observed utility function $L_i^{(t)}$ at time $t$ is first used to construct $g_{\hat{\sigma}}^{(t)}$. Then, the latter function is given as input to $\mathcal{R}_{\mathcal{Q}_j}$. Thus, we may conclude that
\begin{equation*}
    \max_{\phi^* \in \Psi_{\hat{\sigma}}} \sum_{t=1}^T L_i^{(t)}(\phi^*) - \sum_{t=1}^T L_i^{(t)} \left( \phi_{\hat{\sigma} \rightarrow \vec{q}_{\hat{\sigma}}^{(t)}} \right) = \max_{\vec{q}_{\hat{\sigma}}^* \in \mathcal{Q}_{j}} \sum_{t=1}^T g_{\hat{\sigma}}^{(t)} (\vec{q}_{\hat{\sigma}}^*) - \sum_{t=1}^T g_{\hat{\sigma}}^{(t)}(\vec{q}_{\hat{\sigma}}^{(t)}).
\end{equation*}
In words, the cumulative regret incurred by $\mathcal{R}_{\hat{\sigma}}$ under the sequence of utility functions $L_i^{(1)}, \dots, L_i^{(T)}$ is equal to the regret incurred by $\mathcal{R}_{\mathcal{Q}_j}$ under the sequence of utility functions $g_{\hat{\sigma}}^{(t)}$. As a result, if we use the $(A, B)$-predictive bound assumed for the regret minimizer $\mathcal{R}_{\mathcal{Q}_j}$, it follows that the cumulative regret $\reg^T$ of $\mathcal{R}_{\hat{\sigma}}$ can be bounded as
\begin{equation*}
    \reg^T \leq A + B \sum_{t=1}^T \| \vec{g}_{\hat{\sigma}}^{(t)} - \vec{g}_{\hat{\sigma}}^{(t-1)} \|_{\infty}^2 \leq A + B \sum_{t=1}^T \| \vec{L}_i^{(t)} - \vec{L}_i^{(t-1)}\|_{\infty}^2, 
\end{equation*}
where we used the fact that $\vec{g}_{\hat{\sigma}}^{(t)} = (\vec{L}_i^{(t)}[\sigma_r, \hat{\sigma}])_{\sigma_r \succeq j}$. Finally, the claim regarding the complexity of \Cref{algo:R_sigma} is direct since we can store the vector $\vec{g}_{\hat{\sigma}}^{(t)}$ in $O(|\Sigma_j|)$ time.

\end{proof}

\begin{algorithm}[ht]
\SetAlgoLined
\DontPrintSemicolon
\KwIn{
\begin{itemize}[noitemsep]
    \item Player $i \in [n]$
    \item A trigger sequence $\hat{\sigma} = (j, a) \in \Sigma_{i}^*$
    \item An $(A, B)$-predictive regret minimizer $\mathcal{R}_{\mathcal{Q}_j}$ for the set $\mathcal{Q}_{j}$ 
\end{itemize}
}    
    \BlankLine
    \SetKwProg{Fn}{function}{:}{}
    \Fn{$\nextstr()$}{
        $\vec{q}_{\hat{\sigma}}^{(t)} \leftarrow \mathcal{R}_{\mathcal{Q}_j}.\nextstr()$\\
        \textbf{return} $\phi_{\hat{\sigma} \leftarrow \vec{q}_{\hat{\sigma}}^{(t)}}$
    }
    \Hline{}
    \SetKwProg{Fn}{function}{:}{}
    \Fn{$\obsut(L_i^{(t)})$}{
        Construct the linear function $g_{\hat{\sigma}}^{(t)}: \R^{|\Sigma_j|} \ni \vec{x} \mapsto L_i^{(t)}(h_{\hat{\sigma}}(\vec{x})) - L_i^{(t)}(h_{\hat{\sigma}}(\mathbf{0}))$\\
        $\mathcal{R}_{\mathcal{Q}_j}.\obsut(g_{\hat{\sigma}}^{(t)})$
    }
\caption{Predictive Regret Minimizer $\mathcal{R}_{\hat{\sigma}}$ for the set $\Psi_{\hat{\sigma}}$}
\label{algo:R_sigma}
\end{algorithm}

Next, we conclude the construction by combining the individual regret minimizers for all possible trigger sequences. In particular, we leverage the regret circuit of \Cref{proposition:regret_circuit-co} to obtain the following result.

\begin{proposition}
    \label{proposition:circ-pred}
    Consider an $(\alpha, \beta)$-predictive regret minimizer $\mathcal{R}_{\Delta}$ for the the simplex $\Delta(\Sigma^*_i)$, and $(A, B)$-predictive regret minimizers $\mathcal{R}_{\hat{\sigma}}$ for each $\hat{\sigma} \in \Sigma_i^*$, all with respect to the pair of dual norms $(\|\cdot\|_1, \|\cdot\|_{\infty})$. Then, there exists an algorithm which constructs a regret minimizer $\mathcal{R}_{\Psi_i}$ for the set $\co \Psi_{i}$ such that under any sequence of utility vectors $\vec{L}_i^{(1)}, \dots, \vec{L}_i^{(T)}$ its regret $\reg_{\Psi_i}^T$ can be bounded as
    \begin{equation*}
    \reg_{\Psi_i}^T \leq \alpha + A + (B + 4\beta |\Sigma_{i}|^2) \sum_{t=1}^T \| \vec{L}_i^{(t)} - \vec{L}_i^{(t-1)} \|_{\infty}^2.
    \end{equation*}
    Moreover, if the routines $\nextstr$ and $\obsut$ of $\mathcal{R}_{\Delta}$ and $\mathcal{R}_{\hat{\sigma}}$, for each $\hat{\sigma} \in \Sigma_i^*$, run in linear time on $|\Sigma_{i}|$, then the complexity of $\mathcal{R}_{\Psi}$ is $O(|\Sigma_{i}|^2)$.
\end{proposition}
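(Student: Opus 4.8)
The plan is to instantiate the convex-hull regret circuit of \Cref{proposition:regret_circuit-co} with the ``mixer'' $\mathcal{R}_\Delta$ over the simplex $\Delta(\Sigma_i^*)$ and the per-sequence minimizers $\{\mathcal{R}_{\hat\sigma}\}_{\hat\sigma \in \Sigma_i^*}$, and then to propagate the two predictive bounds through the master inequality $\reg_{\Psi_i}^T \leq \reg_\Delta^T + \max_{\hat\sigma \in \Sigma_i^*} \reg_{\hat\sigma}^T$ supplied by that proposition. The whole task reduces to re-expressing each term on the right-hand side as a constant plus a multiple of $\sum_{t}\|\vec{L}_i^{(t)} - \vec{L}_i^{(t-1)}\|_\infty^2$, i.e.\ in the predictive form of \Cref{definition:stable-predictive} with respect to the dual pair $(\|\cdot\|_1, \|\cdot\|_\infty)$.

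The inner term is the easy one. By \Cref{proposition:R_sigma} each $\mathcal{R}_{\hat\sigma}$ is $(A,B)$-predictive, with its regret controlled by the variation of the losses it actually observes, namely the subvectors $\vec{g}_{\hat\sigma}^{(t)} = (\vec{L}_i^{(t)}[\sigma_r, \hat\sigma])_{\sigma_r \succeq j}$ of $\vec{L}_i^{(t)}$. Since passing to a subvector cannot increase the $\ell_\infty$ norm, we have $\|\vec{g}_{\hat\sigma}^{(t)} - \vec{g}_{\hat\sigma}^{(t-1)}\|_\infty \leq \|\vec{L}_i^{(t)} - \vec{L}_i^{(t-1)}\|_\infty$, and taking the maximum over $\hat\sigma$ yields $\max_{\hat\sigma}\reg_{\hat\sigma}^T \leq A + B\sum_t \|\vec{L}_i^{(t)} - \vec{L}_i^{(t-1)}\|_\infty^2$.

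The crux of the argument is controlling $\reg_\Delta^T$, which comes down to choosing the right prediction for the mixer. The loss fed to $\mathcal{R}_\Delta$ has $k$-th coordinate $\vec{\ell}_\Delta^{(t)}[k] = \langle \vec{L}_i^{(t)}, (\mat{M}_k^{(t)})^\flat\rangle$, where $\mat{M}_k^{(t)}$ is the matrix of the trigger-deviation transformation $\phi_k^{(t)}$ emitted by $\mathcal{R}_{\hat\sigma_k}$. The naive recency prediction $\vec{\ell}_\Delta^{(t-1)}$ would force into the bound the variation $\|\mat{M}_k^{(t)} - \mat{M}_k^{(t-1)}\|$ of the inner transformations, i.e.\ the \emph{stability} of the $\mathcal{R}_{\hat\sigma}$, which is precisely what we want to avoid at this stage. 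The key idea is therefore to exploit the evaluation order of the circuit: when $\mathcal{R}_\Delta$ is queried for its next strategy the matrices $\mat{M}_k^{(t)}$ have already been produced by the inner minimizers, so we may instead set the prediction to $\vec{m}_\Delta^{(t)}[k] := \langle \vec{L}_i^{(t-1)}, (\mat{M}_k^{(t)})^\flat\rangle$. The prediction error then telescopes the transformation away,
\[
\vec{\ell}_\Delta^{(t)}[k] - \vec{m}_\Delta^{(t)}[k] = \langle \vec{L}_i^{(t)} - \vec{L}_i^{(t-1)}, (\mat{M}_k^{(t)})^\flat\rangle,
\]
which by H\"older's inequality is at most $\|\vec{L}_i^{(t)} - \vec{L}_i^{(t-1)}\|_\infty \,\|(\mat{M}_k^{(t)})^\flat\|_1$. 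Reading off the explicit entries of a trigger-deviation matrix from \eqref{eq:canonical_trigger_deviation_functions}, its total (nonnegative) mass is the number of untriggered sequences, which is at most $|\Sigma_i|$, plus the mass $\sum_{\sigma_r\succeq j}\vec{q}[\sigma_r]$ of the continuation strategy, again at most $|\Sigma_i|$; hence $\|(\mat{M}_k^{(t)})^\flat\|_1 \leq 2|\Sigma_i|$. Squaring gives $\|\vec{\ell}_\Delta^{(t)} - \vec{m}_\Delta^{(t)}\|_\infty^2 \leq 4|\Sigma_i|^2 \|\vec{L}_i^{(t)} - \vec{L}_i^{(t-1)}\|_\infty^2$, so the $(\alpha,\beta)$-predictivity of $\mathcal{R}_\Delta$ yields $\reg_\Delta^T \leq \alpha + 4\beta|\Sigma_i|^2 \sum_t \|\vec{L}_i^{(t)} - \vec{L}_i^{(t-1)}\|_\infty^2$.

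Adding the two estimates through the master inequality gives exactly $\reg_{\Psi_i}^T \leq \alpha + A + (B + 4\beta|\Sigma_i|^2)\sum_t \|\vec{L}_i^{(t)} - \vec{L}_i^{(t-1)}\|_\infty^2$, as claimed. For the complexity, there are $m = |\Sigma_i^*| = O(|\Sigma_i|)$ inner minimizers, each with linear-time $\nextstr$ and $\obsut$ by assumption; the only additional work is forming the mixer's loss and prediction vectors, i.e.\ the inner products $\langle \vec{L}_i^{(t)}, (\mat{M}_k^{(t)})^\flat\rangle$ and $\langle \vec{L}_i^{(t-1)}, (\mat{M}_k^{(t)})^\flat\rangle$, each computable in $O(|\Sigma_i|)$ time thanks to the sparsity of $\mat{M}_k^{(t)}$, for an overall $O(|\Sigma_i|^2)$ bound. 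The main obstacle is the prediction design in the third paragraph: the stability-free bound hinges entirely on predicting with the \emph{current} transformation against the \emph{previous} loss, a choice that is only legitimate because of the order in which the circuit evaluates its components.
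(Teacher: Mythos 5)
Your proposal is correct and follows essentially the same route as the paper's proof: the convex-hull regret circuit of \Cref{proposition:regret_circuit-co}, the observation that each inner $\mathcal{R}_{\hat{\sigma}}$ sees (a subvector of) the same utility so its predictive bound transfers directly, and—crucially—the same advanced prediction $\vec{m}_{\lambda}^{(t)}[k] = \langle \vec{L}_i^{(t-1)}, \vec{x}_k^{(t)} \rangle$ for the mixer (the paper's \Cref{remark:better_prediction}), followed by H\"older's inequality and the bound $\|(\mat{M}_k^{(t)})^\flat\|_1 \leq 2|\Sigma_i|$. Your justification of why this prediction is legitimate (the inner minimizers' iterates are available before the mixer is queried) matches the paper's reasoning exactly, so there is nothing to add.
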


The overall algorithm associated with this construction has been summarized in \Cref{algo:R_Psi}.

\begin{remark}
\label{remark:better_prediction}
To obtain better predictive bounds, the regret minimizer $\mathcal{R}_{\Delta}$ acting over the simplex in \Cref{proposition:circ-pred} will leverage the ``future'' iterates of all the individual regret minimizers. In particular, instead of using the typical one-recency bias mechanism $\vec{m}_{\lambda}^{(t)}[k] \defeq \langle \vec{L}_i^{(t-1)}, \vec{x}_k^{(t-1)} \rangle$, we will let $\vec{m}_{\lambda}^{(t)}[k] \defeq \langle \vec{L}_i^{(t-1)}, \vec{x}_k^{(t)} \rangle$. To this end, $\mathcal{R}_{\Delta}$ has to obtain the next iterate from each regret minimizer $\mathcal{R}_{\hat{\sigma}}$. This does not create complications given that the output of each $\mathcal{R}_{\hat{\sigma}}$ in the construction only depends on the observed utilities up to that time. On the other hand, it seems that there is no straightforward extension of this trick for \Cref{theorem:accelerating-Phi}, at the cost of a mismatch term of the form $\sum_{t=1}^T \| \vec{x}^{(t)} - \vec{x}^{(t-1)}  \|_\infty$.
\end{remark}

\begin{proof}[Proof of \Cref{proposition:circ-pred}]
First of all, \Cref{proposition:regret_circuit-co} implies that the accumulated regret can be bounded as 
\begin{equation}
    \label{eq:R_psi-initial}
    \reg_{\Psi_i}^{T} \leq \alpha + A + B \sum_{t=1}^T \| \vec{L}_i^{(t)} - \vec{L}_i^{(t-1)}\|_{\infty}^2 + \beta \sum_{t=1}^T \| \vec{\ell}_{\lambda}^{(t)} - \vec{m}_{\lambda}^{(t)} \|_{\infty}^2,
\end{equation}
where we used the fact that each regret minimizer $\mathcal{R}_{\hat{\sigma}}$ obtains as input the same utility function as $\mathcal{R}_{\Psi_i}$. We also used the notation $\vec{\ell}_{\vec{\lambda}}^t \in \R^{|\Sigma_i^*|}$ to represent the utility function given to $\mathcal{R}_{\Delta}$ as predicted by \Cref{proposition:regret_circuit-co}. Next, let us focus on bounding the norm $\|\vec{\ell}_{\lambda}^{(t)} - \vec{m}_{\lambda}^{(t)} \|_{\infty}^2$. In particular, it follows that for some index $s \in \{1, \dots, |\Sigma_i^*| \}$,
\begin{align*}
    \| \vec{\ell}_{\lambda}^{(t)} - \vec{m}_{\lambda}^{(t)}\|_\infty^2 &= \left( \langle \vec{L}_i^{(t)}, \vec{x}_s^{(t)} \rangle - \langle \vec{L}_i^{(t-1)}, \vec{x}_s^{(t)} \rangle \right)^2 \\
    &\leq \|\vec{L}_i^{(t)} - \vec{L}_i^{(t-1)}\|^2_\infty \|\vec{x}_s^{(t)}\|^2_1 \\
    &\leq 4 |\Sigma_i|^2 \|\vec{L}_i^{(t)} - \vec{L}_i^{(t-1)}\|^2_\infty,
\end{align*}
where we used the fact that $\|\vec{x}_s\|_1 \leq 2 |\Sigma_i|$. Thus, plugging this bound to \eqref{eq:R_psi-initial} gives the desired predictive bound. Finally, the complexity analysis for the $\nextstr$ function follows directly since the $\nextstr$ operation of each individual regret minimizer runs in $O(|\Sigma_{i}|)$, while the analysis of the $\obsut$ routine follows similarly to \cite[Theorem 4.6]{Farina21:Simple}, and it is therefore omitted.
\end{proof}

\begin{algorithm}[ht]
\SetAlgoLined
\DontPrintSemicolon
\KwIn{
\begin{itemize}[noitemsep]
    \item Player $i \in [n]$
    \item An $(A, B)$-predictive regret minimizer $\mathcal{R}_{\hat{\sigma}}$ for $\Psi_{\hat{\sigma}}$, for each $\hat{\sigma} \in \Sigma_i^*$
    \item An $(\alpha, \beta)$-predictive regret minimizer $\mathcal{R}_{\Delta}$ for $\Delta(\Sigma_i^*)$
\end{itemize}
}    
    \SetKwProg{Fn}{Function}{:}{}
    \Fn{$\nextstr()$}{
        $\vec{\lambda}_i^{(t)} \leftarrow \mathcal{R}_{\Delta}.\nextstr()$ \\
        \For{$\hat{\sigma} \in \Sigma^*_i$}{
        $\phi_{\hat{\sigma} \rightarrow \vec{q}_{\hat{\sigma}}^{(t)}} \leftarrow \mathcal{R}_{\hat{\sigma}}.\nextstr()$
        }
        \textbf{return} $\sum_{\hat{\sigma} \in \Sigma_i^*}\vec{\lambda}_i^{(t)}[\hat{\sigma}] \phi_{\hat{\sigma} \rightarrow \vec{q}_{\hat{\sigma}}^{(t)}}$ represented implicitly as $\{\vec{\lambda}_i^{(t)}[\hat{\sigma}], \vec{q}_{\hat{\sigma}}^{(t)} \}_{\hat{\sigma} \in \Sigma_i^*}$
    }
    \SetKwProg{Fn}{Function}{:}{}
    \Fn{$\obsut(L_i^{(t)})$}{
        \For{$\hat{\sigma} \in \Sigma_i^*$}{
        $\mathcal{R}_{\hat{\sigma}}.\obsut(L_i^{(t)})$
        }
        Construct the linear function $\ell^{(t)}_{\vec{\lambda}} : \vec{\lambda} \mapsto \sum_{\hat{\sigma} \in \Sigma_i^*} \vec{\lambda}[\hat{\sigma}] L_i^{(t)}\left(\phi_{\hat{\sigma} \rightarrow \vec{q}_{\hat{\sigma}}^{(t)}}\right)$\\
        $\mathcal{R}_{\Delta}.\obsut(\ell_{\lambda}^{(t)})$
    }
\caption{Predictive Regret Minimizer $\mathcal{R}_{\Psi_i}$ for the set $\co \Psi_{i}$}
\label{algo:R_Psi}
\end{algorithm}

Finally, we combine the previous pieces to prove \Cref{theorem:co-circuit}, which is recalled below.

\secondcircuit*

\begin{proof}
    The claim follows directly from \Cref{lemma:oftrl-predictive} using the fact that the range of the negative entropy DGF on the simplex $\Delta(\Sigma_i^*)$ is at most $\log |\Sigma_i|$; the predictive bound of \Cref{lemma:predictive-OMD}; \Cref{proposition:R_sigma} with the regret minimizer $\mathcal{R}_{\mathcal{Q}_j}$ instantiated using the dilatable global entropy DGF (\Cref{lemma:predictive-OMD}); and the predictive bound of the regret circuit for the convex hull derived in \Cref{proposition:circ-pred}.
\end{proof}
\subsection{Proofs for \cref{section:stability}}
\label{appendix:section_5}

We start this subsection with the proof that $\omw$ guarantees multiplicative stability.

\mulstabsimplex*

\begin{proof}
It is well-known that the update rule of $\omw$ on the simplex can be expressed in the following form:
\begin{equation*}
    \vec{x}^{(t)}[k] = \frac{e^{\eta \vec{\ell}^{(t-1)}[k] + \eta \vec{m}^{(t)}[k] - \eta \vec{m}^{(t-1)}[k]}}{\sum_{k'=1}^m e^{\eta \vec{\ell}^{(t-1)}[k'] + \eta \vec{m}^{(t)}[k'] - \eta \vec{m}^{(t-1)}[k']} \vec{x}^{(t-1)}[k']} \vec{x}^{(t-1)}[k],
\end{equation*}
for all $k \in [m]$ and $t \geq 2$. As a result, we have that 
\begin{equation*}
    \vec{x}^{(t)}[k] \leq \frac{e^{3\eta \|\vec{\ell}\|_\infty}}{\sum_{k'=1}^m e^{-3\eta \|\vec{\ell}\|_\infty} \vec{x}^{(t-1)}[k']} \vec{x}^{(t-1)}[k] = e^{6\eta \|\vec{\ell}\|_\infty} \vec{x}^{(t-1)}[k] \leq (1 + 12 \eta \|\vec{\ell}\|_\infty ) \vec{x}^{(t-1)}[k], \label{eq:mul-sta-ineq}
\end{equation*}
where we used that $\vec{\ell}^{(t-1)}[k'], \vec{m}^{(t)}[k'], \vec{m}^{(t-1)}[k'] \in [-\|\vec{\ell}\|_\infty, \|\vec{\ell}\|_\infty]$, for all $k' \in [m]$, the fact that $\sum_{k'} \vec{x}^{(t-1)}[k'] = 1$ since $\vec{x}^{(t-1)} \in \Delta^m$, and that $e^{x} \leq 1 + 2x$, for all $x \in [0, 1/2]$. Similarly, we have that 
\begin{align*}
    \vec{x}^{(t)}[k] \geq \frac{e^{-3\eta \|\vec{\ell}\|_\infty }}{\sum_{k'=1}^m e^{3\eta \|\vec{\ell}\|_\infty} \vec{x}^{(t-1)}[k']} \vec{x}^{(t-1)}[k] = e^{-6\eta \|\vec{\ell}\|_\infty} \vec{x}^{(t-1)}[k] &\geq (1 - 6 \eta \|\vec{\ell}\|_\infty) \vec{x}^{(t-1)}[k] \\
    &\geq (1 + 12 \eta \|\vec{\ell}\|_\infty)^{-1} \vec{x}^{(t-1)}[k],
\end{align*}
for $\eta \leq 1/(12 \|\vec{\ell}\|_\infty)$.
\end{proof}

\mulstabseq*

\begin{proof}
Let $\vec{S}^{(t-1)} \defeq \sum_{\tau=1}^{t-1} \vec{\ell}^{(\tau)}$. We claim that the next iterate of \eqref{eq:oftrl} with dilatable global entropy as DGF can be computed as follows. First, we compute recursively the quantities
\begin{equation}
    \label{eq:r}
    \vec{r}^{(t)}[j] \defeq \vec{\gamma}[j] \log \left( \sum_{a \in \mathcal{A}_j} \exp \left\{ \frac{\eta \vec{S}^{(t-1)}[(j,a)] + \eta \vec{m}^{(t)}[(j,a)] - \sum_{j': \sigma_{j'} = (j,a)} \vec{r}^{(t)}[j'] }{\vec{\gamma}[j]} \right\} \right)
\end{equation}
through a bottom-up tree traversal. Then, we determine the (local) behavioral strategies $\vec{b}_j \in \Delta(\mathcal{A}_j)$ at every decision point $j \in \mathcal{J}$ based on the following update rule:
\begin{equation}
    \label{eq:b}
    \vec{b}_j[a] \propto \exp \left\{ \frac{\eta \vec{S}^{(t-1)}[(j,a)] + \eta \vec{m}^{(t)}[(j,a)] - \sum_{j': \sigma_{j'} = (j,a)} \vec{r}^{(t)}[j'] }{\vec{\gamma}[j]} \right\}.
\end{equation}
Finally, the computed behavioral strategies are converted to the sequence-form representation. To argue about the multiplicative stability of the induced sequence, let us use the notation
\begin{equation}
    \label{eq:s}
    \vec{s}^{(t)}[(j,a)] \defeq \frac{1}{\vec{\gamma}[j]} \left( 2 \eta \vec{\ell}^{(t-1)}[(j,a)] - \eta \vec{\ell}^{(t-2)}[(j,a)] - \sum_{j': \sigma_{j'} = (j,a)} \mleft( \vec{r}^{(t)}[j'] - \vec{r}^{(t-1)}[j'] \mright) \right).
\end{equation}
Assuming that $\vec{m}^{(t)} \defeq \vec{\ell}^{(t-1)}$, it follows from \eqref{eq:r} that
\begin{align*}
    \vec{r}^{(t)}[j] &= \vec{\gamma}[j] \log \left( \sum_{a \in \mathcal{A}_j} \exp \left\{ \frac{\eta \vec{S}^{(t-2)}[(j,a)] + \eta \vec{\ell}^{(t-2)}[(j,a)] - \sum_{j': \sigma_{j'} = (j,a)} \vec{r}^{(t-1)}[j'] }{\vec{\gamma}[j]} \right\} e^{\vec{s}^{(t)}[(j,a)]} \right) \\
    &\leq \vec{r}^{(t-1)}[j] + \vec{\gamma}[j] \max_{a \in \mathcal{A}_j} \vec{s}^{(t)}[(j,a)].
\end{align*}
Similarly, we have that
\begin{equation*}
    \vec{r}^{(t)}[j] \geq \vec{r}^{(t-1)}[j] + \vec{\gamma}[j] \min_{a \in \mathcal{A}_j} \vec{s}^{(t)}[(j,a)] = \vec{r}^{(t-1)}[j] - \vec{\gamma}[j] \max_{a \in \mathcal{A}_j} (-\vec{s}^{(t)}[(j,a)]).
\end{equation*}
Thus, we have shown that
\begin{equation*}
    \left| \vec{r}^{(t)}[j] - \vec{r}^{(t-1)}[j] \right| \leq \vec{\gamma}[j] \max_{a \in \mathcal{A}_j} |\vec{s}^{(t)}[(j,a)]|.
\end{equation*}
Recalling the definition of $\vec{s}^{(t)}[(j,a)]$ given in \eqref{eq:s} we find that 
\begin{align}
    \left| \vec{r}^{(t)}[j] - \vec{r}^{(t-1)}[j] \right| &\leq \max_{a \in \mathcal{A}_j} \left| 2 \eta \vec{\ell}^{(t-1)}[(j,a)] - \eta \vec{\ell}^{(t-2)}[(j,a)] - \sum_{\sigma_{j'} = (j,a)} \mleft( \vec{r}^{(t)}[j'] - \vec{r}^{(t-1)}[j'] \mright) \right| \notag \\
    &\leq 3\eta + \max_{a \in \mathcal{A}_j} \sum_{j': \sigma_{j'} = (j,a)} \left| \vec{r}^{(t)}[j'] - \vec{r}^{(t-1)}[j'] \right|, \label{align:r-diff}
\end{align}
where we used the assumption that $\|\vec{\ell}^{(t-1)}\|_\infty, \|\vec{\ell}^{(t-2)}\|_\infty \leq 1$. Now \eqref{eq:b} can be equivalently expressed as 
\begin{equation*}
    \vec{b}^{(t)}_j[a] \propto \vec{b}_j^{(t-1)}[a] \exp \left\{ \frac{2 \eta \vec{\ell}^{(t-1)}[(j,a)] - \eta \vec{\ell}^{(t-2)}[(j,a)] - \sum_{j': \sigma_{j'} = (j,a) } (\vec{r}^{(t)}[j'] - \vec{r}^{(t-1)}[j'])}{\vec{\gamma}[j]} \right\}.
\end{equation*}
Using \eqref{align:r-diff} and the assumption that $\|\vec{\ell}^{(t-1)}\|_\infty, \|\vec{\ell}^{(t-2)}\|_\infty \leq 1$, it follows that 
\begin{equation*}
    \left| \frac{2 \eta \vec{\ell}^{(t-1)}[(j,a)] - \eta \vec{\ell}^{(t-2)}[(j,a)] - \sum_{j': \sigma_{j'} = (j,a) } (\vec{r}^{(t)}[j'] - \vec{r}^{(t-1)}[j'])}{\vec{\gamma}[j]} \right| = O(\eta),
\end{equation*}
where we used the definition of $\vec{\gamma}$ given in \eqref{eq:gamma}. As a result, similarly to the argument in the proof of \Cref{lemma:OMW-simplex-stability} we conclude that the sequence $(\vec{b}^{(t)}_j)$ is $O(\eta)$-multiplicative-stable. Finally, the sequence-form strategy $\vec{x}^{(t)}[(j,a)]$ is computed by taking the product of all $\vec{b}_{j'}^{(t)}[a']$ for all sequences $(j',a')$ on the path from the root to $(j,a)$. Given that there are at most $\mathfrak{D}$ sequences on every path, we may conclude that for any $\sigma \in \Sigma$,

\begin{equation*}
\vec{x}^{(t)}[\sigma] \leq (1 + O(\eta))^{\mathfrak{D}} \vec{x}^{(t-1)}[\sigma] \leq ( 1 + O(\eta \mathfrak{D})) \vec{x}^{(t-1)}[\sigma],     
\end{equation*}
for a sufficiently small $\eta = O(1/\mathfrak{D})$. Similar reasoning yields that $\vec{x}^{(t)}[\sigma] \geq (1 + O(\eta \mathfrak{D}))^{-1} \vec{x}^{(t-1)}[\sigma]$, concluding the proof.
\end{proof}

Next, we combine \Cref{lemma:OMD-stability,lemma:OMW-simplex-stability} to show \Cref{corollary:mul-stability}.

\begin{proof}[Proof of \Cref{corollary:mul-stability}]
Let us first focus on the regret minimizer $\mathcal{R}_{\hat{\sigma}}$, for some arbitrary $\hat{\sigma} = (j,a) \in \Sigma_i^*$. First, as predicted by \Cref{theorem:accelerating-Phi}, the utility function $\vec{L}_i^{(t)}$ is constructed as $\vec{L}_i^{(t)} \defeq (\vec{\ell}_i^{(t)} \otimes \vec{x}_i^{(t)} )^\flat $. \Cref{proposition:regret_circuit-co} implies that this is the same utility observed by $\mathcal{R}_{\hat{\sigma}}$. Moreover, from the construction of \Cref{algo:R_sigma} we can conclude that the utility $\vec{g}_{\hat{\sigma}}^{(t)}$ observed by $\mathcal{R}_{\mathcal{Q}_j}$ will be such that $\| \vec{g}_{\hat{\sigma}}^{(t)} \| \leq 1 $ given that $\| \vec{x}_i^{(t)}\|_\infty \leq 1$ (since $\vec{x}_i^{(t)} \in \mathcal{Q}_i$) and $\| \vec{\ell}_i^{(t)} \|_\infty \leq 1$ by the normalization assumption. Thus, we conclude from \Cref{lemma:OMD-stability} that the output sequence of $\mathcal{R}_{\mathcal{Q}_j}$ will be $O(\eta \mathfrak{D}_i)$-multiplicative-stable. Furthermore, the construction of \Cref{algo:R_sigma} immediately implies that the output sequence of $\mathcal{R}_{\hat{\sigma}}$ will also be $O(\eta \mathfrak{D}_i)$.

Next, we establish the claim regarding the stability of $\mathcal{R}_{\Delta}$. Indeed, it is easy to see that the utility $\vec{\ell}^{(t)}_{\lambda}$ observed by $\mathcal{R}_{\Delta}$ is such that $\| \vec{\ell}_{\lambda}\|_{\infty} = O(\|\mathcal{Q}_i\|_1)$, and the same holds for the prediction $\vec{m}^{(t)}_{\lambda}$. Thus, \Cref{lemma:OMW-simplex-stability} completes the proof.
\end{proof}

Next, we focus on the proof of \Cref{theorem:stability-EFCE}. To this end, we leverage the approach of \citet{Kruckman10:Elementary}, who provided an alternative proof of the classic Markov chain tree theorem using linear-algebraic techniques. We commence by stating some elementary properties of the determinant.

\begin{fact}
    \label{fact:det_properties}
The following properties hold:
\begin{itemize}[leftmargin=5mm]
    \item The determinant is a multilinear function with respect to the rows and columns of the matrix:
    \begin{equation*}
     \det(\vec{u}_1, \dots, \alpha \vec{u}_k + \beta\vec{u}'_k, \dots, \vec{u}_m) = \alpha \det(\vec{u}_1, \dots, \vec{u}_k, \dots, \vec{u}_m) + \beta \det(\vec{u}_1, \dots, \vec{u}'_k, \dots, \vec{u}_m),
    \end{equation*}
    for any $\vec{u}_1,\dots,\vec{u}_m \in \R^m$, $\vec{u}'_k \in \R^m$, and $\alpha,\beta\in\R$;
    \item If any two rows or columns of $\mat{A}$ are equal, then $\det(\mat{A}) = 0$;
    \item The determinant remains invariant under permutations.
\end{itemize}
\end{fact}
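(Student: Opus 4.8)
The plan is to derive all three items directly from the Leibniz expansion
\begin{equation*}
    \det(\mat{A}) = \sum_{\sigma \in S_m} \sgn(\sigma) \prod_{k=1}^m \mat{A}[k, \sigma(k)],
\end{equation*}
where $S_m$ denotes the symmetric group on $\{1,\dots,m\}$. Since every monomial in this sum contains exactly one factor from each column (and exactly one from each row), multilinearity is immediate. Fixing all columns but the $k$-th and writing that column as $\alpha \vec{u}_k + \beta \vec{u}'_k$, the unique factor drawn from column $k$ in the term indexed by $\sigma$ is $\mat{A}[\sigma^{-1}(k), k] = \alpha \vec{u}_k[\sigma^{-1}(k)] + \beta \vec{u}'_k[\sigma^{-1}(k)]$; distributing this over the summation splits $\det$ into the asserted $\alpha$- and $\beta$-weighted combination. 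Multilinearity in the rows then follows from the identity $\det(\mat{A}) = \det(\mat{A}^\top)$, which is itself read off the Leibniz formula by the reindexing $\sigma \mapsto \sigma^{-1}$ together with $\sgn(\sigma) = \sgn(\sigma^{-1})$.

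For the second item, suppose columns $k \neq l$ of $\mat{A}$ coincide. I would partition $S_m$ into the pairs $\{\sigma, \sigma \circ \tau\}$, where $\tau = (k\ l)$ is the transposition swapping $k$ and $l$. Because columns $k$ and $l$ are identical, the two monomials $\prod_r \mat{A}[r,\sigma(r)]$ and $\prod_r \mat{A}[r,(\sigma\tau)(r)]$ are equal, whereas $\sgn(\sigma\tau) = -\sgn(\sigma)$; hence the two contributions cancel, and summing over all pairs yields $\det(\mat{A}) = 0$. The statement for equal rows follows once more by transposing. This cancellation is valid over $\R$ (indeed in any field of characteristic $\neq 2$), which suffices here.

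Finally, I would read ``invariance under permutations'' as invariance of $\det$ under a simultaneous permutation $\pi$ of the rows and columns, that is, under $\mat{A} \mapsto \mat{P}_\pi \mat{A} \mat{P}_\pi^\top$ for the permutation matrix $\mat{P}_\pi$. Using multiplicativity of the determinant together with $\det(\mat{P}_\pi) = \sgn(\pi) \in \{-1, +1\}$ gives $\det(\mat{P}_\pi \mat{A} \mat{P}_\pi^\top) = \sgn(\pi)^2 \det(\mat{A}) = \det(\mat{A})$; equivalently, this follows straight from the Leibniz sum by the reindexing $\sigma \mapsto \pi \sigma \pi^{-1}$, a sign-preserving bijection of $S_m$.

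There is no genuine obstacle here: all three items are standard consequences of the Leibniz formula, and the only points deserving care are the sign bookkeeping in the pairing argument of the second item and pinning down the precise intended meaning of ``invariant under permutations'' in the third. The substantive work of the section lies downstream, where these identities are invoked repeatedly to carry out the cofactor and Laplacian manipulations underlying \Cref{lemma:convex_characterization}; the present fact merely records the elementary algebra that computation relies on.
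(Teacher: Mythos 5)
The paper never proves this statement: it is recorded as a background \emph{Fact} (``We commence by stating some elementary properties of the determinant'') and used as a black box in the proof of \Cref{lemma:convex_characterization}, so there is no internal proof to compare against. Your Leibniz-formula derivation is the standard one and is essentially correct, so it fills in exactly what the paper takes for granted. One small bookkeeping slip in the second item: with the convention $(\sigma\tau)(r)=\sigma(\tau(r))$, reindexing the product gives $\prod_r \mat{A}[r,(\sigma\tau)(r)] = \prod_s \mat{A}[\tau(s),\sigma(s)]$, so pairing $\sigma$ with $\sigma\tau$ swaps \emph{row} indices and therefore proves the equal-\emph{rows} case; for equal columns you want the left composition $\sigma \mapsto \tau\sigma$, which swaps the column indices $\sigma(r)\in\{k,l\}$. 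Since you already invoke $\det(\mat{A})=\det(\mat{A}^{\top})$, this is repaired simply by exchanging the words ``rows'' and ``columns'' in that paragraph; it is a cosmetic fix, not a gap. Your reading of the third item as invariance under simultaneous row--column permutation $\mat{A} \mapsto \mat{P}_\pi \mat{A} \mat{P}_\pi^{\top}$ is also the right one, and it matches how the paper uses the property (relabeling indices so a determinant ``matches after a suitable permutation'' the form required by \Cref{lemma:cycles}).
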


Given a matrix $\mat{A}$, the minor $\minor^{(i, j)}(\mat{A})$ is the matrix formed from $\mat{A}$ after deleting its $i$-th row and its $j$-th column. Then, the \emph{cofactor} is defined as $\co^{(i, j)}(\mat{A}) = (-1)^{i + j} \det \left( \minor^{(i, j)}(\mat{A}) \right)$, while the \emph{adjugate} (or adjoint) matrix $\adj(\mat{A})^{\top}$ is the matrix with entries the corresponding cofactors of $\mat{A}$; that is, $\adj(\mat{A})[(i, j)] \defeq \co^{(j, i)}(\mat{A})$. With this notation at hand, we are ready to state the following characterization due to \cite[Theorem 3.4]{Kruckman10:Elementary}:

\begin{theorem}[\cite{Kruckman10:Elementary}]
    \label{theorem:stationary_solution}
    Consider an ergodic $m$-state Markov chain with transition matrix $\mat{M}$. If $\vec{x} \in \mathbb{R}^m$ is such that $\vec{x}[i] \defeq \adj(\mathcal{L})[(i, i)]$, where $\mathcal{L} \defeq \mat{M} - \mat{I}_m$ is the Laplacian of the system, $\vec{x}$ is an eigenvector of $\mat{M}$ with a corresponding eigenvalue of $1$. That is, $\mat{M} \vec{x} = \vec{x}$.
\end{theorem}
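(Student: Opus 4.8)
The plan is to sidestep the combinatorial Markov-chain-tree machinery entirely and argue purely through the adjugate--Laplacian identity together with the rank structure of $\mathcal{L} = \mat{M} - \mat{I}_m$. The starting point is the universal identity $\mathcal{L}\,\adj(\mathcal{L}) = \adj(\mathcal{L})\,\mathcal{L} = \det(\mathcal{L})\,\mat{I}_m$, valid for every square matrix. First I would record that $\det(\mathcal{L}) = 0$: since $\mat{M}$ is column-stochastic (each of its columns sums to $1$, matching the convention of \Cref{lemma:convex_characterization}), every column of $\mathcal{L}$ sums to $0$, i.e. $\vec{1}^\top\mathcal{L} = \vec{0}^\top$, which exhibits a nontrivial left null vector and forces singularity. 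Hence both products vanish, and this is the crux: $\mathcal{L}\,\adj(\mathcal{L}) = \mat{0}$ says that every column of $\adj(\mathcal{L})$ lies in $\ker\mathcal{L} = \{\vec{y} : \mat{M}\vec{y} = \vec{y}\}$, while $\adj(\mathcal{L})\,\mathcal{L} = \mat{0}$ says that every row of $\adj(\mathcal{L})$ lies in the left null space of $\mathcal{L}$.

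Next I would pin down these null spaces using ergodicity. By Perron--Frobenius the eigenvalue $1$ of an ergodic $\mat{M}$ is simple, so $\rank(\mathcal{L}) = m-1$ and $\ker\mathcal{L}$ is one-dimensional, spanned by the stationary distribution $\vec{\pi}$; likewise the left null space is one-dimensional, and since $\vec{1}^\top\mathcal{L} = \vec{0}^\top$ it is exactly $\mathrm{span}\{\vec{1}^\top\}$. Because all columns of $\adj(\mathcal{L})$ lie in the one-dimensional space $\ker\mathcal{L}$, the matrix $\adj(\mathcal{L})$ has rank at most one; and it is nonzero, since $\rank(\mathcal{L}) = m-1$ guarantees a nonvanishing $(m-1)\times(m-1)$ minor. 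Thus $\adj(\mathcal{L}) = \vec{u}\,\vec{w}^\top$ for nonzero $\vec{u},\vec{w}$. Feeding this factorization back into the two identities gives $\mathcal{L}\vec{u} = \vec{0}$, so $\vec{u} \propto \vec{\pi}$ and $\mat{M}\vec{u} = \vec{u}$, and $\vec{w}^\top\mathcal{L} = \vec{0}^\top$, so $\vec{w} \propto \vec{1}$.

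Writing $\vec{w} = w\,\vec{1}$ then yields $\adj(\mathcal{L})[(i,j)] = w\,\vec{u}[i]$, constant in the column index $j$; in particular the diagonal vector is $\vec{x}[i] = \adj(\mathcal{L})[(i,i)] = w\,\vec{u}[i]$, i.e. $\vec{x} = w\,\vec{u}$. Since $w \neq 0$ (as $\adj(\mathcal{L}) \neq \mat{0}$) and $\mat{M}\vec{u} = \vec{u}$, the vector $\vec{x}$ is a genuine nonzero eigenvector satisfying $\mat{M}\vec{x} = \vec{x}$, as claimed. The one delicate point — and the only place ergodicity is indispensable — is establishing that $\adj(\mathcal{L})$ is nonzero and rank one and that the left null space is precisely $\mathrm{span}\{\vec{1}^\top\}$; without irreducibility the adjugate could vanish or the null spaces could be higher-dimensional, and the conclusion would fail. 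I would also flag the column- versus row-stochastic convention, since transposing it swaps the roles of $\vec{\pi}$ and $\vec{1}$ and would make the diagonal cofactors a \emph{left} rather than a right eigenvector, breaking the stated identity $\mat{M}\vec{x} = \vec{x}$. If one prefers to avoid Perron--Frobenius, the rank-one claim can instead be extracted from a cofactor expansion using the multilinearity recorded in \Cref{fact:det_properties}, but invoking simplicity of the top eigenvalue is the most direct route.
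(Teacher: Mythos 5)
Your proof is correct. Note first that the paper does not actually prove \Cref{theorem:stationary_solution}: it is imported as Theorem~3.4 of \citet{Kruckman10:Elementary} and used as a black box in the proof of \Cref{lemma:convex_characterization}, so there is no in-paper argument to compare against. Your argument is sound at every step: column-stochasticity (which is indeed the paper's convention, cf.\ \Cref{lemma:convex_characterization}) gives $\vec{1}^\top\mathcal{L} = \vec{0}^\top$, hence $\det(\mathcal{L}) = 0$ and $\mathcal{L}\,\adj(\mathcal{L}) = \adj(\mathcal{L})\,\mathcal{L} = \mat{0}$; ergodicity, via Perron--Frobenius (only irreducibility is needed), gives $\rank(\mathcal{L}) = m-1$, so $\adj(\mathcal{L})$ is nonzero of rank one; the two annihilation identities then force $\adj(\mathcal{L}) = w\,\vec{u}\,\vec{1}^\top$ with $\mathcal{L}\vec{u} = \vec{0}$ and $w \neq 0$, so all columns of $\adj(\mathcal{L})$ are equal, lie in $\ker\mathcal{L}$, and coincide with the diagonal vector $\vec{x}$, yielding $\mat{M}\vec{x} = \vec{x}$ with $\vec{x} \neq \vec{0}$ (you even establish nonvanishing, which the bare statement does not demand). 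This is essentially the standard elementary linear-algebraic route, the same in spirit as Kruckman's proof whose auxiliary ingredients the paper does reproduce (\Cref{fact:det_properties}, \Cref{fact:obvious}, \Cref{lemma:cycles}); what your write-up buys is a self-contained justification of a result the paper only cites. Two minor remarks: your closing aside that the rank-one property could be recovered from multilinearity alone, bypassing Perron--Frobenius, is doubtful --- nonvanishing of $\adj(\mathcal{L})$ genuinely requires $\rank(\mathcal{L}) = m-1$, i.e.\ irreducibility, exactly as you flag earlier --- and your caution about the column- versus row-stochastic convention is well placed, since transposing it would make the diagonal cofactors a left rather than a right eigenvector and break the stated identity.
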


A key step of our proof for \Cref{theorem:stability-EFCE} uses this theorem in order to characterize the stationary distribution of a certain (ergodic) Markov chain. Incidentally, an alternative characterization can be provided using the classic Markov chain tree theorem. In particular, a central component of the latter theorem is the notion of a \emph{directed tree}:
\begin{definition}[Directed Tree]
    \label{definition:directed_tree}
    A graph $G = (V, E)$ is said to be a \emph{directed tree} rooted at $u \in V$ if (i) it does not contain any cycles, and (ii) $u$ has \emph{no} outgoing edges, while every other node has exactly \emph{one} outgoing edge.
\end{definition}
We will represent with $\mathcal{D}_i$ the set of all graphs which have property (ii) with respect to a node $i \in [m]$. Moreover, we will use $\mathcal{T}_i$ to represent the subset of $\mathcal{D}_i$ which also has property (i) of \Cref{definition:directed_tree}. For a matrix $\mat{D} \in \mathcal{D}_i$, we define a matrix $\map(\mat{D})$ so that $\map(\mat{D})_{(j, k)} = 1$ if $(k, j) \in 
E(\mat{D})$, and $0$ otherwise. The following lemma will be of particular use for our purposes.

\begin{lemma}[\cite{Kruckman10:Elementary}]
    \label{lemma:cycles}
    Consider some $m \times m$ matrix $\mat{D} \in \mathcal{D}_i$, and let $R_i$ be the determinant of the Laplacian matrix $\mathcal{L} \defeq \map(\mat{D}) - \mat{I}$ after replacing the $i$-th column with the $i$-th standard unit vector $\vec{e}[i]$. Then, $R_i = (-1)^{m-1}$ if $\mat{D} \in \mathcal{T}_i$, i.e. $\mat{D}$ contains no (directed) cycles. Otherwise, $R_i = 0$.
\end{lemma}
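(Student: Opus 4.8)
The plan is to expand $R_i = \det(\mathcal{L}_i)$, where $\mathcal{L}_i$ denotes $\mathcal{L} = \map(\mat{D}) - \mat{I}$ with its $i$-th column replaced by $\vec{e}[i]$, via the Leibniz permutation formula, and to identify exactly which permutations survive. First I would record the column structure. Since every node $j \neq i$ has a unique out-neighbour, write $\nu(j)$ for it; then column $j$ of $\map(\mat{D})$ is the indicator $\vec{e}[\nu(j)]$ (because $\map(\mat{D})[k,j]=1$ iff $(j,k)\in E(\mat{D})$), so column $j$ of $\mathcal{L}$ equals $\vec{e}[\nu(j)] - \vec{e}[j]$: a $+1$ in row $\nu(j)$ and a $-1$ in row $j$, which collapse to an all-zero column precisely when $j$ carries a self-loop. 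The replaced $i$-th column is $\vec{e}[i]$, reflecting that the sink $i$ has no outgoing edge.

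Then in $\det(\mathcal{L}_i) = \sum_{\pi \in S_m} \sgn(\pi) \prod_{j=1}^m \mathcal{L}_i[\pi(j), j]$, a permutation contributes a nonzero term only if $\pi(i) = i$ (from column $i$) and $\pi(j) \in \{j, \nu(j)\}$ for each $j \neq i$. Calling $j$ ``active'' when $\pi(j) = \nu(j)$, I would observe that each such $\pi$ decomposes into fixed points together with vertex-disjoint directed cycles of the functional graph of $\nu$; since $i$ is a sink it lies in no cycle, and every node whose edge points into $i$ is forced to be inactive (as $\pi(i)=i$ already occupies row $i$). Because $\nu$ has out-degree at most one, its cycles $Z_1,\dots,Z_s$ are automatically pairwise disjoint, so the surviving permutations are in bijection with subsets $T \subseteq \{Z_1,\dots,Z_s\}$ of activated cycles.

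Next I would evaluate each surviving term. For a subset $T$ with $r = |T|$ activated cycles, the product $\prod_j \mathcal{L}_i[\pi(j),j]$ equals $(-1)$ raised to the number of non-root fixed points (each contributing a diagonal $-1$, while active entries and column $i$ contribute $+1$), namely $(m-1) - \sum_{Z \in T} |Z|$; meanwhile $\sgn(\pi) = \prod_{Z \in T}(-1)^{|Z|-1} = (-1)^{\sum_{Z\in T}|Z| - r}$. Multiplying gives $(-1)^{(m-1)-r}$, independent of the individual cycle lengths, so $R_i = (-1)^{m-1}\sum_{T \subseteq \{Z_1,\dots,Z_s\}} (-1)^{|T|} = (-1)^{m-1}(1-1)^s$. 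When $\mat{D} \in \mathcal{T}_i$ there are no cycles ($s=0$), only the empty subset survives, and $R_i = (-1)^{m-1}$; when $\mat{D}$ contains a cycle ($s \ge 1$) the alternating binomial sum vanishes and $R_i = 0$.

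The main obstacle, everything else being bookkeeping, is the sign accounting: one must check that the factor $(-1)^{|Z|-1}$ contributed to $\sgn(\pi)$ by each activated cycle is exactly cancelled against the $(-1)^{|Z|}$ worth of diagonal $-1$'s that activation removes, leaving the single clean factor $(-1)^{-r}$ that makes the sum over $T$ telescope to $(1-1)^s$. I would also flag the degenerate self-loop case (a length-one $\nu$-cycle), where the two options for $\pi(j)$ collapse onto the same matrix entry $\mathcal{L}_i[j,j]=0$; there the corresponding column is identically zero, so $R_i = 0$ directly, in agreement with $\mat{D} \notin \mathcal{T}_i$.
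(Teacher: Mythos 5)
Your proof is correct. Note first that the paper does not actually prove this lemma: it is imported verbatim from \cite{Kruckman10:Elementary} and used as a black box inside the proof of \Cref{lemma:convex_characterization}, so there is no in-paper argument to compare against; what you have written is a valid self-contained substitute. Your route — expand $R_i$ by the Leibniz formula, observe that a permutation survives only if $\pi(i)=i$ and $\pi(j)\in\{j,\nu(j)\}$ for $j\neq i$, identify the surviving permutations with subsets of the $\nu$-cycles $Z_1,\dots,Z_s$, and check that each subset $T$ contributes $(-1)^{(m-1)-|T|}$ so that $R_i=(-1)^{m-1}(1-1)^s$ — is sound at every step: the column of $\mathcal{L}$ indexed by $j\neq i$ is indeed $\vec{e}[\nu(j)]-\vec{e}[j]$ under the paper's convention $\map(\mat{D})[(j,k)]=1 \iff (k,j)\in E(\mat{D})$; bijectivity of $\pi$ forces the active set to be a union of $\nu$-cycles (none of which can contain the sink $i$); and the sign bookkeeping $(-1)^{\sum|Z|-|T|}\cdot(-1)^{(m-1)-\sum|Z|}=(-1)^{(m-1)-|T|}$ is right, as is the separate treatment of self-loops (zero column). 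For comparison, the standard elementary argument in the cited reference splits into two cases rather than running one unified expansion: if $\mat{D}$ contains a cycle $j_1\to j_2\to\cdots\to j_k\to j_1$, the corresponding columns telescope, $\sum_{\ell}\bigl(\vec{e}[j_{\ell+1}]-\vec{e}[j_\ell]\bigr)=\vec{0}$, giving a linear dependence and $R_i=0$ immediately; if $\mat{D}\in\mathcal{T}_i$, ordering the nodes topologically (root first) makes the modified Laplacian triangular with diagonal $(1,-1,\dots,-1)$, so $R_i=(-1)^{m-1}$. That two-case argument is shorter; yours buys a single closed-form computation $R_i=(-1)^{m-1}(1-1)^s$ that exhibits the cancellation structure explicitly, at the cost of the combinatorial bookkeeping over cycle subsets. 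Either proof is acceptable here.
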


Before we proceed with the technical proof of \Cref{lemma:convex_characterization}, we also state a useful elementary fact.

\begin{fact}
    \label{fact:obvious}
    The adjugate matrix at $(i, i)$ is equal to the determinant of $\mat{A}$ after we replace the $i$-th column with the vector $\vec{e}[i]$.
\end{fact}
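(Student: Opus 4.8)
The plan is to reduce the statement to a single Laplace (cofactor) expansion, so the whole argument is essentially definitional. First I would unfold the definition of the adjugate on the diagonal using the conventions fixed just above the statement: $\adj(\mat{A})[(i,i)] = \co^{(i,i)}(\mat{A}) = (-1)^{i+i}\det\bigl(\minor^{(i,i)}(\mat{A})\bigr) = \det\bigl(\minor^{(i,i)}(\mat{A})\bigr)$, since $(-1)^{2i} = +1$. Thus the left-hand side of the fact is exactly the determinant of the matrix obtained from $\mat{A}$ by deleting its $i$-th row and $i$-th column.

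Next I would compute the right-hand side. Let $\mat{A}'$ denote $\mat{A}$ with its $i$-th column replaced by the standard unit vector $\vec{e}[i]$, and expand $\det(\mat{A}')$ along that $i$-th column. Because the $i$-th column of $\mat{A}'$ equals $\vec{e}[i]$ --- a $1$ in row $i$ and $0$ in every other row --- every term of the Laplace expansion vanishes except the one coming from row $i$. Concretely, $\det(\mat{A}') = \sum_k \mat{A}'[k,i]\,(-1)^{k+i}\det\bigl(\minor^{(k,i)}(\mat{A}')\bigr) = (-1)^{i+i}\det\bigl(\minor^{(i,i)}(\mat{A}')\bigr)$.

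The final step is the observation that $\minor^{(i,i)}(\mat{A}') = \minor^{(i,i)}(\mat{A})$: deleting the $i$-th row and the $i$-th column removes precisely the one column in which $\mat{A}'$ and $\mat{A}$ differ, so the two minors coincide entry for entry. Chaining the three steps gives $\det(\mat{A}') = \det\bigl(\minor^{(i,i)}(\mat{A})\bigr) = \adj(\mat{A})[(i,i)]$, as claimed.

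There is no genuinely hard part here; the result is a routine consequence of cofactor expansion. The only points that warrant explicit mention are bookkeeping ones: that the diagonal cofactor sign $(-1)^{i+i}$ equals $+1$, so no sign subtlety survives, and that deleting row and column $i$ from $\mat{A}'$ truly recovers the $\mat{A}$-minor rather than a perturbed matrix. I would state these plainly rather than gloss over them, since the value of the fact lies exactly in the clean identification of a diagonal adjugate entry with a unit-vector-substituted determinant --- the bridge that \Cref{theorem:stationary_solution} and \Cref{lemma:cycles} exploit in the subsequent stability analysis.
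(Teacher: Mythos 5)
Your proof is correct: the identity $\adj(\mat{A})[(i,i)] = \co^{(i,i)}(\mat{A}) = \det\bigl(\minor^{(i,i)}(\mat{A})\bigr)$ combined with the Laplace expansion of the column-substituted matrix along its $i$-th column, and the observation that $\minor^{(i,i)}$ of the substituted matrix coincides with $\minor^{(i,i)}(\mat{A})$, is exactly the standard justification. The paper states this fact without proof, treating it as elementary, and your argument is precisely the one it implicitly relies on.
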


\rankone*
\begin{proof}
Let us consider the Laplacian matrix $\mathcal{L} = \mat{M} - \mat{I}_m$, and the quantities $\Sigma_i := \adj(\mathcal{L})[(i, i)]$. We shall first characterize the structure of $\Sigma_i$'s. By symmetry, we can focus without loss of generality on the term $\Sigma_1$. We know from \Cref{fact:obvious} that $\Sigma_1$ can be expressed as 
\begin{equation}
    \label{eq:first_determinant}
    \Sigma_1 = \det(\vec{e}[1], \vec{v} + \vec{c}_2 - \vec{e}[2], \dots,  \vec{v} + \vec{c}_m - \vec{e}[m]),
\end{equation}
where $\vec{c}_j$ represents the $j$-th column of $\mat{C}$. Now if $\vec{e}_{j, k} \defeq \vec{e}[j] - \vec{e}[k]$, given that $\mat{M}$ is column-stochastic we have that
\begin{equation*}
    \vec{e}[j] - \vec{v} - \vec{c}_j = \sum_{k=1}^m (\vec{e}[j] - \vec{e}[k]) \vec{v}[k] + \sum_{k=1}^m (\vec{e}[j] - \vec{e}[k]) \vec{c}_{j}[k] = \sum_{k=1}^m \vec{e}_{j, k} \vec{v}[k] + \sum_{k=1}^m \vec{e}_{j, k} \vec{c}_{j}[k].
\end{equation*}
Next, if we plug-in this expansion to \eqref{eq:first_determinant} it follows that 
\begin{equation}
    \label{eq:Sigma_1}
    \Sigma_1 = \det \left( \vec{e}[1], \sum_{k=1}^m \vec{e}_{k, 2} \vec{v}[k] + \sum_{k=1}^m \vec{e}_{k, 2} \vec{c}_2[k], \dots, \sum_{k=1}^m \vec{e}_{k, m} \vec{v}[k] + \sum_{k=1}^m \vec{e}_{k, m} \vec{c}_m[k] \right).
\end{equation}
By multilinearity of the determinant (\Cref{fact:det_properties}), $\Sigma_1$ can be expressed as the sum of terms, with a single term of the form
\begin{equation}
    \label{eq:term_c}
    \det \left( \vec{e}[1], \sum_{k=1}^m \vec{e}_{k, 2} \vec{c}_2[k], \dots, \sum_{k=1}^m \vec{e}_{k, m} \vec{c}_m[k] \right),
\end{equation}
independent on $\vec{v}$, while any other term can be expressed in the form
\begin{equation}
    \label{eq:term_v}
\det \left( \vec{e}[1], \vec{z}_2, \dots, \sum_{k=1}^m \vec{e}_{k, j} \vec{v}[k], \dots, \vec{z}_m \right),    
\end{equation}
for some index $j$, where $\vec{z}_{\ell}$ is either $\sum_{k=1}^m \vec{e}_{k, \ell} \vec{v}[k]$ or $\sum_{k=1}^m \vec{e}_{k, \ell} \vec{c}_{\ell}[k]$. Now let us first analyze each term of \eqref{eq:term_v}. We will show that it can be equivalently expressed so that the vector $\vec{v}$ appears only in a single column. Indeed, consider any other column in the matrix involved in the determinant of \eqref{eq:term_v}, expressed in the form $\sum_{k=1}^m \vec{e}_{k, \ell} \vec{v}[k]$, for some index $\ell \neq j$, if such column exists. Then, if we subtract the $j$-th column from that column it would take the form 
\begin{equation*}
    \sum_{k=1}^m \vec{e}_{k, \ell} \vec{v}[k] - \sum_{k=1}^m \vec{e}_{k, j} \vec{v}[k] = \sum_{k=1}^m (\vec{e}[j] - \vec{e}[\ell]) \vec{v}[k] = \lambda \vec{e}_{j, \ell},
\end{equation*}
where recall that $\lambda$ is the sum of the entries of vector $\vec{v}$, while this subtraction operation does not modify the value of the determinant. Thus, by multinearity, the determinant \eqref{eq:term_v} is equal to 
\begin{equation}
    \label{eq:lambda-term}
    \lambda^{p} \det \left( \vec{e}[1], \vec{z}'_{2}, \dots, \sum_{k=1}^m \vec{e}_{k, j} \vec{v}[k], \dots, \vec{z}'_m \right),
\end{equation}
where $\vec{z}'_\ell$ is either $\sum_{k=1}^m \vec{e}_{k, \ell} \vec{c}_\ell[k]$ or $\vec{e}_{j, \ell}$, and $0 \leq p \leq m-2$. Next, if we use again the multilinearity property, the term in \eqref{eq:lambda-term} can be expressed as a sum of terms each of which has the form
\begin{equation*}
    \left(\lambda^p \vec{v}[q] \prod_{(s, w) \in S} \mat{C}[(s, w)] \right) \det(\vec{e}[1], \vec{e}_{\cdot, 2}, \dots, \vec{e}_{\cdot, m}),
\end{equation*}
where $|S| = m - p - 2$. (For notational simplicity we used the notation $\vec{e}_{\cdot, 2}, \dots, \vec{e}_{\cdot, m}$ to suppress the first index.) Thus, the induced determinant $\det(\vec{e}[1], \vec{e}_{\cdot, 2}, \dots, \vec{e}_{\cdot, m})$ matches after a suitable permutation the form of \Cref{lemma:cycles} associated with some matrix $\mat{D} \in \mathcal{D}_i$. As a result, it can either be $0$ or $(-1)^{m-1}$, depending on whether the corresponding graph has a (directed) cycle. Similar reasoning applies for the determinant in \eqref{eq:term_c}, which can be expressed as a sum of terms 
\begin{equation*}
(-1)^{m-1} \prod_{(s, w) \in S} \mat{C}[(s, w)],
\end{equation*}
where $|S| = m - 1$. Overall, we have shown that each $\Sigma_i$ (due to symmetry) can be expressed in the form
\begin{equation}
    \label{eq:nominator}
    (-1)^{m-1} \sum_{j \in F_i} \lambda^{p_j} (\vec{v}[q_j])^{b_j}  \prod_{(s,w) \in S_j} \mat{C}[(s,w)], 
\end{equation}
where for all $j$ it holds that $b_j \in \{0, 1\}$, and $|S_j| = m - p_j - b_j- 1$. Next, we will focus on characterizing the term $\Sigma \defeq \sum_{i=1}^m \Sigma_i$. In particular, the stationary distribution $\vec{\pi}$ of $\mat{M}$ is such that
\begin{equation}
    \left( \mat{C} + \vec{v} \vec{1}^{\top} \right) \vec{\pi} = \vec{\pi} \iff \mat{C} \vec{\pi} + \vec{v} = \vec{\pi} \iff (\mat{I}_m - \mat{C}) \vec{\pi} = \vec{v},
\end{equation}
where we used that $\vec{1}^{\top} \vec{\pi} = 1$ since $\vec{\pi} \in \Delta^m$. Moreover, we claim that the matrix $\mat{I}_m - \mat{C}$ is invertible. Indeed, the sum of the columns of $\mathbf{C}$ is $1 - \lambda$, and subsequently it follows that the maximum eigenvalue of $\mathbf{C}$ is $(1 - \lambda)$. In turn, this implies that all the eigenvalues of $\mat{I}_m - \mat{C}$ are at least $\lambda > 0$. 
As a result, we can use Cramer's rule to obtain an explicit formula for the solution of the linear system with respect to the first coordinate of $\vec{\pi}$:
\begin{equation}
    \label{eq:cramer}
    \vec{\pi}[1] = \frac{\det(\vec{v}, \vec{e}[2] - \vec{c}_2, \dots, \vec{e}[m] - \vec{c}_m)}{\det(\vec{e}[1] - \vec{c}_1, \vec{e}[2] - \vec{c}_2, \dots, \vec{e}[m] - \vec{c}_m)}.
\end{equation}
Moreover, it follows that 
\begin{align}
    \det(\vec{v}, \vec{e}[2] - \vec{c}_2, \dots, \vec{e}[m] - \vec{c}_m) \notag &= \det(\vec{v}, \vec{e}[2] - \vec{c}_2 - \vec{v}, \dots, \vec{e}[m] - \vec{c}_m - \vec{v}) \notag \\
    &= \det(\vec{v} + (\lambda \vec{e}[1] - \vec{v}), \vec{e}[2] - \vec{c}_2 - \vec{v}, \dots, \vec{e}[m] - \vec{c}_m - \vec{v}) \label{eq:zero_det} \\
    &= \lambda \det(\vec{e}[1], \vec{e}[2] - \vec{c}_2 - \vec{v}, \dots, \vec{e}[m] - \vec{c}_m - \vec{v}), \notag
\end{align}
where in \eqref{eq:zero_det} we used the fact that $\det(\lambda \vec{e}[1] - \vec{v}, \dots, \vec{e}[m] - \vec{c}_m - \vec{v}) = 0$. Thus, if we use the definition of $\Sigma_1$, \Cref{fact:obvious}, and \eqref{eq:cramer}, we arrive at the following conclusion:
\begin{equation*}
\vec{\pi}[1] = \lambda \frac{\Sigma_1}{\det\left( \mat{I}_m - \mat{C} \right)}.
\end{equation*}
But we can also infer from \Cref{theorem:stationary_solution} that $\vec{\pi}_1 = \Sigma_1/\Sigma$, implying the following identity:
\begin{equation}
    \label{eq:formula}
    \det(\mat{I}_m - \mat{C}) = \lambda \sum_{i=1}^m \Sigma_i.
\end{equation}
In fact, we have shown this formula for \emph{any} vector $\lambda \vec{p}$, where $\vec{p}$ is a probability distribution and $\lambda > 0$. Thus, it must also hold for $\mathbf{v} \defeq \frac{\lambda}{m} \vec{1}$. That is, 
\begin{equation}
    \label{eq:denominator}
    \det(\mat{I}_m - \mat{C}) = \lambda (-1)^{m-1} \sum_{j \in F} C_j \lambda^{p_j + b_j} \prod_{(s,w) \in S_j} \mat{C}[(s,w)],
\end{equation}
where $|S_j| \leq m - 1 - p_j$, $C_j = C_j(m)$ is a positive parameter independent on the entries of $\vec{v}$ and $\mat{C}$, and $F = \bigcup_i F_i$. Finally, given that the vector $\vec{\pi} \in \Delta^m$ with $\vec{\pi}[i] = \Sigma_i/\Sigma$ is the (unique) stationary distribution of $\mat{M}$, the claim follows directly from \eqref{eq:nominator}, \eqref{eq:formula}, and \eqref{eq:denominator}.
\end{proof}

\begin{corollary}
    \label{corollary:w_formula}
Let $\mat{M}$ be the transition matrix of an $m$-state Markov chain such that $\mat{M} \defeq \vec{v} \vec{1}^{\top} + \mat{C}$, where $\mat{C}$ is a matrix with strictly positive entries and columns summing to $1 - \lambda$, and $\vec{v}$ is a vector with strictly positive entries summing to $\lambda$. Moreover, let $\vec{v} = \vec{r}/l$, for some $l > 0$. Then, if $\vec{\pi}$ is the stationary distribution of $\mat{M}$, there exists, for each $i \in [m]$, a (non-empty) finite set $F_i$ and $F = \bigcup_i F_i$, and corresponding parameters $b_j \in \{0, 1\}, 0 \leq p_j \leq m-2, |S_j| = m - p_j - b_j - 1$, for each $j \in F_i$, such that the $i$-th coordinate of the vector $ \vec{w} \defeq l \vec{\pi}$ can be expressed as
\begin{equation}
    \label{eq:w_formula}
    \vec{w}[i] = \frac{\sum_{j \in F_i} \lambda^{p_j + 1} (\vec{r}[q_j])^{b_j} l^{1 - b_j}  \prod_{(s,w) \in S_j} \mat{C}[(s,w)]}{\sum_{j \in F} C_j \lambda^{p_j + b_j} \prod_{(s,w) \in S_j}\mat{C}[(s,w)]},
    \end{equation}
    where $C_j = C_j(m)$ is a positive constant.
\end{corollary}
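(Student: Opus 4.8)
The plan is to derive Corollary~\ref{corollary:w_formula} as an immediate consequence of Lemma~\ref{lemma:convex_characterization}, since the only additional hypothesis is the factorization $\vec{v} = \vec{r}/l$, which amounts to a reparametrization of $\vec{v}$ together with the global rescaling $\vec{w} \defeq l\vec{\pi}$. First I would invoke Lemma~\ref{lemma:convex_characterization} verbatim: the hypotheses of the corollary ($\mat{C}$ with strictly positive entries and columns summing to $1-\lambda$, and $\vec{v}$ with strictly positive entries summing to $\lambda$) are exactly those of the lemma, so the same index sets $F_i$ and $F = \bigcup_i F_i$, and the same parameters $b_j \in \{0,1\}$, $0 \le p_j \le m-2$, $|S_j| = m - p_j - b_j - 1$, and $C_j = C_j(m)$, may be reused without modification. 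This yields directly the expression for $\vec{\pi}[i]$ stated in the lemma.

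The key---and essentially only---computation is then a substitution. Since each exponent $b_j$ lies in $\{0,1\}$ and $\vec{v}[q_j] = \vec{r}[q_j]/l$, we have $(\vec{v}[q_j])^{b_j} = (\vec{r}[q_j])^{b_j}\, l^{-b_j}$. Substituting this into the numerator of the lemma's formula and multiplying through by $l$ (to pass from $\vec{\pi}$ to $\vec{w} = l\vec{\pi}$) converts the factor $(\vec{r}[q_j])^{b_j}\, l^{-b_j}$ into $(\vec{r}[q_j])^{b_j}\, l^{1-b_j}$, reproducing exactly the numerator claimed in \eqref{eq:w_formula}. One can sanity-check the two cases: when $b_j = 0$ the term carries $l^{1}$, matching $l \cdot (\vec{v}[q_j])^0$, and when $b_j = 1$ it carries $\vec{r}[q_j] = l \cdot \vec{v}[q_j]$. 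The observation that makes this go through cleanly is that the denominator $\sum_{j\in F} C_j \lambda^{p_j + b_j}\prod_{(s,w)\in S_j}\mat{C}[(s,w)]$ depends only on $\lambda$ and the entries of $\mat{C}$, and not on $\vec{v}$ directly; hence the rescaling and reparametrization leave it untouched.

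There is no genuine obstacle here: the statement is a bookkeeping corollary whose entire content is the affine-in-$\vec{v}$ structure already established in Lemma~\ref{lemma:convex_characterization}. The only point requiring a moment of care is to confirm that the $i$-dependence---the particular set $F_i$ together with the associated index $q_j$---is inherited unchanged, and that $l$ enters solely through the power $l^{1-b_j}$ appearing in the numerator. Both facts follow directly from the term-by-term form of the numerator recorded in \eqref{eq:nominator}, so the corollary is proved by this single substitution.
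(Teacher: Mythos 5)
Your proposal is correct and takes essentially the same approach as the paper, whose entire proof of \Cref{corollary:w_formula} is the single sentence that it ``follows directly from the formula derived in \Cref{lemma:convex_characterization}.'' Your explicit substitution $(\vec{v}[q_j])^{b_j} = (\vec{r}[q_j])^{b_j}\, l^{-b_j}$, multiplication by $l$, and observation that the denominator is independent of $\vec{v}$ simply spell out the bookkeeping the paper leaves implicit.
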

\begin{proof}
The proof follows directly from the formula derived in \Cref{lemma:convex_characterization}.
\end{proof}

This expression for the stationary distribution was derived specifically to characterize the multiplicative stability of the fixed points associated with $\efce$. In particular, this will be shown directly from \Cref{corollary:Markov_stability}, which is recalled next.

\markovstable*

\begin{proof}
Consider some coordinate $i \in [m]$, and let 
\begin{equation*}
    V_j \defeq \lambda^{p_j + 1} (\vec{r}[q_j])^{b_j} l^{1 - b_j}  \prod_{(s,w) \in S_j} \mat{C}[(s,w)],
\end{equation*}
for some $j \in F_i$. Also let $V_j'$ be the corresponding quantity with respect to $\mat{M}'$. Then, by assumption we have that
\begin{equation*}
    V_j' \leq (1 + \kappa)^{p_j + 1} (1 + \gamma) (1 + \kappa)^{|S_j|} V_j \leq (1 + \gamma) (1 + \kappa)^{m} V_j,
\end{equation*}
where we used the fact that $|S_j| + p_j + 1 \leq m$ by \Cref{corollary:w_formula}. Moreover, for a sufficiently small $\kappa = O(1/m)$, we can infer that $V_j' \leq (1 + \gamma)(1 + O(\kappa m)) V_j' = (1 + (\gamma + O(\kappa m))) V_j$. In turn, this implies that $\sum_{j \in F_i} V_j' \leq (1 + (\gamma + O(\kappa m))) \sum_{j \in F_i} V_j'$. Moreover, we can show that the denominator of \eqref{eq:w_formula} induces an extra additive factor of $O(\kappa m)$ in the multiplicative stability, implying that $\vec{w}'[i] \leq (1 + (\gamma + O(\kappa m))) \vec{w}[i]$, for any $i \in [m]$. Similarly, it follows that $\vec{w}'[i] \geq (1 + (\gamma + O(\kappa m)))^{-1} \vec{w}[i]$.
\end{proof}

Next, we will use this statement to prove \Cref{proposition:stability_extend}, which is recalled below.

\stabextend*

We note that it is tacitly assumed that the vectors $\vec{\lambda}_i^{(t)}, \vec{q}_{\hat{\sigma}}^{(t)}$ and $\vec{x}_{(j \in J)}$, involved in \Cref{proposition:stability_extend}, have strictly positive coordinates; this is indeed the case under our dynamics (\Cref{fig:algo}).

\begin{proof}[Proof of \Cref{proposition:stability_extend}]
Let us focus on the stability analysis of \Cref{algo:Extend} as the rest of the claim follows from \cite[Proposition 4.14]{Farina21:Simple}. In particular, for consistency with the terminology of \Cref{corollary:Markov_stability}, let us define
\begin{equation*}
    \mat{C}[(a_r, a_c)] \defeq \vec{\lambda}_i[(j^*, a_c)] \vec{q}_{(j^*, a_c)}[(j^*, a_r)] + \left(1 - \sum_{\hat{\sigma} \preceq (j^*, a_c)} \vec{\lambda}_i[\hat{\sigma}] \right) \mathbbm{1} \{ a_r = a_c \},
\end{equation*}
and $l \defeq \vec{x}_i[\sigma_p]$. We will show that the conditions of \Cref{corollary:Markov_stability} are satisfied: 
\begin{itemize}
    \item[(i)] The entries of matrix $\mat{C}$ are $O(\kappa)$-multiplicative-stable. In particular, this follows from the fact that $1 - \sum_{\hat{\sigma} \preceq (j^*, a_c)} \vec{\lambda}_i[\hat{\sigma}] = \sum_{\tilde{\sigma} \in \tilde{\Sigma}_i} \vec{\lambda}_i[\tilde{\sigma}]$, for some $\tilde{\Sigma}_i \subseteq \Sigma^*_i$, since $\vec{\lambda}_i \in \Delta(\Sigma^*_i)$. The latter term is clearly $\kappa$-multiplicative-stable;
    \item[(ii)] The sum of the entries of $\vec{v}^t \defeq \vec{r}^t/l^t$ is $\kappa$-multiplicative-stable. To see this, note that the sum of each column of $\mat{C}$ can be expressed as $\sum_{\tilde{\sigma} \in \tilde{\Sigma}_i} \vec{\lambda}_i[\tilde{\sigma}]$, and as a result, since the matrix $\mat{C} + \frac{1}{l} \vec{r} \vec{1}^{\top}$ is stochastic, we can infer that the sum of the entries of $\vec{v}$ can also be expressed as $\sum_{\tilde{\sigma} \in \tilde{\Sigma}_i} \vec{\lambda}_i[\tilde{\sigma}]$ since $\vec{\lambda}$ is a vector on the simplex. But the latter term is clearly $\kappa$-multiplicative-stable, as desired;
    \item[(iii)] The sequence $(\vec{r}^{(t)})$ is $\gamma + O(\kappa)$-multiplicative-stable. This assertion can be directly verified from the definition of $\vec{r}$ in \Cref{algo:Extend};
    \item[(iv)] The sequence of scalars $(l^{(t)})$ is $\gamma$-multiplicative-stable. Indeed, this follows directly from the assumption that the sequence $(\vec{x}_i^{(t)})$ is $\gamma$-multiplicative-stable.
\end{itemize}
As a result, we can apply \Cref{corollary:Markov_stability} to conclude the proof.
\end{proof}

\fpth*

\begin{proof}
Our argument proceeds inductively. For a root information set $j \in \mathcal{J}_{i}$, \Cref{proposition:stability_extend} implies $O(\kappa |\mathcal{A}|)$-multiplicative-stability for any induced partial fixed point; this follows given that the $\emptyseq$-partial fixed point is trivially $0$-multiplicative-stable. Next, the theorem follows inductively given that by \Cref{proposition:stability_extend} each sequence can only incur an additive factor of $O(\kappa |\mathcal{A}|)$ in the multiplicative stability bound with respect to the preceding sequences.
\end{proof}

\begin{remark}
\label{remark:precise_bound}
More precisely, if $F_{i} \defeq \max_{j_1 \prec j_2 \prec \dots \prec j_d} \sum_{i=1}^d |\mathcal{A}_{j_i}|$, with $j_1, \dots, j_d \in \mathcal{J}_{i}$, we can show that the sequence of fixed points is $O(\kappa F_{i})$-multiplicative-stable. Observe that $F_{i}$ can be trivially upper bounded by $|\mathcal{A}_{i}| \mathfrak{D}_{i}$, as well as the number of sequences $|\Sigma_{i}|$.
\end{remark}

\begin{algorithm}[ht]
\SetAlgoLined
\DontPrintSemicolon
\KwIn{
\begin{itemize}[noitemsep]
    \item $\phi_i = \sum_{\hat{\sigma} \in \Sigma_i^*} \vec{\lambda}_i[\hat{\sigma}] \phi_{\hat{\sigma} \rightarrow \vec{q}_{\hat{\sigma}}} \in \co \Psi_{i}$ 
    \item $J \subseteq \mathcal{J}_{i}$ trunk for player $i$ 
    \item $j^* \in \mathcal{J}_{i}$ information set not in $J$ with an immediate predecessor in $J$
    \item $\vec{x}_i \in \R_{\geq 0}^{|\Sigma_{i}|}$ $J$-partial fixed point of $\phi$
\end{itemize}
}    
\KwOut{$\vec{x}_i' \in \mathbb{R}_{\geq 0}^{|\Sigma_{i}|}$ $(J \cup \{j^*\})$-partial fixed point of $\phi$}
Let $\vec{r} \in \R_{\geq 0}^{|\mathcal{A}_{j^*}|}$ be defined as $\vec{r}[a] \defeq \sum_{j' \preceq \sigma_{j^*}} \sum_{a' \in \mathcal{A}_{j'}} \vec{\lambda}_i[(j', a')] \vec{q}_{(j', a')}[(j^*, a)] \vec{x}_i[(j', a')]$ \\
Let $\mat{W} \in \vec{x}_i[\sigma_{j^*}] \mathbb{S}^{|\mathcal{A}_{j^*}|}$ be the matrix with entries $\mat{W}[a_r, a_c]$ defined, for $a_r, a_c \in \mathcal{A}_{j^*}$, as $\vec{r}[a_r] + \left( \vec{\lambda}_i[(j^*, a_c)] \vec{q}_{(j^*, a_c)}[(j^*, a_r)] + \left(1 - \sum_{\hat{\sigma} \preceq (j^*, a_c)} \vec{\lambda}_i[\hat{\sigma}] \right) \mathbbm{1} \{ a_r = a_c \} \right) \vec{x}_i[\sigma_{j^*}]$ \label{line:W} \\
\If{$\vec{x}_i[\sigma_{j^*}] = 0$}{
$\vec{w} \leftarrow \mathbf{0} \in \mathbb{R}_{\geq 0}^{|\mathcal{A}_{j^*}|}$
}
\Else{
$\vec{b} \in \Delta(\mathcal{A}_{j^*}) \leftarrow$ stationary distribution of $\frac{1}{\vec{x}_i[\sigma_{j^*}]}\mat{W}$ \\
$\vec{w} \rightarrow \vec{x}_i[\sigma_{j^*}] \vec{b}$
}
$\vec{x}_i' \leftarrow \vec{x}_i$\\
\For{$a \in \mathcal{A}_{j^*}$}{
$\vec{x}_i'[(j^*, a)] \leftarrow \vec{w}[(j^*, a)]$
}
\caption{$\textsc{Extend}(\phi_i, J, j^*, \vec{x})$; \citep{Farina21:Simple}}
\label{algo:Extend}
\end{algorithm}

\begin{algorithm}[ht]
\SetAlgoLined
\DontPrintSemicolon
\KwIn{ $\phi_i = \sum_{\hat{\sigma} \in \Sigma_i^*} \vec{\lambda}_i[\hat{\sigma}] \phi_{\hat{\sigma} \rightarrow \vec{q}_{\hat{\sigma}}} \in \co \Psi_{i}$ 
}    
\KwOut{$\vec{q}_i \in \mathcal{Q}_{i}$ such that $\vec{q}_i = \phi_i(\vec{q}_i)$}
$\vec{q}_i \leftarrow \mathbf{0} \in \mathbb{R}^{|\Sigma_{i}|}, \vec{q}_i[\emptyseq] \leftarrow \emptyseq$ \\
$J \leftarrow \emptyseq$ \\
\For{$j \in \mathcal{J}_{i}$ in top-down order}{
$\vec{q}_i \leftarrow \textsc{Extend}(\phi_i, J, j, \vec{q}_i^*)$ \\
$J = J \cup \{j\}$
}
\textbf{return} $\vec{q}^*_i$
\caption{$\textsc{FixedPoint}(\phi_i)$; \citep{Farina21:Simple}}
\label{algo:FP-EFCE}
\end{algorithm}

\subsection{Proofs from \texorpdfstring{\Cref{subsection:completing}}{Section 4.3}}
    \label{appendix:everything}

We begin this subsection with the proof of \Cref{claim:aux}, which is recalled below.

\utilinfty*

\begin{proof}
For a profile of mixed sequence-form strategies $(\vec{q}_{1}, \dots, \vec{q}_{n})$, the utility of player $i$ can be expressed as
\begin{equation*}
    u_{i}(\vec{q}_{1}, \dots, \vec{q}_{n}) = \sum_{z \in \mathcal{Z}} p_{c}(z) u_{i}(z) \prod_{k=1}^n \vec{q}_{k}(\sigma_{k, z}).
\end{equation*}
As a result, given that (by assumption) $|u_{i}(z)| \leq 1$ for all $z \in \mathcal{Z}$, it follows that \begin{align}
    \| \vec{\ell}_i^{(t)} - \vec{\ell}_i^{(t-1)}\|_{\infty} &\leq \sum_{z \in \mathcal{Z}} \left\lvert \prod_{k \neq i}^n \vec{q}_k^{(t)}(\sigma_{k, z}) - \prod_{k \neq i}^n \vec{q}_k^{(t-1)}(\sigma_{k, z}) \right\rvert \notag \\
    &\leq \sum_{z \in \mathcal{Z}} \sum_{k \neq i}^n \left\lvert \vec{q}_k^{ (t)}(\sigma_{k, z}) - \vec{q}_k^{(t-1)}(\sigma_{k, z}) \right\rvert, \label{eq:q-bound}
\end{align}
where in the last bound we used the well-known inequality 
\begin{equation*}
    |(a_1 a_2 \dots a_m) - (b_1 b_2 \dots b_m)| \leq \sum_{i=1}^m |a_i - b_i| (a_1 \dots a_{i-1}) (b_{i+1} \dots b_m) \leq \sum_{i=1}^m |a_i - b_i|,
\end{equation*}
for any $a_1, \dots, a_m, b_1,\dots, b_m \in [0, 1]$. Finally, from \eqref{eq:q-bound} we can conclude that 
\begin{equation*}
    \| \vec{\ell}_i^{(t)} - \vec{\ell}_i^{(t-1)} \|_{\infty} \leq \sum_{k \neq i}^{n} \sum_{z \in \mathcal{Z}} \left\lvert \vec{q}_k^{(t)}(\sigma_{k, z}) - \vec{q}_k^{(t-1)}(\sigma_{k, z}) \right\rvert \leq |\mathcal{Z}| \sum_{k \neq i}^{n} \| \vec{q}_k^{(t)} - \vec{q}_k^{(t-1)} \|_1.
\end{equation*}
Finally, the claim follows from a standard application of Young's inequality.
\end{proof}

Next, we include the proof of \Cref{theorem:main}.

\begin{proof}[Proof of \Cref{theorem:main}]
For a player $i \in [n]$ we let $\vec{\mu}_i^{(t)}$ be any probability distribution on the set $\Pi_{i}$ such that $\E_{\vec{\pi}_i \sim \mu_i^{(t)}}[\vec{\pi}_i] = \vec{q}_i^{(t)}$, where $\vec{q}_i^{(t)}$ is the output of the regret minimizer operating over the \emph{mixed} sequence-form strategy polytope $\mathcal{Q}_{i}$, realized with the dynamics associated with \Cref{corollary:1/4}. Moreover, let $\vec{\mu}^{(t)} \defeq \vec{\mu}_1^{(t)} \otimes \dots \otimes \vec{\mu}_n^{(t)}$ be the associated joint probability distribution, and $\bar{\vec{\mu}}^{(t)} \defeq \frac{1}{T} \sum_{t=1}^T \vec{\mu}^{(t)}$ be their average over time. Then, since the expression in \Cref{definition:efce} is linear (recall that the set of transformations $\Psi_{i}$ is linear), it follows from the linearity of expectation that $\bar{\vec{\mu}}^{(t)}$ is an $\epsilon$-EFCE, where, if $\reg_i^T$ is the cumulative $\Psi_i$-regret of player $i$ with respect to $\mathcal{Q}_i$, it holds that $\epsilon \defeq \frac{1}{T} \max_i \reg_i^{T}$. Finally, the proof follows given that $\reg_i^T = \mathcal{P} T^{1/4}$, for every player $i \in [n]$, where $\mathcal{P}$ is a parameter polynomial in the game (\Cref{corollary:1/4}).
\end{proof}

\subsection{Proofs from \texorpdfstring{\Cref{section:efcce}}{Section 5}}
\label{appendix:efcce}

Before we proceed with the proof of \Cref{thm:closedForm}, we first show the following useful claim.

\begin{lemma}\label{lemma: phi decomposition}
For any $\phi_i = \sum_{\info'\in\Info}\lambdav_i[\info']\tdev[\info'][\q_{\info'}]\in\co\Ph$, $\q_i \in\Q_i$, and $\sigma=(j,a)\in\Seqs$,
\[
\phi_i(\q_i)[\sigma]-\q_i[\sigma]=\mleft(\sum_{\info'\preceq \info}\lambdav_i[\info']\q_{\info'}[\sigma]\q_i[\sigma_{j'}]\mright)-d_{\sigma}\q_i[\sigma].
\]
\end{lemma}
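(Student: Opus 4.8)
The plan is to work directly from the matrix representation of coarse trigger deviation functions recorded in the further-notation subsection of the appendix. Each $\phi_{j'\to\vec{q}_{j'}}$ acts linearly as $\vec{x}\mapsto\mat{M}_{j'\to\vec{q}_{j'}}\vec{x}$, where for $\sigma_r,\sigma_c\in\Sigma_i$ the entry $\mat{M}_{j'\to\vec{q}_{j'}}[\sigma_r,\sigma_c]$ equals $1$ when $\sigma_c\not\succeq j'$ and $\sigma_r=\sigma_c$, equals $\vec{q}_{j'}[\sigma_r]$ when $\sigma_c=\sigma_{j'}$ and $\sigma_r\succeq j'$, and is $0$ otherwise. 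Since $\phi_i=\sum_{j'\in\mathcal{J}_i}\vec{\lambda}_i[j']\,\phi_{j'\to\vec{q}_{j'}}$, linearity gives
\[
\phi_i(\vec{q}_i)[\sigma]=\sum_{j'\in\mathcal{J}_i}\vec{\lambda}_i[j']\,(\mat{M}_{j'\to\vec{q}_{j'}}\vec{q}_i)[\sigma],
\]
so it suffices to evaluate a single summand at the coordinate $\sigma=(j,a)$.

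First I would fix $\sigma=(j,a)$ and a single $j'\in\mathcal{J}_i$ and read off the action of row $\sigma$ of $\mat{M}_{j'\to\vec{q}_{j'}}$ on $\vec{q}_i$, distinguishing two cases according to whether $\sigma\succeq j'$. Because $\sigma=(j,a)$, the relation $\sigma\succeq j'$ holds precisely when $j'\preceq j$. In the case $j'\not\preceq j$ the only nonzero entry in row $\sigma$ is the diagonal $1$ (first matrix case), so $(\mat{M}_{j'\to\vec{q}_{j'}}\vec{q}_i)[\sigma]=\vec{q}_i[\sigma]$. In the case $j'\preceq j$ the only nonzero entry in row $\sigma$ sits in column $\sigma_{j'}$ with value $\vec{q}_{j'}[\sigma]$ (second matrix case), so $(\mat{M}_{j'\to\vec{q}_{j'}}\vec{q}_i)[\sigma]=\vec{q}_{j'}[\sigma]\,\vec{q}_i[\sigma_{j'}]$. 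I would remark that these two cases are mutually exclusive and exhaustive, so there is no double counting.

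Next I would assemble the weighted sum over $j'$, splitting it according to whether $j'\preceq j$:
\[
\phi_i(\vec{q}_i)[\sigma]=\sum_{j'\not\preceq j}\vec{\lambda}_i[j']\,\vec{q}_i[\sigma]+\sum_{j'\preceq j}\vec{\lambda}_i[j']\,\vec{q}_{j'}[\sigma]\,\vec{q}_i[\sigma_{j'}].
\]
Here I would invoke that $\vec{\lambda}_i\in\Delta(\mathcal{J}_i)$, which together with the definition $d_\sigma=\sum_{j'\preceq j}\vec{\lambda}_i[j']$ yields $\sum_{j'\not\preceq j}\vec{\lambda}_i[j']=1-d_\sigma$. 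Substituting this and subtracting $\vec{q}_i[\sigma]$ from both sides collapses the first term to $-d_\sigma\,\vec{q}_i[\sigma]$ and produces exactly the claimed identity.

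This argument is essentially bookkeeping, so I do not anticipate a genuine obstacle; the only points requiring care are the faithful translation of the sequence/information-set relation $\sigma\succeq j'$ into the condition $j'\preceq j$ (which drives the case split), and checking that whenever $j'\preceq j$ the entry $\vec{q}_{j'}[\sigma]$ is meaningful, i.e.\ that $\sigma$ is a sequence rooted at $j'$ so that $\vec{q}_{j'}\in\mathcal{Q}_{j'}$ assigns it a value.
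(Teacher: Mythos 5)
Your proposal is correct and follows essentially the same route as the paper's proof: both reduce to the per-transformation case analysis ($\phi_{j'\to\vec{q}_{j'}}(\vec{q}_i)[\sigma]$ equals $\vec{q}_{j'}[\sigma]\,\vec{q}_i[\sigma_{j'}]$ when $j'\preceq j$ and $\vec{q}_i[\sigma]$ otherwise), then split the sum over $j'$, use $\vec{\lambda}_i\in\Delta(\mathcal{J}_i)$ to write the non-triggered mass as $1-d_\sigma$, and rearrange. The only cosmetic difference is that you justify the case analysis by reading off the rows of the matrix $\mat{M}_{j'\to\vec{q}_{j'}}$ from the appendix, whereas the paper invokes the definition of the linear mapping directly.
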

\begin{proof}
By definition of the linear mapping $\tdev[\info'][\q_{\info'}]$, we have that
\begin{align*}
    \phi_i(\q_i)[\sigma] & = \sum_{\info'\in\Info}\lambdav_i[\info']\tdev[\info'][\q_{\info'}](\q_i)[\sigma]
    \\ & = 
    \sum_{\info'\in\Info}\lambdav_i[\info']\begin{cases} \q_{\info'}[\sigma] \q_i[\sigma_{j'}] & \text{if }\sigma\succeq\info'\\ \q_i[\sigma]  & \text{\normalfont otherwise}\end{cases}
    \\ & = 
    \mleft(1-\sum_{\info'\preceq\sigma}\lambdav_i[\info']\mright)\q_i[\sigma] 
    + \sum_{\info'\preceq \sigma}\lambdav_i[\info']\q_{\info'}[\sigma]\q_i[\sigma_{j'}].
\end{align*}
A rearrangement of the last equation completes the proof.
\end{proof}

\closedForm*
\begin{proof}
Consider some arbitrary $\phi_i \in\co\Ph$. The proof is divided into three claims: (i) the vector $\q_i \in\bbR^{|\Seqs|}$ obtained through \Cref{algo:FP-EFCCE} is such that $\q_i \in\Q_i$ (\emph{i.e.}, it is a proper sequence-form strategy); (ii) the sequence-form strategy $\q_i$ obtained through \Cref{algo:FP-EFCCE} is such that $\phi_i(\q_i)=\q_i$; and (iii) \Cref{algo:FP-EFCCE} runs in time $O(|\Sigma_i| \mathfrak{D}_i)$.

\textit{Part 1: $\q_i$ is a sequence-form strategy}. First, by construction \Cref{line:q star init}) we have that $\q_i[\emptyseq]=1$. Thus, we need to show that, for each $\info\in\Info$, it holds that $\sum_{a \in \mathcal{A}_j }\q_i[(j,a)]=\q_i[\sigma_j]$ (recall \Cref{definition:sequence_form}). Indeed, for any $\info\in\Info$ such that $d_{\sigma}=0$, it is immediate to see that the above constraint holds by construction (\Cref{line: set to uniform}). On the other hand, for each $\info\in\Info$ such that $d_{\sigma}\ne 0$, we have that
\begin{align*}
    \sum_{a\in \mathcal{A}_{j}} \q_i[(\info,a)]  &=
    \frac{1}{d_{\sigma}}\mleft(\sum_{a \in \mathcal{A}_{j} }\sum_{\info'\preceq\info}\lambdav_i[\info']\q_{\info'}[(j,a)]\q_i[\sigma_{j'}]\mright)
    \\ &  =
    \frac{1}{d_{\sigma}}\mleft(\sum_{\info'\preceq\info}\lambdav_i[\info']\q_i[\sigma_{j'}]\mleft(\sum_{a \in \mathcal{A}_{j} }\q_{\info'}[(j,a)]\mright)\mright)
    \\ &  =
    \frac{1}{d_{\sigma}}\mleft(\sum_{\info'\preceq\info}\lambdav_i[\info']\q_i[\sigma_{j'}] \cdot\begin{cases} \q_{\info'}[\sigma_j] & \text{if } j' \prec j\\ 1 & \text{otherwise}\end{cases}\mright),
\end{align*}
where the first equality holds by \Cref{eq: fixed point}, and the last equality holds since $\q_{\info'}\in \Q_{\info'}$.
%
Next, we distinguish between two cases: if $d_{\sigma_j}=0$, then $\lambdav_i[\info']=0$ for each $\info'\prec\info$. Therefore, since we are assuming $d_{\sigma}\ne 0$, it must be the case that $d_{\sigma}=\lambdav_i[\info]\ne 0$. This yields that
\begin{align*}
\sum_{a \in \mathcal{A}_j}\q_i[(\info,a)] & =
\frac{1}{d_{\sigma}}\mleft(\sum_{\info'\preceq\info}\lambdav_i[\info']\q_i[\sigma_{j'}]\cdot\begin{cases} \q_{\info'}[\sigma_j] & \text{if } I' \prec I\\ 1 & \text{otherwise}\end{cases}\mright)
\\ & = \frac{1}{\lambdav_i[\info]}\mleft(\lambdav_i[\info]\q_i[\sigma_j]\mright)
 = \q_i[\sigma_j].
\end{align*}
On the other hand, if $d_{\sigma_j} \ne 0$, then $\q_i[\sigma_j]$ was set according to \Cref{eq: fixed point}, and thus, 
\begin{equation}\label{eq:parseq fixed point}
\q_i[\sigma_j] = \frac{1}{d_{\sigma_j}} \mleft(\sum_{\info'\prec \info}\lambdav_i[\info']\q_i[\sigma_{j'}]\q_{\info'}[\sigma_j]\mright).
\end{equation}
By definition of $d_\sigma$ (Line~\ref{line:dsigma}), it holds that $d_{\sigma}=d_{\sigma_j} + \lambdav_i[\info]$. Thus,
\begin{align*}
    \sum_{a \in \mathcal{A}_j }\q_i[(\info,a)]  & 
     =
    \frac{1}{d_{\sigma}}\mleft(\sum_{\info'\preceq\info}\lambdav_i[\info']\q_i[\sigma_{j'}]\cdot\begin{cases} \q_{\info'}[\sigma_j] & \text{if } j' \prec j \\ 1 & \text{otherwise}\end{cases}\mright)
    \\ &  =
    \frac{1}{d_{\sigma_j}+\lambdav_i[\info]}\Bigg(\lambdav_i[\info]\q_i[\sigma_j]
    +\sum_{\info'\prec\info}\lambdav_i[\info']\q_i[\sigma_{j'}]\q_{\info'}[\sigma_j]\Bigg)
    \\ &  =
    \frac{1}{d_{\sigma_j}+\lambdav_i[\info]} \mleft(\lambdav_i[\info]\q_i[\sigma_j] + d_{\sigma_j} \q_i[\sigma_j]\mright)
    =
    \q_i[\sigma_j],
\end{align*}
where the second to last equality is obtained from~\eqref{eq:parseq fixed point}. This concludes the first part of the proof.

\textit{Part 2: $\q_i$ is a fixed point of $\phi_i$.} Fix a sequence $\sigma=(j,a)\in\Seqs$. We want to show that $\phi(\q_i)[\sigma]-\q_i[\sigma]=0$. If $\sum_{\info'\preceq\info}\lambdav_i[\info']=0$, then it immediately follows that $\phi_i(\q_i)[\sigma]=\q_i[\sigma]$. Otherwise, applying \cref{lemma: phi decomposition} and substituting $\q_i[\sigma]$ according to \Cref{eq: fixed point} yields that
\begin{align*}
    \phi_i(\q_i)[\sigma]-\q_i[\sigma] & 
    = 
    \mleft(\sum_{\info'\preceq \info}\lambdav_i[\info']\q_{\info'}[\sigma]\q_i[\sigma_{j'}]\mright)-d_{\sigma}\q_i[\sigma]
    \\ &  =  
    \mleft(\sum_{\info'\preceq \info}\lambdav_i[\info']\q_{\info'}[\sigma]\q_i[\sigma_{j'}]\mright)
    - 
    \frac{d_{\sigma}}{d_{\sigma}}\mleft(\sum_{\info'\preceq \info}\lambdav_i[\info']\q_{\info'}[\sigma]\q_i[\sigma_{j'}]\mright)
    = 0.
\end{align*}
This concludes the second part of the proof.

\textit{Part 3: time complexity.} For each sequence in $\Sigma_i^*$ \cref{algo:FP-EFCCE} has to visit at most $\mathfrak{D}_i$ information sets as part of \Cref{line:dsigma,eq: fixed point}. This completes the proof.
\end{proof}

\stabilityefcce*

\begin{proof}
By assumption, we know that $\vec{\lambda}_i[j] > 0$ for all $j \in \mathcal{J}_i$. Thus, it will always be the case that $d_{\sigma} > 0$, for any $\sigma \in \Sigma_{i}$. Hence, \Cref{algo:FP-EFCCE} will never visit the first ``if'' branch. 

Now fix any $t \geq 2$. We will show by induction that $\vec{q}_i^{(t)}[\sigma]$ is such that $\vec{q}_i^{(t)}[\sigma] \leq (1 + \kappa)^{3 \mathfrak{D}_i[\sigma] - 2} \vec{q}_i^{(t-1)}[\sigma]$ and $\vec{q}_i^{(t-1)}[\sigma] \leq (1 + \kappa)^{3 \mathfrak{D}_{i}[\sigma] - 2} \vec{q}_i^{(t)}[\sigma]$, where $\mathfrak{D}_{i}[\sigma] \geq 1$ is the depth of sequence $\sigma \in \Sigma^*_{i}$ with respect to $i$'s subtree. For the base case, let $\sigma = (j, a)$ be a sequence such that $j \in \mathcal{J}_{i}$ corresponds to a root information set of player $i$. Then, it follows from \Cref{algo:FP-EFCCE} that $d_{\sigma} = \vec{\lambda}_i[j]$, in turn implying that $\vec{q}_i^{(t)}[\sigma] = \vec{q}^{(t)}_j[\sigma]$. Thus, $\vec{q}_i^{(t)}[\sigma]$ is indeed $\kappa$-multiplicative-stable. Next, consider some sequence $\sigma = (j, a)$ at depth $\mathfrak{D}_{i}[\sigma] \geq 2$ such that all ancestor sequences---\emph{i.e.} all $\sigma_{j'}$ for $j' \preceq j$---satisfy the inductive hypothesis. Then, we have that
\begin{align}
    \vec{q}_i^{(t)}[\sigma] &= \frac{\sum_{j' \preceq j} \vec{\lambda}_i^{(t)}[j'] \vec{q}_{j'}^{(t)}[\sigma] \vec{q}_i^{(t)}[\sigma_{j'}]}{\sum_{j' \preceq j}\vec{\lambda}_i^{(t)}[j']} \label{eq:EFCCE-algo} \\
    &\leq (1 + \kappa)^3 \frac{\sum_{j' \preceq j} \vec{\lambda}_i^{(t-1)}[j'] \vec{q}_{j'}^{(t-1)}[\sigma] \vec{q}_i^{(t)}[\sigma_{j'}]}{\sum_{j' \preceq j}\vec{\lambda}_i^{(t-1)}[j']} \label{eq:EFCCE-stable} \\
    &\leq (1 + \kappa)^3 (1 + \kappa)^{3 \mathfrak{D}_{i}[\sigma] - 5} \vec{q}_i^{(t-1)}[\sigma] \label{eq:EFCCE-induction} \\
    &= (1 + \kappa)^{3 \mathfrak{D}_{i}[\sigma] - 2} \vec{q}_i^{(t-1)} [\sigma] \notag,
\end{align}
where \eqref{eq:EFCCE-algo} derives from the formula of \Cref{algo:FP-EFCCE}; \eqref{eq:EFCCE-stable} uses the $\kappa$-multiplicative-stability of the sequences $(\vec{\lambda}_i^{(t)})$ and $(\vec{q}_{j}^{(t)})$, for any $j \in \mathcal{J}_i$; and \eqref{eq:EFCCE-induction} leverages the inductive hypothesis. Similar reasoning yields:
\begin{align*}
    \vec{q}_i^{(t)}[\sigma] &= \frac{\sum_{j' \preceq j} \vec{\lambda}_i^{(t)}[j'] \vec{q}_{j'}^{(t)}[\sigma] \vec{q}_i^{(t)}[\sigma_{j'}]}{\sum_{j' \preceq j}\vec{\lambda}_i^{(t)}[j']} \\
    &\geq \frac{1}{(1 + \kappa)^3} \frac{\sum_{j' \preceq j} \vec{\lambda}_i^{(t-1)}[j'] \vec{q}_{j'}^{(t-1)}[\sigma] \vec{q}_i^{ (t)}[\sigma_{j'}]}{\sum_{j' \preceq j}\vec{\lambda}_i^{(t-1)}[j']} \\
    &\geq \frac{1}{(1 + \kappa)^{3}} \frac{1}{(1 + \kappa)^{3 \mathfrak{D}_i[\sigma] - 5}} \vec{q}_i^{(t-1)}[\sigma] \\
    &\geq \frac{1}{(1 + \kappa)^{3 \mathfrak{D}_i[\sigma] - 2}} \vec{q}_i^{ (t-1)}[\sigma].
\end{align*}
Thus, if $\mathfrak{D}_{i}$ is the depth of $\mathcal{J}_i$, we conclude that $\vec{q}_i^{(t)}[\sigma] \leq (1 + \kappa)^{3\mathfrak{D}_{i} - 2} \vec{q}_i^{(t-1)}[\sigma] \leq e^{3 \mathfrak{D}_{i} \kappa - 2 \kappa} \vec{q}_i^{(t-1)}[\sigma] \leq (1 + 6 \mathfrak{D}_{i} \kappa) \vec{q}_i^{(t-1)}[\sigma]$, where we used the inequalities $1 + x \leq e^x$ for all $x \in \R$, and $e^x \leq 1 + 2x$ for $x \in [0, 1/2]$, applicable as long as $\kappa \leq 1/(12\mathfrak{D}_i)$. Similarly, we obtain that $\vec{q}_i^{(t)} \geq (1 + 12 \mathfrak{D}_i \kappa)^{-1} \vec{q}_i^{(t-1)}$, concluding the proof.
\end{proof}

\section{Sequential Decision Making and Stable-Predictive \texorpdfstring{$\cfr$}{}}
\label{appendix:opt-cfr}

The main purpose of this section is to provide a stable-predictive variant of $\cfr$ following the construction in~\citep{Farina19:Stable}. The main result is given in \Cref{theorem:opt-cfr}. We begin by introducing the basic setting of \emph{sequential decision making}.

A sequential decision process can be represented using a tree consisting of two types of nodes: \emph{decision nodes} and \emph{observation nodes}. The set of all decision nodes will be denoted by $\mathcal{J}$, while the set of observation nodes by $\mathcal{K}$. At every decision node $j \in \mathcal{J}$ the agent has to select a strategy $\vec{x}_j$ in the form of a probability distribution over all possible actions $\mathcal{A}_j$. On the other hand, at every observation point $k \in \mathcal{K}$ the agent may receive a feedback in the form of a signal in the set $\mathcal{S}_k$. At every decision point $j \in \mathcal{J}$ of the sequential decision process, the strategy $\vec{x}_j \in \Delta(\mathcal{A}_j)$ secures a utility of the form $\langle \vec{\ell}_j, \vec{x}_j \rangle$, for some utility vector $\vec{\ell}_j$. The expected utility throughout the entire decision process can be expressed as $\sum_{j \in \mathcal{J}} \pi_j \langle \vec{\ell}_j, \vec{x}_j \rangle$, where $\pi_j$ is the probability that the agent reaches decision point $j$. It is important to point out that in all extensive-form games of perfect recall the agents face a sequential decision process. A central ingredient for our construction of stable-predictive CFR is a decomposition of the strategy space, described in detail below.

\paragraph{Decomposition of the Sequence-Form Strategy Space} Our construction will rely on a recursive decomposition of the sequence-form strategy space $\mathcal{X}^{\triangle}$:

\begin{itemize}
    \item Consider an observation node $k \in \mathcal{K}$, and let $\mathcal{C}_k$ be the children decision points of $k$. Then, $\mathcal{X}_k^{\triangle}$ can be decomposed as the following Cartesian product:
    \begin{equation}
        \label{eq:observation-decomposition}
        \mathcal{X}_k^{\triangle} \defeq \bigtimes_{j \in \mathcal{C}_k} \mathcal{X}_j^{\triangle};
    \end{equation}
    \item Consider a decision point $j \in \mathcal{J}$, and let $\mathcal{C}_j = \{k_1, \dots, k_{m_j} \}$ be the children observation points of $j$, with $m_j = |\mathcal{A}_j|$. Then, $\mathcal{X}_j^{\triangle}$ can be decomposed as follows:
    \begin{equation}
        \label{eq:decision-decomposition}
        \mathcal{X}^{\triangle}_j \defeq \left\{ 
        \begin{pmatrix}
        \vec{\lambda}[1] \\
        \vdots \\
        \vec{\lambda}[m_j] \\
        \vec{\lambda}[1] \vec{x}_{1} \\
        \vdots \\
        \vec{\lambda}[m_j] \vec{x}_{m_j}
        \end{pmatrix}
        : (\vec{\lambda}[1], \dots, \vec{\lambda}[m_j]) \in \Delta^{m_j}, \vec{x}_{1} \in \mathcal{X}_{k_1}^{\triangle}, \dots, \vec{x}_{{m_j}} \in \mathcal{X}^{\triangle}_{k_{m_{j}}}
        \right\}.
    \end{equation}
\end{itemize}
In view of this decomposition, the basic ingredients for the overall construction are given in \Cref{proposition:observation} and \Cref{proposition:decision}. We should note that in the sequel the stability and the predictive bounds will be tacitly assumed with respect to the pair of norms $(\| \cdot \|_1, \| \cdot \|_{\infty})$.

\begin{proposition}
    \label{proposition:observation}
    Consider an observation node $k \in \mathcal{K}$, and assume access to a $\kappa_j$-multiplicative-stable $(\alpha_j, \beta_j)$-predictive regret minimizer $\mathcal{R}_j^{\triangle}$ over the sequence-form strategy space $\mathcal{X}_j^{\triangle}$, for each $j \in \mathcal{C}_k$. Then, we can construct a $\max_j\{\kappa_j\}$-multiplicative-stable $(A, B)$-predictive regret minimizer $\mathcal{R}_k^{\triangle}$ for the sequence-form strategy space $\mathcal{X}_k^{\triangle}$, where $A = \sum_{j \in \mathcal{C}_k} \alpha_j$ and $B = \sum_{j \in \mathcal{C}_k} \beta_j$.
\end{proposition}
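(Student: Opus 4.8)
The plan is to exploit the fact that the observation-node space $\mathcal{X}_k^{\triangle}$ is a \emph{pure} Cartesian product $\bigtimes_{j \in \mathcal{C}_k} \mathcal{X}_j^{\triangle}$ (recall \eqref{eq:observation-decomposition}), with no coupling or scaling between the blocks. Accordingly, I would define $\mathcal{R}_k^{\triangle}$ to run the child minimizers $\mathcal{R}_j^{\triangle}$ in parallel: on $\nextstr$, each $\mathcal{R}_j^{\triangle}$ returns $\vec{x}_j^{(t)}$ and $\mathcal{R}_k^{\triangle}$ outputs the concatenation $\vec{x}_k^{(t)} = (\vec{x}_j^{(t)})_{j \in \mathcal{C}_k}$; on $\obsut$, the received utility vector $\vec{\ell}_k^{(t)}$ splits blockwise as $(\vec{\ell}_j^{(t)})_{j \in \mathcal{C}_k}$, and the block $\vec{\ell}_j^{(t)}$ is forwarded to $\mathcal{R}_j^{\triangle}$ (and likewise for the predictions).

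For the regret, I would first observe that since the comparator set factorizes, the external regret of $\mathcal{R}_k^{\triangle}$ decomposes exactly as $\reg_k^T = \sum_{j \in \mathcal{C}_k} \reg_j^T$: the linear objective is additive across blocks, $\langle \vec{\ell}_k^{(t)}, \vec{x}\rangle = \sum_j \langle \vec{\ell}_j^{(t)}, \vec{x}_j\rangle$, and the maximization over $\vec{x}^* \in \mathcal{X}_k^{\triangle}$ decouples into independent maximizations over the blocks. Plugging in each child's $(\alpha_j, \beta_j)$-predictive bound then gives $\reg_k^T \leq \sum_j \alpha_j + \sum_j \beta_j \sum_t \|\vec{\ell}_j^{(t)} - \vec{m}_j^{(t)}\|_\infty^2$. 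The one point needing care is converting the per-block variation terms into the variation of the full observation-node utility. Here the choice of the $\ell_\infty$ norm is what makes the bound clean: each block $\vec{\ell}_j^{(t)} - \vec{m}_j^{(t)}$ is a subvector of $\vec{\ell}_k^{(t)} - \vec{m}_k^{(t)}$, so $\|\vec{\ell}_j^{(t)} - \vec{m}_j^{(t)}\|_\infty \leq \|\vec{\ell}_k^{(t)} - \vec{m}_k^{(t)}\|_\infty$, whence $\sum_j \beta_j \|\vec{\ell}_j^{(t)} - \vec{m}_j^{(t)}\|_\infty^2 \leq (\sum_j \beta_j)\|\vec{\ell}_k^{(t)} - \vec{m}_k^{(t)}\|_\infty^2$. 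This yields exactly $A = \sum_j \alpha_j$ and $B = \sum_j \beta_j$, with no spurious $|\mathcal{C}_k|$ factor.

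Finally, multiplicative stability is immediate from the blockwise structure: every coordinate of $\vec{x}_k^{(t)}$ lies in exactly one block $j$, where it obeys the $\kappa_j$-multiplicative-stable inequalities by assumption, and taking the worst case over blocks gives $\max_j\{\kappa_j\}$-multiplicative stability for the concatenated iterate. I do not anticipate a genuine obstacle here, as the result is essentially a Cartesian-product regret circuit; the only subtlety worth flagging is precisely the norm bookkeeping above, which relies on working with $\ell_\infty$ (under $\ell_2$ one would instead invoke exact additivity $\sum_j \|\cdot\|_2^2 = \|\cdot\|_2^2$, but it is the $\ell_\infty$ subvector domination that the subsequent decision-node construction in \Cref{proposition:decision} will need).
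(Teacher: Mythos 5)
Your proposal is correct and follows essentially the same route as the paper's own proof: a Cartesian-product regret circuit with exact regret decomposition $\reg_k^{\triangle,T} = \sum_{j \in \mathcal{C}_k} \reg_j^{\triangle,T}$, followed by the $\ell_\infty$ subvector-domination bound $\|\vec{\ell}_j^{\triangle,(t)} - \vec{\ell}_j^{\triangle,(t-1)}\|_\infty \leq \|\vec{\ell}_k^{\triangle,(t)} - \vec{\ell}_k^{\triangle,(t-1)}\|_\infty$ to absorb the per-block variation terms, and blockwise inheritance of multiplicative stability. Your explicit remark that the $\ell_\infty$ norm is what avoids any spurious $|\mathcal{C}_k|$ factor is exactly the point the paper relies on (if more tersely stated there).
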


\begin{proof}
Given the decomposition of \eqref{eq:observation-decomposition}, the composite regret minimizer can be constructed using a regret circuit for the Cartesian product~\citep{Farina19:Regret}. In particular, it is direct to verify that $\reg_k^{\triangle, T} = \sum_{j \in \mathcal{C}_k} \reg_j^{\triangle, T}$, where $\reg_k^{\triangle, T}$ is the regret accumulated by the composite regret minimizer, and $\reg_j^{\triangle, T}$ the regret of each individual regret minimizer $\mathcal{R}_j^{\triangle}$. In particular, by assumption we know that 
\begin{equation*}
    \reg_j^{\triangle, T} \leq \alpha_j + \beta_j \sum_{t=1}^T \| \vec{\ell}_j^{\triangle, (t)} - \vec{\ell}_j^{\triangle, (t-1)}\|^2_{\infty}.
\end{equation*}
As a result, we can conclude that
\begin{equation*}
    \reg_k^{\triangle, T} \leq \left( \sum_{j \in \mathcal{C}_k} \alpha_j \right) + \left( \sum_{j \in \mathcal{C}_k} \beta_j \right) \sum_{t=1}^T \|\vec{\ell}_k^{\triangle, (t)} - \vec{\ell}_k^{\triangle, (t-1)}\|^2_{\infty},
\end{equation*}
where we used that $\| \vec{\ell}_j^{\triangle, (t)} - \vec{\ell}_j^{\triangle, (t-1)}\|_{\infty} \leq \| \vec{\ell}_k^{\triangle, (t)} - \vec{\ell}_k^{\triangle, (t-1)}\|_{\infty}$. Finally, the $\max_j \{\kappa_j\}$-multiplicative-stability of $\mathcal{R}_k^{\triangle}$ follows directly from the $\kappa_j$-multiplicative-stability of each $\mathcal{R}_j^{\triangle}$.
\end{proof}

In the following construction the regret circuit for the convex hull uses an advanced prediction mechanism, analogously to that we explained in \Cref{remark:better_prediction}. 

\begin{proposition}
    \label{proposition:decision}
    Consider a decision node $j \in \mathcal{J}$, and assume access to a $K$-multiplicative-stable $(\alpha_k, \beta_k)$-predictive regret minimizer $\mathcal{R}_k^{\triangle}$ over the sequence-form strategy space $\mathcal{X}_k^{\triangle}$, for each $k \in \mathcal{C}_j$. Moreover, assume access to a $\kappa$-multiplicative-stable $(\alpha, \beta)$-predictive regret minimizer $\mathcal{R}_{\Delta}$ over the simplex $\Delta(\mathcal{A}_j)$. Then, we can construct a $(\kappa + \kappa K + K)$-multiplicative-stable $(A, B)$-predictive regret minimizer $\mathcal{R}_j^{\triangle}$ for the sequence-form strategy space $\mathcal{X}_j^{\triangle}$, where 
    \begin{equation*}
        \begin{split}
        A &= \alpha + \max_{k \in \mathcal{C}_j}\{\alpha_k\}; \\ 
        B &= \max_{k \in \mathcal{C}_j} \{ \beta_k\} + \beta \| \mathcal{Q}\|^2_1,
        \end{split}
    \end{equation*}
    where $\|\mathcal{Q}\|_1$ an upper bound on the $\ell_1$ norm of all $\vec{x} \in \mathcal{X}^{\triangle}$.
\end{proposition}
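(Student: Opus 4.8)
The plan is to realize $\mathcal{R}_j^{\triangle}$ as the regret circuit attached to the decision-point decomposition \eqref{eq:decision-decomposition}, combining the simplex minimizer $\mathcal{R}_{\Delta}$ with the continuation minimizers $\{\mathcal{R}_k^{\triangle}\}_{k\in\mathcal{C}_j}$ in the style of CFR. Concretely, on $\nextstr$ I would query $\vec{\lambda}^{(t)} = \mathcal{R}_{\Delta}.\nextstr()$ together with $\vec{x}_k^{(t)} = \mathcal{R}_k^{\triangle}.\nextstr()$ for each $k\in\mathcal{C}_j$, and return the vector $\vec{y}^{(t)}$ whose $\vec{\lambda}$-block is $\vec{\lambda}^{(t)}$ and whose $k$-th continuation block is $\vec{\lambda}^{(t)}[k]\,\vec{x}_k^{(t)}$, matching \eqref{eq:decision-decomposition}. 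On $\obsut$, writing the observed utility according to the same block structure as $\vec{\ell}^{\triangle,(t)} = (\vec{g}^{(t)}, \{\vec{\ell}_k^{(t)}\}_{k\in\mathcal{C}_j})$, I would forward each sub-vector $\vec{\ell}_k^{(t)}$ directly to $\mathcal{R}_k^{\triangle}$, and feed $\mathcal{R}_{\Delta}$ the aggregated (counterfactual) simplex utility $\hat{\vec{\ell}}^{(t)}$ with $\hat{\vec{\ell}}^{(t)}[k] \defeq \vec{g}^{(t)}[k] + \langle \vec{\ell}_k^{(t)}, \vec{x}_k^{(t)}\rangle$. The identity $\langle \vec{\ell}^{\triangle,(t)}, \vec{y}^{(t)}\rangle = \langle \hat{\vec{\ell}}^{(t)}, \vec{\lambda}^{(t)}\rangle$ then makes the bookkeeping transparent.

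For the regret bound, I would fix a comparator $\vec{y}^{\star}$ parametrized by $(\vec{\lambda}^{\star}, \{\vec{x}_k^{\star}\})$ and split $\sum_t \langle \vec{\ell}^{\triangle,(t)}, \vec{y}^{\star}\rangle$ into the $\vec{g}$-part and the continuation parts $\sum_t \vec{\lambda}^{\star}[k]\langle \vec{\ell}_k^{(t)}, \vec{x}_k^{\star}\rangle$. Bounding each continuation term by $\mathcal{R}_k^{\triangle}$'s regret and using $\vec{\lambda}^{\star}\in\Delta(\mathcal{A}_j)$ yields the laminar decomposition $\reg_j^{\triangle,T} \le \reg_{\Delta}^T + \max_{k\in\mathcal{C}_j}\reg_k^{\triangle,T}$, where the maximum (rather than a sum) appears precisely because $\sum_k \vec{\lambda}^{\star}[k]\reg_k^{\triangle,T} \le \max_k \reg_k^{\triangle,T}$. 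Plugging in the continuation predictive bounds and using that each $\vec{\ell}_k^{(t)}$ is a sub-vector of $\vec{\ell}^{\triangle,(t)}$, so $\|\vec{\ell}_k^{(t)}-\vec{\ell}_k^{(t-1)}\|_{\infty} \le \|\vec{\ell}^{\triangle,(t)}-\vec{\ell}^{\triangle,(t-1)}\|_{\infty}$, produces the $\alpha + \max_k\alpha_k$ constant and the $\max_k\beta_k$ coefficient immediately.

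The main obstacle is controlling the simplex term $\beta\sum_t\|\hat{\vec{\ell}}^{(t)}-\hat{\vec{m}}^{(t)}\|_{\infty}^2$ so that it depends only on the \emph{loss} variation and not on the \emph{iterate} variation. This is exactly where the advanced prediction of \Cref{remark:better_prediction} is essential: I would set $\hat{\vec{m}}^{(t)}[k] \defeq \vec{g}^{(t-1)}[k] + \langle \vec{\ell}_k^{(t-1)}, \vec{x}_k^{(t)}\rangle$, using the already-available next iterate $\vec{x}_k^{(t)}$ (legitimate, since $\mathcal{R}_{\Delta}$ needs only $\vec{x}_k^{(t)}$ before observing its own utility). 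Then $\hat{\vec{\ell}}^{(t)}[k]-\hat{\vec{m}}^{(t)}[k] = (\vec{g}^{(t)}[k]-\vec{g}^{(t-1)}[k]) + \langle \vec{\ell}_k^{(t)}-\vec{\ell}_k^{(t-1)}, \vec{x}_k^{(t)}\rangle$, and Hölder's inequality with $\|\vec{x}_k^{(t)}\|_1 \le \|\mathcal{Q}\|_1$ bounds each coordinate by $\|\mathcal{Q}\|_1\,\|\vec{\ell}^{\triangle,(t)}-\vec{\ell}^{\triangle,(t-1)}\|_{\infty}$ (the scalar $\vec{g}$-contribution being lower order), yielding the coefficient $\beta\|\mathcal{Q}\|_1^2$. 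The naive one-recency prediction $\langle \vec{\ell}_k^{(t-1)}, \vec{x}_k^{(t-1)}\rangle$ would instead leave a residual $\langle \vec{\ell}_k^{(t-1)}, \vec{x}_k^{(t)}-\vec{x}_k^{(t-1)}\rangle$ that cannot be absorbed into loss variation and would destroy the $(A,B)$-predictive form. Finally, the multiplicative-stability claim follows by multiplying factors: $\vec{\lambda}^{(t)}$ is $\kappa$-multiplicative-stable and each $\vec{x}_k^{(t)}$ is $K$-multiplicative-stable, so each continuation coordinate obeys $\vec{\lambda}^{(t)}[k]\vec{x}_k^{(t)} \le (1+\kappa)(1+K)\,\vec{\lambda}^{(t-1)}[k]\vec{x}_k^{(t-1)}$, i.e. it is $(\kappa+K+\kappa K)$-multiplicative-stable (and symmetrically from below); since the $\vec{\lambda}$-block is $\kappa \le \kappa+\kappa K+K$ stable, the full $\vec{y}^{(t)}$ is $(\kappa+\kappa K+K)$-multiplicative-stable, as claimed.
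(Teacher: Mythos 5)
Your proposal is correct and follows essentially the same route as the paper: the convex-hull regret circuit over the lifted decomposition \eqref{eq:decision-decomposition}, yielding $\reg_j^{\triangle,T} \leq \reg_{\Delta}^T + \max_{k}\reg_k^{\triangle,T}$, followed by H\"older with $\|\vec{x}\|_1 \leq \|\mathcal{Q}\|_1$ and the advanced prediction $\hat{\vec{m}}^{(t)}[k]$ built from the \emph{current} iterate $\vec{x}_k^{(t)}$ (the mechanism of \Cref{remark:better_prediction}), which is exactly how the paper eliminates the iterate-variation term. The only differences are presentational: you re-derive the circuit bound and spell out the $(1+\kappa)(1+K)$ stability computation, which the paper delegates to \Cref{proposition:regret_circuit-co} and to a "directly verified" remark.
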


\begin{proof}
For this construction we will use the regret circuit for the convex hull, stated in \Cref{proposition:regret_circuit-co}. First, we have that, by assumption, the regret $\reg_k^{\triangle, T}$ accumulated by each regret minimizer $\mathcal{R}_k^{\triangle}$ can be bounded as
\begin{equation*}
    \reg_k^{\triangle, T} \leq \alpha_k + \beta_k \sum_{t=1}^T \|\vec{\ell}_k^{\triangle, (t)} -\vec{\ell}_k^{\triangle, (t-1)} \|^2_{\infty}.
\end{equation*}
Moreover, by construction, each regret minimizer $\mathcal{R}_k^{\triangle}$ receives the same utility as $\mathcal{R}_j^{\triangle}$; this, along with the guarantee of \Cref{proposition:regret_circuit-co}, imply that 
\begin{equation}
    \label{eq:decision_node-1}
    \reg_j^{\triangle, T} \leq \alpha + \max_{k \in \mathcal{C}_j} \{ \alpha_k \} + \max_{k \in \mathcal{C}_j} \{ \beta_k \} \sum_{t=1}^T \|\vec{\ell}_j^{\triangle, (t)} - \vec{\ell}_j^{\triangle, (t-1)} \|^2_{\infty} + \beta \sum_{t=1}^T \| \vec{\ell}_\lambda^{(t)} - \vec{\ell}_\lambda^{(t-1)}\|^2_{\infty},
\end{equation}
where $\vec{\ell}_\lambda^{(t)}$ represents the utility function received as input by $\mathcal{R}_{\Delta}$. Next, similarly to the analysis of \Cref{proposition:circ-pred}, we can infer that for some $k \in \mathcal{C}_j$,
\begin{align*}
    \|\vec{\ell}_\lambda^{(t)} - \vec{\ell}_\lambda^{(t-1)}\|_{\infty} = | \langle \vec{\ell}_j^{\triangle, (t)} - \vec{\ell}_j^{\triangle,(t-1)}, \vec{x}_k^{(t)} \rangle | \leq \| \vec{\ell}_j^{\triangle, (t)} - \vec{\ell}_j^{\triangle, (t-1)} \|_\infty \| \vec{x}_k^{(t)} \|_1 \leq \| \vec{\ell}_j^{\triangle, (t)} - \vec{\ell}_j^{\triangle, (t-1)} \|_\infty \| \mathcal{Q}\|_1
\end{align*}
where we used that $\| \vec{x}_k^{(t)} \|_1 \leq \|\mathcal{Q}\|_1 $. As a result, if we plug-in this bound to \eqref{eq:decision_node-1} we can conclude that 
\begin{equation*}
    \reg_j^{\triangle, T} \leq \left( \alpha + \max_{k \in \mathcal{C}_j}\{\alpha_k\} \right) + \left( \max_{k \in \mathcal{C}_j} \{ \beta_k\} + \beta \|\mathcal{Q}\|_1^2 \right) \sum_{t=1}^T \| \vec{\ell}_j^{\triangle, (t)} - \vec{\ell}_j^{\triangle, (t-1)}\|^2_{\infty}.
\end{equation*}
Finally, the $(\kappa + \kappa K + K)$-multiplicative-stability of $\mathcal{R}_j^{\triangle}$ can be directly verified from the decomposition given in \eqref{eq:decision-decomposition}.
\end{proof}

\begin{remark}
Given the decomposition provided in \Cref{eq:decision-decomposition}, the regret circuit for the convex hull should operate on the appropriate ``lifted'' space, but this does not essentially alter the analysis of the regret since the augmented entries in the lifted space remain invariant; this is illustrated and further explained in~\citep[Figure 7]{Farina19:Regret}.
\end{remark}

Finally, we inductively combine \Cref{proposition:observation} and \Cref{proposition:decision} in order to establish the main result of this section: a stable-predictive variant of $\cfr$.

\begin{theorem}[Optimistic $\cfr$]
    \label{theorem:opt-cfr}
    If every local regret minimizer $\mathcal{R}_j^{\triangle}$ is updated using $\omw$ with a sufficiently small learning rate $\eta$, for each $j \in \mathcal{J}$, we can construct an $(A, B)$-predictive regret minimizer $\mathcal{R}^{\triangle}$ for the space of sequence-form strategies $\mathcal{X}^{\triangle}$, such that
    \begin{equation}
        \label{eq:predictivity}
        \begin{split}
            A &= O\left( \frac{\log |\mathcal{A}|}{\eta} \|\mathcal{Q}\|_1 \right) ; \\
            B &= O( \eta \|\mathcal{Q}\|^3_1),
        \end{split}
    \end{equation}
    where $|\mathcal{A}| \defeq \max_{j \in \mathcal{J}} |\mathcal{A}_j|$; $\|\vec{\ell}\|_\infty$ is an upper bound on the $\ell_{\infty}$ norm of the utilities observed by $\mathcal{R}^{\triangle}$; $\|\mathcal{Q}\|_1$ is an upper bound on the $\ell_1$ norm of any $\vec{x} \in \mathcal{X}^{\triangle}$; and $\mathfrak{D}$ is the depth of the decision process. Moreover, the sequence of strategies produced by $\mathcal{R}^{\triangle}$ is $O(\eta \mathfrak{D} \|\mathcal{Q}\|_1 \|\vec{\ell}\|_\infty)$-multiplicative-stable.
\end{theorem}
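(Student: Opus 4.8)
The plan is to induct over the recursive decomposition of $\mathcal{X}^{\triangle}$ given in \eqref{eq:observation-decomposition} and \eqref{eq:decision-decomposition}, applying \Cref{proposition:observation} at observation nodes and \Cref{proposition:decision} at decision nodes, with a single $\omw$ instance over each local simplex $\Delta(\mathcal{A}_j)$ as the base building block. The base case is a leaf decision point, where $\mathcal{X}^{\triangle}_j$ collapses to $\Delta(\mathcal{A}_j)$: here \Cref{lemma:oftrl-predictive} applied to the entropic regularizer (whose range is $\log|\mathcal{A}_j|\le\log|\mathcal{A}|$) shows that $\omw$ is $(\log|\mathcal{A}|/\eta,\eta)$-predictive, while \Cref{lemma:OMW-simplex-stability} shows it is $12\eta L$-multiplicative-stable, where $L$ is the $\ell_\infty$ bound on the utilities fed to that simplex (bounded by $\|\vec{\ell}\|_\infty$ for a direct utility).

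First I would track the predictive constants $A$ and $B$. By \Cref{proposition:observation} an observation node sums the constants of its children, whereas by \Cref{proposition:decision} a decision node adds $\alpha=\log|\mathcal{A}|/\eta$ (resp.\ $\beta\|\mathcal{Q}\|_1^2=\eta\|\mathcal{Q}\|_1^2$) to the \emph{maximum} of its children's constants. Unrolling this max-over-actions / sum-over-signals recursion shows that the root constant equals the per-decision-point cost times the largest number of decision points visited along any single deterministic play, and this count is bounded by $\max_{\vec{x}\in\mathcal{X}^{\triangle}}\|\vec{x}\|_1=\|\mathcal{Q}\|_1$. Hence $A=O\mleft((\log|\mathcal{A}|/\eta)\|\mathcal{Q}\|_1\mright)$ and $B=O\mleft(\eta\|\mathcal{Q}\|_1^2\cdot\|\mathcal{Q}\|_1\mright)=O(\eta\|\mathcal{Q}\|_1^3)$, matching \eqref{eq:predictivity}.

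Next I would propagate the multiplicative-stability parameter. The only amplification occurs inside \Cref{proposition:decision}: there the simplex minimizer $\mathcal{R}_{\Delta}$ observes the lifted utility $\vec{\ell}_\lambda$ with $\|\vec{\ell}_\lambda\|_\infty\le\|\mathcal{Q}\|_1\|\vec{\ell}\|_\infty$, so by \Cref{lemma:OMW-simplex-stability} its fresh stability contribution is uniformly $\kappa_\star:=12\eta\|\mathcal{Q}\|_1\|\vec{\ell}\|_\infty$. Observation nodes preserve the maximum of their children's parameters (no degradation), while each decision node maps a children-stability $K$ to $\kappa_\star+\kappa_\star K+K=\kappa_\star+(1+\kappa_\star)K$. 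Since only decision nodes contribute, iterating this recursion over the at most $\mathfrak{D}$ decision levels on any root-to-leaf path gives a root stability of $(1+\kappa_\star)^{\mathfrak{D}}-1$.

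The main obstacle is controlling this geometric blow-up: $(1+\kappa_\star)^{\mathfrak{D}}-1$ is only $O(\kappa_\star\mathfrak{D})$ once $\kappa_\star\mathfrak{D}=O(1)$, which forces the ``sufficiently small'' choice $\eta=O\mleft(1/(\mathfrak{D}\|\mathcal{Q}\|_1\|\vec{\ell}\|_\infty)\mright)$; under this choice $(1+\kappa_\star)^{\mathfrak{D}}\le e^{\kappa_\star\mathfrak{D}}\le 1+2\kappa_\star\mathfrak{D}$, yielding the claimed $O(\eta\mathfrak{D}\|\mathcal{Q}\|_1\|\vec{\ell}\|_\infty)$-multiplicative-stability. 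The remaining work is bookkeeping: confirming that at an observation node the utility each child sees is a subvector of the parent's, so the $\|\vec{\ell}^{\triangle,(t)}-\vec{\ell}^{\triangle,(t-1)}\|_\infty$ terms bound correctly (as already used in \Cref{proposition:observation,proposition:decision}), and that the ``lifted'' coordinates introduced in \eqref{eq:decision-decomposition} remain invariant and therefore do not perturb the convex-hull regret circuit, exactly as noted in the remark following \Cref{proposition:decision}.
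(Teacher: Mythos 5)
Your proposal is correct and takes essentially the same route as the paper's own proof: instantiate OMWU on each local simplex, invoke \Cref{lemma:oftrl-predictive} for the predictive constants and \Cref{lemma:OMW-simplex-stability} for multiplicative stability (using the $O(\|\mathcal{Q}\|_1\|\vec{\ell}\|_\infty)$ bound on the lifted counterfactual utilities), and propagate both properties to the root via \Cref{proposition:observation,proposition:decision}. The only difference is granularity: you unroll the sum/max recursion for $(A,B)$ (bounding it by $\|\mathcal{Q}\|_1$ times the per-decision-point cost) and the geometric stability recursion $(1+\kappa_\star)^{\mathfrak{D}}-1$ explicitly, steps the paper compresses into ``follows inductively.''
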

\begin{proof}
First of all, it is easy to see that all losses observed by the ``local'' regret minimizers---\emph{i.e.}, the \emph{counterfactual losses}~\citep[Section 4]{Farina19:Stable}---have $\ell_\infty$ bounded by $O(\|\mathcal{Q}\|_1 \| \vec{\ell}\|_\infty)$. As a result, we can conclude from \Cref{lemma:OMW-simplex-stability} that the output of each local regret minimizer $\mathcal{R}_j^{\triangle}$ under $\omw$ with a sufficiently small learning rate $\eta$ is $O(\eta \|\mathcal{Q}\|_1 \|\vec{\ell}\|_\infty )$-multiplicative-stable. Along with \Cref{proposition:decision}, we can inductively infer that the output of $\mathcal{R}^{\triangle}$ is $O(\eta \mathfrak{D} \| \mathcal{Q} \|_1 \|\vec{\ell}\|_\infty) $-multiplicative-stable, for a sufficiently small $\eta = O(1/(\mathfrak{D} \| \mathcal{Q} \|_1 \|\vec{\ell}\|_\infty))$. This established the claimed bound for the multiplicative stability. 

For the predictive bound, first recall that the range of the entropic regularizer on the $m$-dimensional simplex is $\log m$. Thus, by \Cref{lemma:oftrl-predictive} we know that each local regret minimizer at information set $j \in \mathcal{J}$ instantiated with $\omw$ with learning rate $\eta$ will be $(\log (|\mathcal{A}_j|/\eta, \eta)$-predictive. As a result, the predictive bound in \eqref{eq:predictivity} follows inductively from \Cref{proposition:decision}.
\end{proof}

Naturally, the same bounds apply for constructing a regret minimizer for the subspace $\mathcal{X}_j^{\triangle}$, for any decision point $j \in \mathcal{J}$, as required in \Cref{proposition:R_sigma}.

\section{Description of Game Instances used in the Experiments}
\label{app:games}

In this section we give a description of the game instances used in our experiments. The parameters associated with each game are summarized in \Cref{table:parameters}.

\paragraph{Kuhn poker} First, we experimented on a \emph{three-player} variant of the popular benchmark game known as \emph{Kuhn poker}~\citep{Kuhn50:Simplified}. In our version, a deck of three cards---a Jack, a Queen, and a King---is employed. Players initially commit a single chip to the pot, and privately receive a single card.
The first player can either {\em check} or {\em bet} (\emph{i.e.} place an extra chip). Then, the second player can in turn check or bet if the first player checked, or {\em folded/called} in response to the first player's bet. If no betting occurred in the previous rounds, the third player can either check or bet. In the contrary case, the player can either fold or call. Following a bet of the second player (or respectively the third player), the first player (or respectively the first and the second players) has to decide whether to fold or to call. At the \emph{showdown}, the player with the \emph{highest} card---who has not folded in a previous round---gets to win all the chips committed in the pot.

\paragraph{Sheriff} Our second benchmark is a bargaining game inspired by the board game \emph{Sheriff of Nottingham}, introduced by \cite{Farina19:Correlation}. In particular, we used the \emph{baseline} version of the game. This game consists of two players: the \emph{Smuggler} and the \emph{Sheriff}. The smuggler must originally come up with a number $n \in \{0, 1, 2, 3\}$ which corresponds to the number of illegal items to be loaded in the cargo. It is assumed that each illegal item has a fixed value of $1$. Subsequently, $2$ rounds of bargaining between the two players follow. At each round, the Smuggler decides on a bribe ranging from $0$ to $3$, and the Sheriff must decide whether or not the cargo will be inspected given the bribe amount. The Sheriff's decision is binding only in the last round of bargaining. In particular, if during the last round the Sheriff accepts the bribe, the game stops with the Smuggler obtaining a utility of $n$ minus the bribe amount $b$ that was proposed in the last bargaining round, while the Sheriff receives a utility equal to $b$. On the other hand, if the Sheriff does not accept the bribe in last bargaining round and decides to inspect the cargo, there are two possible alternatives. If the cargo has no illegal items (\emph{i.e.} $n = 0$), the Smuggler receives the fixed amount of $3$, while the utility of the Sheriff is set to be $-3$. In the contrary case, the utility of the smuggler is assumed to be $-2n$, while the utility of the Sheriff is $2n$.


\paragraph{Liar's dice} The final benchmark we experimented on is the game of \emph{Liar's dice}, introduced by ~\citet{Lisy15:Online}. In the three-player variant, the beginning of the game sees each of the three players privately roll an unbiased $3$-face die. Then, the players have to sequentially make claims about their private information. In particular, the first player may announce any face value up to $3$, as well as the minimum number of dice that the player claims are showing that value among the dice of \emph{all} players. Then, each player can either make a higher bid, or challenge the previous claim by declaring the previous agent a ``liar''. More precisely, it is assumed that a bid is higher than the previous one if either the face value is higher, or if the claimed number of dices is greater. If the current claim is challenged, all the dices must be revealed. If the claim was valid, the last bidder wins and receives a reward of $+1$, while the challenger suffers a negative payoff of $-1$. Otherwise, the utilities obtained are reversed. Any other player will receive $0$ utility. 

\paragraph{Goofspiel} This game was introduced---in its current form---by~\citet{Ross71:Goofspiel}. In Goofspiel every player has a hand of cards numbered from $1$ to $r$, where $r$ is the \emph{rank} of the game. An additional stack of $r$ cards is shuffled and singled out as winning the current prize. In each turn a prize card is revealed, and each player privately chooses one of its cards to bid. The player with the highest card wins the current prize; in case of a tie, the prize card is discarded. After $r$ turns have been completed, all the prizes have been dealt out and players obtain the sum of the values of the prize cards they have won. It is worth noting that, due to the tie-breaking mechanism, even two-player instances are general-sum. We also consider instances with \emph{limited information}---the actions of the other players are observed only at the end of the game. This makes the game strategically more involved as players have less information regarding previous opponents' actions. 

\begin{table}[H]\centering
    \begin{tabular}{lrrrr}
        \toprule
            \textbf{Game} & \textbf{Players} & \textbf{Decision points} & \textbf{Sequences} & \textbf{Leaves}\\
        \midrule
            Kuhn poker & 3 & 36 & 75 & 78\\
            Sheriff & 2 & 73 & 222 & 256\\
            Goofspiel & 3 & 837 & 934 & 1296\\
            Liar's dice & 3 & 1536 & 3069 & 13797\\
        \bottomrule
    \end{tabular}
    \caption{The parameters of each game.}
    \label{table:parameters}
\end{table}

\end{document}